\def\sspan{\mathrm{span}}
\author{Shuai Guo}
\address{School of Mathematics Science, Peking University,
      No 5. Yiheyuan Road,  Beijing 100871, China}
\email{guoshuai@math.pku.edu.cn}
\author{Ce Ji}
\address{Department of Mathematical Sciences, Tsinghua University,
      No 1. Tsinghuayuan, Beijing 100084, China}
\email{cji@tsinghua.edu.cn}
\author{Qingsheng Zhang}
\address{School of Sciences,  Great Bay University,
     Great Bay Institute for Advanced Study,
     Dongguan, 523000, China}
\email{zhangqingsheng@gbu.edu.cn}
\newcommand{\<}{\langle}
\renewcommand{\>}{\rangle}
\newcolumntype{C}[1]{>{\centering\arraybackslash$}m{#1}<{$}}
\newlength{\mycolwd}                                         % array column width
\newlength{\mycolwdm}                                         % array column width
\newlength{\mycolwda}                                         % array column width
\newlength{\mycolwdb}                                         % array column width
\newlength{\mycolwdc}                                         % array column width
\newlength{\mycolwdd}                                         % array column width
\newlength{\mycolwddd}                                         % array column width
\newlength{\mycolwddc}                                         % array column width
\newcommand{\givz}{\mathfrak{u}}
\newcommand{\givw}{\mathfrak{v}}
\newcommand{\Frob}{H}
\newcommand{\cyA}{\mathfrak A}
\newcommand{\cyB}{\mathfrak B}
\newcommand{\ttau}{t}
\newcommand{\QD}{Q_{\Delta}}
\newcommand{\vac}{\mathbf{v}}
\newcommand{\dvac}{\nu}
\newcommand{\cmn}{{\substack{\mu\\ \nu}}}
\newcommand{\sv}{\varphi}
\newcommand{\elltau}{\tau}
\newcommand{\id}{ \hspace{-2pt}\text{I}}
\newcommand{\tw}{\mathrm{tw}}
\def\beq{\begin{equation}}
\def\eeq{\end{equation}}
\newcommand{\be}{\begin{equation*}}
\newcommand{\ee}{\end{equation*}}
\DeclareMathOperator{\Cont}{Cont}
\DeclareMathOperator{\ad}{ad}
\DeclareMathOperator{\Res}{Res}
\def\Mgn{\overline{\mathcal{M}}_{g,n}}
\newtheorem{dummy}{}[section]
\newtheorem{lemma}[dummy]{Lemma}
\newtheorem{proposition}[dummy]{Proposition}
\newtheorem{theorem}[dummy]{Theorem}
\newtheorem{corollary}[dummy]{Corollary}
\newtheorem{conjecture}[dummy]{Conjecture}
\theoremstyle{definition}
\newtheorem{convention}[dummy]{Convention}
\newtheorem{definition}[dummy]{Definition}
\newtheorem{remark}[dummy]{Remark}
\newcommand{\M}{\overline{\mathcal M}}
\newcommand{\D}{\mathcal D}
\newcommand{\pd}{\partial}
\newcommand{\cV}{\mathcal{V}}
\newcommand{\cF}{\mathcal{F}}
\newcommand{\cA}{\mathcal{A}}
\newcommand{\cD}{\mathcal{D}}
\newcommand{\I}{\mathcal{I}}
  \newcommand{\cC}{\mathcal{C}}
\newcommand{\Mbar}{\overline{\mathcal M}}
\newcommand{\ch}{\mathrm{ch}}
\def\cG{\mathcal{G}}
\newenvironment{manualtheorem}[1]{%
	\manualtheoreminner
}{\endmanualtheoreminner}
\begin{document}
	\title{A generalization of the Witten conjecture  through  spectral curve}
	
	\maketitle

\begin{abstract}
	
	We propose a generalization of the Witten conjecture, which connects a descendent enumerative theory with a specific  reduction of KP integrable hierarchy. Our conjecture is realized by two parts:
    Part I (Geometry) establishes a correspondence between the geometric descendent potential (apart from ancestors) and   the topological recursion of specific spectral curve data $(\Sigma, x,y)$;
	Part II (Integrability) claims that the TR descendent potential, defined at the boundary points of the spectral curve (where $dx$ has poles), is a tau-function of a certain reduction of the multi-component KP hierarchy.

 In this paper, we show the geometric part of the conjecture for any formal descendent theory by using a generalized Laplace transform. Subsequently, we prove the integrability conjecture   for the one-boundary cases. As applications, we generalize and prove the $r$KdV integrability of negative $r$-spin theory conjectured by Chidambaram, Garcia-Failde and Giacchetto.
We also show the KdV integrability of the total descendent potential associated with  the Hurwitz space $M_{1,1}$, whose Frobenius manifold was initially introduced by Dubrovin.

\end{abstract}
	
\setcounter{section}{-1}
\setcounter{tocdepth}{1}
\tableofcontents

\section{Introduction}

The celebrated Witten conjecture \cite{Wit91} initiated the study of the connection between enumerative geometry and integrable hierarchies. It claimed that the generating function of the intersection numbers of stable classes on the moduli space of curves is a tau-function of the KdV hierarchy.
In \cite{Wit93}, Witten extended his conjecture to the case of ADE singularities, relating quantum singularity theory and the ADE-integrable hierarchies.
The Witten conjecture has analogues in various enumerative theories as well. For example, the Toda conjecture \cite{EHY95, EY94, Get04} states that the equivariant Gromov--Witten theory of $\mathbb P^1$ is governed by the equivariant Toda hierarchy.

The original Witten conjecture was proved by Kontsevich \cite{Kon92} using a combinatorial description of the moduli spaces in terms of ribbon graphs. For the  $A_{r-1}$ singularity, the generalized Witten conjecture relates the theory of $r$-spin curves and the $r$KdV hierarchy, and was proved by Faber--Shadrin--Zvonkine \cite{FSZ06}. For general quantum singularities, Fan--Jarvis--Ruan developed a mathematical formulation using Witten's equation, solution of which can be counted via the corresponding enumerative geometry called the FJRW theory. Then they established and proved the generalized Witten conjecture for the ADE-type singularities \cite{FJR13}. The Toda conjecture was proved by Okounkov--Pandharipande \cite{OP06}. Indeed, they proved that the equivariant Gromov--Witten potential of $\mathbb P^1$ satisfies the $2$-Toda equation.

By reviewing the history of the developments related to the Witten conjecture, one can see that all the integrable hierarchies mentioned above are reductions of (multi-component) KP hierarchy.
On the other hand, they all have an underlying global spectral curve in the sense of  \cite{EO07}. Indeed, such observation could date back to the seminal paper   \cite{ADKMV06}, where the authors claim that  the boundary expansion of the topological string partition function on a Riemann surface gives rise to  a tau-function of the multi-component KP hierarchy.

This suggests the following  diagram as a generalization of the Witten conjecture:
\begin{figure}[htbp]\label{Pic}
	\centering
	\begin{tikzpicture}[scale=0.88]
		\node [] at (0,2) {Enumerative geometry};
		\node [] at (5,0) {$\text{Reduction of $m$-component } \atop \text{KP hierarchy}$};
		\node [] at (-4,0) {$\text{Topological recursion for} \atop \text{spectral curve with $m$ boundaries}$};
		\draw [->, thick] (-1,0) -- (1.9,0); \node[] at (0.4, -0.4) {TR-KP};
		\draw[<-, thick] (-4, 0.4) -- (-1.3, 1.6); \node[] at (-3.3, 1.3){TR-Geo};
		\draw[->, thick, dashed] (1.3, 1.6) -- (4, 0.4); \node[] at (3.3, 1.3){Geo-KP};
	\end{tikzpicture}
\end{figure}

\noindent Two (conjectural) structures arise in the above picture:
\begin{enumerate}[(i)]
	\item {\bf Integrability conjecture}: Topological recursion (TR) of a given spectral curve defines a set of multi-differentials $\omega_{g,n}$ on a Riemann surface with marked points(boundaries).
	When  $\omega_{g,n}$ are expanded using local coordinates near the boundaries,  it results in a tau-function of  a  certain  reduction of  the multi-component KP hierarchy.
	For the spectral curve with genus $\frak g\geq 1$, the non-perturbative contributions are considered.
	\item {\bf Enumerative geometry}:  A hidden geometric theory are linked to the   spectral curve, such that  the local expansion of the $\omega_{g,n}$   at both critical points and  boundary points carries significant geometric implications: the former defines a semi-simple cohomological field theory (CohFT) that characterizes the geometric theory, while the latter recovers (part of) the geometric descendent potential via an oscillatory integral.
\end{enumerate}
The (i) and (ii) jointly generalize the Witten conjecture from the perspective of topological recursion, which assigns a descendent theory to an integrable system. To be more precise, the conjecture claims that a specific change of variable transforms the (non-perturbative) descendent potential of the enumerative geometry into a tau-function of a specific reduction of the (multi-component) KP hierarchy.

 \subsection{Topological recursion, KP integrability and enumerative geometry} \label{sec:intro1}
A set of spectral curve data $\cC$ is defined by a triple $(\Sigma, x, y)$, where
 \begin{itemize}
\item $\Sigma$ is a Riemann surface of genus $\mathfrak{g}$ with marked points (boundaries) $\mathbf b=\{b_1,\cdots,b_m\}$ and  a fixed symplectic basis $\{\cyA_1, \ldots, \cyA_{\mathfrak{g}}, \cyB_1, \ldots, \cyB_{\mathfrak{g}}\}$ for $H_1(\Sigma, \mathbb{C})$;
\item $x$ is a (possibly multi-valued) function  defined on $\Sigma-   \mathbf b$ whose differential $dx$ is a meromorphic differential on $\Sigma$  having poles  exactly  at the boundaries $b_i$ and having only finitely many simple zeros $z^\beta$, $\beta=1,\cdots,N$, called the critical points;
\item $y$   is a holomorphic function   defined near each zero $z^\beta$ of $dx$  satisfying  $dy(z^\beta)\neq 0$.
\end{itemize}

Taking the spectral curve data  as input, the topological recursion  \cite{EO07} gives an algorithm that  produces a sequence of multi-differential forms
   ${\omega_{g,n}}$ (see \S \ref{sec:def-TR} for the definition) for non-negative integers $g,n$.
When expanded under a suitable basis,  the coefficients of   $\omega_{g,n}$  can be  viewed as the ``invariants" of the spectral curve data {with respect to this basis}.
Inspired by the enumerative geometry where the ancestor invariants and descendent invariants are defined, we consider the following two types of invariants.
\begin{enumerate}[(i)]
\item {\bf TR ancestors:}
Using the local Airy coordinates near the critical points, we can  define a set of meromorphic $1$-forms $\{d\xi_n^{\bar\beta}(z)\}$ as a basis.
For $2g-2+n>0$, the expansion of $\omega_{g,n}$ under this basis defines the {\it TR ancestors} $\<\bar e_{\beta_1}\bar\psi^{k_1},\cdots,\bar e_{\beta_n}\bar\psi^{k_n}\>_{g,n}^{\rm TR}$.	
(See \S \ref{sec:def-TR}.)
\item{\bf TR descendents:}
Pick $\Lambda=(\lambda_1,\cdots,\lambda_m)$, such that $\lambda_i^{-1}$ is a local coordinate near $b_i$ satisfying $\lambda_i(b_i)={\infty}$, $i=1,\cdots,m$.
By using basis $\{d\lambda^{-k}_{i}\}$, the expansion of $\omega_{g,n}$ near the boundary point defines the \emph{TR descendents}
$\<\alpha^{i_1}_{k_1},\cdots,\alpha^{i_n}_{k_n}\>^{\Lambda}_{g,n}.$
(See \S \ref{sec:TR-KP}.)
\end{enumerate}

The main object in this paper is the generating series of the TR descendents $\<-\>^{\Lambda}_{g,n}$:
$$
Z({\bf p};\hbar)
=\exp\bigg(\sum_{g\geq 0, n\geq 0}\hbar^{2g-2+n}\sum_{1\leq i_1,\cdots,i_n\leq m \atop k_1,\cdots,k_n\geq 1}
\<\alpha^{i_1}_{k_1},\cdots,\alpha^{i_n}_{k_n}\>_{g,n}^{\Lambda}
\frac{p^{i_1}_{k_1}\cdots p^{i_n}_{k_n}}{n!\cdot k_1\cdots k_n}\bigg).
$$

\begin{remark}
   While \(\omega_{g,n}\) admits expansion around any point on the spectral curve, we emphasize that boundary expansions are of particular importance: it is primarily through these expansions that the critical link between TR descendants and geometric descendants becomes apparent.  See §\ref{sec:TR-des-Geo} for details.
\end{remark}

For the Riemann surface $\Sigma$ of genus $\frak g\geq 1$,
integrals of $\omega_{g,n}$ along cycles on $\Sigma$ do not necessarily vanish.
Hence, by considering the expansion of these integrals, we obtain more quantities.
Inspired by the works of Eynard--Mari\~{n}o \cite{EM11} and Borot--Eynard \cite{BE12}, where the non-perturbative partition function and spinor kernel were introduced, in this paper,  we define the non-perturbative generating series
 $Z^{\rm NP}_{\mu,\nu}({\bf p};\hbar;w)$, where  $\mu,\nu\in\mathbb R^{\frak g}$ and $w\in\mathbb C^{\frak g}$ are  parameters. See \S \ref{sec:NP-TR} for details.

 \subsubsection{TR-KP}
First we introduce the (m-component) KP integrability conjecture for TR.
 \begin{conjecture}[$m$-KP integrability conjecture of the topological recursion]\label{conj:EO-m-KP} \text{$\quad$} \\
Let $\cC=(\Sigma,x,y)$ be the spectral curve with boundaries ${\bf b}=\{b_1,\cdots,b_m\}$, and let $\Lambda=(\lambda_1,\cdots,\lambda_m)$ be a local coordinate system near the boundaries.
Let $\frak g=\frak g(\Sigma)$. We have:
\begin{itemize}
\item If $\frak g=0$, then the generating series $Z({\bf p};\hbar)$ is a tau-function of the $m$-component KP hierarchy with KP times \{$\frac{p_k^i}{k}\}_{k\geq 1, 1\leq i\leq m}$.
\item If $\frak g\geq 1$, then the non-perturbative generating series $Z^{\rm NP}_{\mu,\nu}({\bf p};\hbar;w)$ is a tau-function of the $m$-component KP hierarchy with KP times \{$\frac{p_k^i}{k}\}_{k\geq 1, 1\leq i\leq m}$.
\end{itemize}
 \end{conjecture}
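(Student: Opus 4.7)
The plan is to follow the general strategy of Eynard--Mari\~no and Borot--Eynard: construct wave functions via Miwa shifts of $Z$ (or $Z^{\mathrm{NP}}_{\mu,\nu}$) near each boundary, express them through a two-point ``spinor'' kernel on the spectral curve, and then verify the $m$-component Hirota bilinear identity directly from the loop equations that $\omega_{g,n}$ already satisfies.

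First, for each boundary $b_i$, I would define a formal wave function
$$\psi^{(i)}(z;\mathbf{p},\hbar) \;=\; \frac{Z(\mathbf{p} - \hbar[z]_i;\hbar)}{Z(\mathbf{p};\hbar)}\,\exp\!\Big(\tfrac{1}{\hbar}\!\int^{z}\! y\,dx + \text{(subleading WKB terms)}\Big),$$
where $[z]_i$ denotes the Miwa shift in the $i$-th block $\{p^i_k\}_k$ of $\mathbf{p}$, and a dual wave function $\psi^{*(i)}$ analogously. For $\mathfrak g \geq 1$ the prefactor is enriched by the theta-characteristic data $(\mu,\nu;w)$ coming from $Z^{\mathrm{NP}}_{\mu,\nu}$, so that $\psi^{(i)}$ becomes a genuine spinor on $\Sigma$ with the correct monodromy along the cycles $\cyA_\bullet$, $\cyB_\bullet$. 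Next, I would establish a determinantal (spinor kernel) representation: build a two-point kernel $K(z_1,z_2)$ out of $\psi^{(i)}, \psi^{*(i)}$ and show that the all-genus connected correlators obtained from $\omega_{g,n}$ coincide with the standard cyclic sums of products of $K$. This is the Borot--Eynard reconstruction theorem adapted to multiple boundaries; it encodes the free-fermion content equivalent to $Z$ being a tau-function in the Sato--Segal--Wilson sense.

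The third step is to derive the $m$-KP bilinear identity
$$\sum_{i=1}^{m}\Res_{z=b_i}\,\psi^{(i)}(z;\mathbf{p})\,\psi^{*(i)}(z;\mathbf{p}')\,dz \;=\; 0$$
from the linear and quadratic loop equations for $\omega_{g,n}$. The linear loop equation yields analyticity of $K$ away from boundaries and critical points, while the quadratic loop equation supplies the cancellation of critical-point residues, so that only boundary residues survive. Translating this back through the Miwa change of variables $\{\tfrac{p_k^i}{k}\}_{k,i}$ produces the Hirota equations of the $m$-component KP hierarchy, which characterize the tau-function property.

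The main obstacle is the multi-boundary coupling. For $m=1$ (the case the paper actually settles) all residues concentrate on a single disk and the argument reduces to a scalar residue calculation close to the known proofs of KP integrability for TR. For $m\geq 2$, wave functions at different boundaries are linked through the Bergman kernel $B(z_1,z_2)$, so the off-diagonal residues $\Res_{b_i}\psi^{(i)}\psi^{*(j)}\,dz$ with $i\neq j$ must be shown to cancel globally; this genuinely goes beyond the existing TR literature. In the higher-genus, non-perturbative setting the cancellation is further entangled with the modular behaviour of theta functions on $\mathrm{Jac}(\Sigma)$, and verifying that the parameters $(\mu,\nu;w)$ are precisely those compatible with the full $m$-component bilinear identity is, in my view, where the heart of the problem lies.
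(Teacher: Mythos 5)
The statement you are proving is stated in the paper as a conjecture, and the paper itself only establishes the one-boundary case (Theorem I) under the assumption that $x$ is meromorphic with $x\in\mathbb C[\lambda]$ near the boundary; it does so by a route quite different from yours: the Hirota quadratic equation for $Z$ is rewritten, via Givental-type vertex operators and the relation $Z=e^{\frac12 Q({\bf p,p})}\cA(\hbar\,{\bf s(p)};\hbar)$ (with theta-function corrections in the non-perturbative case), as an HQE for the ancestor potential; tameness of $\cA$ and globality of the $d\zeta$-basis are used to extend the relevant expression to a meromorphic form on $\mathbb P^1$, so that regularity need only be checked at the critical values $x^\beta$; and there each local contribution is identified with the KdV HQE for the Kontsevich--Witten tau-function, so the original Witten/Kontsevich theorem finishes the argument. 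Your plan (wave functions by Miwa shifts, a spinor kernel, and the $m$-component bilinear identity from loop equations) is instead the Eynard--Mari\~no/Borot--Eynard free-fermion route.

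As a proof, your proposal has two genuine gaps. First, your second step invokes ``the Borot--Eynard reconstruction theorem'': the determinantal/spinor-kernel representation of the (non-perturbative) TR correlators is precisely one of the equivalent \emph{conjectures} of Borot--Eynard (as the paper notes in its remark following Conjecture~\ref{conj:NP-TR-KP}), so for $\frak g\geq 1$ you are assuming a statement essentially equivalent to the one you want to prove. Second, the claim that the $m$-component bilinear identity ``follows directly from the linear and quadratic loop equations'' is unsubstantiated: the loop equations constrain $\omega_{g,n}$ locally at the critical points, whereas the Hirota identity requires an exact global cancellation of residues at the boundaries (and, for $\frak g\geq 1$, compatibility with the quasi-periodicity of the theta factors and the choice of $(\mu,\nu;w)$); bridging this is exactly the hard analytic step, which in the paper is handled by the globality/tameness argument plus the reduction to Kontsevich's theorem, and which your sketch leaves open — as you yourself acknowledge for $m\geq 2$. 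In short, your text is a reasonable research program, consistent in spirit with the known $m=1$ results, but it does not constitute a proof of the conjecture, nor even of the one-boundary case actually proved in the paper.
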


When the function $x$ is meromorphic, we also conjecture the generating series $Z({\bf p};\hbar)$ and $Z^{\rm NP}_{\mu,\nu}({\bf p};\hbar;w)$ satisfy  certain reduction structure for  certain choice of $\Lambda$.

In this paper we study the $m=1$ case of Conjecture~\ref{conj:EO-m-KP} (with a focus on the subcase where $x$ is meromorphic). For simplicity, we denote the boundary point  in this case  by $b$ and denote $\Lambda$ by $\lambda$ for simplicity. The exploration of the multi-component KP cases will be addressed in future works.

 \begin{manualtheorem}{I} [TR-KP] \label{KP-TR}
Assume that the spectral curve $\cC=(\Sigma,x,y)$ has a single boundary point,  and that the function $x$ is meromorphic with $x\in  \mathbb C[\lambda]$ near the boundary.
\begin{itemize}
	\item For the case $\frak g=0$, the generating series $Z({\bf p};\hbar)$ is a tau-function of the polynomial-reduced KP hierarchy ({see Definition~\ref{def:lambda-reduction-KP}}).
	\item For the case $\frak g\geq 1$, the non-perturbative generating series $Z^{\rm NP}_{\mu,\nu}({\bf p};\hbar;w)$ is a tau-function of the polynomial-reduced KP hierarchy.
\end{itemize}
 \end{manualtheorem}

 \begin{remark}
It is a classical result that the tau-function $\mathcal T({\bf p})$ of the KP hierarchy  can be constructed from algebraic curve, as studied by Dubrovin~\cite{Dub81}, Krichever~\cite{Kri77}, Mumford~\cite{Mum}, et~al.
We will see in \S \ref{sec:NP-TR} that the tau-function $\mathcal T({\bf p})$ is equal to
$Z^{\rm NP}_{\mu,\nu}({\bf p};\hbar;w)$ evaluated at $\hbar=0$.
Thus, our result generalize this classical result to a ``quantum" version.
\end{remark}

\subsubsection{TR-Geo}
We then explore the link between the enumerative geometry and the spectral curve.
It is known that  the TR ancestors are related to the geometric ancestors of the CohFT  associated with the spectral curve \cite{DOSS14,Eyn14}.
In general, the geometric descendent theory of a CohFT is unknown.
Following~\cite{GZ25}, we formally define the geometric descendent invariants of $S$- and $\dvac$-calibrated CohFT by generalizing the Kontsevich--Manin formula (see \S \ref{sec:descedentforCohFT}).
The geometric definition of the descendent invariants always corresponds to the formal definition with a specific $S$-matrix and $\dvac$-vector.

Now we explain how the TR descendents are related with the geometric descendents.
\begin{definition}\label{conj:TR-Geo-des}
	A path $\gamma \subset \Sigma - \bf b$ is called admissible associated with $e^{-x/\givz}$, if it satisfies the following conditions:
	\begin{itemize}
		\item  Each end of the path $\gamma$  tends to  a boundary point in $\bf b$  .
		\item   $x(z)$  tends to infinity \footnote{As $dx$ has poles at the boundaries,  $|x|$ tends to infinity when  $z$ approaches the boundary. } along $\givz\cdot\mathbb R_{\gg0}$ as $z$ approaches the two boundaries along $\gamma$.
	\end{itemize}
\end{definition}

For an admissible path $\gamma$ associated with $e^{-x/\givz}$, we define a corresponding class $\Phi(\gamma,-\givz)$ for $S$-calibrated CohFT (see~Lemma~\ref{lem:path-class}).
Then we can relate the geometric descendents $\<- \>^{\cD}_{g,n}$ with multi-differentials $\omega_{g,n}$ of the topological recursion in the following way.
\begin{manualtheorem}{II}[TR-Geo, Theorem~\ref{thm:TR-des-(g,n)}]\label{thm:TRGE}
For $2g-2+n\geq 0$  and admissible paths $\gamma_i$ associated with $e^{-x(z_i)/\givz_i}$, $i=1,\cdots,n$,
$$
  \int_{\gamma_1}\!\!\cdots\!\!\int_{\gamma_n}e^{-\sum_i x(z_i)/\givz_i}\omega_{g,n} = \delta_{(g,n)}^{(0,2)}\cdot\frac{\eta(\Phi(\gamma_1,-\givz_1),\Phi(\gamma_2,-\givz_2))}{-\givz_1-\givz_2}
   +\Big\<\frac{\Phi(\gamma_1,-\givz_1)}{\givz_1+\psi_1},\cdots,\frac{\Phi(\gamma_n,-\givz_n)}{\givz_n+\psi_n}\Big\>^{\cD}_{g,n}.
$$
\end{manualtheorem}

For the case of $(g,n)=(0,1)$, the geometric descendents $\<-\>_{0,1}^{\cD}$ are usually encoded into a single function, called the  $J$-function.
\footnote{ For an $S$- and $\dvac$-calibrated CohFT, we define the $J$-function by the $S$-matrix and $\dvac$-vector (see~\S \ref{sec:descedentforCohFT}).}
We propose the following conjecture:
 \begin{conjecture}\label{conj:laplace-J}
There exists a $\dvac$-vector $\dvac(\givz)$ (see~\S \ref{sec:descedentforCohFT}), such that the $J$-function
	$$
	J(-\givz):=-\givz S^{*}(-\givz)\dvac(\givz)
	$$
satisfies the following relation
$$
\int_{\gamma}   e^{-x(z)/\givz} y(z)dx(z) = -\eta( J(-\givz) , \Phi(\gamma,-\givz)),
$$
where $\gamma$ is an admissible path associated with $e^{-x(z)/\givz}$.
 \end{conjecture}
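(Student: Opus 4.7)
The plan is to reduce the identity to a computation on a preferred basis of admissible paths coming from the critical points of $dx$, then to exhibit $\dvac(\givz)$ by matching local Laplace data, and finally to use the structural properties of the $S$-matrix and of the class $\Phi(\gamma,-\givz)$ to extend the identity to all admissible $\gamma$. For each simple zero $z^\beta$ of $dx$, let $\gamma^\beta = \gamma^\beta(\givz)$ be the steepest-descent contour through $z^\beta$ along which $\mathrm{Re}(x/\givz)$ increases most rapidly; such a $\gamma^\beta$ is admissible in the sense of Definition~\ref{conj:TR-Geo-des} because both of its ends exit to boundary points along the direction $\givz\cdot\mathbb{R}_{\gg 0}$. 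Picard--Lefschetz theory applied to the Morse function $x\colon\Sigma\setminus\mathbf{b}\to\mathbb{C}$ shows that any admissible path $\gamma$ is, in the relative homology controlling both sides, a linear combination $\sum_\beta n_\beta(\gamma)\,\gamma^\beta$. Since both the left-hand integral and the pairing $\eta(J(-\givz),\Phi(\gamma,-\givz))$ are additive in $\gamma$ (the latter by Lemma~\ref{lem:path-class}), it suffices to verify the identity on each $\gamma^\beta$.

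Next I would compute the left-hand side on $\gamma^\beta$ by a Gaussian/saddle-point calculation. In the Airy coordinate $\zeta = \zeta^\beta$ near $z^\beta$, normalized so that $x = x(z^\beta) + \zeta^2$, write $y\,dx = \sum_{k\geq 0} c_k^\beta\,\zeta^{2k}\,d\zeta + (\text{odd powers})$, where the odd powers vanish upon integration. Standard moment calculations then yield
\[
\int_{\gamma^\beta} e^{-x(z)/\givz}\,y(z)\,dx(z) \;=\; e^{-x(z^\beta)/\givz}\,\sqrt{\pi\givz}\cdot h^\beta(\givz),
\]
with $h^\beta(\givz) := \sum_{k\geq 0} c_k^\beta\,(2k-1)!!\,(\givz/2)^k$ the canonical formal power series in $\givz$ encoding the Gaussian moments of $y\,dx$ about the critical point $z^\beta$.

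I would then construct $\dvac(\givz)$ by matching both sides on $\gamma^\beta$. The Airy-coordinate construction underlying Lemma~\ref{lem:path-class} should identify $\Phi(\gamma^\beta,-\givz)$, in the canonical basis $\{\bar e_\alpha\}$, with $\sqrt{\pi\givz}\,e^{-x(z^\beta)/\givz}$ times the $\beta$-th unit vector, up to the subleading corrections absorbed by $S^*(-\givz)$. Under this identification the right-hand pairing becomes
\[
-\eta\!\bigl(J(-\givz),\,\Phi(\gamma^\beta,-\givz)\bigr) \;=\; \givz\,\sqrt{\pi\givz}\,e^{-x(z^\beta)/\givz}\cdot\bigl(S^*(-\givz)\dvac(\givz)\bigr)_\beta,
\]
and comparison with the previous display forces the $\beta$-th canonical component of $S^*(-\givz)\dvac(\givz)$ to equal $h^\beta(\givz)/\givz$. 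This uniquely determines $\dvac(\givz)$ via $\dvac(\givz) = S^*(-\givz)^{-1}\bigl(h^\alpha(\givz)/\givz\bigr)_\alpha$, and the linearity argument of the first paragraph promotes the identity to all admissible $\gamma$.

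The hardest step, I expect, is the precise identification of $\Phi(\gamma^\beta,-\givz)$ with its leading canonical-basis vector up to an $S^*$-conjugation, i.e., showing that the full oscillatory contribution of $\gamma^\beta$ against an arbitrary cohomological insertion is absorbed by the Givental-type twisting encoded in $S^*(-\givz)$. This is a semisimple-CohFT analogue of Givental's symplectic loop-group identities, and should follow in genus zero from the TR construction of $\omega_{0,2}$ together with the Airy normalization of $\Phi$; for higher-genus spectral curves one must additionally track the global $\cyA$- and $\cyB$-cycle contributions hidden inside $\Phi(\gamma^\beta,-\givz)$, and this bookkeeping is where the main technical difficulty lies.
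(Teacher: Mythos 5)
Before anything else, note that the statement you are proving is not proved in the paper at all: Conjecture~\ref{conj:laplace-J} is left open there, and the authors only \emph{verify} it in their two examples — the deformed $r$-Bessel curve (\S\ref{sec:geo-TR}, equation~\eqref{eqn:int-J-r}, via a Picard--Fuchs equation satisfied by both sides) and the Weierstrass curve (\S\ref{Ge-TRforellipticcurve}, by explicit computation of the $S$-matrix and of the Laplace integrals). So there is no paper proof to compare with; your proposal, if it worked, would go strictly beyond the paper. As written, however, it has two genuine gaps.

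First, the identification in your third step of $\Phi(\gamma^\beta,-\givz)$ with $\sqrt{\pi\givz}\,e^{-x(z^\beta)/\givz}$ times the $\beta$-th canonical unit vector ``up to corrections absorbed by $S^*(-\givz)$'' is incompatible with how $\Phi$ is actually defined in Lemma~\ref{lem:path-class}: $\Phi(\gamma,-\givz)$ is an \emph{exact} object determined by the expansion of $\int_\gamma e^{-x/\givz}d\zeta_0^{\bar\beta}$ at the \emph{boundary} points, and it is a series in fractional powers $\givz^{1/r}$ with no exponential prefactor (compare the explicit $\Phi(\gamma_k,\givz)=\tfrac{\sqrt{-r}}{r}\sum_a(1-e^{2ka\pi{\bf i}/r})\Gamma(\tfrac{a}{r})\givz^{a/r}\phi^a$ in the $r$-spin case). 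Your Gaussian computation instead produces the critical-point \emph{asymptotics} of these integrals — integer powers of $\givz$ times $\sqrt{\givz}\,e^{-x^\beta/\givz}$ — which is exactly the $R$-matrix/$T$-vector data of \eqref{def:EO-R}--\eqref{def:EO-T}, not the $S$-matrix/$J$-function data appearing in the conjecture. Passing from asymptotic expansions at critical points (which jump by Stokes phenomena, so your integers $n_\beta(\gamma)$ depend on $\arg\givz$, and which in higher genus do not span all admissible classes because of $H_1(\Sigma)$ contributions) to the exact fractional-power boundary expansions is precisely the analytic content of the conjecture, and your argument assumes it rather than proves it. Second, even granting the matching on thimbles, defining $\dvac(\givz):=S^*(-\givz)^{-1}\bigl(h^\alpha(\givz)/\givz\bigr)_\alpha$ merely solves a linear system and is therefore circular: the conjecture demands that $\dvac$ be a $\dvac$-vector in the sense of Definition~\ref{def:dvac}, i.e.\ an element of $\Frob((\givz^{-1}))$ in \emph{integer} powers of $\givz^{-1}$ whose $\ttau$-family satisfies $\givz\pd_{\ttau^i}\dvac^{\ttau}(\givz)=\phi_i*_{\ttau}\dvac^{\ttau}(\givz)-\phi_i$, so that $J(-\givz)=-\givz S^*(-\givz)\dvac(\givz)$ is a genuine geometric $J$-function. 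One can always choose coefficients making the identity hold on a basis of paths; showing that these coefficients assemble into a legitimate $\dvac$-calibration is the nontrivial claim, and your proposal never addresses it.
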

 \begin{remark}
 This conjecture relates the spectral curve data and the one-point function of the geometric theory, which a priori is independent by definition. We successfully examine  the conjecture in the two examples considered  in this paper, and serve them as evidence.
 \end{remark}

\subsubsection{Geo-KP}
With descendent invariants determined by topological recursion, we complete the diagram by showing  how to directly realize the total descendent potential as a tau-function of KP hierarchy. This is achieved by regarding  the inverse Laplace transform as a change of basis.

Let $\cD({\bf t};\hbar)$ be the total descendent potential of CohFT associated with the spectral curve $\cC=(\Sigma,x,y)$ and calibrated by $S$-matrix and $\dvac$-vector.
For the case with $\frak g(\Sigma)\geq 1$, we introduce the non-perturbative total descendent potential and denote it by $\cD^{\rm NP}_{\mu,\nu}({\bf t};\hbar;w)$.
Combining the $m$-KP integrability conjecture of the topological recursion and the TR-Geo correspondence, we conclude the following generalization of the Witten conjecture:
\begin{conjecture}[Generalization of the Witten conjecture]\label{conj:generalized-witten}
Let $\cC=(\Sigma,x,y)$ be the spectral curve with $m$ boundaries.
We denote its geometric total descendent potential by $\cD({\bf t};\hbar)$  and, for $\frak g(\Sigma)\geq 1$,  its non-perturbative total descendent potential by $\cD^{\rm NP}_{\mu,\nu}( {\bf t};\hbar;w)$. There exist a coordinate transformation ${\bf t}={\bf t(p)}$ and a quadratic function $\QD({\bf p},{\bf p})$
such that
\begin{itemize}
\item For $\frak g=0$, the generating series $e^{\frac{1}{2} \QD({\bf p,p})}\, \cD(\hbar\cdot {\bf t(p)};\hbar)$ is a tau-function of the $m$-component KP hierarchy with KP times \{$\frac{p_k^i}{k}\}_{k\geq 1, 1\leq i\leq m}$.
\item For $\frak g\geq 1$, the non-perturbative generating series $e^{\frac{1}{2}\QD({\bf p,p})}\cD^{\rm NP}_{\mu,\nu}(\hbar\cdot {\bf t(p)};\hbar;w) $  is a tau-function of the $m$-component KP hierarchy with KP times \{$\frac{p_k^i}{k}\}_{k\geq 1, 1\leq i\leq m}$.
\end{itemize}
Furthermore, the generating series for both cases satisfy  certain reduction structure.
\end{conjecture}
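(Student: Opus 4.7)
The plan is to derive the Generalized Witten Conjecture from \textbf{Theorem~\ref{KP-TR}} (TR-KP) and \textbf{Theorem~\ref{thm:TRGE}} (TR-Geo), together with the auxiliary Conjecture~\ref{conj:laplace-J} on the $J$-function, by realizing the claimed change of variables $\mathbf{t}=\mathbf{t}(\mathbf{p})$ as the (inverse) Laplace transform dual to the boundary expansion of $\omega_{g,n}$. Concretely, I would view both the TR descendent potential $Z(\mathbf{p};\hbar)$ and the geometric descendent potential $\cD(\mathbf{t};\hbar)$ as two readings of the same generating object $\{\omega_{g,n}\}$, and then match them coefficient-wise.

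First I would compare the two expansions of $\omega_{g,n}$ on the spectral curve: expanding at the boundaries in the basis $\{d\lambda_i^{-k}\}$ produces $Z(\mathbf{p};\hbar)$, whose variables $p_k^i$ are dual to $d\lambda_i^{-k}$; on the other hand, pairing with admissible paths weighted by $e^{-\sum_j x(z_j)/\givz_j}$ and applying Theorem~\ref{thm:TRGE} produces $\cD(\mathbf{t};\hbar)$, whose variables $t_k^\alpha$ are dual to the descendent insertions $\Phi(\gamma,-\givz)/(\givz+\psi)^{k+1}$. The Laplace transform $\int_\gamma e^{-x/\givz}(-)$ is therefore an invertible linear map between the formal cotangent spaces at each boundary, and expanding $\Phi(\gamma,-\givz)$ and a well-chosen basis of admissible paths $\gamma$ in powers of $\givz^{-1}$ yields an explicit substitution $\mathbf{t}=\mathbf{t}(\mathbf{p})$.

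Next I would isolate the two low-order anomalies in Theorem~\ref{thm:TRGE}. Double-Laplace transforming the $(0,2)$ piece $\eta(\Phi(\gamma_1,-\givz_1),\Phi(\gamma_2,-\givz_2))/(-\givz_1-\givz_2)$ against the $\mathbf{p}$-variables converts it into an explicit symmetric bilinear form, which I would identify with the quadratic function $\QD(\mathbf{p},\mathbf{p})$. The $(0,1)$ contribution is governed by Conjecture~\ref{conj:laplace-J}, which fixes the dilaton-type shift and pins down the admissible $\dvac$-vector. Assembling these pieces produces the identity
\[
Z(\mathbf{p};\hbar)\;=\;e^{\frac{1}{2}\QD(\mathbf{p},\mathbf{p})}\,\cD\bigl(\hbar\cdot \mathbf{t}(\mathbf{p});\hbar\bigr).
\]
For $\mathfrak{g}\geq 1$, the same chain of identifications persists after replacing $\omega_{g,n}$ by its non-perturbative completion (with cycle-integral theta factors depending on $\mu,\nu,w$), so that $Z^{\mathrm{NP}}_{\mu,\nu}(\mathbf{p};\hbar;w)$ matches $e^{\frac{1}{2}\QD(\mathbf{p},\mathbf{p})}\,\cD^{\mathrm{NP}}_{\mu,\nu}\bigl(\hbar\cdot \mathbf{t}(\mathbf{p});\hbar;w\bigr)$.

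Finally, invoking Theorem~\ref{KP-TR} immediately yields that $Z$ (resp.\ $Z^{\mathrm{NP}}$) is a tau-function of the polynomial-reduced $m$-component KP hierarchy with times $p_k^i/k$, and the Gaussian prefactor $e^{\frac{1}{2}\QD}$ preserves tau-function status since it corresponds to the action of an element of the Heisenberg algebra in $\mathbf{p}$ on Sato's Grassmannian. The hardest step will be making the change of variables $\mathbf{t}=\mathbf{t}(\mathbf{p})$ fully explicit and uniformly valid: one must simultaneously control (a) the choice of a complete basis of admissible paths, which is subtle in the multi-boundary case because of Stokes phenomena between sectors associated with different $\givz_i$; (b) the existence and uniqueness of the $\dvac$-vector predicted by Conjecture~\ref{conj:laplace-J}, which is currently only verified in examples; and (c) the identification of the specific reduction of the multi-component KP hierarchy dictated by the meromorphic structure of $x$ near each boundary $b_i$. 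In the one-boundary meromorphic setting treated in this paper all three obstructions become tractable---there is a single Laplace kernel and a single polynomial reduction to read off---so that the present approach essentially reduces to the proofs of Theorems~\ref{KP-TR} and~\ref{thm:TRGE}; the genuine multi-component situation is where the remaining work lies.
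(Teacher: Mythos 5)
Your overall strategy is exactly the paper's: the statement is reduced to the TR-KP integrability conjecture (Conjecture~\ref{conj:EO-m-KP}) by establishing an identity between $\cD$ and $Z$ through the boundary expansion of $\omega_{g,n}$, with the change of variables ${\bf t}={\bf t(p)}$ read off from the $S$-matrix-twisted basis $d\chi^j_l$ at the boundaries (Theorem~\ref{thm:D-Z} and Proposition~\ref{prop:D-Z-NP}), and the conjecture remains open precisely because the multi-boundary TR-KP statement is unproven. Three points deserve correction, though. First, your central identity is missing a factor: the paper's relation reads $e^{\frac{1}{2}\QD({\bf p},{\bf p})}\cD(\hbar\cdot{\bf t(p)};\hbar)=e^{\frac{1}{\hbar}J_{-}({\bf t})|_{\bf t=t(p)}}\,Z({\bf p};\hbar)$, with an extra exponential of a function \emph{linear} in ${\bf p}$ coming from the $(0,1)$ data; this is harmless because conjugating the vertex operators $\Gamma^{\pm}$ past $e^{\sum c_kp_k}$ produces $\lambda$-dependent factors that cancel in the Hirota bilinear pairing, but that is the step that actually needs to be said. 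Second, your claim that the Gaussian prefactor $e^{\frac{1}{2}\QD}$ preserves tau-function status as a Heisenberg-algebra action is false in general: for an arbitrary quadratic form $Q$ in the times, $\Gamma^{\pm}e^{\frac{1}{2}Q}=e^{\frac{1}{2}Q\mp Q(\lambda,{\bf p})+\frac{1}{2}Q(\lambda,\lambda)}\Gamma^{\pm}$ leaves a nontrivial $\lambda$-dependent factor in the bilinear identity, so such a multiplication does not preserve the hierarchy. Fortunately you never need this fact — once the identity above holds, the left-hand side is literally a linear exponential times a tau-function. Third, Conjecture~\ref{conj:laplace-J} is not an input here: the coordinate change and quadratic term are defined directly from the $S$-calibration via equations~\eqref{eqn:chi-zeta} and~\eqref{eqn:t=tp}, independently of whether the $J$-function matches the Laplace transform of $y\,dx$; invoking that (unproven) conjecture would needlessly weaken the implication.
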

In this paper, we study the one boundary cases, we have the following result:
 \begin{manualtheorem}{III} [Geo-KP] \label{Geo-KP}
There are a  coordinate transformation ${\bf t}={\bf t(p)}$ and a quadratic function $\QD({\bf p},{\bf p})$
such that the function $e^{\frac{1}{2}\QD({\bf p},{\bf p})}\cdot\cD(\hbar\cdot {\bf t(p)};\hbar)$ for the ${\frak g}=0$ case, or its non-perturbative analogue $e^{\frac{1}{2}\QD({\bf p},{\bf p})}\cdot\cD^{\rm NP}_{\mu,\nu}(\hbar\cdot {\bf t(p)};\hbar;w)$  for the $\frak g\geq 1$ case, is a tau-function of the polynomial-reduced KP hierarchy with KP times $\{\frac{p_k}{k}\}_{k\geq 1}$.
 \end{manualtheorem}

\subsection{Applications}
We will illustrate the whole story with two examples : the $r$KdV integrability of the descendent deformed negative $r$-spin theory and the KdV integrability of the descendent potential of the Hurwitz space $M_{1,1}$ introduced by Dubrovin.
The detailed proofs can be found in \S \ref{sec:geo-negative-rspin} and \S \ref{sec:TR-weierstrass}, respectively.
\subsubsection{The negative $r$-spin Witten conjecture}  \label{sec:negative-r-spin}
We prove a generalization of “the negative r-spin Witten conjecture” proposed in~\cite{CGG22}.
We first recall the geometry of the negative $r$-spin theory.
Let $r$ be a fixed positive integer and $0\leq a_1,\cdots, a_n \leq r-1$, $s$ be integers satisfying
$$
\tilde  D_{g,n}^{r,s}( { \vec{a}}) := (2g-2+n)\cdot s- \textstyle \sum_{i=1}^n a_i \in   r \mathbb Z.
$$
 We introduce   the proper moduli space of twisted stable $r$-spin curves  	
$$
\M^{r,s}_{g,\vec{a}} = \{(C,p_1,\cdots,p_n,  L) :   \   L^{r} \cong \omega_{\log}^s(-\textstyle \sum_{i=1}^n  a_i [p_i]) \}.
$$
Here $\omega_{\log} = \omega( \sum_{i=1}^n  [p_i])$.  We see $\deg \omega_{\log}^s(-\textstyle \sum_{i=1}^n  a_i [p_i])=\tilde  D_{g,n}^{r,s}( { \vec{a}})  $
 and  the Riemann-Roch theorem gives
$$
 D_{g,n}^{r,s}( { \vec{a}}):=\dim H^1(C,L) -\dim H^0(C,L)   =  -\deg L+g-1 =  -\tfrac{1}{r} \tilde  D_{g,n}^{r,s}( { \vec{a}}) +g-1.
$$

Let $\mathcal C_{g,\vec{a}}^{r,s}$ be the universal curve of the moduli space $\M^{r,s}_{g,\vec{a}}$ and let $\mathcal L_{g,\vec{a}}^{r,s}$ be the universal bundle on it, we have the morphisms
$$
\mathcal C_{g,\vec{a}}^{r,s} \xlongrightarrow{\pi}  \M_{g,\vec{a}}^{r,s} \xlongrightarrow{f}  \M_{g,n}.
$$
For $-r+1\leq s\leq -1$, one sees $\mathrm R^0 \pi_* \mathcal L_{g,\vec{a}}^{r,s}$ vanishes.   Following  \cite{CGG22,Chi08},  we consider the vector bundle of rank $ D_{g,n}^{r,s}( { \vec{a}})$ :
$$
\mathcal V^{r,s}_{g,\vec{a}} :=  \mathrm R^1 \pi_* \mathcal L_{g,\vec{a}}^{r,s},
$$
and introduce the following twisted class
$$
c_{\rm top}( \mathcal V^{r,s}_{g,\vec{a}} )  \in H^{  D_{g,n}^{r,s}( { \vec{a}})}(\M_{g,\vec{a}}^{r,s}) ,
$$
{where $c_{\rm top}( \mathcal V^{r,s}_{g,\vec{a}} ) $ means the top Chern class of $\mathcal V^{r,s}_{g,\vec{a}} $.}

In this paper, we will focus on the $s=-1$ case. We will consider the following descendent invariants of the negative $r$-spin theory and its deformation.

\begin{definition}\label{def:des-neg-r}
	Let  $\bar\Frob=\sspan_{\mathbb Q}(\phi_{0},\cdots,\phi_{r-1})$ be the state space, where $\{\phi_{a}\}_{a=0}^{r-1}$ are vectors associated to the integers $0\leq a\leq r-1$. Let $\{\psi_i\}_{i=1}^n$ be the psi-classes, denoting the first Chern class of the universal cotangent line bundle over $\M_{g,\vec{a}}^{r,-1}$ with respect to the $i$-th marked point.
The descendent correlators for the negative $r$-spin theory are defined by
$$
\<  \phi_{a_1}\psi_1^{k_1},\cdots, \phi_{a_n}\psi_n^{k_n}\>_{g,n}^{r}
:=\frac{(-1)^{D^{r,-1}_{g,n}(\vec{a})}}{r^{g-1}}
  \int_{\M_{g,\vec{a}}^{r,-1}}  c_{\rm top}( \mathcal V^{r,-1}_{g,\vec{a}} )\prod_i  \psi_i^{k_i},
$$
and the descendent correlators for the $\epsilon$-deformed negative $r$-spin theory are defined by
$$
\<  \phi_{a_1}\psi_1^{k_1},\cdots, \phi_{a_n}\psi_n^{k_n}\>_{g,n}^{r,\epsilon}
:=\sum_{m\geq 0}\frac{\epsilon^m}{m!}\<  \phi_{a_1}\psi_1^{k_1},\cdots, \phi_{a_n}\psi_n^{k_n},\phi_0,\cdots,\phi_0\>_{g,n+m}^{r}.
$$
\end{definition}

\begin{manualtheorem}{1}
\label{conj:negative-witten-conjecture}
Let   $\cD^{r,\epsilon}$ be the total descendent potential defined by
\beq\label{def:partitionfunction-r}
\cD^{r,\epsilon }(\mathbf t;\hbar) = \exp\bigg(\sum_{g\geq 0}\hbar^{2g-2} \sum_{n\geq 0} \frac{1}{n!} \<\mathbf t(\psi_1),\cdots,\mathbf t(\psi_n)\>_{g,n}^{r,\epsilon}\bigg),
\eeq
 where
$ \mathbf t(\psi) = \textstyle \sum_{k\geq 0,1\leq a \leq r-1}  t_k^a  \phi_{a} \psi^k \in \Frob[[\psi]]$, $\Frob=\sspan_{\mathbb Q}(\phi_{1},\cdots,\phi_{r-1})$.
Then it is a tau-function of $r$KdV hierarchies for any $\epsilon$,  with KP times $\{\frac{p_k}{k}\}_{k\geq1}$ under the change of variables
\beq\label{eqn:wittentime}
t^a_k
=-\frac{(-1)^k}{\sqrt{-r}}\,\frac{\Gamma(k+\frac{a}{r})}{\Gamma(\frac{a}{r})}\, p_{rk+a},\qquad k\geq 0,\quad a=1,\cdots,r-1.
\eeq
\end{manualtheorem}
The special case $\epsilon=0$ is the ``Negative $r$-spin Witten   conjecture" formulated in~\cite{CGG22},
our formulation generalizes and proves their conjecture for arbitrary  $\epsilon$.

\subsubsection{KdV integrability of the descendent theory of the Hurwitz space $M_{1,1}$}\label{sec:intro-weierstrass}
The Hurwitz space (with fixed type of ramification) $M_{\mathfrak g, r_1-1,\cdots,r_m-1}$ is the moduli space parameterizing the collections $(\Sigma,b_1,\cdots,b_m;x)$, where $\Sigma$ is a genus $\mathfrak{g}$ Riemann surface with marked points $b_1,\cdots,b_m$ and a fixed symplectic basis for $H_1(\Sigma,\mathbb Z)$, and $x$ is a marked meromorphic function $x: \Sigma\to \mathbb P^1$ such that $x^{-1}(\infty)=r_1b_1+r_2b_2+\cdots+r_mb_m$.
In~\cite[Lecture 5]{Dub96}, Dubrovin constructed a homogeneous semi-simple Frobenius manifold structure on a covering of Hurwitz spaces by viewing $x$ as the superpotential in the sense of~\cite[Appendix I]{Dub96}.
The CohFT associated with the Hurwitz is then uniquely reconstructed from this Frobenius manifold by Givental--Teleman reconstruction theorem~\cite{Giv01a,Tel12}.
In~\cite{DNOPS19}, it is proved that this CohFT is identified with the one defined by topological recursion on spectral curve data $(\Sigma,x, y(z)=z)$.
It follows immediately from our result that (part of) the non-perturbative descendent generating series of the CohFT associated with the Hurwitz space $M_{\mathfrak g,r-1}$ gives a tau-function of polynomial-reduced KP hierarchy.

In this paper, we show the explicit results for the case $\mathfrak{g}=1$, $r=2$, i.e., the Hurwitz space $M_{1,1}$.
It has been shown in~\cite{DNOPS19} that the CohFT associated with the Hurwitz space $M_{1,1}$
coincides with the CohFT associated with the Weierstrass curve (see \S \ref{sec:TR-weierstrass}).
We further establish the relationship between the two corresponding descendent generating series.
By applying the TR-KP correspondence, the non-perturbative TR descendent generating series is a tau-function of KdV hierarchy.
We obtain the following theorem.
\begin{manualtheorem}{2} \label{thm:elliptic-KdV}
Restricted to the $\phi_1$ direction, the descendent invariants for the  Hurwitz space $M_{1,1}$ match with the TR descendent invariants of the Weierstrass curve:
$$
\< \phi_1 \psi_1^{k_1}, \cdots,\phi_1 \psi_n^{k_n} \>_{g,n}^{\D} = \frac{1}{\prod_{i=1}^n (2k_i+1)!!}\cdot \<\alpha_{2k_1+1},\cdots,\alpha_{2k_n+1}\>_{g,n}^{\lambda},
$$
where $\lambda=\sqrt{2\, x}$.
Furthermore, the non-perturbative modification $\cD^{\rm NP}_{\mu,\nu}({\bf t};\hbar;w)$ of the total descendent potential
$$
\cD({\bf t};\hbar) = \exp\bigg(\sum_{g\geq 0}\hbar^{2g-2}\sum_{n\geq 0}\< \phi_{i_1} \psi_1^{k_1}, \cdots,\phi_{i_n} \psi_{n}^{k_n} \>_{g,n}^{\D}  \frac{t_{k_1}^{i_1}\cdots t_{k_n}^{i_n}}{n!}\bigg)
$$
of the Hurwitz space $M_{1,1}$ is a tau-function of KdV hierarchy with KdV times $p_{2k+1}/(2k+1)$ under the change of variables
$t^i_k=\delta_{i,1}(2k-1)!!p_{2k+1}$, $k\geq 0$.
\end{manualtheorem}

\subsection{Outline of the paper}
This paper is organized as follows.
In \S \ref{sec:CohFT} and \S \ref{sec:TR}, we review the axiomatic theory for the enumerative geometry and the topological recursion, respectively.
In \S \ref{sec:TR-Geo}, we establish the relationship between the TR descendent invariants and the geometric descendent invariants.
In \S \ref{sec:NP-TR}, we introduce the non-perturbative contributions in the topological recursion, and propose our generalization of the Witten conjecture via the KP integrability conjecture for the (non-perturbative) topological recursion.
In \S \ref{sec:TRKPproof}, we prove the KP integrability conjecture.
In \S \ref{sec:geo-negative-rspin}, we generalize and prove the negative $r$-spin Witten conjecture.
In \S \ref{sec:TR-weierstrass}, we establish the descendent theory of the Weierstrass curve and prove its non-perturbative total descendent potential is a tau-function of  the KdV hierarchy.

\textbf{Acknowledgements.}
The first version of the paper was presented at the conference “Integrable Systems and Their Applications 2023” in Sochi, and the revised version was presented at “Integrable Systems and Their Applications 2024” in Sochi and at the “Sino-Russian Interdisciplinary Mathematics Conference 2024” in Moscow. The authors would like to thank the organizers for providing this opportunity.
The authors would also like to thank Petr Grinevich, Maxim Kazarian, Alexander Zheglov, Jian Zhou, Zhengyu Zong and Chenglang Yang for their helpful discussions.
This work is partially supported by the National Key Research and Development Program of China (Grant Nos. 2023YFA1009802
and 2020YFE0204200) and the National Natural Science Foundation of China (Grant Nos.
12225101, 12061131014 and 11890660).

\section{Axiomatic theory for enumerative geometry}
\label{sec:CohFT}
In this section, we review the axiomatic theory for enumerative geometry, including the notion of CohFT, the ancestor potential as its generating series, and the descendent potential of a CohFT calibrated by $S$-matrix and $\dvac$-vector. We show some necessary results without giving  proof and we refer readers to~\cite[\S 1, 2]{GZ25} for more details.

\subsection{Cohomological field theory}\label{subsec:CohFT}
The concept of CohFT was introduced by Kontsevich and Manin \cite{KM94} to capture the axiomatic properties of virtual fundamental class in Gromov--Witten theory. It turns out that the CohFT properties hold generally in various kinds of enumerative geometries.

Fix a field $\mathbb F$. The state space  $\Frob$ is a finite dimensional $\mathbb F$-vector space with a non-degenerate symmetric 2-form $\eta$.
Let $N=\dim \Frob$ and $\{\phi_i\}_{i=1}^{N}$ be a basis, called the flat basis, of $\Frob$, we denote by $\{\phi^i\}_{i=1}^{N}$ the dual basis of $\{\phi_i\}_{i=1}^{N}$ with respect to $\eta$.
\begin{definition}\label{def:cohft}
Let $\mathbb A$ be an $\mathbb F$-algebra.
A CohFT $\Omega=\{\Omega_{g,n}\}_{2g-2+n>0}$ on $(\Frob, \eta)$ is defined to be a set of maps to the cohomological classes on the moduli space $\Mbar_{g,n}$ of stable curves
$$
\Omega_{g, n} \colon \Frob ^{\otimes n}  \rightarrow   \mathbb A \otimes H^*(\Mbar_{g,n},\mathbb{Q})
$$
satisfying the following two axioms:

{\bf 1) $S_n$-invariance Axiom:} $\Omega_{g,n}(v_1,\cdots,v_n)$ is invariant under permutation between $v_i\in \Frob$, $i=1,\cdots,n$, corresponding to the marked points of $\Mbar_{g,n}$.

{\bf 2) Gluing Axiom:}
The pull-backs $q^*\Omega_{g,n}$ and $r^*\Omega_{g,n}$ of the gluing maps
\begin{equation*}
q\colon \Mbar_{g-1, n+2}\to \Mbar_{g,n} \quad {\text{ and }}\quad
r\colon \Mbar_{g_1, n_1+1}\times \Mbar_{g_2, n_2+1} \to \Mbar_{g_1+g_2, n_1+n_2}
\end{equation*}
 are equal to the contraction
of $\Omega_{g-1, n+2}$ and $\Omega_{g_1, n_1+1} \otimes \Omega_{g_2, n_2+1}$ by the bivector $\sum_i \phi_i\otimes \phi^i$.
\end{definition}

A CohFT induces a {\it quantum product} $*$ on $\Frob$ by
$$
\eta(v_1* v_2, v_3)=\Omega_{0,3}(v_1, v_2, v_3),
$$
where axioms 1) and 2) ensure its commutativity and associativity respectively.

We introduce the ancestor correlator $\<-\>_{g,n}^{\Omega}$, $2g-2+n>0$, for the CohFT $\Omega$:
\beq\label{def:bracket-anc}\textstyle
\<v_1\bar\psi^{k_1},\cdots,v_n\bar\psi^{k_n}\>^{\Omega}_{g,n}
:=\int_{\Mbar_{g,n}}\Omega_{g,n}(v_1, \cdots , v_n)\psi_1^{k_1}\cdots\psi_n^{k_n},
\eeq
and $\<-\>_{g,n}^{\Omega}:=0$ for $2g-2+n\leq 0$.
We usually drop the superscript $\Omega$ in $\<-\>_{g,n}^{\Omega}$ if no confusion arises.
The genus-$g$ ancestor potential $\bar\cF_{g}({\bf s})$ of $\Omega$ is defined by
$$\textstyle
\bar\cF_{g}({\bf s}):=\sum_{n\geq 0}\frac{1}{n!}\<{\bf s}(\bar\psi_1),\cdots,{\bf s}(\bar\psi_n)\>_{g,n},
$$
where ${\bf s}(\givz)=\sum_{k\geq 0}s_k\givz^k\in \Frob[[\givz]]$.
The total ancestor potential $\cA({\bf s};\hbar)$ is defined by
\beq\label{def:FandA-CohFT}
\cA({\bf s};\hbar):=e^{\sum_{g\geq 0}\hbar^{2g-2}\bar\cF_{g}(\bf s)}.
\eeq

Let $\mathcal U \subset  \Frob $ be a (formal) neighborhood of $0\in \Frob$,
for $\ttau\in \mathcal U$, we define $\Omega^{\ttau}$ by
$$\textstyle
\Omega^{\ttau}_{g,n}(v_1,\cdots, v_n):=\sum_{m\geq 0}\frac{1}{m!}p^{m}_{*}\Omega_{g,n+m}(v_1,\cdots, v_n,\ttau,\cdots,\ttau),
$$
it is convergent under suitable topology on $\mathbb A$
~\footnote{
 For example,  let $\ttau=\sum_i \ttau^i\phi_i$ and  $\mathbb A=\mathbb Q[[\ttau^1,\cdots,\ttau^{N}]]$. Since $p^{m}_{*}\Omega_{g,n+m}(-,\ttau,\cdots,\ttau) = O(\ttau^n)$, the summation is convergent under the $\ttau$-adic topology.
}.
Then, $\Omega^{\ttau}$  for  $\ttau \in \mathcal U$ gives another CohFT on $\Frob$ with the symmetric $2$-form $\eta$ remains unchanged (see e.g. \cite[Proposition 7.1]{Tel12}) and with $\mathbb F$-algebra $\mathbb A[[\ttau]]$.
In this way,   we obtain a family of CohFTs, called the shifted CohFTs, parameterized by the coordinate $\ttau \in \mathcal U$.
For the shifted CohFT $\Omega^{\ttau}$, we use the notation
$$
*_{\ttau},\qquad
\<-\>_{g,n}^{\ttau},\qquad
\bar\cF_{g}^{\ttau},\qquad
\cA^{\ttau},
$$
to denote, respectively, the quantum product, ancestor correlators, genus-$g$ ancestor potential, and total ancestor potential.

\subsection{Vacuum vector}
In the original definition of the CohFT, one requires the following {\it flat unit axiom}: there exist an element ${\bf 1}\in\Frob$, called the flat unit, such that
$$
p^*\Omega_{g,n}(v_1, \cdots , v_n)=\Omega_{g,n+1}(v_1, \cdots , v_n, {\bf 1}), \qquad 2g-2+n>0.
$$
If the flat unit axiom holds for $\Omega$, then it holds for $\Omega^{\ttau}$ with the same element ${\bf 1}$ and this element is the unity of the quantum product $*_{\ttau}$.

Usually when one consider the genus zero part of shifted CohFTs with flat unit,  a (formal) Frobenius manifold  structure \cite{Dub96}  on $\mathcal U$  naturally arise (see e.g. \cite{Tel12, LP04}).
We have $\Frob\cong T_{\ttau}\mathcal U$ by identifying $\phi_i$ with $\pd_{\ttau^i}$, and via this identification, the $2$-form $\eta$ defines a flat metric on $\mathcal U$.

In general, the flat unit does not necessarily exist, and the flat unit axiom is generalized to the vacuum axiom by Teleman~\cite{Tel12}.
\begin{definition}\label{def:vacuum}
A distinguished element $\vac(\givz)=\sum_{m\geq 0}\vac_m \givz^m\in \Frob[[\givz]]$ is called the {\it vacuum vector}, and $\Omega$ is called the CohFT with vacuum, if it satisfies the following {\it vacuum axiom}:
\beq\label{eqn:vacuum-axiom}
 p^*\Omega_{g,n}(v_1, \cdots , v_n)=\Omega_{g,n+1}\big(v_1, \cdots , v_n, \vac(\psi_{n+1}) \big), \qquad 2g-2+n>0,
\eeq
where $p\colon\Mbar_{g,n+1}\to \Mbar_{g,n}$ is the forgetful map defined by forgetting the last marked point.
\end{definition}

If the vacuum axiom holds for $\Omega$ with vacuum vector $\vac(\givz)$, then it holds for $\Omega^{\ttau}$ with vacuum vector $\vac^{\ttau}(\givz)$ which is determined by the following quantum differential equation (QDE) (see~\cite[Proposition 7.3]{Tel12} and~\cite[Proposition 1.4]{GZ25}):
$$
\givz\pd_{\ttau^i}\vac^{\ttau}(\givz)=\phi_i*_{\ttau}\vac^{\ttau}(\givz)-\phi_i, \qquad i=1,\cdots,N.
$$
It follows immediately ${\bf 1}^{\ttau}:=\vac_0^{\ttau}$ is the unity of the quantum product $*_{\ttau}$ and
$$\textstyle
\vac^{\ttau}(\givz)=\sum_{k\geq 0}\pd^k_{{\bf 1}^{\ttau}}({\bf 1}^{\ttau}) \, \givz^k .
$$

The vacuum vector is called flat if it is covariantly constant with respect to $\ttau$.
In such case one has $\vac^{\ttau}(\givz)={\bf 1}$ being exactly the flat unit.
When the vacuum vector is non-flat,  ${\bf 1}^{\ttau}=\vac_0^{\ttau}$ is called a {\it non-flat unit} and $\mathcal U$ is called a generalized Frobenius manifold~\cite{LQZ22}.

\subsection{Homogeneity condition}\label{sec:hom-CohFT}
We call the (generalized) Frobenius manifold $\mathcal U$ is homogeneous if there is an {\it Euler vector field} $E$ which has the form
$$\textstyle
E=\sum_i ((1-d_i)\ttau^i+r^i)\partial_i,
$$
where $d_i, r^i$ are some constants, such that the quantum product $*_{\ttau}$ and the metric $\eta$ satisfy the following equations: for $v_1,v_2\in T_{\ttau}\mathcal U\cong \Frob$,
\begin{align}
[E,v_1*_{\ttau}v_2]-[E,v_1]*_{\ttau}v_2-v_1*_{\ttau}[E,v_2]=&\, v_1*_{\ttau}v_2,\label{eqn:Euler-qp}\\
E\big(\eta(v_1,v_2)\big)-\eta([E,v_1],v_2)-\eta(v_1,[E,v_2])=&\, (2-\delta)\eta(v_1,v_2). \label{eqn:Euler-eta}
\end{align}
Here $\delta$ is a rational number called the {\it conformal dimension}.

Given the Euler vector field $E$, we call the shifted CohFT $\Omega^{\ttau}$ {\it homogeneous} if the following equations hold: for $2g-2+n>0$ and $v_1,\cdots,v_n\in T_{\ttau}{\mathcal U}\cong \Frob$,
\begin{align*}
&\, \textstyle E(\Omega^{\ttau}_{g,n}(v_1,\cdots,v_n))-\sum_{i=1}^{n}\Omega^{\ttau}_{g,n}(v_1,\cdots,[E,v_i],\cdots,v_n)\\
=&\, \big((g-1)\delta+n-\deg\big)\Omega^{\ttau}_{g,n}(v_1,\cdots,v_n),
\end{align*}
where the operator $\deg$ is defined by $\deg \alpha=k\cdot \alpha$ for $\alpha\in H^{2k}(\Mbar_{g,n},\mathbb Q)$.

We introduce the operator $\mu$ defined by the Euler vector field $E$ via
\beq\label{def:mu}
\mu(v)=(1-\tfrac{\delta}{2})v-\nabla_{v}E,
\eeq
where $\nabla$ is the Levi-Civita connection of $\eta$.
The vacuum axiom and the homogeneity condition for $\Omega^{\ttau}$ gives us
the following homogeneity condition for $\vac^{\ttau}(\givz)=\sum_{m\geq 0}\vac^{\ttau}_m \givz^m$:
$$
(\mu+\tfrac{\delta}{2}+m)\vac^{\ttau}_{m}=-E*_{\ttau}\vac^{\ttau}_{m+1}, \qquad m\geq 0.
$$
We refer readers to~\cite[\S 8.4]{Tel12} and~\cite[\S 1.5]{GZ25} for the proof.

Now we describe the homogeneity conditions at $\ttau=0$.
We call $\Omega$ is {\it homogeneous} if there exists an Euler vector field $E$ such that for $2g-2+n>0$ and $v_1,\cdots,v_n\in \Frob$,
\begin{align*}
&\, \textstyle p_{*}\Omega_{g,n+1}(v_1,\cdots,v_n,E|_{\ttau=0})
-\sum_{i=1}^{n}\Omega_{g,n}(v_1,\cdots,(\tfrac{\delta}{2}+\mu)v_i,\cdots,v_n)\\
=&\, \big((g-1)\delta-\deg\big)\Omega_{g,n}(v_1,\cdots,v_n).
\end{align*}
This is exactly the homogeneity condition for $\Omega^{\ttau}$ taking value at $\ttau=0$.
Clearly, this implies the following homogeneity condition for $\vac(\givz)=\sum_{m\geq 0}\vac_m \givz^m$: let $E_{0}=E|_{\ttau=0}$, then
\beq\label{eqn:vacuum-euler}
(\mu+\tfrac{\delta}{2}+m)\vac_{m}=-E_{0}*\vac_{m+1}, \qquad m\geq 0.
\eeq
By the definition of $\Omega^{\ttau}$, it is straightforward to check that the homogeneity condition holds for $\Omega^{\ttau}$ (resp. $\vac^{\ttau}(\givz)$) if and only if it holds for $\Omega$ (resp. $\vac(\givz)$).

\subsection{Actions on CohFTs}
\label{subsec:CohFT-reconstruction}
Now we recall the $R$-action and the $T$-action on an arbitrary CohFT $\Omega$ with state space $(\Frob,\eta)$.
We follow the setting in~\cite{PPZ15} and we refer readers to \cite{CGL18} for a generalization of the setting.

Let $R(\givz)$ be a formal power series
$$
R(\givz)={\rm Id}+R_1\givz+R_2\givz^2+\cdots, \quad R_k\in \mathrm{End}(\Frob)\otimes {\mathbb{E}}
$$
satisfying the symplectic condition
$
R^*(\givz)R(-\givz)={\rm Id},
$
where $R^*(\givz)$ is the adjoint of $R(\givz)$ with respect to $\eta$.
Here  $\mathbb E$ is an algebraic extension of the fractional field  of $\mathbb A$.
We also introduce the matrix $V$  associated to the $R$-matrix.:
\beq\label{eqn:V-matrix}
V(\givz, \givw)=\sum_{k,l\geq 0}V_{k,l}\givz^k\givw^l=\frac{{\rm Id}-R^*(-\givz)R(-\givw)}{\givz+\givw}.
\eeq

\begin{definition}[$R$-action]
Let $\mathcal{G}_{g,n}$ be the set of stable graphs of genus $g$ with $n$ legs. For $\Gamma\in \mathcal{G}_{g,n}$,  define
$
{\rm Cont}_\Gamma\colon \Frob ^{\otimes n}  \rightarrow   \mathbb E \otimes H^*(\Mbar_{g,n},\mathbb{Q})
$
by the following construction:
\begin{itemize}
  \item[1.] placing $\Omega_{g(v), n(v)}$ at each vertex $v\in \Gamma$,
  \item[2.] placing $R^*(-\psi_i)\cdot$ at each leg $l_i \in \Gamma$ labeled by $i=1,\ldots, n$,
  \item[3.] placing $V(\psi_{v'}, \psi_{v''})\phi_i\otimes \phi^i$ at each edge $e\in \Gamma$ connection vertexes $v'$ and $v''$.
\end{itemize}
Then we define  $R\cdot\Omega$ to be the  CohFT on the state space $(\Frob,\eta)$ with the maps
$$\textstyle
(R\cdot\Omega)_{g,n}=\sum_{\Gamma\in \mathcal{G}_{g,n}}\frac{1}{\left|{\rm Aut}(\Gamma)\right|}\xi_{\Gamma,*}\Cont_{\Gamma},
$$
where $\xi_{\Gamma}\colon \prod_{v\in \Gamma}\Mbar_{g(v),n(v)}\to \Mbar_{g,n}$ is the canonical map with image equal to the boundary stratum associated to the graph $\Gamma$, and its push-forward $\xi_{\Gamma,*}$ induces a homomorphism from the strata algebra on  $\prod_{v\in \Gamma}\Mbar_{g(v),n(v)}$ to the cohomology ring (see~\cite{CGL18, PPZ15} for details).
\end{definition}

Let $T(\givz)$ be a power series in $\givz$ starting from degree 1:
$$\textstyle
T(\givz)=\sum_{k\geq 1}T_k\givz^k, \quad T_k\in \Frob\otimes\mathbb E.
$$

\begin{definition}[$T$-action]
The $T$-action on the CohFT $\Omega$ is defined by
$$\textstyle
(T\cdot\Omega)_{g,n}(-)
=\sum_{m\ge 0}\frac{1}{m!}p^m_{*}\Omega_{g,n+m}(-\otimes T(\psi_{n+1})\otimes\cdots\otimes T(\psi_{n+m})),
$$
where $p^{m}\colon \overline{\mathcal{M}}_{g, n+m} \mapsto \Mgn$ is the forgetful map forgetting the last $m$ marked points.
We define the symmetric two form  $\eta$ on $T\cdot \Omega$  to be the same as one on $\Omega$.
\end{definition}

The $R$-actions and $T$-actions satisfy the following equation (see, e.g.,~\cite{PPZ15}):
$$
T\cdot T'\cdot \Omega=(T+T')\cdot \Omega,\qquad
R\cdot T\cdot \Omega=(RT)\cdot R\cdot \Omega.
$$

\subsection{Semisimple CohFT and its classification}\label{subsec:Frob-mfd}
A CohFT is called {\it semi-simple} if there exists a basis $\{e_\alpha\}_{\alpha=1}^{N}$, called the {\it canonical basis}, of   $\Frob \otimes \mathbb E$, such that
$e_\beta * e_\gamma=\delta_{\beta\gamma}e_\beta$.
The canonical basis is orthogonal, i.e., for $\beta\ne \gamma$, $\eta(e_\beta,e_{\gamma})=0$.
We define $\Delta_{\beta}=\eta(e_\beta,e_{\beta})^{-1}$ and let $\{\bar e_\beta=\Delta_\beta^{\frac{1}{2}}e_\beta\}_{\beta=1}^{N}$ be the normalized canonical basis.
The transition matrix between the flat basis and the normalized canonical basis is denoted by $\Psi$, given by $\phi_i= \Psi^{\bar\beta}_i \bar e_{\beta}$. Furthermore, we define $\widetilde\Psi^{\beta}_i=\Delta_\beta^{\frac{1}{2}}\Psi^{\bar\beta}_i$ as the transformation matrix between the flat basis and the canonical basis.
\begin{convention}\label{conv:notations-coord}
In this paper, we will always use notation $\{\phi_i\}$ for the flat basis, $\{e_\beta\}$ for the canonical basis and $\{\bar e_\beta\}$ for the normalized canonical basis. The corresponding coordinates $\{s_k\}$ on the big phase space will be denoted by $\{s_k^i\}$, $\{s^{\beta}_k\}$ and $\{s_k^{\bar\beta}\}$, respectively.
\end{convention}

Now we recall the classification result of semi-simple CohFTs according to Givental and Teleman~\cite{Giv01a,Giv01b,Tel12}.
For a $N$-dimensional semi-simple CohFT $\Omega$ on $(\Frob,\eta)$ with vacuum $\vac(\givz)$, there is an $R$-matrix $R(\givz)$ and a $T$-vector $T(\givz)$ such that
\beq\label{eqn:Giv-Tel}\textstyle
\Omega=R\cdot T\cdot (\oplus_{\beta=1}^{N}\Omega^{\rm KW_{\beta}}).
\eeq
Here $\Omega^{\rm KW_{\beta}}$ is the trivial CohFT on $(\mathbb F\{\bar e_\beta\}, \eta)$, i.e., $\Omega^{\rm KW_{\beta}}_{g,n}(\bar e_\beta,\cdots,\bar e_\beta)=1$.
In fact, let $\bar {\bf 1} = \sum_\beta  \bar e_\beta$, then the $T$-vector is explicitly given by:
\beq\label{eqn:T-vacuum}
T(\givz)=\givz\cdot {\bar {\bf 1}}-\givz \cdot R(\givz)^{-1}\vac(\givz).
\eeq

Furthermore, suppose the existence of the Euler vector field $E$, and let $E_0=E|_{\ttau=0}$.
Then the CohFT $\Omega$ constructed by equation~\eqref{eqn:Giv-Tel} is homogeneous if and only if the vacuum vector satisfies the homogeneity condition~\eqref{eqn:vacuum-euler} and the $R$-matrix satisfies the homogeneity condition:
\beq\label{eqn:R-matrix-euler}
[R_{m+1}, E_0*]=(m+\mu)R_{m},\qquad m\geq 0.
\eeq
See~\cite[Proposition 8.5 and Remark 8.2]{Tel12} for the proof.
If $E_0*$ has distinct nonzero eigenvalues, then equation~\eqref{eqn:R-matrix-euler} (resp. equation~\eqref{eqn:vacuum-euler}) uniquely and explicitly determines the $R$-matrix (resp. the vacuum vector) from data $(\Frob, \eta, *, E_0)$.
If not, the semi-simplicity ensures that for generic $\ttau\in\mathcal U$, $\Omega^{\ttau}$ is semi-simple and  $E*_{\ttau}$ has distinct nonzero eigenvalues.
One can first reconstruct the $R$-matrix $R^{\ttau}$ (resp. $\vac^{\ttau}$) for $\Omega^{\ttau}$ by equation $[R^{\ttau}_{m+1}, E*_{\ttau}]=(m+\mu)R^{\ttau}_{m}$ (resp. $(\mu+\tfrac{\delta}{2}+m)\vac^{\ttau}_{m}=-E*_{\ttau}\vac^{\ttau}_{m+1}$), $m\geq 0$, and then one gets $\Omega$ from $\Omega^{\ttau}$ by taking $\ttau=0$.
This result is called the {\it Givental--Teleman reconstruction theorem}.

\begin{remark}\label{rem:reconstruction-0}
In~\cite{Tel12}, Teleman has proved that for a semi-simple CohFT $\Omega$, the vacuum axiom (i.e., equation~\eqref{eqn:vacuum-axiom}) always holds for $n\geq 1$, but it may fail for $n=0$.
Furthermore, given a semi-simple CohFT $\Omega$ on $(\Frob,\eta)$ (without requiring the vacuum axiom),
one always has an $R$-matrix $R(\givz)$ and a $T$-vector $T(\givz)$ such that the CohFT
$\Omega':=R\cdot T\cdot (\Omega^{\rm KW})^{\oplus N}$ satisfies
$\Omega_{g,n}=\Omega'_{g,n}$ for $n\geq 1$.
If the CohFT $\Omega$ satisfies the vacuum axiom~\eqref{eqn:vacuum-axiom} for $n=0$ (which is always satisfied by $\Omega'$), then one has
$$\textstyle
\Omega_{g,0}=\frac{1}{2g-2} p_{*}(p^{*}\Omega_{g,0}\cdot \psi_{1})
=\frac{1}{2g-2} p_{*}\Omega_{g,1}(\vac(\psi_{1}))\cdot\psi_{1}
=\frac{1}{2g-2} p_{*}\Omega'_{g,1}(\vac(\psi_{1}))\cdot\psi_{1}
=\Omega'_{g,0},
$$
where we have used $p_{*}(p^{*}\alpha\cdot \beta)=\alpha\cdot p_{*}\beta$ and $p_{*}\psi_1=2g-2$ on $\Mbar_{g,0}$.
Otherwise, $\Omega_{g,0}\ne\Omega'_{g,0}$.
But if $\Omega$ is homogeneous, then as it has been pointed out in~\cite[Proposition 3.15]{CGG22}, the homogeneity condition
gives $\Omega_{g,0}=\Omega'_{g,0}$ except for the part of degree $(g-1)\delta$.
\end{remark}

We show some alternative formulations of the classification of semi-simple CohFTs.
Firstly,
$$
\Omega=R\cdot \tilde T\cdot \Omega^{\rm top},
$$
where $ \Omega^{\rm top} = T_1\givz \cdot (\oplus_{\beta=1}^{N}\Omega^{\rm KW_{\beta}})$ and $\tilde T(\givz) = T(\givz) -  T_1 \givz$.
We note here $T_1=\sum_{\beta}(1-\Delta_{\beta}^{-\frac12})\bar e_\beta$ and one has formula
$\Omega^{\rm top}_{g,n}(\bar e_{\beta_1},\cdots,\bar e_{\beta_n})=\delta_{\beta_1,\cdots,\beta_n}\cdot \Delta_{\beta_1}^{\frac{2g-2+n}{2}}$ for $2g-2+n>0$.
Secondly,
$$
\Omega=T_{\vac}\cdot R\cdot T_{R}\cdot (\oplus_{\beta=1}^{N}\Omega^{\rm KW_{\beta}}),
$$
where
$T_R(\givz):=\givz\cdot \bar{\bf 1}-\givz\cdot R(\givz)^{-1}{\bf 1}$ and $T_{\vac}(\givz):=\givz\cdot {\bf 1}-\givz\cdot \vac(\givz)$.

\begin{remark}
In \cite{PPZ15}, the composition $R\cdot T_R \cdot $ of actions $R\cdot$ and $T_{R}\cdot$ is denoted by ``$R.$".
\end{remark}

Now we show the generating series version of the classification of semi-simple CohFTs.
We denote by $\cD^{\rm KW}({\bf t};\hbar)$, ${\bf t}=\sum_{k}t_k\givz^k\in \mathbb C\{e\}[[\givz]]$ the Kontsevich--Witten tau-function, which is the total ancestor potential of the trivial CohFT.
We define $\cD_N^{\rm KW}({\bf q};\hbar)=\prod_{\beta=1}^{N}\cD^{\rm KW}({\bf q}^{\bar\beta};\hbar)$,
where ${\bf q}=\sum_{k}q_k\givz^k\in \Frob[[\givz]]$.
\begin{proposition}[\cite{Giv01a, Tel12}]\label{prop:AncR-action}
For the semi-simple CohFT $\Omega=R\cdot T\cdot (\oplus_{\beta=1}^{N}\Omega^{\rm KW_{\beta}})$,
its total ancestor potential $\cA$ is given by the following formula:
\beq\label{eqn:formula-A}
\cA({\bf s};\hbar)=\big(\widehat{T_{\vac}}\widehat{R}\widehat{\Delta}\cD_N^{\rm KW}\big)({\bf s};\hbar).
\eeq
The formula is explained as follows.
Firstly,
$$\textstyle
\big(\widehat{\Delta}\cD_N^{\rm KW}\big)({\bf q};\hbar)
=\prod_{\beta=1}^{N}\cD^{\rm KW}(\Delta_{\beta}^{\frac{1}{2}}{\bf q}^{\bar\beta};\Delta_{\beta}^{\frac{1}{2}}\hbar).
$$
Secondly, the operator $\widehat{R}$ acts as follows: for a function $\cG({\bf q};\hbar)$ on $\Frob[[\givz]]$,
$$
(\widehat{R}\mathcal G)({\bf s};\hbar)=[e^{\frac{\hbar^2}{2}V(\pd_{\bf q},\pd_{\bf q})}\mathcal G]({\bf q(s)};\hbar)
$$
where $V(\pd_{\bf q},\pd_{\bf q})=\sum V^{\bar\alpha,\bar\beta}_{k,l}\pd_{q^{\bar\alpha}_{k}}\pd_{q^{\bar\beta}_{l}}$
whose coefficients are given by $V^{\bar\alpha,\bar\beta}_{k,l}=[\givz^k\givw^{l}]\eta(\bar e_\beta,V(\givz,\givw)\bar e_\alpha)$
and the coordinate transform ${\bf q(s)}$ is given by ${\bf q}(\givz)-\givz\cdot{\bf 1}=R^{-1}(\givz)({\bf s}(\givz)-\givz\cdot {\bf 1})$.
Lastly, the $T_{\vac}$-action behaves as a shift on the coordinate: for a function $\cG({\bf s};\hbar)$ on $\Frob[[\givz]]$,
$$
(\widehat{T_{\vac}}\mathcal G)({\bf s};\hbar)=\mathcal G({\bf s};\hbar)|_{s_k\to s_k-\vac_{k-1}, k\geq 2}.
$$
\end{proposition}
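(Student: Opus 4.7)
The plan is to follow Givental's quantization principle, matching the three operators on the right-hand side of~\eqref{eqn:formula-A} against the three-factor decomposition $\Omega=T_{\vac}\cdot R\cdot T_{R}\cdot (\oplus_{\beta=1}^{N}\Omega^{\rm KW_{\beta}})$ recalled above, working outward from the trivial building blocks.

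\textbf{Stage 1 (the $\widehat{\Delta}$-factor).} By definition of the ancestor potential, the total ancestor potential of $\Omega^{\rm KW_\beta}$ is the Kontsevich--Witten tau-function $\cD^{\rm KW}({\bf s}^{\bar\beta};\hbar)$ in the coordinate $s^{\bar\beta}_k$ associated to $\bar e_\beta$. Correlators of a direct sum of CohFTs factorize across the components, so the total ancestor potential of $\oplus_\beta \Omega^{\rm KW_\beta}$ is $\cD_N^{\rm KW}({\bf s};\hbar)=\prod_\beta \cD^{\rm KW}({\bf s}^{\bar\beta};\hbar)$. The rescaling $\widehat{\Delta}$ then encodes the homogeneity factor $\Delta_\beta^{(2g-2+n)/2}$ that appears when one rewrites the correlators of $\oplus_\beta \Omega^{\rm top}_\beta$ (the version in which $T_1\givz$ has been absorbed) in the canonical-basis coordinates, using the dilaton-type scaling that simultaneously scales $\hbar$ and the $s^{\bar\beta}_k$ by $\Delta_\beta^{1/2}$.

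\textbf{Stage 2 (the $\widehat{R}$-factor).} This is the essential step. By the definition of the $R$-action, the ancestor correlators of $R\cdot(\oplus_\beta \Omega^{\rm KW_\beta})$ are a sum over stable graphs $\Gamma\in\mathcal{G}_{g,n}$ with: a KW-correlator at each vertex, an insertion $R^*(-\psi)$ at each leg, and the bivector $V(\psi',\psi'')\sum_i \phi_i\otimes\phi^i$ at each edge. I aim to identify this graph sum with the Wick expansion
\[
(\widehat{R}\mathcal G)({\bf s};\hbar)=\Big[e^{\frac{\hbar^2}{2}V(\pd_{\bf q},\pd_{\bf q})}\mathcal G\Big]({\bf q}({\bf s});\hbar),\qquad \mathcal G=\widehat{\Delta}\cD_N^{\rm KW}.
\]
The matching proceeds by the usual Feynman correspondence: each application of $V(\pd_{\bf q},\pd_{\bf q})$ glues two KW correlator slots with the bivector encoded by~\eqref{eqn:V-matrix}, reproducing the edge decoration (the symplectic condition $R^*(\givz)R(-\givz)=\mathrm{Id}$ guarantees both that the pole of $V$ on the diagonal cancels and that $V$ defines a genuine symmetric bivector); the coordinate change ${\bf q}(\givz)-\givz\cdot{\bf 1}=R^{-1}(\givz)({\bf s}(\givz)-\givz\cdot{\bf 1})$ dresses each leg by $R^*(-\psi)$ via adjointness of $R^{-1}$ with respect to $\eta$; and the dilaton-type shift $\givz({\bf 1}-R^{-1}(\givz){\bf 1})$ implicit in this change of variables simultaneously encodes the $T_{R}$-action on the CohFT side. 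The automorphism weight $|\mathrm{Aut}(\Gamma)|^{-1}$ matches the $\frac{1}{m!}$ and factor-$\frac{1}{2}$ symmetries produced by the exponential Wick expansion.

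\textbf{Stage 3 (the $\widehat{T_{\vac}}$-factor).} By the definition of the $T$-action, inserting $T_{\vac}(\psi_{n+j})$ at $m$ additional marked points of $\Mbar_{g,n+m}$ and pushing forward to $\Mbar_{g,n}$ is equivalent, at the level of generating series, to the substitution ${\bf s}(\givz)\mapsto {\bf s}(\givz)+T_{\vac}(\givz)$. Since $\vac_0={\bf 1}$ yields $T_{\vac}(\givz)=\givz\cdot{\bf 1}-\givz\cdot\vac(\givz)=-\sum_{k\geq 2}\vac_{k-1}\givz^k$, this substitution is precisely the shift $s_k\mapsto s_k-\vac_{k-1}$ for $k\geq 2$ defining $\widehat{T_{\vac}}$.

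\textbf{Main obstacle.} The hard part is Stage 2: verifying that, after the coordinate change and the Wick expansion, one recovers the \emph{exact} graph sum of the $R$-action, with the correct automorphism factors, signs, and identification of the dilaton shift with the $T_R$-action. The bookkeeping of half-edge pairings and the precise identification of symmetry factors is the content of Givental's quantization formalism, which was carried out in the semi-simple setting in~\cite{Giv01a,Tel12}; we essentially invoke that formalism, and the remaining Stages 1 and 3 are short verifications in coordinates.
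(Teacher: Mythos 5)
Your proposal is correct in outline and takes essentially the same route as the paper: the paper itself gives no proof of this proposition, citing \cite{Giv01a,Tel12} and remarking that Givental's formula is adopted as the definition of the hatted operators, and your Stage 2 likewise defers the essential Feynman/Wick bookkeeping (matching the stable-graph sum of the $R$- and $T$-actions with the quantized-operator expansion) to that same formalism, with Stages 1 and 3 being the routine coordinate verifications. One small bookkeeping caveat: the shift hidden in ${\bf q}(\givz)-\givz\cdot{\bf 1}=R^{-1}(\givz)({\bf s}(\givz)-\givz\cdot{\bf 1})$ equals $T_R(\givz)-T_1\givz$ rather than all of $T_R(\givz)$, the $T_1\givz$ part having already been absorbed into $\widehat{\Delta}$ (i.e.\ into $\Omega^{\rm top}$) in your Stage 1, so your phrase ``encodes the $T_R$-action'' should be read as encoding only its $O(\givz^2)$ remainder.
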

\begin{remark}
The original description of formula~\eqref{eqn:formula-A} is proved by Givental by using quantization of quadratic Hamiltonian~\cite{Giv01a}, here we have adopted Givental’s formula as the definition for simplicity.
\end{remark}
\begin{remark}
We call $\ttau\in\mathcal U$ a semi-simple point if the shifted CohFT $\Omega^{\ttau}$ is semi-simple,
all the results in this subsection hold for $\Omega^{\ttau}$ at semi-simple point $\ttau$.
For shifted CohFTs, the terms $\Delta_{\beta}$, $\vac$, $R$, $T$, and $\cA$ may depend on the parameter $\ttau$, and we use a superscript to denote this dependence: $\Delta^{\ttau}_{\beta}$, $\vac^{\ttau}$, $R^{\ttau}$, $T^{\ttau}$, and $\cA^{\ttau}$, we have
$$
\cA^{\ttau}({\bf s};\hbar)=\big(\widehat{T^{\ttau}_{\vac}}\widehat{R^{\ttau}}\widehat{\Delta^{\ttau}}\cD_N^{\rm KW}\big)({\bf s};\hbar).
$$
\end{remark}

\subsection{Descendent invariants}  \label{sec:descedentforCohFT}
The descendent invariants are usually defined by the intersection numbers on the moduli space of certain enumerative geometry.
It is usually connected to the total ancestor potential by
$$
\cD=e^{F_1(\ttau)} \widehat {(S^{\ttau})^{-1}} \cA^{\ttau}
$$
according to Givental’s version of the Kontsevich-Manin formula \cite{Giv01a,KM94}.
Here $S^\ttau=S^{\ttau}(\givz)=\sum_{k\geq 0} S^{\ttau}_k \givz^{-k}$ is a matrix valued series known as the Givental's $S$-matrix.
The $S$-matrix consist of the genus zero data of the descendent theory, it satisfies $S^{\ttau,*}(-\givz)S^{\ttau}(\givz)=\id$ and the QDE:
\begin{equation}\label{eqn:QDE}
	\givz\partial_{\ttau^i} S^{\ttau}(\givz)=\phi_i*_{\ttau}S^{\ttau}(\givz),\qquad i=1,\cdots,N.
\end{equation}
Here $S^{\ttau,*}$ is the adjoint matrix of $S^{\ttau}$ with respect to the symmetric $2$-form $\eta$.

Given a CohFT $\Omega^{\ttau}$, a fixed choice of such a solution $S^{\ttau}(\givz)$ to the QDE~\eqref{eqn:QDE} is called an $S$-calibration of the Frobenius manifold / CohFT.
For CohFT with flat unit {\bf 1} and calibrated by an $S$-matrix, we can define its total descendent potential by the Kontsevich--Manin formula.
For CohFT with non-flat unit, to define the total descendent potential, we need to further introduce the $\dvac$-calibration~\cite{GZ25}.
\begin{definition}\label{def:dvac}
The $\dvac$-vector $\dvac^{\ttau}(\givz)$ is an vector-valued series in $\Frob((\givz^{-1}))$ which satisfies
$$
	\givz\pd_{\ttau^i}\dvac^{\ttau}(\givz)=\phi_i*_{\ttau}\dvac^{\ttau}(\givz)-\phi_i,\qquad i=1,\cdots,N.
$$
Note the $\dvac$-vector is not uniquely determined by the above conditions. A choice of $\dvac^{\ttau}(z)$ is called a $\dvac$-calibration of the (generalized) Frobenius manifold (of the CohFT).
\end{definition}
Given an $S$-matrix $S^{\ttau}(\givz)$ and $\dvac$-vector $\dvac^{\ttau}(\givz)$, we define the $J$-function by
$$
J^{\ttau}(-\givz)=-\givz S^{\ttau,*}(-\givz)\dvac^{\ttau}(\givz).
$$
The $J$-function satisfies $\pd_{\ttau^i}J^{\ttau}(-\givz)=S^{\ttau,*}(-\givz)\phi_i$ and has form
$$\textstyle
J^{\ttau}(-\givz)=\ttau-\ttau_0-\givz v(\givz)+\sum_{k\geq 0}\sum_{a=1}^{N}J^{\ttau}_{k,a}\phi^a\cdot (-\givz)^{-k-1},
$$
for some constant vectors $\ttau_0\in \Frob$ and $v(\givz)\in \Frob[\givz]$.
We also define the matrix $W^{\ttau}$:
\beq\label{eqn:W-S}
W^{\ttau}(\givz,\givw)=\sum_{k,l\geq 0}W^{\ttau}_{k,l}\givz^{-k}\givw^{-l}:=\frac{S^{\ttau,*}(\givz)S^{\ttau}(\givw)-\id}{\givz^{-1}+\givw^{-1}}.
\eeq

We introduce a linear function $J^{\ttau}({\bf t})$ and a quadratic function $W^{\ttau}({\bf t,t})$, defined by:
$$\textstyle
J_{-}^{\ttau}({\bf t}):=\sum_{k\geq 0}\sum_{a=1}^{N}J^{\ttau}_{k,a}\cdot t_k^{a},\qquad
W^{\ttau}({\bf t},{\bf t}):=\sum_{k,l\geq 0}\eta(t_k, W_{k,l}^{\ttau}\cdot t_l).
$$
We also introduce two functions, $F_0(\ttau)$ and $F_1(\ttau)$, defined by
$$\textstyle
\pd_{\ttau^i}F_0(\ttau)=J_{0,i},\qquad
\pd_{\ttau^i}F_1(\ttau)=\int_{\Mbar_{1,1}}\Omega^{\ttau}_{1,1}(\phi_i),\qquad
i=1,\cdots,N,
$$
determined up to constants $F_0(0)$ and $F_1(0)$, respectively.
We denote
$$\textstyle
F_{\rm un}^{\ttau}({\bf t};\hbar) =F_1(\ttau)+\frac{1}{\hbar^2}\big(F_0(\ttau)+J_{-}^{\ttau}({\bf t})+\frac{1}{2}\, W^{\ttau}({\bf t},{\bf t})\big).
$$

\begin{definition}
The total descendent potential $\cD({\bf t};\hbar)$ for an $S$- and $\dvac$-calibrated CohFT $\Omega$ is defined by the following generalized Kontsevich--Manin formula:
\beq   \label{eq:descendent-ancestor}
\cD({\bf t};\hbar) :=e^{F_{\rm un}^{\ttau}({\bf t}-\ttau; \hbar)} \cA^{\ttau}({\bf s(t)};\hbar),
\eeq
where the coordinate transformation ${\bf s(t)}$ is given by
${\bf s}(\givz)=[S^{\ttau}(\givz){\bf t}(\givz)]_{+}-\ttau$.
\end{definition}
It is proved in~\cite{Giv01a} (for semi-simple cases) and ~\cite{GZ25} (for general cases) that
the total descendent potential $\cD({\bf t};\hbar)$ does not depend on the base point $\ttau$.
Therefore, one can define $\cD({\bf t};\hbar)$ at any base point $\ttau$, in particular, at $\ttau=0$.
We define
$$
S=S^{\ttau}|_{\ttau=0},\quad
\dvac=\dvac^{\ttau}|_{\ttau=0},\quad
J=J^{\ttau}|_{\ttau=0},\quad
W=W^{\ttau}|_{\ttau=0},\quad
F_{\rm un}:=F_{\rm un}^{\ttau}|_{\ttau=0},
$$
then we have
$$
\cD({\bf t};\hbar) =e^{F_{\rm un}({\bf t};\hbar)} \cA({\bf s(t)};\hbar),
$$
where the coordinate transformation ${\bf s(t)}$ is given by ${\bf s}(\givz)=[S(\givz){\bf t}(\givz)]_{+}$.

Given the total descendent potential $\cD$, the genus-$g$ descendent potential is defined by the following equation:
$$\textstyle
\log(\cD({\bf t};\hbar))=\sum_{g\geq 0}\hbar^{2g-2}\cF_g({\bf t}),
$$
and we define the descendent correlators $\<-\>^{\cD}_{g,n}$ by
\beq\label{def:bracket-des}
\<\phi_{a_{1}}\psi^{k_{1}},\cdots,\phi_{a_{n}}\psi^{k_{n}}\>^{\cD}_{g,n}
:=\pd_{t^{a_1}_{k_1}}\cdots\pd_{t^{a_n}_{k_n}}\cF_{g}({\bf t})|_{\bf t=0}.
\eeq
In the follows, we will drop the superscript $\cD$ in $\<-\>_{g,n}^{\cD}$ if no confusion arises.
The generalized Kontsevich--Manin formula can be rewritten in the correlator form as follows:
	\beq\label{eqn:des-anc-S}
	\<\phi_{a_1}\psi^{k_1},\cdots,\phi_{a_n}\psi^{k_n}\>^{\cD}_{g,n}
	=\<S(\bar\psi)\phi_{a_1}\bar\psi^{k_1},\cdots,S(\bar\psi)\phi_{a_n}\bar\psi^{k_n}\>^{\Omega}_{g,n},
\qquad 2g-2+n>0,
	\eeq
	where the correlators with insertion $\phi_{a}\bar\psi^{k}$ of negative $k$ are set to be $0$.
\begin{remark}
It is easy to see that $\<\phi_{a_1},\cdots,\phi_{a_n}\>^{\Omega}_{g,n}=\<\phi_{a_1},\cdots,\phi_{a_n}\>^{\cD}_{g,n}$ for $2g-2+n>0$.
Hence, the notation $\<\phi_{a_1},\cdots,\phi_{a_n}\>_{g,n}$ causes no confusion in this case.
For $2g-2+n\leq 0$, we always interpret $\<\phi_{a_1},\cdots,\phi_{a_n}\>_{g,n}$ as the descendent correlators, defined by equation~\eqref{def:bracket-des}.
\end{remark}

\section{Topological recursion and its relation with CohFT}
\label{sec:TR}
In this section, we first review the topological recursion proposed by Eynard and Orantin, which produces multi-differentials from the spectral curve data.
Then we introduce several important properties of the multi-differentials that we will use in this paper.
Finally, we explain its relations with CohFT.

\subsection{Geometry of curves}
Let $\Sigma$ be a genus $\frak g$ Riemann surface with fixed symplectic basis $\{\cyA_i,\cyB_i\}_{i=1}^{\frak g}$ for $H_{1}(\Sigma,\mathbb C)$, there exist unique $\frak g$ independent holomorphic $1$-forms $du_j(z)$ on $\Sigma$ satisfying $\oint_{z\in \cyA_i}du_j(z)=\delta_{i,j}$, and their $B$-cycle integrals give
$$
\oint_{z\in \cyB_i}du_j(z)=\tau_{ij}.
$$
Here $\tau=(\tau_{ij})$ gives a point in the Siegel upper half-plane.

For $\frak g(\Sigma)>0$, we introduce the theta-function on $\Sigma$ with characteristic $[\cmn]\in\mathbb R^{2\frak g}$
(here $[\cmn]$ represents the coordinate of the point $\nu+\tau \mu\in\mathbb R^{\frak g}\oplus \tau\mathbb R^{\frak g}$):
$$
\theta[\cmn](w|\tau) =\sum_{n\in\mathbb Z^{\frak g}} e^{\pi i (n+\mu)^{t}\tau (n+\mu)+2\pi {\bf i} (n+\mu)^{t}(w+\nu)},
$$
where $w\in\mathbb C^{\frak g}$.
It is straightforward to check that for $m,k\in \mathbb Z^{\frak g}$,
\beq\label{eqn:quasiperiod-theta}
\theta[\cmn](w+k+\tau m |\tau)=e^{-\pi i m^{t} \tau m -2\pi {\bf i} m^{t} w +2\pi {\bf i} (k\mu-m\nu)}\theta[\cmn](w|\tau).
\eeq
For ${\frak g}(\Sigma)=0$, we formally define $\theta[\cmn](w|\tau)=w$ where $\mu,\nu,\tau$ are just symbols and mean nothing.
Let ${\bf c}=\frac{{\bf a}+\tau {\bf b}}{2}$ be a characteristic with ${\bf a,b}\in\mathbb Z^{\frak g}$, we call ${\bf c}$ is odd (resp. even) if the theta-function $\theta[{\bf c}](w|\tau)$ is an odd (resp. even) function of $w$, this is equivalent to say ${\bf a\cdot b}$ is an odd (resp. even) integer number.
In the follows of this paper, we will always use ${\bf c}$ to denote an odd characteristic.

We introduce the Bergman kernel $B(z_1,z_2)$ of the Riemann surface $\Sigma$:
$$
B(z_1,z_2):=d_{z_1}d_{z_2}\log(\theta[{\bf c}](u(z_1)-u(z_2)|\tau)),
$$
where $u(z)=\int_{z'=p}^{z}du(z)$ for a fixed point $p$.
For genus $\frak g=0$ case, the formula means
$$
B(z_1,z_2)=d_{z_1}d_{z_2}\log(z_1-z_2)=\tfrac{dz_1dz_2}{(z_1-z_2)^2}.
$$
One can check that the Bergman kernel does not depend on ${\bf c}$.
By equation~\eqref{eqn:quasiperiod-theta},
$$
\oint_{z'\in \cyA_i} B(z,z')=0,\qquad
\oint_{z'\in \cyB_i} B(z,z')=2\pi {\bf i} du_i(z),\qquad
i=1,\cdots,\frak g.
$$

For the Riemann surface $\Sigma$ of genus $\frak g\geq 1$, we introduce a holomorphic $1$-form on $\Sigma$:
$$
dh_{\bf c}(z):=\sum_{i=1}^{\frak g}\frac{\pd \theta[{\bf c}]}{\pd w_i}(0|\tau)du_i(z),
$$
and we introduce the prime form $E(z_1,z_2)$ by
\beq\label{def:primeform}
E(z_1,z_2):=\frac{\theta[{\bf c}](u(z_1)-u(z_2)|\tau)}{\sqrt{dh_{\bf c}(z_1)}\sqrt{dh_{\bf c}(z_2)}}.
\eeq
Here $\sqrt{dh_{\bf c}(z)}$ is well defined because zeros of $dh_{\bf c}(z)$ are all double, see~\cite{Fay73} for detailed explanation.
For $\frak g=0$ case, we formally define
$$
dh_{\bf c}(z):=dz,\qquad E(z_1,z_2):=\frac{z_1-z_2}{\sqrt{dz_1}\sqrt{dz_2}}.
$$
Fix a point $z_0\in\Sigma$, let $\eta$ be a local coordinate near $z_0$, we have the following formula relating the Bergman kernel $B$ and the prime form $E$:
\beq\label{eqn:int-bergman}
\exp\bigg(\frac{1}{2}\int_{\eta(z_0)}^{\eta(z)}\!\!\!\int_{\eta(z_0)}^{\eta(z)}\Big(B(z_1,z_2)-\frac{d\eta(z_1)d\eta(z_2)}{(\eta(z_1)-\eta(z_2))^2}\Big)\bigg)
=\frac{\eta(z)-\eta(z_0)}{E(z,z_0)\sqrt{d\eta(z)d\eta(z_0)}}.
\eeq

\subsection{Definition of the topological recursion and the structure of $\omega_{g,n}$}
\label{sec:def-TR}
We recall how a sequence of multi-differentials $\{\omega_{g,n}\}_{g,n\geq 0}$ is defined from the spectral curve data $\cC=(\Sigma,x,y)$~\cite{EO07}.
We denote by $\{z^\beta\}_{\beta=1}^{N}$ the set of critical points of $x$, where $N$ is the number of critical points. For any $z$ around the   $z^\beta$ we have the local involution $\bar{z}\in \Sigma$ such that $x(\bar{z})=x(z)$. Note that $\bar{z}$ is locally dependent on $z^\beta$, and we omit the dependence of $\beta$ in the notation to avoid complexity.
In this paper, for a positive integer $n$, we always use the notation $[n]$ to denote the set of integers $\{1,\cdots,n\}$.
\begin{definition}\label{def:omegagn}
	Given a set of spectral curve data $\cC=(\Sigma,x,y)$, a family of multi-differentials $\{\omega_{g,n}\}_{g,n\geq 0}$ are defined as follows. For $2g-2+n\leq 0$, $\omega_{0,0}(z):=0$, $\omega_{1,0}(z):=0$, and
	$$
	\omega_{0,1}(z):=y(z)dx(z), \quad \omega_{0,2}(z_1, z_2):=B(z_1, z_2).
	$$
	Let $K_\beta$ be recursion kernel  around the critical point $z^\beta$, defined by
	$$
	K_\beta(z_0,z)=\frac{\int_{z'=\bar z}^{z}B(z_0,z')}{2(y(z)-y(\bar z))dx(z)}.
	$$
	Then for $2g-2+n+1>0$, the multi-differentials $\omega_{g,n+1}$ are defined recursively as follows:
	\begin{align*}
		\omega_{g,n+1}(z_0,z_{[n]})
		:=\sum\limits_{\beta}\mathop{\Res}_{z=z^\beta}K_\beta(z_0,z)\bigg(\omega_{g-1,n+2}(z,\bar z,z_{[n]})
		+\sum^{\prime}_{\substack{g_1+g_2=g\\I\bigsqcup J=[n]}}\omega_{g_1,|I|+1}(z,z_{I})\omega_{g_2,|J|+1}(\bar z,z_{J})\bigg).
	\end{align*}
	Here for any subset $I\subset [n]$, $z_{I}:=\{z_i\}_{i\in I}$.
	The symbol $\sum\limits^{\prime}$ means we exclude $\omega_{0,1}$ in the summation.
	For $g\geq 2$,
	\beq\label{def:wg0}
	\omega_{g,0}:=\frac{1}{2-2g}\sum_{\beta}\mathop{\Res}_{z=z^\beta}\omega_{g,1}(z)\int_{z'=z^\beta}^{z}\omega_{0,1}(z').
	\eeq
\end{definition}
\begin{remark}
	In the original literature of Eynard and Orantin~\cite{EO07}, the $0$-forms $\omega_{0,0}$ and $\omega_{1,0}$ are also defined (\cite[equation (4.17) and (4.18)]{EO07}) which do not necessarily vanish. These two terms play the role of $F_0(\ttau)$ and $F_1(\ttau)$ on the geometric side.
	Since these two terms do not affect the integrability, for simplicity, we set them to be $0$.
\end{remark}

From the definition of the topological recursion, it is deduced~\cite{EO07, EO08} that
$\omega_{g,n}$ is a symmetric meromorphic multi-differential on $n$ copies of $\Sigma$.
Moreover, for $2g-2+n>0$, at each copy of $\Sigma$, $\omega_{g,n}$ has poles only at the critical points with vanishing residues, and the order of poles is at most $6g-4+2n$.

We will also use the following scaling property of multi-differential $\omega_{g,n}$:  for a set of spectral curve data $\cC=(\Sigma, x, y)$ and its rescaling $\tilde \cC=(\Sigma, x/c_1, y/c_2)$,  the corresponding multi-differentials $\omega_{g,n}$ and  $\tilde\omega_{g,n}$ are related to each other by:
\beq\label{eqn:scaling-xy}
\tilde \omega_{g,n}=(c_1 c_2)^{2g-2+n}\omega_{g,n}.
\eeq

Around each critical point $z^\beta$, we define the local Airy coordinates $\eta^\beta=\eta^{\beta}(z)$ by
$$\textstyle
x(z)=x^\beta + \frac{1}{2}\,\eta^\beta(z)^2,
$$
where $x^\beta=x(z^\beta)$.
Following \cite{Eyn11, Eyn14}, for each $\beta \in [N]$, $k\geq 0$, we introduce
\beq\label{def:xi}
d\xi^{\bar\beta}_{k}(z) :=-\mathop{\Res}_{z'=z^\beta}\,\frac{(2k-1)!!}{\eta^\beta(z')^{2k+1}}\cdot B(z',z).
\eeq
The differential $d\xi^{\bar\beta}_{k}(z)$ is a  globally defined meromorphic $1$-form on $\Sigma$ with single pole of order $2k+2$ at $z^\beta$.
This differential 1-form can be viewed as a global section in $H^0\left(\Sigma,\omega_{\Sigma}\big( (2k+2) z^\beta \big)\right)$.
Near $z^\beta$ it admits the Laurent expansion in terms of $\eta^\beta$, with vanishing residue:
\beq\label{eqn:localxi}
d\xi^{\bar\beta}_{k}(z)
=\left(-\frac{(2k+1)!!}{(\eta^{\beta})^{2k+2}}+\mathrm{regular\ part}\right)d\eta^\beta.
\eeq

It is shown in~\cite{Eyn14} that the differentials $d\xi^{\bar\beta}_{d}(z)$ serve as a preferred basis of $\omega_{g,n}$ with $2g-2+n>0$: there are some coefficients $\< - \>_{g,n}^{\rm TR}$ such that
\beq\label{eqn:str-omegagn}
	\omega_{g,n}(z_1,\cdots,z_n)
	=\sum_{\substack{k_1,\cdots,k_n\geqslant 0\\ \beta_1,\cdots,\beta_n\in [N]}}
	\<\bar e_{\beta_1}\bar\psi^{k_1},\cdots,\bar e_{\beta_n}\bar\psi^{k_n}\>^{\rm TR}_{g,n}
	d\xi^{\bar\beta_1}_{k_1}(z_1)\cdots d\xi^{\bar\beta_n}_{k_n}(z_n).
\eeq
We note that the summation in the expansion for $\omega_{g,n}$ contains only finite terms since the pole of $d\xi_k^{\bar\beta}(z)$ has degree $2k+2$ and  the poles of $\omega_{g,n}$ are bounded by degree $6g-4+2n$.
For $(g,n)=(0,2)$, we have the following expansion of $\omega_{0,2}(z_1,z_2)=B(z_1,z_2)$ with the first variable $z_1$ near the critical point $z^\beta$~\cite[equation (4.9)]{Eyn14}:
\beq\label{eqn:equiv-B}
B(z_1, z_2)-B(\bar{z}_1,z_2)=-2\sum_{k\ge 0}\frac{\eta^\beta(z_1)^{2k}}{(2k-1)!!}d\eta^\beta(z_1)d\xi^{\bar\beta}_{k}(z_2).
\eeq

\begin{lemma}\label{lem:res-dxi}
For any non-negative integers $k,m$ and $\alpha,\beta\in [N]$, we have
$$
\mathop{\Res}_{z=z^{\alpha}}x(z)^m d\xi_k^{\bar\beta}(z)=0.
$$
Furthermore, by equation~\eqref{eqn:str-omegagn}, we see for any $ m\geq 0$, $\alpha\in [N]$ and $i\in [n]$,
$$
\mathop{\Res}_{z=z^{\alpha}}x(z_i)^m \omega_{g,n}(z_1,\cdots,z_n)=0,\qquad 2g-2+n>0.
$$
\end{lemma}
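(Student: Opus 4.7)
The plan is to prove the first assertion by a direct local computation in the Airy coordinate at $z^\alpha$, and then obtain the second assertion from the basis expansion~\eqref{eqn:str-omegagn}.

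The first step is to split into cases by whether $\alpha=\beta$. If $\alpha\ne\beta$, then $d\xi_k^{\bar\beta}$ is holomorphic at $z^\alpha$ (its unique pole lies at $z^\beta$) and $x$ is holomorphic at $z^\alpha$ (as $z^\alpha$ is a zero of $dx$, not a pole of $x$). Hence $x(z)^m\,d\xi_k^{\bar\beta}(z)$ is holomorphic at $z^\alpha$ and the residue vanishes trivially.

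For $\alpha=\beta$, the second step is to pass to the Airy coordinate $\eta=\eta^\beta(z)$, in which $x(z)=x^\beta+\tfrac12\eta^2$, so that $x(z)^m$ is a polynomial in $\eta^2$ and involves only \emph{even} non-negative powers of $\eta$. By~\eqref{eqn:localxi}, the Laurent expansion of $d\xi_k^{\bar\beta}$ near $z^\beta$ has the shape $\bigl(-\tfrac{(2k+1)!!}{\eta^{2k+2}}+\sum_{n\ge 0}c_n\eta^n\bigr)d\eta$, with a unique negative power $\eta^{-(2k+2)}$, itself even. Multiplying, every term in $x(z)^m\,d\xi_k^{\bar\beta}(z)$ has the form $\eta^{2j-2k-2}d\eta$ (even exponent) or $\eta^{2j+n}d\eta$ (non-negative exponent); neither can equal $\eta^{-1}d\eta$, so the coefficient of $\eta^{-1}d\eta$ vanishes. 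The third step is to substitute~\eqref{eqn:str-omegagn} into the $\omega_{g,n}$ residue, commute the finite sum with the residue in the $i$-th variable, and reduce to the first part.

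I do not foresee a genuine obstacle: this is essentially a one-line parity argument resting on the fact that $x$ is quadratic in the Airy coordinate and that the principal part of $d\xi_k^{\bar\beta}$ contains only a single singular power. The one point I would be careful about is confirming that no intermediate negative powers $\eta^{-2},\dots,\eta^{-(2k+1)}$ appear in the Laurent expansion of $d\xi_k^{\bar\beta}$; this follows by combining the defining formula $d\xi_k^{\bar\beta}(z)=-\mathop{\Res}_{z'=z^\beta}\tfrac{(2k-1)!!}{\eta^\beta(z')^{2k+1}}B(z',z)$ with the standard local expansion $B(z',z)=\frac{d\eta(z')\,d\eta(z)}{(\eta(z')-\eta(z))^2}+(\text{holomorphic near the diagonal})$, whose singular part contributes exactly the term proportional to $\eta^{-(2k+2)}$ and nothing else.
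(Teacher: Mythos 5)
Your proposal is correct and follows essentially the same argument as the paper: the paper's proof is exactly the parity observation that $x(z)^m$ is locally a polynomial in $(\eta^\alpha)^2$ while the singular part of $d\xi_k^{\bar\beta}$ at $z^\alpha$ is just $-\delta_{\alpha\beta}\,\frac{(2k+1)!!}{(\eta^\alpha)^{2k+2}}d\eta^\alpha$, so no $\eta^{-1}d\eta$ term can arise, with the second claim following from the expansion~\eqref{eqn:str-omegagn}. Your extra check that the principal part of $d\xi_k^{\bar\beta}$ contains only the single singular power is already built into equation~\eqref{eqn:localxi}, so it adds care but not a different route.
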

\begin{proof}
Just notice that near the critical point $z^{\alpha}$,
the local expansion of $x(z)^n$ is a polynomial of $(\eta^{\alpha})^2$ while the negative part of the local expansion of $d\xi_k^{\bar \beta}(z)$ is $-\delta_{\alpha\beta} \frac{(2k+1)!!}{(\eta^\alpha)^{2k+2}}d\eta^\alpha$.
\end{proof}

\subsection{CohFT from the topological recursion}
\label{sec:TR-CohFT}
We recall how a semi-simple CohFT $\Omega$ is constructed from the spectral curve data $\cC=(\Sigma,x,y)$ \cite{DOSS14,Eyn14}.

The state space associated with $\cC$ is taken to be $\Frob:=\sspan\{\bar e_\beta\}_{\beta=1}^{N}$ and the symmetric $2$-form $\eta$ is given by $\eta(\bar e_\beta,\bar e_\gamma)=\delta_{\beta,\gamma}$.
The $R$-matrix and $T$-vector associated with the spectral curve data $\cC$ are defined by:
\beq\label{def:EO-R}
\frac{\givz}{\sqrt{2\pi \givz}}\cdot\int_{\mathfrak L_{\gamma}}e^{-x(z)/\givz}d\xi_0^{\bar\beta}(z)
\asymp e^{-x^\gamma/\givz} \cdot \eta(R(-\givz)\bar e_\gamma,\bar e_\beta),
\eeq
\beq\label{def:EO-T}
\frac{\givz}{\sqrt{2\pi \givz}}\cdot \int_{\mathfrak L_{\gamma}}e^{-x(z)/\givz}dy(z)
\asymp e^{-x^\gamma/\givz}\cdot \big(\givz-\eta(\bar e_\gamma,T(\givz))\big).
\eeq
Here $\mathfrak L_\gamma$, called the Lefschetz thimble, is a path in $\Sigma$ passing only one critical point $z^\gamma$ such that for any $z\in {\mathfrak L_\gamma}$, $x(z)-x(z^\gamma)\in \mathbb R_{\geq 0}$.
It is proved by Eynard~\cite{Eyn14} that  matrix $R(\givz)$ satisfies the symplectic condition: $R^{*}(-\givz)R(\givz)=\id$.
\begin{definition} Let $R$ and $T$ be defined as above.
The CohFT $\Omega$ associated with the spectral curve data {$\cC$} is defined as follows:
$$
\Omega:=R\cdot T\cdot  (\oplus_{\beta=1}^{N}\Omega^{\rm KW_{\beta}}).
$$
\end{definition}
It is clear that this is a semi-simple CohFT with vacuum, where the vacuum vector $\vac(\givz)$ is determined by the $R$-matrix and the $T$-vector via equation~\eqref{eqn:T-vacuum}.
Given the CohFT $\Omega$, the ancestor correlator $\<-\>^{\Omega}_{g,n}$ and the total ancestor potential $\cA({\bf s};\hbar)$ are defined by equations~\eqref{def:bracket-anc} and \eqref{def:FandA-CohFT}, respectively.
Moreover, ${\mathcal A}({\bf s};\hbar)$ can be explicitly computed using the formula~~\eqref{eqn:formula-A}.

\begin{remark}
Let $\Delta_{\beta}^{-\frac{1}{2}}=1-T_1^{\bar\beta}=\frac{y'(z^\beta)}{\sqrt{x''(z^\beta)}}$,
then $\{e_\beta=\Delta_{\beta}^{-\frac{1}{2}}\bar e_\beta\}_{\beta=1}^{N}$ gives a canonical basis.
Suppose we have a flat basis $\{\phi_i\}_{i=1}^{N}$, we define $\Psi$ and $\widetilde\Psi$ by
$\phi_i=\sum_\beta \Psi^{\bar\beta}_{i} \bar e_\beta=\sum_\beta \widetilde\Psi^{\beta}_{i} e_\beta$.
\end{remark}

By comparing the two types of graph sums of the Givental--Teleman reconstruction theorem and of the topological recursion,
Dunin-Barkowshi--Orantin--Shadrin--Spitz~\cite{DOSS14} (similar arguments can be also found in \cite{Eyn14, Mil14}) proved that for $2g-2+n>0$,
\beq\label{eqn:wgn-zeta}
	\omega_{g,n}(z_1,\cdots,z_n)
	=\sum_{\substack{k_1,\cdots,k_n\geqslant 0\\ \beta_1,\cdots,\beta_n\in [N]}}
	\<\bar e_{\beta_1}\bar\psi^{k_1},\cdots,\bar e_{\beta_n}\bar\psi^{k_n}\>^{\Omega}_{g,n}
	d\zeta^{\bar\beta_1}_{k_1}(z_1)\cdots d\zeta^{\bar\beta_n}_{k_n}(z_n),
\eeq
where the basis $\{d\zeta_k^{\bar \beta}(z)\}$ is related with $\{d\xi_{l}^{\bar \gamma}(z)\}$ by the following formula
\beq\label{eqn:zeta-xi}
d\zeta_k^{\bar\beta}(z)=\sum_{l=0}^{k}(R_{l})^{\bar\beta}_{\bar\gamma} \,d\xi_{k-l}^{\bar\gamma}(z).
\eeq
Moreover, the basis $\{d\zeta_k^{\bar \beta}(z)\}$ satisfies the equation (this is a rewriting of the results in ~\cite{DOSS14,Eyn14}, see, e.g.,~\cite[equation~(4.7)--(4.9)]{DOSS14}):
\beq\label{def:zeta}
d\zeta_{k}^{\bar\beta}(z)=\bigg(-d\circ \frac{1}{dx(z)}\bigg)^kd\xi_0^{\bar\beta}(z),\quad k\geq 0,\, \beta\in [N].
\eeq
From equation~\eqref{eqn:wgn-zeta} we see that $d\zeta^{\bar\beta}_k(z)$ can be viewed as the dual basis of the ancestor normalized canonical basis basis $\bar e_\beta\bar\psi^k$.
We denote by $d\zeta_k^{\beta}(z)$ and $d\zeta_k^i(z)$ the dual basis of $e_\beta\bar\psi^k$ and $\phi_i\bar\psi^k$, respectively.

Let $\zeta_k^{\bar\beta}(z)$ be the (might multi-valued) function defined by $\zeta_k^{\bar\beta}(z):=\int d\zeta_k^{\bar\beta}(z)$
and we introduce the operator $D$ given by
\be
D:=-\frac{d}{dx(z)},
\ee
then equation~\eqref{def:zeta} shows $\zeta^{\bar\beta}_k(z)=D^k\zeta^{\bar\beta}_0(z)$.
In this paper, we also denote $\zeta^{\bar\beta}_0(z)$ by $\zeta^{\bar\beta}(z)$.

We show how the vacuum vector is computed directly from the spectral curve data:
\begin{proposition} The vacuum vector associated with the spectral curve data $\cC$ has the following formula:
$$
\vac(\givz)=-\sum_{\beta}\bar e_{\beta}\sum_{m\geq 0}\givz^{m}\sum_{\gamma}\mathop{\Res}_{z\to z^\gamma}  y(z)d\zeta_{m}^{\bar\beta}(z).
$$
\end{proposition}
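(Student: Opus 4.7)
The plan is to reduce the claim to a scalar identity at each critical point $z^\delta$ and then verify it by comparing a residue computation with a Gaussian stationary-phase expansion of the defining oscillatory integral for $T$. From equation~\eqref{eqn:T-vacuum} one has $\vac(\givz)=R(\givz)(\bar{\bf 1}-\givz^{-1}T(\givz))$. Substituting~\eqref{eqn:zeta-xi} into the claimed formula for $\vac(\givz)$ and collecting powers of $\givz$, the right-hand side rewrites as
$$
-R(\givz)\sum_{\delta}\bar e_{\delta}\sum_{k\geq 0}\givz^{k}\sum_{\gamma}\Res_{z\to z^{\gamma}}y(z)\,d\xi_k^{\bar\delta}(z).
$$
Because $B(\cdot,z)$ in~\eqref{def:xi} is singular only on the diagonal, $d\xi_k^{\bar\delta}$ has its unique pole at $z^\delta$, so only $\gamma=\delta$ contributes. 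Cancelling $R(\givz)$ and pairing with $\bar e_\delta$ reduces the proposition to the scalar identity
$$
1-\frac{\eta(\bar e_\delta,T(\givz))}{\givz}=-\sum_{k\geq 0}\givz^{k}\Res_{z\to z^\delta}y(z)\,d\xi_{k}^{\bar\delta}(z),\qquad \delta\in[N].
$$

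The right-hand side of this reduced identity is computed directly from~\eqref{eqn:localxi} and the Taylor expansion $y(z)=\sum_{j\geq 0}\tfrac{y^{(j)}(z^\delta)}{j!}(\eta^\delta)^j$ in the Airy coordinate, by reading off the coefficient of $(\eta^\delta)^{-1}d\eta^\delta$:
$$
\Res_{z\to z^\delta}y(z)\,d\xi_k^{\bar\delta}(z)=-\frac{(2k+1)!!}{(2k+1)!}\,y^{(2k+1)}(z^\delta)=-\frac{y^{(2k+1)}(z^\delta)}{2^k\,k!}.
$$
For the left-hand side I would start from~\eqref{def:EO-T} and integrate by parts: on $\mathfrak L_\delta$ the function $e^{-x/\givz}y$ vanishes asymptotically at both endpoints, so $\int_{\mathfrak L_\delta}e^{-x/\givz}dy=\givz^{-1}\int_{\mathfrak L_\delta}e^{-x/\givz}y\,dx$. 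Parameterizing $\mathfrak L_\delta$ by the real Airy coordinate with $dx=\eta^\delta d\eta^\delta$ and applying the Gaussian moments $\int_{\mathbb R}e^{-\eta^2/(2\givz)}\eta^{2m}d\eta=\sqrt{2\pi\givz}\,(2m-1)!!\,\givz^{m}$ term by term in $y(\eta^\delta)\eta^\delta$ yields
$$
\givz-\eta(\bar e_\delta,T(\givz))\asymp\sum_{k\geq 0}\frac{y^{(2k+1)}(z^\delta)}{2^k\,k!}\,\givz^{k+1}.
$$
Dividing by $\givz$ and comparing with the residue formula above establishes the scalar identity, and hence the proposition.

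The only technical delicacy will be justifying the integration-by-parts step and the term-by-term Gaussian expansion within the formal asymptotic $\asymp$ conventions used in~\eqref{def:EO-R}--\eqref{def:EO-T} (standard in~\cite{Eyn14}). Once those conventions are fixed, the proof is a direct coefficient match in which the factor $(2k+1)!!$ coming from the pole of $d\xi_k^{\bar\delta}$ in~\eqref{eqn:localxi} is exactly cancelled by the $(2k+1)!!$ produced by the Gaussian moment $\int_{\mathbb R}e^{-\eta^2/(2\givz)}\eta^{2k+2}d\eta$; no other combinatorial identity is needed.
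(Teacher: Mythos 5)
Your proof is correct and follows essentially the same route as the paper's: both rest on $\vac(\givz)=R(\givz)(\bar{\bf 1}-\givz^{-1}T(\givz))$ from \eqref{eqn:T-vacuum}, the pole structure \eqref{eqn:localxi} of $d\xi_k^{\bar\beta}$ transported to $d\zeta_m^{\bar\beta}$ via \eqref{eqn:zeta-xi}, and the Laplace/Gaussian reading of \eqref{def:EO-T} relating the odd Taylor coefficients of $y$ in the Airy coordinate to the components of $T$. The only difference is organizational — you strip off $R(\givz)$ first and verify a scalar identity at each critical point, making explicit the Gaussian-moment computation that the paper compresses into its asserted expansion $y(z)=\eta^\gamma-\sum_{k\geq 0}\tfrac{T_{k+1}^{\bar\gamma}}{(2k+1)!!}(\eta^\gamma)^{2k+1}+\text{even part}$ — so the two arguments coincide in substance.
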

\begin{proof}
By equation~\eqref{eqn:localxi} and \eqref{eqn:zeta-xi}, near $z=z^\gamma$ we have
$$\textstyle
d\zeta_{m}^{\bar\beta}=-\sum_{l=0}^{m}(R_{l})^{\bar\beta}_{\bar\gamma}\cdot \frac{(2m-2l+1)!!}{(\eta^\gamma)^{2m-2l+2}}\, d\eta^\gamma+{\text{ regular part }}.
$$
By equation~\eqref{def:EO-T}, near $z=z^\gamma$ we have
$$\textstyle
y(z)=\eta^\gamma-\sum_{k\geq 0}\frac{T_{k+1}^{\bar\gamma}}{(2k+1)!!}\cdot (\eta^\gamma)^{2k+1}+{\text{ even part }}.
$$
The Proposition follows from equation~\eqref{eqn:T-vacuum} and straightforward  computations.
\end{proof}

\section{Descendent invariants from topological recursion}
\label{sec:TR-Geo}
In this section, we first introduce the TR descendent invariants, from a spectral curve data. We then examine the relationship between TR descendents and geometric descendents, establishing the TR-Geo correspondence as shown in the diagram in the Introduction.

\subsection{TR descendents }
\label{sec:TR-KP}
We recall that the multi-differentials $\omega_{g,n}$ with $2g-2+n>0$ are meromorphic with  poles only at critical points, so they have Taylor expansions at points away from the critical points, particularly at the boundary points.
For $(g,n) = (0,2)$, to take the expansion of $\omega_{0,2}$ at an arbitrary point, the pole structure need to be considered.

\begin{definition} [TR descendents]\label{def:TR-des}
	Pick $\Lambda=(\lambda_1,\cdots,\lambda_m)$, such that $\lambda_i$ is the local coordinate near $b_i$ satisfying $\lambda_i(b_i)={\infty}$ for each $i\in [m]$.
	For $(i_1,\cdots,i_n)\in [m]^{\times n}$ and $(k_1,\cdots,k_n)\in \mathbb Z_{>0}^{\times n}$,
we define   the \emph{TR descendent invariants}
	$\<\alpha^{i_1}_{k_1},\cdots,\alpha^{i_n}_{k_n}\>^{\Lambda}_{g,n}$
	by taking the   expansion  of the multi-differential forms ${\omega_{g,n}}$ at the boundary points. Namely,
	for $2g-2+n\geq0$, near  $z_{1}=b_{i_{1}},\cdots, z_{n}=b_{i_{n}}$  we define
	\beq\label{eqn:stable-omega-gn-boundary}
	\omega_{g,n}(z_1,\cdots,z_n)
	=\delta_{g,0}\delta_{n,2}\frac{\delta_{i_1,i_2}d\lambda_{i_1,1}d\lambda_{i_2,2}}{(\lambda_{i_1,1}-\lambda_{i_2,2})^2}
	+\sum_{k_1,\cdots, k_n\geq 1}\<\alpha^{i_1}_{k_1},\cdots,\alpha^{i_n}_{k_n}\>_{g,n}^{\Lambda } \frac{d\lambda_{i_1,1}^{-k_1}\cdots d\lambda_{i_n,n}^{-k_n}}{k_1\cdots k_n}.
	\eeq
	For the cases $(g,n) = (0,0), (0,1), (1,0)$, all the invariants $\<-\>^{\Lambda}_{g,n}$ are taken to be zero.
	We define the generating series of TR descendents:
	\beq\label{eqn:generating-series-m-KP-F-A-TR}
	Z({\bf p};\hbar)=\exp\bigg(\sum_{g\geq 0, n\geq 0}\hbar^{2g-2+n}\sum_{1\leq i_1,\cdots,i_n\leq m \atop k_1,\cdots,k_n\geq 1}
	\<\alpha^{i_1}_{k_1},\cdots,\alpha^{i_n}_{k_n}\>_{g,n}^{\Lambda} \frac{p^{i_1}_{k_1}\cdots p^{i_n}_{k_n}}{n!\cdot k_1\cdots k_n}\bigg).
	\eeq
\end{definition}
We note here the choices of local coordinates $\lambda_i$ are not unique, and various options will lead to different invariants and the generating series.

Observe that the TR descendents and the TR ancestors come from the coefficients of $\omega_{g,n}$ by expanding under bases $\{d\zeta^{\bar\beta}_k\}$ and $\{d\lambda_i^{-k}\}$, respectively,
so the relation of these two kinds of correlators can be easily read off from the relation of these two bases.
This relation can be interpreted as the following formula of the generating series:
\beq\label{eqn:DKP-ACohFT}
Z({\bf p};\hbar)=e^{\frac{1}{2}Q({\bf p},{\bf p})}\cdot\cA(\hbar\cdot {\bf s(p)};\hbar),
\eeq
where $Q({\bf p},{\bf p})$ is defined by
\beq\label{eqn:def-Q(p,p)}
Q({\bf p},{\bf p})=\sum_{i,j=1,\cdots,m; \, k,l\geq 1}
\<\alpha^i_k,\alpha^j_{l}\>_{0,2}^{\Lambda}\frac{p^i_k}{k}\frac{p^j_{l}}{l},
\eeq
and ${\bf s(p)}$ is the coordinate transformation determined by the local expansion of $d\zeta_k^{\bar\beta}(z)$ at boundary points:
\beq\label{eqn:s-p}
d\zeta_n^{\bar\beta}(z)\big|_{b_i} = \sum_{k\geq a_i(n,\bar\beta)} c^{n,\bar\beta}_{k,i} \, d\lambda_i^{-k},\quad i=1,\cdots,m,
\qquad\Longrightarrow \qquad
s_n^{\bar\beta}=\sum_{i=1}^{m}\sum_{k\geq a_i(n,\bar\beta)}c^{n,\bar\beta}_{k,i}\, p^i_k.
\eeq
Here $a_i(n,\bar\beta)$ are some positive integers depending on $i$, $n$, and $\beta$.
Since $x$ has pole at the boundary point, we have $a_i(n+1,\bar\beta)>a_i(n,\bar\beta)$.

In this paper, our main interest is in the case with only one boundary.
We select  a local coordinate $\lambda$  at the boundary point $b$ that satisfies $\lambda(b)=\infty$, we denote the invariant $\<-\>_{g,n}^{\Lambda}$ by $\<-\>_{g,n}^{\lambda}$, and we drop the superscript $i$ in $\alpha_k^i$ and $p_k^i$.
By the structure of $\omega_{g,n}$ (equation~\eqref{eqn:str-omegagn}) with $2g-2+n>0$ and by Lemma~\ref{lem:res-dxi}, for the case having only one boundary point, we have for any $k\geq 0$,
\beq\label{eqn:red-omegagn}
\mathop{\Res}_{z_i=b}x(z_i)^{k}\omega_{g,n}(z_1,\cdots,z_n)=0,\qquad i=1,\cdots,n.
\eeq
For $(g,n)=(2,0)$, one may expect the following equations: for any $k\geq 0$,
\beq\label{eqn:red-omega02}
\mathop{\Res}_{z_i=b}x(z_i)^{k} \bigg(\omega_{0,2}(z_1,z_2)-\frac{d\lambda(z_1)d\lambda(z_2)}{(\lambda(z_1)-\lambda(z_2))^2}\bigg)=0,\qquad i=1,2.
\eeq
However, this is not always true for arbitrary local coordinate $\lambda$, we have the following result:
\begin{lemma}\label{lem:red-Z}
The equation~\eqref{eqn:red-omega02} holds if and only if the local coordinate $\lambda$ is selected such that
the local expansion of $x(z)$ near the boundary is a polynomial of $\lambda$.
\end{lemma}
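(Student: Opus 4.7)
The plan is to evaluate the two residues appearing in~\eqref{eqn:red-omega02} separately and reduce their difference to an explicit total derivative in $z_2$ determined entirely by the Laurent expansion of $x$ at $b$. Let $u:=1/\lambda$ be the local coordinate at $b$; in a bi-neighborhood of $(b,b)$ one has the tautological identity
\[
\frac{d\lambda(z_1)\,d\lambda(z_2)}{(\lambda(z_1)-\lambda(z_2))^2}=\frac{du_1\,du_2}{(u_1-u_2)^2}.
\]
Write $x(u)=\sum_{j\geq -p}x_j u^j$ near $b$ and denote by $X^{(k)}_j$ the coefficient of $u^j$ in $x(u)^k$. Split $x^k=[x^k]_{+}+[x^k]_{\rm reg}$, where $[x^k]_{+}(z):=\sum_{n\geq 1}X^{(k)}_{-n}\lambda(z)^n$ is the polar part in $\lambda$ and $[x^k]_{\rm reg}(z):=\sum_{j\geq 0}X^{(k)}_j u(z)^j$ is the regular part at $b$. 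With this notation ``$x$ is polynomial in $\lambda$'' reads $x_j=0$ for all $j\geq 1$.

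Next I would compute the two residues. For the Bergman-kernel piece I would invoke the residue theorem applied to the meromorphic $1$-form $x(z_1)^k B(z_1,z_2)$ in $z_1$ on the compact surface $\Sigma$: its only poles in $z_1$ are at $z_1=b$ (from $x^k$) and at $z_1=z_2$ (a double pole of $B$). Since $B$ has the universal principal part $\frac{dt_1\,dt_2}{(t_1-t_2)^2}$ on the diagonal, one has $\mathop{\Res}_{z_1=z_2}x^k B=d(x^k)(z_2)$, and the vanishing of the residue sum on $\Sigma$ gives
\[
\mathop{\Res}_{z_1=b}x(z_1)^k B(z_1,z_2)=-d(x^k)(z_2).
\]
For the other piece I would pass to the $u$-coordinate, expand $\frac{1}{(u_1-u_2)^2}=\sum_{n\geq 0}(n+1)u_1^n/u_2^{n+2}$, extract the $u_1^{-1}$ coefficient of $x(u_1)^k\cdot\frac{1}{(u_1-u_2)^2}$, and use $du_2/u_2^{m+1}=-d(\lambda_2^m)/m$ for $m\geq 1$ to telescope the result into
\[
\mathop{\Res}_{z_1=b}x(z_1)^k\,\frac{d\lambda(z_1)\,d\lambda(z_2)}{(\lambda(z_1)-\lambda(z_2))^2}=-d[x^k]_{+}(z_2).
\]
Subtracting, the residue in~\eqref{eqn:red-omega02} collapses to $-d[x^k]_{\rm reg}(z_2)$; the $i=2$ version follows from the $z_1\leftrightarrow z_2$ symmetry of $B_0:=\omega_{0,2}-\frac{d\lambda_1d\lambda_2}{(\lambda_1-\lambda_2)^2}$.

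The iff is then immediate. Since $[x^k]_{\rm reg}(z)=\sum_{j\geq 0}X^{(k)}_j u(z)^j$ is a local power series in $u$, the $1$-form $d[x^k]_{\rm reg}$ vanishes identically iff $X^{(k)}_j=0$ for every $j\geq 1$. If $x\in\mathbb{C}[\lambda]$ locally then so is each $x^k$, forcing $X^{(k)}_j=0$ for $j\geq 1$ and all $k\geq 0$. Conversely, the $k=1$ instance of the residue condition reads $X^{(1)}_j=x_j=0$ for all $j\geq 1$, which is precisely the polynomiality of $x$ in $\lambda$; the conditions for $k\geq 2$ are then automatic.

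The only technical point to verify is the reproducing identity $\mathop{\Res}_{z_1=b}x^k B=-d(x^k)$ in higher genus $\mathfrak{g}\geq 1$. It holds exactly as stated because the residue theorem on a compact Riemann surface needs only compactness (and is insensitive to nontrivial $B$-periods of $B$), while the spectral-curve assumption that $x$ is meromorphic with poles solely at $b$ ensures that $x(z_1)^k B(z_1,z_2)$ is a globally well-defined meromorphic $1$-form in $z_1$ whose pole set in $z_1$ consists only of $b$ and the diagonal.
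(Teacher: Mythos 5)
Your argument is correct and follows essentially the same route as the paper's proof: compute $\mathop{\Res}_{z_1=b}x^kB=-d(x^k)(z_2)$ via the residue theorem and the reproducing property of the Bergman kernel, compute the rational-kernel residue as $-d[x^k]_{+}(z_2)$ by expanding near the boundary, and conclude by comparing the two (your extra observation that the $k=1$ condition already forces polynomiality, making the $k\geq 2$ conditions automatic, is just an explicit rendering of the paper's final comparison step).
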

\begin{proof}
By the symmetry of $\omega_{0,2}$ and $\frac{d\lambda(z_1)d\lambda(z_2)}{(\lambda(z_1)-\lambda(z_2))^2}$ on two variables, we just need to consider $i=1$ case of equation~\eqref{eqn:red-omega02}.
Notice that $x(z_1)^{k}\omega_{0,2}(z_1,z_2)$ has only poles at $z_1=b$ or $z_1=z_2$, we have
\beq\label{eqn:red-omega02-1}
\mathop{\Res}_{z_1=b}x(z_1)^{k} \omega_{0,2}(z_1,z_2)=-\mathop{\Res}_{z_1=z_2}x(z_1)^{k} \omega_{0,2}(z_1,z_2)
=-dx(z_2)^{k},
\eeq
where we have used the property of Bergman kernel: $\mathop{\Res}_{z_1=z_2}f(z_1)\omega_{0,2}(z_1,z_2)=df(z_2)$.

Consider the Laurent expansion of $x(z_1)^{k}$ near the boundary in variable $\lambda(z_1)^{-1}$, we denote by $[x(z_1)^{k}]_{+}$ the non-negative part of the expansion, since the $2$-differential $\frac{d\lambda(z_1)d\lambda(z_2)}{(\lambda(z_1)-\lambda(z_2))^2}$ is regular at $z_1=b$, we have
\beq\label{eqn:red-omega02-2}
\mathop{\Res}_{z_1=b}x(z_1)^{k}\frac{d\lambda(z_1)d\lambda(z_2)}{(\lambda(z_1)-\lambda(z_2))^2}
=\sum_{n\geq 1} \mathop{\Res}_{z_1=b}x(z_1)^{k}\lambda(z_1)^{-n-1}d\lambda(z_1)d\lambda(z_2)^n =-d[x(z_2)^{k}]_{+}.
\eeq
By comparing equation~\eqref{eqn:red-omega02-1} with equation~\eqref{eqn:red-omega02-2}, the Lemma is proved.
\end{proof}
\begin{corollary}\label{cor:red-Z-x}
For the spectral curve with only one boundary point, let $\lambda$ be the local coordinate near the boundary point such that $x(z)\in\mathbb C[\lambda]$,
then the generating series $Z({\bf p};\hbar)$ satisfies the following equations:
\beq\label{eqn:red-Z}
\widehat{x(z)^k}(Z({\bf p};\hbar))=0,\qquad  k\geq 1,
\eeq
where the quantization procedure ``\,  $\widehat{\, }$ " is defined by
$\widehat{\lambda^{i}}:=i\pd_{p_i}$, $i\geq 1$.
\end{corollary}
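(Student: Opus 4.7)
My plan is to translate the statement $\widehat{x(z)^{k}}(Z)=0$ into an infinite family of linear identities among the TR descendents, and then recognize each identity as the coefficient extraction from a residue vanishing already established in~\eqref{eqn:red-omegagn} and Lemma~\ref{lem:red-Z}. The hypothesis $x\in\mathbb{C}[\lambda]$ is crucial: it makes $x(z)^{k}=\sum_{j\geq 0}c_{k,j}\lambda^{j}$ a finite polynomial, so $\widehat{x(z)^{k}}=\sum_{j\geq 1}c_{k,j}\cdot j\,\partial_{p_{j}}$ is an honest first-order differential operator---the $j=0$ term is absent since $\widehat{\lambda^{j}}=j\partial_{p_{j}}$ is only defined for $j\geq 1$ (equivalently, one sets $\widehat{1}=0$). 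Being a derivation, $\widehat{x(z)^{k}}Z=(\widehat{x(z)^{k}}\log Z)\cdot Z$, so it suffices to show $\widehat{x(z)^{k}}\log Z=0$.

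Differentiating~\eqref{eqn:generating-series-m-KP-F-A-TR} term by term and reindexing gives
\begin{equation*}
\widehat{x(z)^{k}}\log Z
=\sum_{g,n\geq 0}\hbar^{2g-1+n}\sum_{k_{1},\ldots,k_{n}\geq 1}\bigg(\sum_{j\geq 1}c_{k,j}\,\<\alpha_{j},\alpha_{k_{1}},\ldots,\alpha_{k_{n}}\>^{\lambda}_{g,n+1}\bigg)\frac{p_{k_{1}}\cdots p_{k_{n}}}{n!\,k_{1}\cdots k_{n}}.
\end{equation*}
On the other hand, using the boundary expansion~\eqref{eqn:stable-omega-gn-boundary} of $\omega_{g,n+1}$ in the variable $z_{0}$, together with the identities $\frac{d\lambda^{-k_{0}}}{k_{0}}=-\lambda^{-k_{0}-1}d\lambda$ and $\Res_{\lambda=\infty}\lambda^{m}d\lambda=-\delta_{m,-1}$, a direct calculation identifies the inner bracket as the coefficient of $\prod_{i=1}^{n}\frac{d\lambda_{i}^{-k_{i}}}{k_{i}}$ in the expansion at $(z_{1},\ldots,z_{n})=(b,\ldots,b)$ of
\begin{equation*}
\Res_{z_{0}=b}x(z_{0})^{k}\bigg(\omega_{g,n+1}(z_{0},z_{1},\ldots,z_{n})-\delta_{(g,n+1)}^{(0,2)}\frac{d\lambda(z_{0})d\lambda(z_{1})}{(\lambda(z_{0})-\lambda(z_{1}))^{2}}\bigg),
\end{equation*}
where the Bergman-like correction is present only in the $(0,2)$ case.

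To finish, one observes three cases: for $(g,n+1)=(0,1)$ (which appears as the constant $p$-term) the claim is trivial because $\<\alpha_{j}\>^{\lambda}_{0,1}=0$ by convention; for $(g,n+1)$ with $2g-2+(n+1)>0$ the residue above vanishes by~\eqref{eqn:red-omegagn}; and for $(g,n+1)=(0,2)$ it vanishes exactly by Lemma~\ref{lem:red-Z}, whose hypothesis $x\in\mathbb{C}[\lambda]$ is precisely the one imposed in the Corollary. The principal technical difficulty lies in the middle step---carefully matching the sign conventions for residues at $\infty$ and isolating the rational Bergman-like pole of $\omega_{0,2}$ so that the $(0,2)$ case reduces exactly to the residue expression appearing in Lemma~\ref{lem:red-Z}.
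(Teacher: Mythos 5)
Your proposal is correct and follows essentially the same route as the paper: the paper's (much terser) proof likewise writes $x^k=\sum_i c_{k,i}\lambda^i$ and observes that the residue vanishings~\eqref{eqn:red-omegagn} and~\eqref{eqn:red-omega02} (the latter supplied by Lemma~\ref{lem:red-Z} under the hypothesis $x\in\mathbb C[\lambda]$) are exactly the correlator identities $\sum_i c_{k,i}\<\alpha_i,-\>^{\lambda}_{g,n}=0$, which is the content of $\widehat{x^k}(Z)=0$. Your write-up merely makes explicit the bookkeeping the paper leaves implicit (the derivation property, the reindexing of $\log Z$, the residue extraction of coefficients, and the trivial $(0,1)$ case), all of which is sound.
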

\begin{proof}
Suppose $x(z)^k=\sum_{i=0}^{kr}c_{k,i}\lambda^i$, then equations~\eqref{eqn:red-omegagn} and \eqref{eqn:red-omega02} are equivalent to the following equations for correlators:
$$\textstyle
\sum_{i=1}^{kr}c_{k,i}\<\alpha_i,-\>^{\lambda}_{g,n}=0.
$$
This is exactly the correlator version of equation~\eqref{eqn:red-Z}.
\end{proof}

\subsection{Geometric descendents from topological recursion}\label{sec:TR-des-Geo}
In \S \ref{sec:descedentforCohFT}, we introduced descendent invariants for $S$- and $\dvac$-calibrated CohFTs.
Now we explain how TR descendents are related with geometric descendents.

\begin{lemma}\label{lem:path-class}
Given an admissible path $\gamma$ with respect to $e^{-x(z)/\givz}$, there is a class $\Phi(\gamma,-\givz)$ such that
$$
\eta(\bar e_\beta,S(-\givz)\Phi(\gamma,-\givz))=\givz\int_{\gamma}e^{-x(z)/\givz}d\zeta_0^{\bar\beta}(z).
$$
\end{lemma}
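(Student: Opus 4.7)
My plan is to define $\Phi(\gamma,-\givz)$ constructively by inverting the asserted identity, and then verify the construction is well-posed by asymptotic analysis. The key structural fact is the symplectic condition $S^*(-\givz)S(\givz)=\id$, which gives $S(-\givz)^{-1}=S^*(\givz)$. Consequently the desired equation has the unique solution
\[
\Phi(\gamma,-\givz):=S^*(\givz)\cdot\sum_{\beta=1}^{N}\bar e_\beta\cdot\Big(\givz\int_{\gamma}e^{-x(z)/\givz}\,d\zeta_0^{\bar\beta}(z)\Big),
\]
and with this definition the identity $\eta(\bar e_\beta,S(-\givz)\Phi(\gamma,-\givz))=\givz\int_\gamma e^{-x/\givz}d\zeta_0^{\bar\beta}(z)$ holds tautologically, since $\eta(\bar e_\beta,\bar e_\gamma)=\delta_{\beta\gamma}$.

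The substantive task is to verify this formula produces a bona fide class, i.e., that $S^*(\givz)$ (a series in $\givz^{-1}$) acts sensibly on the Laplace-type integrals. First, note that $d\zeta_0^{\bar\beta}=d\xi_0^{\bar\beta}$ by \eqref{eqn:zeta-xi} since $R_0=\id$. Admissibility of $\gamma$ — each endpoint approaches a boundary point along a direction where $x/\givz\to+\infty$ — makes $e^{-x/\givz}d\xi_0^{\bar\beta}$ decay exponentially at both ends, so the integral converges. To interpret $S^*(\givz)$, I would pass to the $\givz\to 0$ asymptotic expansion via steepest descent. Deforming $\gamma$ across the Stokes foliation defined by $\mathrm{Re}(x/\givz)$ on $\Sigma$ expresses its class in the relevant relative homology as a $\mathbb Z$-combination $\sum_\alpha m_\alpha^\gamma[\mathfrak L_\alpha]$ of Lefschetz thimbles. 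Then the defining asymptotic \eqref{def:EO-R} of the $R$-matrix yields
\[
\givz\int_{\gamma}e^{-x(z)/\givz}\,d\xi_0^{\bar\beta}(z)\asymp\sum_{\alpha}m_\alpha^\gamma\sqrt{2\pi\givz}\,e^{-x^\alpha/\givz}\,\eta(R(-\givz)\bar e_\alpha,\bar e_\beta),
\]
which lives in the $\Frob$-valued space of asymptotic series generated by $\sqrt{\givz}\,e^{-x^\alpha/\givz}\,\mathbb C[[\givz]]$ over the critical points. The operator $S^*(\givz)=\sum_k S_k^*\,\givz^{-k}$ acts term-by-term on this space, so $\Phi(\gamma,-\givz)$ is a well-defined class of the same type.

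The principal obstacle is the thimble-decomposition step: showing that any admissible path can be deformed, without changing the Laplace integral, into an integer combination of Lefschetz thimbles. This requires controlling the Stokes graph of $x$ on $\Sigma$ and using the admissibility hypothesis to pin down the asymptotic direction of $\gamma$ at each boundary end. Once this geometric input is in place, the remaining ingredients — the invertibility of $S(-\givz)$ via the symplectic condition, the defining asymptotic \eqref{def:EO-R}, and the tautological identity — fit together formally and conclude the proof.
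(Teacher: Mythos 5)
Your definition of $\Phi(\gamma,-\givz)$ is exactly the paper's: invert the identity using $S(-\givz)^{-1}=S^{*}(\givz)$ (from the symplectic condition), so that the asserted equation holds tautologically; the entire content of the lemma is therefore to make sense of $S^{*}(\givz)$ applied to the Laplace integral. It is precisely there that your argument has a genuine gap. You propose to interpret $\givz\int_{\gamma}e^{-x/\givz}d\xi_0^{\bar\beta}$ through its $\givz\to 0$ steepest-descent expansion, i.e.\ as an element of $\bigoplus_{\alpha}\sqrt{\givz}\,e^{-x^{\alpha}/\givz}\,\mathbb C[[\givz]]$ after a Lefschetz-thimble decomposition. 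Two problems. First, the thimble decomposition is the step you yourself flag as ``the principal obstacle'' and leave unproved, so the argument is incomplete as written. Second, and more seriously, even granting that decomposition, $S^{*}(\givz)=\sum_{k\geq 0}S^{*}_k\givz^{-k}$ is a formal series in $\givz^{-1}$ (generically divergent), and applying it to a series in nonnegative powers of $\givz$ produces, as the coefficient of each fixed $\givz^{m+1/2}e^{-x^{\alpha}/\givz}$, the infinite sum $\sum_{k}S^{*}_k c_{m+k}$, which has no reason to converge in any topology available here. So ``acts term-by-term on this space'' does not actually yield a well-defined class.

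The paper avoids this by expanding at the boundary rather than at the critical points. It writes an admissible path as $\gamma_{-}+\gamma_{M}+\gamma_{+}$ with $\gamma_{\pm}$ preimages of rays under $x$ and $\gamma_{M}$ compact; near each boundary $b_i$ one has $x=\lambda_i^{r_i}$ and $d\zeta_0^{\bar\beta}$ is regular there, so term-by-term integration of $e^{-\lambda_i^{r_i}/\givz}\lambda_i^{-k}d\lambda_i$ produces Gamma-values times fractional powers of $\givz$, and the whole integral lands in $\mathbb C[[\givz^{-1/r}]]$, a series whose $\givz$-degree is bounded above. Multiplication by $S^{*}(\givz)$ is then a well-defined formal operation (each coefficient is a finite sum), and one checks independence of the representative path within its homology class. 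The examples bear this out: $\Phi(\gamma_k,\givz)=\frac{\sqrt{-r}}{r}\sum_{a}(1-e^{2ka\pi{\bf i}/r})\Gamma(\frac{a}{r})\givz^{a/r}\phi^a$ is visibly of boundary/Gamma-integral type, not an asymptotic series attached to critical values. To repair your argument, replace the steepest-descent interpretation by this boundary expansion (or otherwise exhibit a ring in which both the integral and the action of the formal series $S^{*}(\givz)$ make sense simultaneously).
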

\begin{proof}
We first notice that near each boundary $b_i$, $i=1,\cdots,m$, there is a local coordinate $\lambda_i$ on $\Sigma$ such that $x(z)=\lambda_i^{r_i}$ for some integers $r_i\in \mathbb Z_{>0}$.
Moreover, near the boundary points, the $1$-forms $d\zeta_{0}^{\bar\beta}$ is regular and can be expanded by using local coordinates $\lambda_i$.
We fix a constant $M\in\mathbb R$ such that in each of the preimages of $\{p\in\mathbb P^1: |p|>M\}$ (near the boundary $b_i$) under the map $x$, $d\zeta_{0}^{\bar\beta}(z)$ has a local expansion in local coordinate $\lambda_i$.

Let $\mathcal P$ be the set consists of paths of the following form:
$$
\gamma=\gamma_{-}+\gamma_M+\gamma_{+},
$$
where $\gamma_{\pm}$ is a preimage of $\pm\givz\cdot [M,\infty)$ under the map $x$, and $\gamma_M$ is a path (avoiding boundaries) that connects $\gamma_{-}$ and $\gamma_{+}$. Clearly, all of such paths are admissible.
Given a path $\gamma \in \mathcal P$, we consider
$$
\int_{\gamma}e^{-x(z)/\givz}d\zeta_0^{\bar\beta}(z)
=\int_{\gamma_{-}}e^{-x(z)/\givz}d\zeta_0^{\bar\beta}(z)
+\int_{\gamma_{M}}e^{-x(z)/\givz}d\zeta_0^{\bar\beta}(z)
+\int_{\gamma_{+}}e^{-x(z)/\givz}d\zeta_0^{\bar\beta}(z).
$$
Notice that $x(z)=\lambda_i^{r_i}$ near the boundary point $b_i$,
the first and third parts on the right-hand side of above equation give elements in $\mathbb C[[\givz^{-\frac{1}{r}}]]$, where $r=\prod r_i$.
For the second part, the integral is finite and it takes value in $\mathbb C[[\givz^{-1}]]\subset \mathbb C[[\givz^{-\frac{1}{r}}]]$.
Therefore, the integral
$\int_{\gamma}e^{-x(z)/\givz}d\zeta_0^{\bar\beta}(z)$ is a well-defined power series of
$\givz^{-\frac{1}{r}}$.
For such $\gamma\in \mathcal P$, we define
$$
\Phi(\gamma,-\givz):=\givz \cdot S^{*}(\givz)\cdot \sum_{\beta}\bar e_{\beta}\cdot \int_{\gamma}e^{-x(z)/\givz}d\zeta_0^{\bar\beta}(z).
$$

For an arbitrary admissible path $\gamma$, it is easy to see that there exists a path $\gamma'\in\mathcal P$ such that
$[\gamma'-\gamma]=0\in H_{1}(\Sigma,\mathbb C)$.
For such case, we define $\Phi(\gamma,-\givz)=\Phi(\gamma',-\givz)$. It is well defined since for any two paths $\gamma,\gamma'$ satisfying $[\gamma'-\gamma]=0\in H_{1}(\Sigma,\mathbb C)$, one has
$$
\int_{\gamma}e^{-x(z)/\givz}d\zeta_0^{\bar\beta}(z)
=\int_{\gamma'}e^{-x(z)/\givz}d\zeta_0^{\bar\beta}(z).
$$
The Lemma is proved.
\end{proof}
\begin{theorem}\label{thm:TR-des-(g,n)}
For $2g-2+n>0$ and admissible paths $\gamma_i$ associated with $e^{-x(z_i)/\givz_i}$, $i=1,\cdots,n$, we have
\beq\label{eqn:des-EO}
\int_{\gamma_1}\!\cdots\!\int_{\gamma_n}e^{-\sum_{i=1}^{n}x(z_i)/\givz_i}\omega_{g,n}
=\Big\<\frac{\Phi(\gamma_1,-\givz_1)}{\givz_1+\psi_1},\cdots,\frac{\Phi(\gamma_n,-\givz_n)}{\givz_n+\psi_n}\Big\>_{g,n}.
\eeq
For $(g,n)=(0,2)$ and admissible paths $\gamma_i$ associated with $e^{-x(z_i)/\givz_i}$, $i=1,2$, we have
\beq\label{eqn:des-EO-(0,2)}
\int_{\gamma_1}\!  \int_{\gamma_2}e^{-\sum_{i=1}^{2}x(z_i)/\givz_i}\omega_{0,2}
=\frac{\eta(\Phi(\gamma_1,-\givz_1),\Phi(\gamma_2,-\givz_2))}{-\givz_1-\givz_2}+\Big\<\frac{{\Phi}(\gamma_1,-\givz_1)}{\givz_1+\psi_1},\frac{{\Phi}(\gamma_2,-\givz_2)}{\givz_2+\psi_2}\Big\>_{0,2}.
\eeq
\end{theorem}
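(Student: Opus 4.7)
The plan is to reduce both identities to Laplace-type integrals $\int_\gamma e^{-x(z)/\givz}d\zeta_0^{\bar\beta}(z)$ governed by Lemma~\ref{lem:path-class}, and then convert the resulting ancestor correlators to descendent correlators via the Kontsevich--Manin formula \eqref{eqn:des-anc-S}. The key technical ingredient is the Laplace-transform identity
$$\int_{\gamma}e^{-x(z)/\givz}\,d\zeta_{k}^{\bar\beta}(z) \;=\; (-\givz)^{-k}\int_{\gamma}e^{-x(z)/\givz}\,d\zeta_{0}^{\bar\beta}(z),$$
which I would prove by induction on $k$. Using $\zeta_{k}=D^{k}\zeta_{0}$ with $D=-d/dx$ (see \eqref{def:zeta}), the relation $(D^{k}\zeta_{0})\,dx=-d\zeta_{k-1}$ allows integration by parts to yield $\int_\gamma e^{-x/\givz}d\zeta_k=[e^{-x/\givz}\zeta_k]_{\partial\gamma}-\givz^{-1}\int_\gamma e^{-x/\givz}d\zeta_{k-1}$. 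Admissibility of $\gamma$ forces $e^{-x/\givz}$ to decay exponentially at both endpoints, so the boundary term vanishes.

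For the stable range $2g-2+n>0$, I would substitute the DOSS expansion \eqref{eqn:wgn-zeta} into the multi-Laplace transform, apply the identity above in each variable, and invoke Lemma~\ref{lem:path-class} together with the resummation $\sum_{k\geq 0}\bar\psi^{k}/\bigl((-\givz)^{k}\givz\bigr)=1/(\givz+\bar\psi)$ to package the result as $\langle\prod_i (\givz_i+\bar\psi_i)^{-1}S(-\givz_i)\Phi(\gamma_i,-\givz_i)\rangle^{\Omega}_{g,n}$. On the other hand, expanding the descendent bracket $\langle\prod_i\Phi(\gamma_i,-\givz_i)/(\givz_i+\psi_i)\rangle^{\cD}_{g,n}$ via Kontsevich--Manin \eqref{eqn:des-anc-S} and reorganizing $\sum_{\ell\leq k}S_\ell\bar\psi^{k-\ell}$ against $\sum_{k\geq 0}(-\givz)^{-k-1}\psi^{k}$ gives exactly the same ancestor expression, with the factor $S(-\givz)$ emerging by pure algebra. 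This matches the two sides of \eqref{eqn:des-EO}.

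The unstable $(0,2)$ case is the main obstacle, because $\omega_{0,2}=B(z_1,z_2)$ has a diagonal pole and the above integration-by-parts reduction does not apply directly. My plan is to establish the ``$S$-matrix propagator'' identity
$$\int_{\gamma_1}\!\int_{\gamma_2}e^{-x(z_1)/\givz_1-x(z_2)/\givz_2}B(z_1,z_2)\;=\;\frac{\eta\bigl(\Phi(\gamma_1,-\givz_1),\,S^{*}(-\givz_1)S(-\givz_2)\Phi(\gamma_2,-\givz_2)\bigr)}{-\givz_1-\givz_2}.$$
Granted this, the decomposition $S^{*}(-\givz_1)S(-\givz_2)=\id-(\givz_1^{-1}+\givz_2^{-1})W(-\givz_1,-\givz_2)$ coming from \eqref{eqn:W-S} splits the right-hand side into the kernel contribution $\eta(\Phi(\gamma_1,-\givz_1),\Phi(\gamma_2,-\givz_2))/(-\givz_1-\givz_2)$ and a $W$-contribution. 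The latter equals $\langle\Phi(\gamma_1,-\givz_1)/(\givz_1+\psi_1),\Phi(\gamma_2,-\givz_2)/(\givz_2+\psi_2)\rangle^{\cD}_{0,2}$: indeed, the only bilinear term in $\mathbf t$ inside $\cF_0(\mathbf t)$ is the quadratic $\tfrac12 W(\mathbf t,\mathbf t)$ from $F_{\rm un}$, since $\bar\cF_0^{\ttau}(\mathbf s(\mathbf t))|_{\ttau=0}$ starts at cubic order in $\mathbf t$.

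To prove the propagator identity, I would first isotope $\gamma_2$ to be disjoint from $\gamma_1$, then write $B=d_{z_2}d_{z_1}\log E$ and integrate by parts in $z_2$. The resulting bulk term, involving $\givz_2^{-1}\int e^{-x(z_2)/\givz_2}dx(z_2)$ paired with $d_{z_1}\log E$, can be fed into the outer integral and handled by Lemma~\ref{lem:path-class} using the local Taylor expansion of $\log E$ near the boundary points. The kernel term $\eta(\Phi(\gamma_1,-\givz_1),\Phi(\gamma_2,-\givz_2))/(-\givz_1-\givz_2)$ will arise from the residue picked up when one deforms $\gamma_1$ back across $\gamma_2$. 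The technical heart of the argument lies in handling this path-intersection residue and controlling the boundary behavior of the prime form at the boundaries of $\Sigma$.
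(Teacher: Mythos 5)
Your treatment of the stable range $2g-2+n>0$ is correct and is essentially the paper's argument: the integration-by-parts identity $\int_{\gamma}e^{-x/\givz}d\zeta_k^{\bar\beta}=(-\givz)^{-k}\int_{\gamma}e^{-x/\givz}d\zeta_0^{\bar\beta}$ (with boundary terms killed by admissibility), the expansion \eqref{eqn:wgn-zeta}, Lemma~\ref{lem:path-class}, and the purely algebraic resummation $\big[S(\bar\psi)/(\givz+\bar\psi)\big]_{+}=S(-\givz)/(\givz+\bar\psi)$ matching the Kontsevich--Manin formula \eqref{eqn:des-anc-S}. Nothing to add there.

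For $(0,2)$ you have correctly identified the target: the ``propagator identity'' $\int_{\gamma_1}\int_{\gamma_2}e^{-\sum x_i/\givz_i}B=\frac{\eta(\Phi_1,S^{*}(-\givz_1)S(-\givz_2)\Phi_2)}{-\givz_1-\givz_2}$, and the subsequent split via \eqref{eqn:W-S} into the kernel term and the $W$-term (which is indeed the full $(0,2)$ descendent correlator, since $\bar\cF_0({\bf s(t)})$ is cubic in ${\bf t}$) is exactly what the paper does. The gap is in your proposed proof of the propagator identity itself: manipulating $B=d_{z_1}d_{z_2}\log E$ directly, isotoping paths and tracking an intersection residue, is left entirely as a plan, and it is doubtful it closes --- after integrating by parts in $z_2$ you are left with $\int_{\gamma_2}e^{-x_2/\givz_2}\,(d_{z_1}\log E)\,dx_2$, whose $z_1$-dependence is a $1$-form with a pole at $z_1=z_2$ that is not one of the $d\zeta_0^{\bar\beta}$, so Lemma~\ref{lem:path-class} does not apply to the outer integral; you would essentially have to rediscover the structural identity governing $B$. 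The paper instead invokes exactly that identity, namely \eqref{eqn:lemFLZ} (Lemma~6.9 of \cite{FLZ20}): $-\big(d_1\circ\frac{1}{dx(z_1)}+d_2\circ\frac{1}{dx(z_2)}\big)B=\sum_{i,j}d\zeta^{i}(z_1)\eta_{ij}d\zeta^{j}(z_2)$. Applying the double Laplace transform to both sides and integrating by parts (each $d_i\circ\frac{1}{dx(z_i)}$ becomes multiplication by $\givz_i^{-1}$, boundary terms vanishing by admissibility) gives $\big(\frac{1}{\givz_1}+\frac{1}{\givz_2}\big)\int\!\!\int e^{-\sum x_i/\givz_i}B=-\sum_\beta\prod_i\int_{\gamma_i}e^{-x_i/\givz_i}d\zeta^{\bar\beta}(z_i)$, and dividing by $\frac{1}{\givz_1}+\frac{1}{\givz_2}$ together with Lemma~\ref{lem:path-class} yields the propagator identity in two lines. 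You should replace your prime-form argument by this appeal to \eqref{eqn:lemFLZ}; with that substitution the rest of your $(0,2)$ analysis goes through verbatim.
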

\begin{proof}
By Lemma~\ref{lem:path-class} and the formula of integration by parts, we have
$$\textstyle
\eta(\bar e_\beta,S(-\givz)\Phi(\gamma,-\givz))
=(-1)^k\givz^{k+1}\int_{\gamma}e^{-x(z)/\givz}d\zeta_k^{\bar\beta}(z).
$$
The equation~\eqref{eqn:des-EO} follows immediately from the structure of $\omega_{g,n}$~\eqref{eqn:wgn-zeta}
and the correspondence between ancestor invariants with descendent invariants~\eqref{eqn:des-anc-S}.

For $(g,n)=(0,2)$, by~\cite[Lemma 6.9]{FLZ20} (see equation~\eqref{eqn:lemFLZ} below) and the formula of integration by parts, we have
$$\textstyle
\big(\frac{1}{\givz_1}+\frac{1}{\givz_2}\big)\int_{\gamma_1}\int_{\gamma_2}e^{-\sum_{i=1}^{2}x(z_i)/\givz_i}\omega_{0,2}
=-\sum_\beta\int_{\gamma_1}\int_{\gamma_2}e^{-\sum_{i=1}^{2}x(z_i)/\givz_i}d\zeta^{\bar\beta}(z_1)d\zeta^{\bar\beta}(z_2).
$$
Divided by $(\frac{1}{\givz_1}+\frac{1}{\givz_2})$ on above equation, the right-hand side gives
$$\textstyle
-\frac{1}{\givz_1+\givz_2}\cdot\sum_{\beta}\eta(\bar e_\beta,S(-\givz_1)\Phi(\gamma_1,-\givz_1))\eta(\bar e_\beta,S(-\givz_2)\Phi(\gamma_2,-\givz_2))
$$
which equals  the right-hand side of equation~\eqref{eqn:des-EO-(0,2)} by equation~\eqref{eqn:W-S}.
\end{proof}

Now we consider the generating series version of the relation of TR descendents and geometric descendents.
By using the $S$-matrix $S(\givz)$, we define another basis near the boundary points
\beq\label{eqn:chi-zeta}
d \chi_l^j:=\sum_{k\geq 0;\, i=1,\cdots,N}(-1)^k(S^{*}_k)^j_i d \zeta_{k+l}^{i}.
\eeq
We note that $d\chi_l^j$ is not necessarily a global $1$-form on $\Sigma$, but always  has a well-defined expansion near the boundary points.
Conversely, near the boundary points, we have
\beq\label{eqn:zeta-chi}
d\zeta_k^{i}=\sum_{l\geq0;\, j=1,\cdots,N}(S_l)^i_j d\chi_{l+k}^j.
\eeq
By taking expansion of $d\chi_l^j$ near the boundary points, we define ${\bf t}={\bf t(p)}$ as follows:
\beq\label{eqn:t=tp}
d\chi_l^{j}\big|_{b_i}=\sum_{k\geq a'_{i}(l,j)} {c'}^{l,j}_{k,i}\, d\lambda_i^{-k},
\quad i=1,\cdots,m,
\qquad \Longrightarrow\qquad
t^{j}_l({\bf p})=\sum_{i=1}^{m}\sum_{k\geq a'_{i}(l,j)} {c'}^{l,j}_{k,i}\, p^i_k.
\eeq
Here $a'_i(l,j)$ are some positive integers depending on $i$, $l$, and $j$.
Similarly as the ancestor version, since $x$ has pole at the boundary points, we have $a'_i(l+1,j)>a'_i(l,j)$.

\begin{theorem}\label{thm:D-Z}
The total descendent potential $\cD({\bf t};\hbar)$ is related with $Z({\bf p};\hbar)$ by the following equation
\beq\label{eqn:D-Z}
e^{\frac{1}{2}\QD({\bf p},{\bf p})}\cdot \cD(\hbar\cdot{\bf t};\hbar)|_{\bf t=t(p)}
= e^{\frac{1}{\hbar}J_{-}({\bf t})|_{\bf t=t(p)}}\cdot Z({\bf p};\hbar),
\eeq
where $\QD({\bf p},{\bf p})=Q({\bf p},{\bf p})-W({\bf t},{\bf t})|_{\bf t=t(p)}$
and the coordinates transformation ${\bf t}={\bf t(p)}$ is defined by equation~\eqref{eqn:t=tp}.
\end{theorem}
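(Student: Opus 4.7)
The plan is to reduce equation~\eqref{eqn:D-Z} to a single compatibility between two coordinate transformations, with the total ancestor potential $\cA$ serving as the intermediary. By the correspondence between TR ancestors and TR descendents encoded in equation~\eqref{eqn:DKP-ACohFT}, we have
$$
Z({\bf p};\hbar)=e^{\frac{1}{2}Q({\bf p},{\bf p})}\cdot\cA(\hbar\cdot {\bf s(p)};\hbar),
$$
where ${\bf s(p)}$ is defined in equation~\eqref{eqn:s-p} by expanding $d\zeta_n^{\bar\beta}(z)$ at the boundaries against $\{d\lambda_i^{-k}\}$. On the geometric descendent side, the generalized Kontsevich--Manin formula~\eqref{eq:descendent-ancestor} at $\ttau=0$ (with the ambiguous constants $F_0(0)$ and $F_1(0)$ normalized to zero) reads
$$
\cD({\bf t};\hbar)=e^{\frac{1}{\hbar^2}\bigl(J_{-}({\bf t})+\frac{1}{2}W({\bf t},{\bf t})\bigr)}\cA({\bf s}({\bf t});\hbar),\qquad {\bf s}({\bf t})(\givz)=[S(\givz){\bf t}(\givz)]_{+}.
$$
Because $J_{-}$ is linear, $W$ is bilinear, and $S(\givz)$ is independent of $\hbar$, the substitution ${\bf t}\mapsto\hbar\cdot{\bf t}$ gives the cleaner
$$
\cD(\hbar\cdot{\bf t};\hbar)=\exp\!\Bigl(\tfrac{1}{\hbar}J_{-}({\bf t})+\tfrac{1}{2}W({\bf t},{\bf t})\Bigr)\cA(\hbar\cdot{\bf s}({\bf t});\hbar).
$$

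The main step is to prove the coordinate-matching identity ${\bf s}({\bf t(p)})={\bf s(p)}$. I would derive it directly from the $S$-duality between the bases $\{d\zeta_k^i\}$ and $\{d\chi_l^j\}$ provided by~\eqref{eqn:zeta-chi}. Starting from $d\zeta_n^i=\sum_{l\geq 0,\,j}(S_l)^i_j\, d\chi_{l+n}^j$, take the boundary expansion of both sides in $\{d\lambda_i^{-k}\}$: the right-hand side produces $\sum_{l\geq 0,\,j}(S_l)^i_j\, t^j_{l+n}({\bf p})$ by the definition~\eqref{eqn:t=tp} of ${\bf t(p)}$, while the left-hand side produces $s_n^i({\bf p})$ by~\eqref{eqn:s-p}. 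The former expression is exactly the coefficient of $\givz^n$ in $[S(\givz){\bf t(p)}(\givz)]_{+}$, which is the $(n,i)$-component of ${\bf s}({\bf t(p)})$. This yields the desired equality of transformations.

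To conclude, substitute ${\bf t}={\bf t(p)}$ into the displayed formula for $\cD(\hbar\cdot{\bf t};\hbar)$, apply ${\bf s}({\bf t(p)})={\bf s(p)}$, and rewrite $\cA(\hbar\cdot{\bf s(p)};\hbar)=e^{-\frac{1}{2}Q({\bf p},{\bf p})}Z({\bf p};\hbar)$ via~\eqref{eqn:DKP-ACohFT}. Multiplying the resulting identity by $e^{\frac{1}{2}\QD({\bf p},{\bf p})}$ and invoking the definition $\QD({\bf p},{\bf p})=Q({\bf p},{\bf p})-W({\bf t(p)},{\bf t(p)})$ makes the $W$-term cancel exactly, producing equation~\eqref{eqn:D-Z}.

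The only delicate point is the linear-algebra identity ${\bf s}({\bf t(p)})={\bf s(p)}$. It is not purely tautological: it depends on the $S$-matrix being $\hbar$-free (so that the rescaling ${\bf t}\mapsto\hbar\cdot{\bf t}$ passes cleanly through the Kontsevich--Manin transform), on the specific $S$-dual construction of $\{d\chi_l^j\}$ from $\{d\zeta_k^i\}$ in~\eqref{eqn:chi-zeta}, and on the strict monotonicity $a'_i(l+1,j)>a'_i(l,j)$ which guarantees that the infinite sum in~\eqref{eqn:zeta-chi} gives a well-defined power series in $\bf p$ after boundary expansion. Once this compatibility is established, the theorem follows immediately by chaining the two generating-series formulas.
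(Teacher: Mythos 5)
Your proposal is correct and follows essentially the same route as the paper: both reduce the identity to the coordinate-matching statement $\cA({\bf s(t)};\hbar)|_{\bf t=t(p)}=\cA({\bf s(p)};\hbar)$, derived from the relation~\eqref{eqn:zeta-chi} between the bases $\{d\zeta_k^i\}$ and $\{d\chi_l^j\}$ upon boundary expansion, and then chain the two generating-series formulas~\eqref{eq:descendent-ancestor} and~\eqref{eqn:DKP-ACohFT}. Your write-up simply makes explicit the details (the $\hbar$-rescaling of $J_-$ and $W$, and the verification that $\sum_{l,j}(S_l)^i_j t^j_{l+k}$ is the $\givz^k$-coefficient of $[S(\givz){\bf t}(\givz)]_+$) that the paper leaves implicit.
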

\begin{proof}
Notice the relation of $d\zeta_k^{i}$ and $d\chi_{l}^j$ (equation~\eqref{eqn:zeta-chi}), by taking expansion near the boundary we have
$$
\cA({\bf s(t)};\hbar)|_{\bf t=t(p)}=\cA({\bf s}({\bf p});\hbar),
$$
where ${\bf s(t)}$ is given by ${\bf s}(\givz)=[S(\givz){\bf t}(\givz)]_{+}$, ${\bf t(p)}$ is given by~\eqref{eqn:t=tp}, and ${\bf s(p)}$ is given by~\eqref{eqn:s-p}, respectively.
The equation~\eqref{eqn:D-Z} follows from the relation of $\cD$ and $\cA$ (see equation~\eqref{eq:descendent-ancestor}, take $\ttau=0$) and the relation of $Z$ and $\cA$ (see equation~\eqref{eqn:DKP-ACohFT}).
\end{proof}
\begin{proposition}\label{prop:Q=W}
The term $\QD({\bf p},{\bf p})$ in equation~\eqref{eqn:D-Z} vanishes if and only if the following equations holds: for each pair $(i_1,i_2)\in [m]\times [m]$, near $z_1=b_{i_1}$ and $z_2=b_{i_2}$,
\beq\label{eqn:vanif2}
\sum_{i,j=1}^{N}d\chi^i(z_1)\eta_{ij}d\chi^j(z_2)
=-\bigg(d_1\circ\frac{1}{dx(z_1)}+d_2\circ\frac{1}{dx(z_2)}\bigg)
\bigg(\frac{\delta_{i_1,i_2}d\lambda_{i_1}(z_1)d\lambda_{i_2}(z_2)}{(\lambda_{i_1}(z_1)-\lambda_{i_2}(z_2))^2}\bigg).
\eeq
\end{proposition}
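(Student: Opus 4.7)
The plan is to reformulate $\QD({\bf p},{\bf p})=0$ as a local $2$-differential identity near each pair $(b_{i_1},b_{i_2})$ and then derive its equivalence with~\eqref{eqn:vanif2} by combining a spectral-curve identity for $\omega_{0,2}$ with the defining relation of the matrix $W(\givz,\givw)$. For the reformulation: by~\eqref{eqn:stable-omega-gn-boundary}, the coefficient of $p^{i_1}_{k_1}p^{i_2}_{k_2}$ in $Q({\bf p},{\bf p})$ equals the coefficient of $d\lambda_{i_1}^{-k_1}(z_1)\,d\lambda_{i_2}^{-k_2}(z_2)$ in $\omega_{0,2}(z_1,z_2)-\tfrac{\delta_{i_1,i_2}d\lambda_{i_1}(z_1)d\lambda_{i_2}(z_2)}{(\lambda_{i_1}(z_1)-\lambda_{i_2}(z_2))^2}$ near $(b_{i_1},b_{i_2})$, while plugging~\eqref{eqn:t=tp} into $W({\bf t},{\bf t})$ and reading off the same monomial gives the coefficient of $d\lambda_{i_1}^{-k_1}(z_1)d\lambda_{i_2}^{-k_2}(z_2)$ in $\mathcal W(z_1,z_2):=\sum_{l_1,l_2\ge 0;\,j_1,j_2}(W_{l_1,l_2})_{j_1j_2}\,d\chi^{j_1}_{l_1}(z_1)\,d\chi^{j_2}_{l_2}(z_2)$. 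Thus $\QD=0$ is equivalent to the local identity $(\star)$: $\omega_{0,2}-\tfrac{\delta_{i_1,i_2}d\lambda_{i_1}d\lambda_{i_2}}{(\lambda_{i_1}-\lambda_{i_2})^2}=\mathcal W$ near $(b_{i_1},b_{i_2})$ for every pair $(i_1,i_2)$.

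Next, I establish the auxiliary identity $(D_1+D_2)[\omega_{0,2}-\mathcal W]=\sum_{j_1,j_2}\eta_{j_1j_2}d\chi^{j_1}(z_1)d\chi^{j_2}(z_2)$. Applying $(D_1+D_2)$ to the Laplace-transformed formula of Theorem~\ref{thm:TR-des-(g,n)} for $(0,2)$ together with the integration-by-parts rule $\int_\gamma e^{-x/\givz}D\alpha=-\givz^{-1}\int_\gamma e^{-x/\givz}\alpha$ and the injectivity of the Laplace transform yields $(D_1+D_2)\omega_{0,2}=\sum_\beta d\zeta^{\bar\beta}(z_1)d\zeta^{\bar\beta}(z_2)=\sum_{a,b}\eta_{ab}d\zeta^a(z_1) d\zeta^b(z_2)$. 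Using $d\zeta^i=\sum_l(S_l)^i_j d\chi^j_l$ (the inverse of~\eqref{eqn:chi-zeta} derived from the symplectic condition $S^*(-\givz)S(\givz)=\id$), the $\eta$-adjoint property $\eta(S_ku,v)=\eta(u,S^*_kv)$, and~\eqref{eqn:W-S} in the form $W(\givz,\givw)(\givz^{-1}+\givw^{-1})=S^*(\givz)S(\givw)-\id$, coefficient-matching in $\givz^{-l_1}\givw^{-l_2}$ together with $d\chi^j_{l+1}=Dd\chi^j_l$ gives $\sum_{a,b}\eta_{ab}d\zeta^a d\zeta^b=\sum_{j_1,j_2}\eta_{j_1j_2}d\chi^{j_1}d\chi^{j_2}+(D_1+D_2)\mathcal W$, and the auxiliary identity follows.

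Substituting~\eqref{eqn:vanif2}, which reads $\sum_{j_1,j_2}\eta_{j_1j_2}d\chi^{j_1}d\chi^{j_2}=(D_1+D_2)\bigl(\tfrac{\delta_{i_1,i_2}d\lambda_{i_1}d\lambda_{i_2}}{(\lambda_{i_1}-\lambda_{i_2})^2}\bigr)$ since $-(d_i\circ 1/dx(z_i))=D_i$ on $2$-forms, into the auxiliary identity yields $(D_1+D_2)\bigl[\omega_{0,2}-\mathcal W-\tfrac{\delta_{i_1,i_2}d\lambda_{i_1}d\lambda_{i_2}}{(\lambda_{i_1}-\lambda_{i_2})^2}\bigr]=0$. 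The direction $(\star)\Rightarrow\eqref{eqn:vanif2}$ is immediate. For the converse, I use the injectivity of $D_1+D_2$ on formal series $\alpha=\sum_{k_1,k_2\ge 1}c_{k_1,k_2}d\lambda_{i_1}^{-k_1}(z_1)d\lambda_{i_2}^{-k_2}(z_2)$, which is the class containing the bracketed expression (since $\omega_{0,2}$ minus its standard pole is regular and $\mathcal W$ is built from $d\chi$'s that vanish at $(b_{i_1},b_{i_2})$). Since $x\in\mathbb C[\lambda_i]$ with $dx$ of pole order $r_i$ at $b_i$, $D(d\lambda_i^{-k})=(k/r_i)\,d\lambda_i^{-k-r_i}$; the coefficient equation forces $c_{k_1,k_2}=0$ whenever $k_1\le r_{i_1}$ or $k_2\le r_{i_2}$, and a descending induction on the remaining indices propagates vanishing throughout. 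The main technical obstacle is the algebraic matching in the auxiliary identity: $S(\givz)$, $S^*(\givz)$, $W(\givz,\givw)$ must be treated as operator-valued formal series in $D_1,D_2$, and the index convention $(\id)_{j_1j_2}=\eta_{j_1j_2}$ must be tracked consistently; once this is set up, the remaining steps are direct.
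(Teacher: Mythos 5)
Your route has the same skeleton as the paper's proof: reduce $\QD({\bf p},{\bf p})=0$ to the local $2$-form identity (your $(\star)$, the paper's \eqref{eqn:W-Q}), apply $D_1+D_2$, convert the $W$-part via $(\givz^{-1}+\givw^{-1})W(\givz,\givw)=S^{*}(\givz)S(\givw)-\id$ together with $d\chi^j_{l+1}=Dd\chi^j_l$ and \eqref{eqn:zeta-chi} (this reproduces \eqref{eqn:act-W}), convert the $\omega_{0,2}$-part via $(D_1+D_2)B=\sum_{i,j}d\zeta^i\eta_{ij}d\zeta^j$, and close the loop with injectivity of $D_1+D_2$ on regular series at the boundary pair — a point the paper only asserts through ``both sides are regular,'' so spelling it out is welcome.

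The one place you genuinely deviate is also the one genuine gap. The paper takes $(D_1+D_2)B(z_1,z_2)=\sum_{i,j}d\zeta^{i}(z_1)\eta_{ij}d\zeta^{j}(z_2)$ as an external input, namely \eqref{eqn:lemFLZ} (= \cite[Lemma 6.9]{FLZ20}); you instead try to recover it from the $(0,2)$ Laplace formula \eqref{eqn:des-EO-(0,2)} by integration by parts and ``injectivity of the Laplace transform.'' That injectivity is not justified: the Laplace transforms along admissible paths are asymptotic statements (cf.\ the $\asymp$ in \eqref{def:EO-R}), and by steepest descent they only record the even part of the local expansion of a form at each critical point, so two globally meromorphic $2$-forms are not separated by these data without further argument. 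Moreover, in the paper \eqref{eqn:des-EO-(0,2)} is itself deduced from \eqref{eqn:lemFLZ}, so your derivation runs that implication backwards; using Theorem~\ref{thm:TR-des-(g,n)} is legitimate at this point of the paper, but the inversion step is exactly what is left unproved. The repair is immediate: cite \eqref{eqn:lemFLZ} directly, after which your argument goes through. A smaller imprecision: in the injectivity argument for $D_1+D_2$ on $\sum_{k_1,k_2\geq 1}c_{k_1,k_2}d\lambda_{i_1}^{-k_1}d\lambda_{i_2}^{-k_2}$, since $D$ raises each index by at least $r_i$ (with leading coefficient $k/r_i$), the coefficient relations express each $c_{k_1,k_2}$ in terms of coefficients with strictly smaller indices, so the vanishing propagates by an ascending (double) induction on the indices; your ``descending induction'' phrasing does not match this, though the claim itself is correct and easily fixed.
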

\begin{proof}
By the definitions of $W({\bf t,t})$, $Q({\bf p,p})$, the coordinate transformation ${\bf t(p)}$, it is clear that $W({\bf t},{\bf t})|_{\bf t=t(p)}=Q({\bf p},{\bf p})$ if and only if the following equations hold: for each pair $(i_1,i_2)\in [m]\times [m]$,  near $z_1=b_{i_1}$ and $z_2=b_{i_2}$,
\beq\label{eqn:W-Q}
\sum_{k,l\geq0;\, i,j\in [N]}\eta(\phi_i,W_{k,l}\phi_j)d\chi_k^i(z_1)d\chi_l^j(z_2)
=B(z_1,z_2)-\frac{\delta_{i_1,i_2}d\lambda_{i_1}(z_1)d\lambda_{i_2}(z_2)}{(\lambda_{i_1}(z_1)-\lambda_{i_2}(z_2))^2}.
\eeq
Notice that both sides of these equations are regular,
these equations hold if and only if they hold after the action of $-d\circ\frac{1}{dx(z_1)}-d\circ\frac{1}{dx(z_2)}$ on both sides.
On one hand, by using equation~\eqref{eqn:W-S} and equation~\eqref{eqn:zeta-chi},
after the action, the left-hand side of equation~\eqref{eqn:W-Q} gives
\beq\label{eqn:act-W}
\sum_{i,j}d\zeta^{i}(z_1)\eta_{ij}d\zeta^{j}(z_2)-\sum_{i,j}d\chi^i(z_1) \eta_{ij}d\chi^j(z_2).
\eeq
On another hand, by~\cite[Lemma 6.9]{FLZ20}, we have
\beq\label{eqn:lemFLZ}
-\bigg(d_1\circ\frac{1}{dx(z_1)}+d_2\circ\frac{1}{dx(z_2)}\bigg)\big(B(z_1,z_2)\big)
=\sum_{i,j}d\zeta^{i}(z_1)\eta_{ij}d\zeta^{j}(z_2).
\eeq
The Proposition follows from equations~\eqref{eqn:W-Q}, \eqref{eqn:act-W} and \eqref{eqn:lemFLZ}.
\end{proof}

\section{Non-perturbative generating series}\label{sec:NP-TR}
In this section, we first introduce the non-perturbative multi-differentials and the generating series $Z^{\rm NP}_{\mu,\nu}({\bf p};\hbar;w)$ for the spectral curve $\cC=(\Sigma,x,y)$, where $\frak g(\Sigma)\geq 1$.
With the explicit definition of $Z^{\rm NP}_{\mu,\nu}$, the integrability conjecture (Conjecture~\ref{conj:EO-m-KP}) stated in the Introduction becomes fully stated.
We then introduce the non-perturbative geometric potentials $\cA_{\mu,\nu,\tau}^{\rm NP}({\bf s};\hbar;w)$ and $\cD_{\mu,\nu}^{\rm NP}({\bf t};\hbar;w)$, and establish the non-perturbative version of TR-Geo correspondence.
As a result, the generalization of the Witten conjecture (Conjecture~\ref{conj:generalized-witten}) is also fully stated.

\subsection{Non-perturbative multi-differentials}
The non-perturbative contributions come from the non-vanishing $B$-cycle periods of the multi-differentials $\omega_{g,n}$.
We introduce $\omega_{g,n_1}^{(n_2)}$ which is an $n_1$-differential valued $n_2$-tensor whose elements are given by
\beq
\omega_{g,n_1,j_1,\cdots,j_{n_2}}^{(n_2)}(z_{[n_1]}):=\oint_{z_{n_1+1}\in \cyB_{j_1}}\cdots \oint_{z_{n_1+n_2}\in \cyB_{j_{n_2}}} \omega_{g,n_1+n_2}(z_1,\cdots,z_{n_1+n_2}).
\eeq
\begin{remark}
	One can also define $\omega_{g,n_1}^{(n_2)}$ by taking derivatives with respect to the filling fractions. The equivalence of these two definitions comes from the variational formula of the topological recursion. See~\cite[\S 5.3, Theorem 5.1]{EO07} for details.
\end{remark}

According to the structure of $\omega_{g,n}$ (see equation~\eqref{eqn:wgn-zeta}), for $2g-2+n_1+n_2>0$, we have the following structure of $\omega_{g,n_1}^{(n_2)}$:
$$
\omega_{g,n_1,j_1,\cdots,j_{n_2}}^{(n_2)}(z_{[n_1]})=\sum_{\substack{k_1,\cdots,k_{n_1}\geqslant 0\\ \beta_1,\cdots,\beta_{n_2}\in [N]}}
\<\bar e_{\beta_1}\bar\psi^{k_1},\cdots,\bar e_{\beta_{n_1}}\bar\psi^{k_{n_1}}\>^{(n_2),\, \Omega}_{g,n_1,j_1,\cdots,j_{n_2}}
d\zeta^{\bar\beta_1}_{k_1}(z_1)\cdots d\zeta^{\bar\beta_{n_1}}_{k_{n_1}}(z_{n_1}).
$$
Similarly, by picking a choice of local coordinates $\Lambda=(\lambda_{1},\cdots,\lambda_{m})$ and taking expansions of $\omega_{g,n_1}^{(n_2)}$ near the boundary points $z_1=b_{i_1},\cdots,z_{n_1}=b_{i_{n_1}}$,
we have
\beq
\omega_{g,n_1,j_1,\cdots,j_{n_2}}^{(n_2)}(z_{[n_1]})=\sum_{k_1,\cdots, k_{n_1}\geq 1}\<\alpha^{i_1}_{k_1},\cdots,\alpha^{i_{n_1}}_{k_{n_1}}\>_{g,n_1,j_1,\cdots, j_{n_2}}^{(n_2), \, \Lambda}
\cdot \frac{d\lambda_{i_1,1}^{-k_1}\cdots d\lambda_{i_{n_1},n_1}^{-k_{n_1}}}{k_1\cdots k_{n_1}},
\eeq
For the cases $\omega_{0,n_1}^{(n_2)}$ with $n_1+n_2=2$, the 2-differential $\omega_{0,2}^{(0)}=\omega_{0,2}$ is expanded as before (equation~\eqref{eqn:stable-omega-gn-boundary}),
the differential $\omega^{(1)}_{0,1,j}(z)$ near the boundary point $z=b_i$ has expansion
$$
\omega^{(1)}_{0,1,j}(z)=\oint_{z'\in\cyB_j}B(z,z')=2\pi {\bf i}du_j(z)
=\sum_{k\geq 1}\<\alpha_k^{i}\>_{0,1,j}^{(1)}\frac{d\lambda_{i}^{-k}}{k},
$$
and the $0$-differential $\omega_{0,0,i,j}^{(2)}$ is just a constant given by
$$
\omega_{0,0,i,j}^{(2)}=2\pi {\bf i}\oint_{z\in \cyB_j}du_i(z)=2\pi {\bf i}\, \tau_{ij}=\<\>_{0,0,i,j}^{(2)}.
$$

\subsection{Non-perturbative TR generating series}
Given the non-perturbative invariants, we use a theta-function $\theta=\theta[\cmn](w|\tau')$ to encode all possible $B$-cycle integrals and define the TR non-perturbative generating series as follows.
\begin{definition}
	Given the spectral curve $\cC=(\Sigma,x, y)$,
	the non-perturbative total ancestor potential $\cA_{\mu,\nu,\tau'}^{\rm NP}({\bf s};\hbar;w)$ is defined by
	$$
	\cA^{\rm NP}_{\mu,\nu,\tau'}({\bf s};\hbar;w):=
	\exp\bigg(\sum_{2g-2+n_1+n_2>0 } \hbar^{2g-2+n_2}\frac{\<{\bf s}(\bar\psi),\cdots,{\bf s}(\bar\psi)\>_{g,n_1}^{(n_2),\Omega} \cdot \nabla_w^{\otimes n_2}}{n_1!n_2!}\bigg)\theta[\cmn](w|\tau'),
	$$
	and the non-perturbative TR descendent potential $Z_{\mu,\nu,\tau'}^{\rm NP}({\bf p};\hbar;w)$ is defined by
	$$
	Z^{\rm NP}_{\mu,\nu,\tau'}({\bf p};\hbar;w):=
	\exp\bigg(\sum_{\substack{2g-2+n\geq0\\n_1+n_2=n}} \hbar^{2g-2+n}\frac{\<{\bf p}(\alpha),\cdots,{\bf p}(\alpha)\>_{g,n_1}^{(n_2), \Lambda} \cdot \nabla_w^{\otimes n_2}}{n_1!n_2!}\bigg)\theta[\cmn](w|\tau').
	$$
	Here ${\bf s}(\bar\psi)=\sum_{k\geq 0}\sum_{a=1}^{N}s_k^a\phi_a\bar\psi^k$, ${\bf p}(\alpha)=\sum_{k\geq 1}\sum_{i=1}^{m}\frac{1}{k}p_k^i \alpha_k^i $,
	 $\nabla_{w}=\frac{1}{2\pi {\bf i}}(\pd_{w_1},\cdots,\pd_{w_{\frak g}})$ and
	 $$\<-\>_{g,n_1}^{(n_2),*}\cdot \nabla_{w}^{\otimes n_2}
	 =\frac{1}{(2\pi {\bf i})^{n_2}}\sum_{j_1,\cdots,j_{n_2}=1}^{\frak g}\<-\>_{g,n_1,j_1,\cdots,j_{n_2}}^{(n_2),*}\pd_{w_{j_1}}\cdots\pd_{w_{j_{n_2}}},
	 $$
where $*=\Omega, \Lambda$.
\end{definition}
We note that we cannot take $\tau'=0$ in $\cA^{\rm NP}_{\mu,\nu,\tau'}({\bf s};\hbar;w)$ since this will lead to a divergent result in the theta-function.
However, we can take $\tau'=0$  in $Z^{\rm NP}_{\mu,\nu,\tau'}({\bf p};\hbar;w)$, because the operator $e^{\frac{1}{2}\omega_{0,0}^{(2)}\nabla_{w}^{\otimes 2}}$ (which comes from $\omega_{0,2}$ and thus is not contained on the ancestor side) shifts $\tau'$ in $\theta[\cmn](w|\tau')$ by $\tau$:
 $$
 e^{\frac{1}{2}\omega_{0,0}^{(2)}\nabla_{w}^{\otimes 2}}\big(\theta[\cmn](w|\tau')\big)=\theta[\cmn](w|\tau'+\tau),
 $$
 where we have used
 \beq\label{eqn:dw2-dtau}
 \frac{1}{2}\cdot 2\pi {\bf i}\, \tau\cdot \nabla_{w}^{\otimes 2}\big(\theta[\cmn](w|\tau')\big)
 =\sum_{i,j}\tau_{ij}\pd_{\tau'_{ij}}\big(\theta[\cmn](w|\tau')\big).
 \eeq
We see that we can take $\tau'=0$ in $Z^{\rm NP}_{\mu,\nu,\tau'}({\bf p};\hbar;w)$ without losing any information, so we define
$$
Z^{\rm NP}_{\mu,\nu}({\bf p};\hbar;w):=Z^{\rm NP}_{\mu,\nu,\tau'}({\bf p};\hbar;w)|_{\tau'=0}.
$$
Similar as the perturbative version, we have the following non-perturbative analogue of the formula~\eqref{eqn:DKP-ACohFT}:
\beq\label{eqn:DKP-ACohFT-NP}
Z^{\rm NP}_{\mu,\nu}({\bf p};\hbar;w):=e^{\frac{1}{2}Q({\bf p,p})}\cdot e^{2\pi {\bf i} \widehat{u(z)}\cdot \nabla_{w}}\big(\cA_{\mu,\nu,\tau}^{\rm NP}(\hbar\cdot {\bf s(p)};\hbar;w)\big),
\eeq
where $\widehat{u(z)}=\frac{1}{2\pi {\bf i}}\sum_{k,i}\<\alpha^i_{k}\>_{0,1}^{(1)}\frac{p^i_k}{k}$, ${\bf s(p)}$ is given by~\eqref{eqn:s-p}, and we have used equation~\eqref{eqn:dw2-dtau}.

\begin{remark}
	If we take $w=\frac{1}{2\pi {\bf i} \hbar}\oint_{\cyB}y(z)dx(z)-\frac{\tau}{2\pi {\bf i} \hbar}\oint_{\cyA} y(z)dx(z)$, and take ${\bf p=0}$ or $p_k=\lambda_1^{-k}-\lambda_2^{-k}$, then, up to some simple factors, our definition recovers Eynard--Mari\~{n}o's definition~\cite{EM11} of the non-perturbative partition function and Borot--Eynard's definition~\cite{BE12} of spinor kernel (or $(1|1)$-kernel in~\cite{BE15}), respectively.
\end{remark}

\begin{conjecture}[= Conjecture~\ref{conj:EO-m-KP}]\label{conj:NP-TR-KP}
The generating series $Z({\bf p};\hbar)$ is a tau-function of the $m$-component KP hierarchy when ${\frak g}(\Sigma)=0$, whereas $Z^{\rm NP}_{\mu,\nu}({\bf p};\hbar;w)$ is a tau-function of the $m$-component KP hierarchy when ${\frak g}(\Sigma)\geq 1$.
\end{conjecture}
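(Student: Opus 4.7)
The plan is to establish the $m$-component KP tau-function property by verifying the Hirota--Sato bilinear identity directly for the (non-perturbative) generating series. Recall that a function $Z({\bf p})$ is a tau-function of the $m$-KP hierarchy with times $\{p_k^i/k\}_{k\geq 1,\, 1\leq i\leq m}$ iff, for any two sets of times ${\bf p}, {\bf p}'$,
$$
\sum_{i=1}^{m} \oint_{b_i} e^{\sum_{k\geq 1}(p_k^i-(p')_k^i)\lambda_i^k/k}\, Z\bigl({\bf p}-[\lambda_i^{-1}]_i\bigr)\, Z\bigl({\bf p}'+[\lambda_i^{-1}]_i\bigr)\, d\lambda_i \;=\; 0,
$$
where $[\lambda^{-1}]_i$ denotes the Miwa shift $p_k^j \to p_k^j - \delta_{ij}\lambda^{-k}$ at the $i$-th boundary. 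The overall strategy is to reinterpret $Z({\bf p}\pm[\lambda_i^{-1}]_i)/Z({\bf p})$ as wave/dual wave functions evaluated at a point $z\in \Sigma$ near $b_i$ with $\lambda_i(z)=\lambda$, and to show that the sum over boundaries is a global residue sum on $\Sigma$ that vanishes by the residue theorem.

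First I would extract the wave functions from the non-perturbative kernel formalism of Borot--Eynard: the $(1|1)$-kernel, which the authors already observe is essentially $Z^{\rm NP}_{\mu,\nu}$ evaluated at the Miwa locus ${\bf p}=[\lambda_1^{-1}]_i-[\lambda_2^{-1}]_i$, admits a geometric interpretation as a Baker--Akhiezer-type spinor on $\Sigma$ dressed by the TR multi-differentials. For $\frak g(\Sigma)=0$, one then shows, using the abstract loop equations of topological recursion and the fact that $\omega_{g,n}$ has poles only at critical points, that the bilinear combination appearing under the residue sum extends to a globally meromorphic object on $\Sigma \times \Sigma$ whose only residues lie at the boundary points, so the sum vanishes. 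For $\frak g(\Sigma)\geq 1$, the $B$-cycle integrals of $\omega_{g,n}$ introduce monodromies; the theta factor $\theta[\cmn](w|\tau')$ in the definition of $Z^{\rm NP}_{\mu,\nu,\tau'}$ is engineered so that under the induced shift $w \mapsto w + \int_{\cyB_j} du$ along analytic continuation, its quasi-periodicity~\eqref{eqn:quasiperiod-theta} precisely cancels the monodromy of the perturbative part, so the integrand becomes single-valued on $\Sigma$ and the global residue argument again applies. The key classical input beyond genus zero is Fay's trisecant identity, which plays the role that the partial fraction identity plays in genus zero.

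The main obstacle will be controlling the non-perturbative structure uniformly in $\hbar$: the formal series $Z^{\rm NP}_{\mu,\nu}$ couples infinitely many orders of perturbative TR data with $B$-cycle integrals through the derivatives $\nabla_w$ acting on $\theta$, and it is not a priori clear that Fay's identity lifts cleanly to this fully dressed object rather than just order by order. A second difficulty specific to the $m$-component case, absent in the one-boundary reduction of Theorem~\ref{KP-TR}, is that the bilinear identity couples all $m$ boundary expansions simultaneously, so one cannot localize the argument near a single $b_i$ and must instead exploit the global geometry of $\Sigma$ to reorganize the cross-terms coming from different boundaries. The cleanest route may ultimately be to reformulate the problem in the semi-infinite wedge / free fermion picture, where $m$-KP tau-functions correspond to points of a Sato Grassmannian; the spectral curve $\cC$ would then be realized as such a point via the Krichever construction, and the non-perturbative theta factor should arise canonically from the higher-genus boundary conditions, reducing the conjecture to identifying this Grassmannian point with the one extracted from $\{\omega_{g,n}\}$.
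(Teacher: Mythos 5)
You are treating as a theorem what the paper states only as a conjecture: the general $m$-component statement is \emph{not} proved in the paper. The authors prove only the one-boundary case with $x$ meromorphic and polynomial in the boundary coordinate (Theorem~\ref{KP-TR}, \S\ref{sec:TRKPproof}), and explicitly defer the multi-component case to future work. Your proposal is therefore a strategy sketch for an open problem, and it should be judged as such. At the level of architecture it does parallel the paper's proof of the one-boundary case --- bilinear (Hirota) identity, globalization of the quadratic expression over the spectral curve, and a residue-theorem argument, with the theta quasi-periodicity~\eqref{eqn:quasiperiod-theta} cancelling the $B$-cycle monodromy exactly as in Lemma~\ref{lem:HQE-globality}. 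But the load-bearing step is missing.

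The gap is your assertion that the globally meromorphic bilinear object has ``residues only at the boundary points.'' The residue theorem converts the Hirota residue sum at the boundaries into minus the sum of residues at the remaining poles, which sit at the critical points of $x$; the entire content of the proof is to show those residues vanish, and this is \emph{not} a consequence of the abstract loop equations or of the pole structure of $\omega_{g,n}$ alone. In the paper this is Proposition~\ref{prop:HQE-branch-to-KW}: after conjugating the vertex operators through the quantized $R$-matrix $\widehat{R}$ (a genuinely delicate computation with the $V$-kernel, cf.\ equation~\eqref{eqn:GammaA-R}), the local expression at each critical point collapses to the HQE of the KdV hierarchy for the Kontsevich--Witten tau-function, and regularity there is exactly the Witten conjecture/Kontsevich theorem. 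Your sketch never engages with this localization, so the one nontrivial analytic input is absent; Fay's trisecant identity, which you invoke instead, only controls the $\hbar=0$ (classical Krichever) part and cannot supply the all-order statement. Your closing suggestion of a Sato-Grassmannian/free-fermion reformulation is a legitimate alternative route (it is essentially the approach of the cited work of Alexandrov--Bychkov--Dunin-Barkowski--Kazarian--Shadrin), but as written it is a proposal to find a proof, not a proof.
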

\begin{remark}
In~\cite{BE12}, Borot and Eynard proposed several equivalent conjectures regarding the integrability of non-perturbative topological recursion. While their framework relies on spinor kernels, ours is formulated using generating series, which are better suited for applications in enumerative geometry. Despite the differences in presentation and underlying formalism, we anticipate that our conjecture is equivalent to theirs. For recent developments related to their approach, we direct the reader to~\cite{ABDKS23,ABDKS24}.
\end{remark}

For the cases with only one boundary point,
we choose the local coordinate $\lambda$ near the boundary point such that $x(z)\in \mathbb C[\lambda]$, then
$Z^{\rm NP}_{\mu,\nu}({\bf p};\hbar;w)$ satisfies equation~\eqref{eqn:red-Z} too:
\beq\label{eqn:red-ZNP}
\widehat{x(z)^n}(Z^{\rm NP}_{\mu,\nu}({\bf p};\hbar;w))=0,\qquad n\geq 1.
\eeq
This is because we always have
\beq\label{eqn:red-Z-NP}
\mathop{\Res}_{z\to b} x(z)^k du_j(z)=\frac{1}{2\pi {\bf i}}\oint_{z'\in\cyB_j}\mathop{\Res}_{z\to b}x(z)^k B(z,z')
=-\frac{1}{2\pi {\bf i}}\oint_{z'\in\cyB_j}dx(z')^k=0.
\eeq
\begin{remark}
When setting  $\hbar=0$ in $Z^{\rm NP}_{\mu,\nu}({\bf p};\hbar;w)$, we obtain
$$
	Z^{\rm NP}_{\mu,\nu}({\bf p};\hbar;w)|_{\hbar=0}=e^{\frac{1}{2}Q({\bf p,p})}\cdot \theta[\cmn](w+\widehat{u(z)}|\tau).
$$
Denoting the right-hand side of this equation by $\mathcal T({\bf p})$, we recognize it as a classical result, originally due to Krichever~\cite{Kri77} (see also~\cite{Dub81, Mum}), that  $\mathcal T({\bf p})$
 serves as a tau-function of the KP hierarchy.
This confirms Conjecture~\ref{conj:NP-TR-KP} at the ``classical" level,
while our result extends this classical framework to a ``quantum" version

Conversely, Shiota~\cite{Shi86} proved that   $\mathcal T({\bf p})$, constructed using the theta function on a principally polarized abelian variety $X=\mathbb C^{\frak g}/(\mathbb Z^{\frak g}+\tau\mathbb Z^{\frak g})$, is a tau-function of the KP hierarchy if and only if $X$ is isomorphic to the Jacobian variety of a complete smooth complex curve of genus $\frak g$ (see also~\cite{Mul84}).
In other words, the KP integrability of $\mathcal T({\bf p})$ characterizes Jacobian varieties among principally polarized abelian varieties.
This leads to an interesting question: what is the geometric meaning of the integrability of  $Z^{\rm NP}_{\mu,\nu}({\bf p};\hbar;w)$?
\end{remark}
\subsection{Non-perturbative geometric generating series}
Now we turn to the geometry side to define the non-perturbative geometric generating series.

We first consider the ancestor side.
Introduce a sequence of vector fields $\sv_j$, $j=1,\cdots,\frak g$:
\beq\label{def:varphi}
\sv_j=\sum_{\beta}\bar e_{\beta}\cdot \oint_{z\in \cyB_j}d\zeta^{\bar\beta}(z)
=-2\pi {\bf i}\sum_{\beta}\frac{du_j}{dy}\Big|_{z=z^\beta}\cdot e_{\beta}.
\eeq
Notice that for $k\geq 1$, $\oint_{z\in \cyB_i}d\zeta_k^{\bar\beta}(z)=0$,
by the structure of $\omega_{g,n}$ (equation~\eqref{eqn:wgn-zeta}), we have for $2g-2+n_1+n_2>0$:
$$
\omega_{g,n_1,j_1,\cdots,j_{n_2}}^{(n_2)}=\sum_{\substack{k_1,\cdots,k_{n_1}\geqslant 0\\ \beta_1,\cdots,\beta_{n_1}\in [N]}}
\<\bar e_{\beta_1}\bar\psi^{k_1},\cdots,\bar e_{\beta_{n_1}}\bar\psi^{k_n},\sv_{j_1},\cdots,\sv_{j_m}\>^{\Omega}_{g,n_1+n_2}
d\zeta^{\bar\beta_1}_{k_1}(z_1)\cdots d\zeta^{\bar\beta_{n_1}}_{k_{n_1}}(z_{n_1}).
$$
By viewing the vector fields $\sv_j=\sum_{a}\varphi^a_j\phi_a$ as $\sum_{a}\varphi^a_j\pd_{s_0^a}$, $j=1,\cdots,\frak g$, potential $\cA_{\mu,\nu,\tau'}^{\rm NP}({\bf s};\hbar;w)$ can be expressed as follows:
\beq\label{eqn:NPA-formula}
\cA_{\mu,\nu,\tau'}^{\rm NP}({\bf s};\hbar;w)=e^{\hbar\, \sv\cdot \nabla_{w}}\big(\cA({\bf s};\hbar)\cdot\theta[\cmn](w|\tau')\big).
\eeq

Then we turn to the descendent side. Inspired by the ancestor side, we introduce the following definition:

\begin{definition}
	The non-perturbative total descendent potential $\cD_{\mu,\nu,\tau'}^{\rm NP}$ is defined as follows:
	$$
	\cD_{\mu,\nu,\tau'}^{\rm NP}({\bf t};\hbar;w) :=e^{\hbar\, \sv\cdot \nabla_{w}}\big(\cD({\bf t};\hbar)\cdot \theta[\cmn](w|\tau')\big),
	$$
where $\sv_j=\sum_{a}\varphi^a_j\phi_a$ is viewed as $\sum_{a}\varphi^a_j\pd_{t_0^a}$.
\end{definition}
By the relation of $\cD$ and $\cA$ (see \S \ref{sec:descedentforCohFT}), we have the following formula
\beq\label{eqn:descendent-ancestor-NP}
\cD_{\mu,\nu,\tau'}^{\rm NP}({\bf t};\hbar;w) :=
e^{F_{\rm un}({\bf t};\hbar)+\frac{1}{\hbar}\<\sv\>_{0,1}\cdot \nabla_{w}+\frac{1}{\hbar}W({\bf t},\sv\cdot \nabla_w)}\big(\cA^{\rm NP}_{\mu,\nu,\tilde \tau'}({\bf s(t)};\hbar;w)\big),
\eeq
where
$W({\bf t},\sv\cdot \nabla_w)=\sum_{k}\eta(W_{0,k}t_k,\sv\cdot \nabla_w)$
and $\tilde\tau'_{ij}:=\tau'_{ij}+\frac{1}{2\pi {\bf i}}\<\sv_i,\sv_j\>^{\cD}_{0,2}$,
and the coordinate transformation ${\bf s(t)}$ is given by ${\bf s}(\givz)=[S(\givz){\bf t}(\givz)]_{+}$.
Here we have used equation~\eqref{eqn:dw2-dtau}.
Inspired by the TR non-perturbative descendent potential, we define
$$
\cD_{\mu,\nu}^{\rm NP}({\bf t};\hbar;w):=\cD_{\mu,\nu,\tau'}^{\rm NP}({\bf t};\hbar;w)\big|_{\tau'_{ij}=\tau_{ij}-\frac{1}{2\pi {\bf i}}\<\sv_i,\sv_j\>^{\cD}_{0,2}}.
$$

We have the following Proposition generalizing Theorem~\ref{thm:D-Z} to non-perturbative case.
\begin{proposition}\label{prop:D-Z-NP}
	The function $\cD^{\rm NP}_{\mu,\nu}({\bf t};\hbar;w)$ is related with $Z^{\rm NP}_{\mu,\nu}({\bf p};\hbar;w)$ by the following equation
	$$
	e^{\frac{1}{2}\QD({\bf p},{\bf p})}\cdot \cD_{\mu,\nu}^{\rm NP}(\hbar\cdot{\bf t};\hbar)|_{\bf t=t(p)}
	= e^{\frac{1}{\hbar}J_{-}({\bf t})|_{\bf t=t(p)}+\frac{1}{\hbar}\<\sv\>_{0,1}\cdot \nabla_{w}} \cdot Z^{\rm NP}_{\mu,\nu}({\bf p};\hbar;w),
	$$
	where $\QD({\bf p},{\bf p})=Q({\bf p},{\bf p})-W({\bf t},{\bf t})|_{\bf t=t(p)}$,
	and the coordinates transformation ${\bf t}={\bf t(p)}$ is defined by equation~\eqref{eqn:t=tp}.
\end{proposition}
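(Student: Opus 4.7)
The plan is to mirror the proof of Theorem~\ref{thm:D-Z}, while carrying along the extra non-perturbative data: the $\cyB$-cycle integrals packaged in $\omega^{(n_2)}_{g,n_1}$, the theta-function argument $w$, and the first-order $\pd_w$-operators arising from $\sv$, $\<\sv\>_{0,1}$, and $\widehat{u(z)}$. Because each such $\pd_w$-operator has coefficients depending only on ${\bf p}$ (never on $w$), they pairwise commute, so the task reduces to matching their coefficients after the usual coordinate changes.

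\textbf{Sequence of manipulations.} First, instantiate the descendent--ancestor formula~\eqref{eqn:descendent-ancestor-NP} at $\tau'_{ij}=\tau_{ij}-\frac{1}{2\pi{\bf i}}\<\sv_i,\sv_j\>^{\cD}_{0,2}$, so that $\tilde{\tau}'=\tau$, to obtain
\[
\cD^{\rm NP}_{\mu,\nu}({\bf t};\hbar;w)
=\exp\Big(F_{\rm un}({\bf t};\hbar)+\tfrac{1}{\hbar}\<\sv\>_{0,1}\cdot\nabla_w+\tfrac{1}{\hbar}W({\bf t},\sv\cdot\nabla_w)\Big)\,\cA^{\rm NP}_{\mu,\nu,\tau}({\bf s(t)};\hbar;w).
\]
Next, rescale ${\bf t}\mapsto\hbar{\bf t}$: linearity of the $S$-action yields ${\bf s}(\hbar{\bf t})=\hbar\,{\bf s(t)}$, the bilinearity of $W$ collapses $W(\hbar{\bf t},\sv\cdot\nabla_w)/\hbar$ to $W({\bf t},\sv\cdot\nabla_w)$, and $F_{\rm un}(\hbar{\bf t};\hbar)$ reduces modulo constants to $\tfrac{1}{\hbar}J_-({\bf t})+\tfrac{1}{2}W({\bf t,t})$. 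Substitute ${\bf t}={\bf t(p)}$ via the identity ${\bf s(t(p))}={\bf s(p)}$ already established in the proof of Theorem~\ref{thm:D-Z}, and multiply both sides by $e^{\QD/2}$; the definition $\QD=Q-W|_{{\bf t=t(p)}}$ precisely absorbs the $\tfrac{1}{2}W({\bf t(p),t(p)})$ piece into $\tfrac{1}{2}Q({\bf p,p})$. Finally, invoke~\eqref{eqn:DKP-ACohFT-NP} to replace $\cA^{\rm NP}_{\mu,\nu,\tau}(\hbar{\bf s(p)};\hbar;w)$ with $e^{-Q({\bf p,p})/2}\,e^{-2\pi{\bf i}\,\widehat{u(z)}\cdot\nabla_w}\,Z^{\rm NP}_{\mu,\nu}({\bf p};\hbar;w)$; the $e^{-Q/2}$ cancels $e^{Q/2}$, and collecting the remaining mutually commuting exponents gives
\[
e^{\QD/2}\cdot\cD^{\rm NP}_{\mu,\nu}(\hbar{\bf t};\hbar;w)\big|_{{\bf t=t(p)}}
=\exp\Big(\tfrac{1}{\hbar}J_-({\bf t})|_{{\bf t=t(p)}}+\tfrac{1}{\hbar}\<\sv\>_{0,1}\cdot\nabla_w+\Delta\Big)\,Z^{\rm NP}_{\mu,\nu}({\bf p};\hbar;w),
\]
with $\Delta:=W({\bf t(p)},\sv\cdot\nabla_w)-2\pi{\bf i}\,\widehat{u(z)}\cdot\nabla_w$.

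\textbf{Main obstacle.} Matching the claimed right-hand side reduces to proving $\Delta=0$, equivalently the near-boundary identity
\[
\sum_{l\geq 0,\,a}\eta(W_{0,l}\phi_a,\sv_j)\,d\chi^a_l(z)=2\pi{\bf i}\,du_j(z)\qquad\text{near each }b_i,
\]
which is the one-$\cyB$-cycle analogue of Proposition~\ref{prop:Q=W}. The plan to establish it is: (i) take the $\cyB_j$-integral in $z_2$ of the globally valid identity~\eqref{eqn:lemFLZ}, noting that $\oint_{z_2\in\cyB_j}d_2\bigl(B(z_1,z_2)/dx(z_2)\bigr)=0$ since $B/dx$ is single-valued along the closed cycle, yielding the primitive relation $-d\circ\tfrac{1}{dx(z)}\bigl(2\pi{\bf i}\,du_j(z)\bigr)=\sum_i\eta(\phi_i,\sv_j)\,d\zeta^i(z)$; (ii) derive the symplectic consequence $W_{0,l}=S_{l+1}$ directly from $(\givz^{-1}+\givw^{-1})W(\givz,\givw)=S^*(\givz)S(\givw)-\id$ combined with $S(-\givz)S^*(\givz)=\id$; (iii) insert $d\chi^a_l(z)=\sum_k(-1)^k(S^*_k)^a_b d\zeta^b_{k+l}(z)$ into the $d\chi$-side and collapse the coefficient of each $d\zeta^b_n$ via the symplectic recursion $\sum_{k=0}^{n+1}(-1)^k S_k S^*_{n+1-k}=0$, reducing the $d\chi$-sum to an explicit $d\zeta$-series; (iv) identify this $d\zeta$-series with the near-boundary expansion of $2\pi{\bf i}\,du_j(z)$ obtained by iterating the primitive relation of step (i), using regularity of $du_j$ at $b_i$ to pin down the integration constants. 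Step (iv) is the delicate one and carries most of the technical weight of the argument.
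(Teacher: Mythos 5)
Your reduction is exactly the paper's: combining \eqref{eqn:descendent-ancestor-NP} (evaluated at the stated $\tau'$, so that $\tilde\tau'=\tau$) with \eqref{eqn:DKP-ACohFT-NP} and the perturbative bookkeeping already used for Theorem~\ref{thm:D-Z}, the Proposition is equivalent to the operator identity $W({\bf t(p)},\sv_j)=2\pi{\bf i}\,\widehat{u_j(z)}$ (the paper's equation~\eqref{eqn:Wtv-u}), i.e.\ to the near-boundary identity $\sum_{l\geq 0,a}\eta(W_{0,l}\phi_a,\sv_j)\,d\chi^a_l(z)=2\pi{\bf i}\,du_j(z)$ at each $b_i$. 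Your steps (i) and (ii) are likewise the paper's ingredients: the $\cyB_j$-period of \eqref{eqn:lemFLZ} (the exact term $\oint_{\cyB_j}d_2\bigl(B(z_1,z_2)/dx(z_2)\bigr)$ indeed drops), and $W_{0,l}=S_{l+1}$, which in fact follows from \eqref{eqn:W-S} by comparing coefficients alone, with no symplectic relation needed.

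The genuine gap is in step (iv). Your step (iii) is a legitimate rearrangement (the boundary orders $a'_i(l,j)$, $a_i(n,\bar\beta)$ increase, so the double series can be regrouped), giving the series $\sum_{n\geq0,b}(-1)^n\eta(\phi_b,S_{n+1}\sv_j)\,d\zeta^b_n(z)$. But now apply $-d\circ\frac{1}{dx(z)}$: the left-hand side becomes $\sum_{m\geq1,a}\eta(S_m\phi_a,\sv_j)\,d\chi^a_m(z)$, while the right-hand side of your primitive relation is, by \eqref{eqn:zeta-chi}, $\sum_a\eta(\phi_a,\sv_j)\,d\zeta^a_0(z)=\sum_{m\geq0,a}\eta(S_m\phi_a,\sv_j)\,d\chi^a_m(z)$. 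These differ by the $m=0$ term $\sum_a\eta(\phi_a,\sv_j)\,d\chi^a_0(z)$. Consequently, ``iterating the primitive relation and pinning down integration constants by regularity of $du_j$'' cannot close the argument: the regularity discussion only removes the one-dimensional $c\,dx$ kernel of $d\circ\frac{1}{dx}$, whereas the identity you want forces, and is essentially equivalent to, the vanishing $\sum_a\eta(\phi_a,\sv_j)\,d\chi^a_0(z)=0$ near every boundary point — a statement that no amount of $S$-matrix algebra plus the period relation will produce, since it ties the $\cyB$-periods encoded in $\sv_j$ to the $S$-twisted boundary expansions. This is exactly where the paper's proof does its real work: it shows $\lim_{z\to b_i}x(z)\sum_a\eta(\phi_a,\sv_j)\bigl(\chi^a_0(z)-\chi^a_0(b_i)\bigr)=0$ (using $\lim_{z\to b_i}x(z)\,df/dx=0$ for forms regular at $b_i$, applied to $du_j$ and to the $d\chi^a_k$) and then argues that a nonzero such combination would have a primitive $\int_{b_i}^z(\cdot)\,dx$ singular at $b_i$, forcing the combination to vanish identically. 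Your plan has no counterpart for this boundary-vanishing lemma, so step (iv) as described would fail.
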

\begin{proof}
	By equations~\eqref{eqn:DKP-ACohFT-NP} and~\eqref{eqn:descendent-ancestor-NP}, we just need to prove following equations:
	\beq\label{eqn:Wtv-u}
	W({\bf t(p)},\sv_j)=2\pi {\bf i}\widehat{u_j(z)},\qquad j=1,\cdots,\frak g.
	\eeq
	These equations are equivalent to prove that near each boundary point $b_i$, $i=1,\cdots,m$,
	$$\textstyle
	\sum_{k,a}\eta(W_{0,k}\phi_a , \sv_j)\cdot \chi^a_k(z)=2\pi {\bf i} (u_j(z)-u_j(b_i)),\qquad j=1,\cdots,\frak g.
	$$
	We consider the action of $-\frac{d}{dx(z)}$ on both sides of these equations. For the left-hand side
	$$\textstyle
	\sum_{k,a}\eta(W_{0,k}\phi_a , \sv_j)\cdot \big(-\frac{d}{dx(z)}\big)\chi^a_k(z)
	=\sum_{k,a}\eta(W_{0,k}\phi_a , \sv_j)\cdot \chi^a_{k+1}(z),
	$$
and this gives $\sum_a\eta(\phi_a,\sv_j)\cdot (\zeta^a_0(z)-\chi^a_0(z))$
since $W_{0,k}=S_{k+1}$.
For the right-hand side,
	$$\textstyle
	-2\pi {\bf i} \frac{d u_j(z)}{dx(z)}=-\oint_{z'\in \cyB_j}\frac{B(z,z')}{dx(z)}=\sum_a\eta(\phi_a,\sv_j) (\zeta_0^a(z)-\zeta_0^a(b_i)).
	$$
Notice that for a regular local form $df(z)$ near the boundary point $b_i$, by taking local coordinate $\lambda_i$ such that $x(z)=\lambda_i^{r_i}$ for some integer $r_i\geq 1$, one has
$$
\lim_{z\to b_i}x(z)\frac{df(z)}{dx(z)} =\lim_{\lambda_i\to \infty} \frac{-1}{r_i\lambda_i}\frac{df(z)}{d\lambda_i^{-1}}= 0.
$$
In particular, $\lim_{z\to b_i}x(z)\frac{du_j(z)}{dx(z)}=0$ and $\lim_{z\to b_i}x(z)\cdot\chi_{k+1}^a(z)=0$ for $k\geq 1$.
Therefore,
$$
0=\lim_{z\to b_i}x(z)\cdot\sum_a\eta(\phi_a,\sv_j) (\zeta_0^a(z)-\zeta_0^a(b_i))
=\lim_{z\to b_i}x(z)\cdot\sum_a\eta(\phi_a,\sv_j) (\chi_0^a(z)-\chi_0^a(b_i)),
$$
where we have used $\zeta_0^a(b_i)=\chi_0^a(b_i)$.
Since $\int_{b_i}^{z}(\chi_0^a(z)-\chi_0^a(b_i))dx$ is either zero or singular,
one must has $\sum_a\eta(\phi_a,\sv_j)(\chi_0^a(z)-\chi_0^a(b_i))=0$.
This gives us
$$\textstyle
\sum_a\eta(\phi_a,\sv_j)\cdot (\zeta^a_0(z)-\chi^a_0(z)) =\sum_a\eta(\phi_a,\sv_j)\cdot (\zeta_0^a(z)-\zeta_0^a(b_i)),
$$
and equation~\eqref{eqn:Wtv-u} is proved.
\end{proof}

\begin{conjecture}[= Conjecture~\ref{conj:generalized-witten}]
The generating series $e^{\frac{1}{2}\QD({\bf p},{\bf p})}\cdot \cD(\hbar\cdot{\bf t};\hbar)|_{\bf t=t(p)}$ is a tau-function of the $m$-component KP hierarchy when ${\frak g}(\Sigma)=0$, while $e^{\frac{1}{2}\QD({\bf p},{\bf p})}\cdot \cD_{\mu,\nu}^{\rm NP}(\hbar\cdot{\bf t};\hbar)|_{\bf t=t(p)}$ is a tau-function of the $m$-component KP hierarchy when ${\frak g}(\Sigma)\geq 1$.
Furthermore, the generating series in both cases satisfy  certain reduction structures.
\end{conjecture}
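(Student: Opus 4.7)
The plan is to reduce the stated (generalized Witten) conjecture to the $m$-component TR integrability conjecture (Conjecture~\ref{conj:NP-TR-KP}) by using the explicit dictionary between the geometric descendent potential and the TR descendent generating series. Concretely, Theorem~\ref{thm:D-Z} in the perturbative case and Proposition~\ref{prop:D-Z-NP} in the non-perturbative case already supply the identity
$$
e^{\tfrac12\QD(\mathbf p,\mathbf p)}\,\cD^{(\mathrm{NP})}(\hbar\cdot\mathbf t(\mathbf p);\hbar;w)
\;=\;
e^{\tfrac1\hbar J_-(\mathbf t(\mathbf p))\,+\,\tfrac1\hbar\langle\sv\rangle_{0,1}\cdot\nabla_w}\,
Z^{(\mathrm{NP})}_{\mu,\nu}(\mathbf p;\hbar;w),
$$
(where the $\nabla_w$ piece is absent in the $\frak g=0$ perturbative case). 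If one grants Conjecture~\ref{conj:NP-TR-KP}, the right-hand side is (up to the scalar ``dressing'' prefactor in front of $Z^{(\mathrm{NP})}$) already a tau-function. Thus the content of the conjecture is to verify that this prefactor is a tau-preserving symmetry.

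The second step is to check this symmetry. Under the coordinate substitution $\mathbf t = \mathbf t(\mathbf p)$ defined by~\eqref{eqn:t=tp}, the quantity $J_-(\mathbf t(\mathbf p))$ is $\mathbb C$-linear in the coordinates $\{p_k^i\}$, hence linear in the KP times $\{p_k^i/k\}$. I would then invoke the elementary fact that multiplication by $e^{L(\mathbf x)}$ with $L$ linear in the KP times preserves tau-functions of the $m$-component KP hierarchy: writing the Hirota identity schematically as
$$
\oint e^{\xi(\mathbf x-\mathbf x',z)}\,\tau(\mathbf x-[z^{-1}])\,\tau(\mathbf x'+[z^{-1}])\,dz=0,
$$
the linearity of $L$ gives $L(\mathbf x-[z^{-1}])+L(\mathbf x'+[z^{-1}])=L(\mathbf x)+L(\mathbf x')$, a $z$-independent quantity that factors out of the residue; the multi-component analogue is identical. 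For the non-perturbative contribution, the extra operator $e^{\tfrac1\hbar\langle\sv\rangle_{0,1}\cdot\nabla_w}$ is a constant shift of the theta-function argument $w$ by a $\mathbf p$-independent vector, which is absorbed into the auxiliary parameter $w$ and therefore manifestly respects the KP structure.

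For the reduction claim (concretely established here only in the one-boundary case via Theorem~\ref{Geo-KP}), I would transport the equations $\widehat{x(z)^k}\,Z^{(\mathrm{NP})}=0$ of Corollary~\ref{cor:red-Z-x} and~\eqref{eqn:red-ZNP} through the exponential prefactor. Since $e^{L}\,\widehat{x(z)^k}\,e^{-L}=\widehat{x(z)^k}+[\,\widehat{x(z)^k},L\,]$ and the commutator is a polynomial of strictly lower differential order, the conjugated operators still generate a polynomial reduction of the $m$-component KP hierarchy acting on $e^{\tfrac12\QD}\cD^{(\mathrm{NP})}(\hbar\cdot\mathbf t(\mathbf p);\hbar;w)$. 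Verifying that this is precisely the intended ``polynomial-reduced'' structure amounts to matching the explicit form of the commutator with the prescribed reduction.

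The main obstacle, and the true content of the conjecture, lies entirely in Step~1: Conjecture~\ref{conj:NP-TR-KP} itself. The present paper proves it only in the one-boundary case (Theorem~\ref{KP-TR}), via the polynomial-reduction techniques of \S\ref{sec:TRKPproof}; for $m\geq 2$ the KP integrability of $Z^{(\mathrm{NP})}$ is genuinely open. A secondary, more technical obstacle is to confirm in the $m\geq 2$ non-perturbative setup that all of (i) the shift $\tau'=\tau-\tfrac1{2\pi\mathbf i}\langle\sv_i,\sv_j\rangle^\cD_{0,2}$ needed to define $\cD^{\mathrm{NP}}_{\mu,\nu}$, (ii) the $w$-shift by $\langle\sv\rangle_{0,1}$, and (iii) the identity~\eqref{eqn:Wtv-u} $W(\mathbf t(\mathbf p),\sv_j)=2\pi\mathbf i\,\widehat{u_j(z)}$ assemble consistently; the proof of Proposition~\ref{prop:D-Z-NP} gives the template, but the multi-component generalization still needs to be carried out and matched against the multi-component KP Hirota equations.
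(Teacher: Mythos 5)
Your reduction is exactly the paper's own logic: the statement is a conjecture that the authors derive by combining the TR--Geo dictionary (Theorem~\ref{thm:D-Z} and Proposition~\ref{prop:D-Z-NP}) with the $m$-KP integrability conjecture for topological recursion (Conjecture~\ref{conj:EO-m-KP}), noting that the prefactor $e^{\frac{1}{\hbar}J_{-}({\bf t(p)})+\frac{1}{\hbar}\<\sv\>_{0,1}\cdot\nabla_w}$ is linear in the KP times plus a ${\bf p}$-independent $w$-shift and hence tau-preserving. You also correctly locate the genuinely open content (Conjecture~\ref{conj:EO-m-KP} for $m\geq 2$), which is precisely why the paper only proves the one-boundary case as Theorem~\ref{Geo-KP}.
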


\section{Polynomial-reduced KP integrability for topological recursion}
\label{sec:TRKPproof}
In this section we prove Theorem~\ref{KP-TR} by using the Hirota quadratic equation (HQE) formulation of the polynomial-reduced KP hierarchy.

\begin{definition}\label{def:lambda-reduction-KP}
A tau-function $Z({\bf p})$ of the KP hierarchy is called a tau-function of the {\em polynomial-reduced} KP hierarchy
if there is a polynomial $x_{\lambda}\in\mathbb C[\lambda]$ such that
\beq\label{eqn:red-Z-x}
\widehat{x_\lambda^k}(Z({\bf p}))=0,\qquad k\geq 1,
\eeq
where the quantization is defined by $\widehat{\lambda^n}:=n\pd_{p_n}$, $n\geq 0$.

\end{definition}
We note that when $x(z)=c\cdot\lambda^{r}$ is a monomial for some non-zero constant $c$ and integer $r\geq 2$, the corresponding polynomial-reduced KP hierarchy is exactly the $r$KdV hierarchy, also called the Gelfand--Dickey hierarchy or $W_{r}$ hierarchy.

\subsection{HQE formulation of KP hierarchy and its polynomial reduction}

We begin with the HQE formulation of the KP hierarchy. We refer the reader to~\cite{BBT03,Kac90,MJD00} for more details.
A function $Z({\bf p})$ of infinitely many variables $p_1,p_2,\cdots$ is a tau-function of the KP hierarchy (with KP times $\{\frac{p_k}{k}\}_{k=1,2,\cdots}$) if and only if it satisfies the following HQE \cite{Kac90, MJD00}:
\beq\label{eqn:HQE-KP}
\mathop{\Res}_{\lambda=\infty} \, (\Gamma^{+}Z)({\bf p})\, (\Gamma^{-}Z)({\bf p'}) \, d\lambda=0,
\eeq
where $\Gamma^{\pm}$ are the {\it vertex operators} defined by
$$
\Gamma^{\pm}=\exp\bigg(\pm\sum_{k\geq 1}\frac{\lambda^k}{k}p_{k}\bigg) \exp\bigg(\mp\sum_{k\geq 1}\frac{1}{\lambda^{k}}\pd_{p_{k}}\bigg).
$$
The equation is interpreted in the following way.
We take the change of variables $p_k=q_k+q'_k$ and $p'_k=q_k-q'_k$, then we have
$\pd_{q_k}=\pd_{p_k}+\pd_{p'_k}$ and $\pd_{q'_k}=\pd_{p_k}-\pd_{p'_k}$.
This leads to the following transformed HQE:
\beq\label{eqn:HQE-des2}
\mathop{\Res}_{\lambda=\infty}
e^{2\sum_{k\geq 1}\frac{\lambda^k}{k} q'_k}
e^{-\sum_{k\geq 1}\frac{1}{\lambda^{k}}\pd_{q'_k}}
Z({\bf q+q'}) Z({\bf q-q'})\, d\lambda=0.
\eeq
The infinite sequence of differential equations of the KP hierarchy are then decoded from this HQE by expanding in variable ${\bf q'}$. As explained by Givental in \cite{Giv03}, the coefficient at each monomial ${\bf q'}^{\bf m}$ is a Laurent series in $\lambda^{-1}$, meaning the $\lambda$ degree is bounded above. Therefore, the expressions in HQEs \eqref{eqn:HQE-KP} and \eqref{eqn:HQE-des2} should be considered as expansions near $\lambda=\infty$.

Now we consider the HQE formulation for the polynomial-reduced KP hierarchy.
Notice that $[\widehat{\lambda^k},\Gamma^{\pm}]=\pm \lambda^k\cdot\Gamma^{\pm}$.
By considering the quantization $\widehat{x_\lambda^n}$ in terms of variable ${\bf p}$ and by taking the action of $\widehat{x_\lambda^n}$ on the HQE~\eqref{eqn:HQE-KP},
we see the HQE for polynomial-reduced KP hierarchy is equivalent to the following HQEs:
\beq\label{eqn:HQE-xKP}
\mathop{\Res}_{\lambda=\infty} x_\lambda^{n}\cdot (\Gamma^{+}Z)({\bf p})\, (\Gamma^{-}Z)({\bf p'}) \, d\lambda=0,\qquad n\geq 0.
\eeq
As in the HQE for KP hierarchy, the expression in~\eqref{eqn:HQE-xKP} should be understood as the Laurent series in $\lambda^{-1}$.

We view $x_{\lambda}$ as a local degree $r$ covering map from $\mathcal U\subset\Sigma$ to $\mathcal U'\subset\mathbb P^1$, for a point $u\in\mathcal U'$, we denote by $X_{u}=x_{\lambda}^{-1}(u)\subset \mathcal U$ the set of preimages of $u$.
When $u$ is taken to be an arbitrary point instead of a particular one, we omit the subscript $u$ in the notation $X_u$.
For $\sigma\in X$ we denote by $\lambda_{\sigma}$ the (local) coordinate of $\sigma$.
Since the summation $\sum_{\sigma\in X}\lambda_{\sigma}^{k}d\lambda_{\sigma}$ is invariant under the action of symmetric group $S_{r}:X\to X$,  it must be in the form of $f(x(z))dx(z)$ for some polynomial $f(x)$ when $k\geq 0$ or Taylor series $f(x)$ in variable $x^{-1}$ when $k<0$.
We conclude that a function $Z({\bf p})$ which satisfies condition~\eqref{eqn:red-Z-x}
is a tau-function of  polynomial-reduced KP hierarchy if and only if the following condition holds:
\beq\label{eqn:HQE-KP-x}
\sum_{\sigma\in X}(\Gamma^{+\sigma}Z)({\bf p})\, (\Gamma^{-\sigma}Z)({\bf p'}) \, d\lambda_{\sigma} \, {\text{ is regular in }} x,
\eeq
where
$$
\Gamma^{\pm\sigma}=\exp\bigg(\pm\sum_{k\geq 1}\frac{\lambda_{\sigma}^k}{k}p_{k}\bigg)
\exp\bigg(\mp\sum_{k\geq 1}\frac{1}{\lambda_{\sigma}^{k}}\pd_{p_{k}}\bigg).
$$
We note that the expression in~\eqref{eqn:HQE-KP-x} should be understood as the Laurent series in $x^{-1}$,
and we still call~\eqref{eqn:HQE-KP-x} the HQE.
Our goal is to prove the generating series $Z({\bf p};\hbar)$ for genus $0$ curve or its non-perturbative modification $Z^{\rm NP}_{\mu,\nu}({\bf p};\hbar;w)$ for higher genus curve
satisfies the condition~\eqref{eqn:HQE-KP-x}.
\begin{convention}
	To describe the results in a unified way, we use the notation $Z({\bf p};\hbar;w)$ (resp. $\cA({\bf s};\hbar;w)$) to denote $Z({\bf p};\hbar)$ (resp. $\cA({\bf s};\hbar)$) for genus 0 curve and to denote $Z^{\rm NP}_{\mu,\nu}({\bf p};\hbar;w)$  (resp. $\cA^{\rm NP}_{\mu,\nu,\tau}({\bf s};\hbar;w)$)  for higher genus curve.
	For genus 0 curve, we view $w$ and all of its shifts appear in the expression as $0$.
\end{convention}

\subsection{From TR descendent  to CohFT ancestor}
Now we return to the TR side.
By the relation of generating series $Z({\bf p};\hbar)$ with total ancestor potential $\cA({\bf s};\hbar)$ (equation \eqref{eqn:DKP-ACohFT}) or its non-perturbative analogue (equation \eqref{eqn:DKP-ACohFT-NP}),
we transform the HQE~\eqref{eqn:HQE-KP-x} for potential $Z$ into the quadratic equation (which we still call HQE) for potential $\cA$.

We introduce the inverse action of operator $D=-\frac{d}{dx(z)}$ on a function $f(z)$ of the Riemann surface $\Sigma$:
$$\textstyle
D^{-1}f(z)=-\int f(z)dx(z).
$$
We note here that $f(z)$ can be globally multi-valued on $\Sigma$, and we assume it is locally meromorphic, it is easy to see the same property holds for  $D^{-1}f(z)$.

In this section, we prove the following Proposition:
\begin{proposition}\label{prop:HQE-Z-to-A}
The HQE~\eqref{eqn:HQE-KP-x} for $Z({\bf p};\hbar;w)$ holds if the following HQE for $\cA(\hbar\cdot{\bf s};\hbar;w)$ holds:
\begin{align}
\sum_{\sigma\in X}
&\,  (\Gamma_{\mathcal A}^{+\sigma}\mathcal A)(\hbar\cdot{\bf s};\hbar;w-u(z_{\sigma})+\widehat{u(z)})
  \cdot (\Gamma_{\mathcal A}^{-\sigma} \mathcal A)(\hbar\cdot{\bf s'};\hbar;w+u(z_{\sigma})+\widehat{u(z')}) \nonumber\\
&\,\cdot  \tfrac{dh_{\bf c}(z_{\sigma})}{\theta[{\bf c}](u(z_{\sigma})|\tau)^2} \,
{\text{ is regular in }} x, \label{eqn:HQE-ancestor}
\end{align}
(recall $\tfrac{dh_{\bf c}(z)}{\theta[{\bf c}](u(z)|\tau)^2}=\frac{dz}{(z-b)^2}$ for genus-0 curve), where $u(z)=\int_{z'=b}^{z}du(z')$ and
$$
\Gamma^{\pm\sigma}_{\mathcal A}=\exp\Big(\pm\sum_{k,\beta}(-D)^{-k-1}\zeta^{\bar\beta}(z_{\sigma})s^{\bar\beta}_{k}\Big)
\exp\Big(\mp\sum_{k,\beta} D^k\zeta^{\bar\beta}(z_{\sigma})\pd_{s^{\bar\beta}_{k}}\Big).
$$
\end{proposition}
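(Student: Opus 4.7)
The plan is to substitute the explicit relation~\eqref{eqn:DKP-ACohFT-NP},
$$Z({\bf p};\hbar;w) = e^{\frac{1}{2}Q({\bf p,p})}\cdot e^{2\pi{\bf i}\,\widehat{u(z)}\cdot\nabla_w}\bigl(\cA(\hbar\cdot{\bf s(p)};\hbar;w)\bigr),$$
into the HQE~\eqref{eqn:HQE-KP-x} and analyze how each vertex operator $\Gamma^{\pm\sigma}$ interacts with the three factors on the right. This naturally splits into three tasks: (a) commuting $\Gamma^{\pm\sigma}$ past the theta-shift operator $e^{2\pi{\bf i}\,\widehat{u(z)}\cdot\nabla_w}$; (b) translating the $p_k$-derivative part of $\Gamma^{\pm\sigma}$ via the chain rule associated to $s^{\bar\beta}_n(p)=\sum_k c^{n,\bar\beta}_k p_k$ from~\eqref{eqn:s-p}; and (c) combining the multiplicative part $\exp(\pm\sum\lambda_\sigma^k p_k/k)$ with the quadratic factor $e^{\frac{1}{2}Q({\bf p,p})}$.

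For (a), since $\widehat{u(z)}$ is linear in the $p_k$ with coefficients encoding the boundary expansion of $2\pi{\bf i}\,du$, a direct commutator calculation shows that conjugation of $e^{2\pi{\bf i}\,\widehat{u(z)}\cdot\nabla_w}$ by $\Gamma^{\pm\sigma}$ produces an extra factor $e^{\pm 2\pi{\bf i}\,u(z_\sigma)\cdot\nabla_w}$, i.e.\ exactly the $w\mp u(z_\sigma)$ shifts of the theta-function argument in~\eqref{eqn:HQE-ancestor}. For (b), using the identities $\zeta^{\bar\beta}_n(z)=D^n\zeta^{\bar\beta}(z)$ and the boundary expansion $d\zeta^{\bar\beta}_n|_{b}=\sum_k c^{n,\bar\beta}_k d\lambda^{-k}$, the chain rule converts $\exp(\mp\sum_k\lambda_\sigma^{-k}\partial_{p_k}/k)$ into the $s$-derivative factor $\exp(\mp\sum_{k,\beta}D^k\zeta^{\bar\beta}(z_\sigma)\partial_{s^{\bar\beta}_k})$ of $\Gamma^{\pm\sigma}_\cA$.

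For (c), the delicate step, recall from~\eqref{eqn:def-Q(p,p)} that $Q({\bf p,p})$ is the bilinear form encoding $\omega_{0,2}(z_1,z_2)-d\lambda_1d\lambda_2/(\lambda_1-\lambda_2)^2$ near the boundary. When $\Gamma^{+\sigma_1}\otimes\Gamma^{-\sigma_2}$ acts on $e^{\frac{1}{2}Q({\bf p,p})}\otimes e^{\frac{1}{2}Q({\bf p',p'})}$, the diagonal pieces reorganize the multiplicative part $\exp(\pm\sum\lambda_\sigma^k p_k/k)$ into $\exp(\pm\sum(-D)^{-k-1}\zeta^{\bar\beta}(z_\sigma)s^{\bar\beta}_k)$ (the multiplicative part of $\Gamma^{\pm\sigma}_\cA$), while the cross terms $\Gamma^{+\sigma_1}\leftrightarrow\Gamma^{-\sigma_2}$ generate the double integral of $\omega_{0,2}$ between $z_{\sigma_1}$ and $z_{\sigma_2}$ together with a regularizing $-\log(\lambda_{\sigma_1}-\lambda_{\sigma_2})$. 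Invoking~\eqref{eqn:int-bergman} and the definition~\eqref{def:primeform} $E(z_1,z_2)=\theta[{\bf c}](u(z_1)-u(z_2)|\tau)/\sqrt{dh_{\bf c}(z_1)dh_{\bf c}(z_2)}$, the residual scalar at the diagonal $\sigma_1=\sigma_2=\sigma$ is exactly $\frac{dh_{\bf c}(z_\sigma)}{\theta[{\bf c}](u(z_\sigma)|\tau)^2}\cdot d\lambda_\sigma^{-1}$, which combines with the $d\lambda_\sigma$ from~\eqref{eqn:HQE-KP-x} to produce the measure in~\eqref{eqn:HQE-ancestor}.

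The main obstacle will be step (c): tracking all of the exponential factors generated by commuting the vertex operators through $e^{\frac{1}{2}Q({\bf p,p})}$, and correctly identifying the leftover scalar as the prime-form/theta-function combination via~\eqref{eqn:int-bergman}. Corollary~\ref{cor:red-Z-x} guarantees that the polynomial-in-$x$ shifts produced by the multiplicative part of $\Gamma^{\pm\sigma}$ remain within the regularity class demanded by~\eqref{eqn:HQE-KP-x}, so once the three translations (a)--(c) are carried out, regularity of the ancestor-side sum~\eqref{eqn:HQE-ancestor} in $x$ forces regularity of the descendent-side sum~\eqref{eqn:HQE-KP-x}, proving the proposition.
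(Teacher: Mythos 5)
Your plan follows the same route as the paper: substitute \eqref{eqn:DKP-ACohFT-NP} into \eqref{eqn:HQE-KP-x}, conjugate the vertex operators through $e^{\frac12 Q({\bf p},{\bf p})}$ and $e^{2\pi {\bf i}\widehat{u(z)}\cdot\nabla_w}$ via the BCH identity \eqref{eqn:BCH}, convert the derivative part by the chain rule, and identify the leftover scalar with $dh_{\bf c}(z_\sigma)/\theta[{\bf c}](u(z_\sigma)|\tau)^2$ through \eqref{eqn:int-bergman} and the prime form. Steps (a) and (b) are exactly the paper's computation. Two points in (c) need correction or completion, though. First, the mechanism you describe is off: in \eqref{eqn:HQE-KP-x} the operators $\Gamma^{+\sigma}$ and $\Gamma^{-\sigma}$ sit at the \emph{same} point $z_\sigma$ and act on \emph{independent} variable sets ${\bf p}$ and ${\bf p}'$, so there are no cross terms between a $\Gamma^{+\sigma_1}$ and a $\Gamma^{-\sigma_2}$ and no diagonal limit to take. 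The scalar $e^{Q(\lambda_\sigma,\lambda_\sigma)}$ arises instead from each operator's \emph{self}-conjugation through its own quadratic exponential, namely the double-commutator term $\frac14[A^{\lambda},[A^{\lambda},Q({\bf p},{\bf p})]]=\frac12 Q(\lambda,\lambda)$, the two halves adding up; the regularization is already built into the definition \eqref{eqn:def-Q(p,p)} of $Q$ as $\omega_{0,2}$ minus its double pole, which is what makes the coincident-point evaluation finite and equal to the prime-form ratio. Second, the reorganization of the multiplicative part into $\exp(\pm\sum_{k,\beta}(-D)^{-k-1}\zeta^{\bar\beta}(z_\sigma)s_k^{\bar\beta})$ is the actual crux: it requires showing that $\sum_k\frac{\lambda^k}{k}p_k-Q(\lambda,{\bf p})$ equals $\sum_{k,\beta}(-D)^{-k-1}\zeta^{\bar\beta}(z)s_k^{\bar\beta}({\bf p})$ up to terms $\sum_k f_k(x(z))p_k$ with $f_k$ polynomial (equation \eqref{eqn:B-s-p}), which rests on the boundary expansion of the Bergman kernel proved in Lemma~\ref{lem:B-x}; you correctly flag this as the main obstacle but leave it unestablished. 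Finally, the residual polynomial-in-$x$ factors are harmless not because of Corollary~\ref{cor:red-Z-x}, but because they depend only on $x$, hence are common to all $\sigma\in X$, factor out of the sum, and cannot create poles at finite $x$.
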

\begin{remark}
As was explained before, the expression in~\eqref{eqn:HQE-ancestor} should be understood as Laurent series in $x^{-1}$ near $x=\infty$.
There is an ambiguity of the term $(-D)^{-k-1}\zeta^{\bar\beta}(z_{\sigma})$ by choosing a at most degree $k$ polynomial of $x(z)$, it is easy to see this ambiguity does not affect the validity of condition~\eqref{eqn:HQE-ancestor}.
\end{remark}
\begin{proof}[Proof of Proposition~\ref{prop:HQE-Z-to-A}]
To write down the HQE for $\mathcal A(\hbar\cdot{\bf s};\hbar;w)$,
we just need to consider how the vertex operators $\Gamma^{\pm}$ go across $e^{\frac{1}{2}Q({\bf p},{\bf p})}$
and $e^{2\pi {\bf i} \widehat{u(z)}\cdot \nabla_w}$.
By using the following well-known identity:
\beq\label{eqn:BCH}
e^{A_1}e^{A_2}=e^{\sum_{n\geq0}\frac{1}{n!}\ad_{A_1}^{n}(A_2)}e^{A_1},
\eeq
we have
$$
\Gamma^{\pm}e^{\frac{1}{2}Q({\bf p},{\bf p})}
=e^{\frac{1}{2}Q({\bf p},{\bf p})\pm \frac{1}{2}[A^{\lambda},Q({\bf p},{\bf p})] +\frac{1}{4}[A^{\lambda},[A^{\lambda},Q({\bf p},{\bf p})]]}\Gamma^{\pm},
$$
where $A^{\lambda}=\sum_{k\geq 1}\frac{1}{\lambda^k}\pd_{p_k}$.
It is not hard to see
$$\textstyle
\frac{1}{2}[A^{\lambda},Q({\bf p},{\bf p})]=Q(\lambda,{\bf p}),\qquad
\frac{1}{4}[A^{\lambda},[A^{\lambda},Q({\bf p},{\bf p})]]=\frac{1}{2}Q(\lambda,\lambda),
$$
where the notations $Q(\lambda,{\bf p})$ and $Q(\lambda,\lambda)$ mean that we substitute the corresponding $p_k$ into $\lambda^{-k}$ in the expression.
Similarly and more easier, we have
$$
\Gamma^{\pm}e^{2\pi {\bf i} \widehat{u(z)}\cdot \nabla_w}
=e^{2\pi {\bf i} (\widehat{u(z)}\mp u(z))\cdot\nabla_w}\Gamma^{\pm}.
$$
We can simply change the variables of derivative part $\partial_{\bf p}$ into $\partial_{\bf s}$ in the vertex operator when acting on a function $\mathcal G({\bf s}({\bf p}))$:
$$
\exp\Big(\mp\sum_{k} \lambda^{-k}\pd_{p_{k}}\Big)
=\exp\Big(\mp\sum_{k,\beta} D^k\zeta^{\bar\beta}(z)\pd_{s^{\bar\beta}_{k}}\Big).
$$
We get
$$
\Gamma^{\pm}e^{\frac{1}{2}Q({\bf p},{\bf p})}e^{2\pi {\bf i} \widehat{u(z)}\cdot \nabla_w}
=e^{\frac{1}{2}Q({\bf p},{\bf p})+\frac{1}{2}Q(\lambda,\lambda)}
e^{2\pi {\bf i} (\widehat{u(z)}\mp u(z))\cdot \nabla_w}\widetilde\Gamma_{\cA}^{\pm},
$$
where
$$
\widetilde\Gamma_{\cA}^{\pm}=\exp\Big(\pm\sum_{k}\frac{\lambda^k}{k}p_k\mp Q(\lambda,{\bf p})\Big)
\exp\Big(\mp\sum_{k,\beta} D^k\zeta^{\bar\beta}(z)\pd_{s^{\bar\beta}_{k}}\Big).
$$

Then we translate the multiplication part of the vertex operator $\widetilde\Gamma_{\cA}^{\pm}$ in terms of the ancestor variables $s^{\bar\alpha}_k$. We introduce the quantization $\widehat{\zeta^{\bar\alpha}_k(z)}:=s^{\bar\alpha}_k$ and $\widehat{\lambda^{-k}}:=p_k$.
Consider the expansion of $B(z,z_2)$ at boundary point $z=b$, $z_2=b$
with the local coordinate $\lambda<\lambda_2$, we see the expression
$-\sum_{k}\frac{\lambda^k}{k}p_k+Q(\lambda,{\bf p})$
can be viewed as the quantization of the function $\int_{z} \int_{z_2} B(z,z_2)$ on the second variable.
We claim that at $z_2=b$, for a fixed local coordinate $\lambda_2$, one has
$$
B(z,z_2)+\sum_{k\geq 0}\sum_{\beta=1,\cdots,N}d(-D)^{-k-1}\zeta^{\bar\beta}(z) d D^k\zeta^{\bar\beta}(z_2)\in d_{z}d_{\lambda_2}\mathbb C[x(z)][[\lambda_{2}^{-1}]].
$$
The proof is given in Appendix (see Lemma~\ref{lem:B-x}). Hence,
\beq\label{eqn:B-s-p}
\sum_{k}\frac{\lambda^k}{k}p_k-Q(\lambda,{\bf p}) =\sum_{k,\beta}(-D)^{-k-1}\zeta^{\bar\beta}(z)s^{\bar\beta}_k({\bf p}) +\sum_{k}f_k(x(z))p_k,
\eeq
for some polynomials $f_k(x)$.
We obtain that the HQE \eqref{eqn:HQE-KP-x} for $Z$ is equivalent the following HQE for $\cA$:
\begin{align*}
	\sum_{\sigma\in X}\, e^{Q(\lambda_{\sigma},\lambda_{\sigma})}\,
	&\, \cdot (\Gamma_{\mathcal A}^{+\sigma}\mathcal A)(\hbar\cdot{\bf s};\hbar;w-u(z_{\sigma})+\widehat{u(z)})\, \nonumber\\
	&\,  \cdot (\Gamma_{\mathcal A}^{-\sigma} \mathcal A)(\hbar\cdot{\bf s'};\hbar;w+u(z_{\sigma})+\widehat{u(z')}) \cdot  d\lambda_{\sigma} \, {\text{ is regular in }} x,
\end{align*}
where
$$\textstyle
Q(\lambda,\lambda)=\int_{\lambda_1=\infty}^{\lambda}\int_{\lambda_2=\infty}^{\lambda}\Big(B(z_1,z_2)-\frac{d\lambda_1d\lambda_2}{(\lambda_1-\lambda_2)^2}\Big).
$$
By equations~\eqref{eqn:int-bergman} and \eqref{def:primeform}, we have
$$\textstyle
e^{Q(\lambda,\lambda)}d\lambda
=-\frac{d h_{\bf c}}{d\lambda^{-1}}(b)\cdot \frac{dh_{\bf c}(z)}{\theta[{\bf c}](u(z)|\tau)^2}.
$$
Note that $\frac{dh_{\bf c}}{d\lambda^{-1}}(b)$ is just a constant, the Proposition is proved.
\end{proof}
Let $\lambda'$ be another local coordinate near the boundary point $b$ such that $x(z)\in \mathbb C[\lambda']$,
we denote by $Z^{\lambda'}({\bf p'};\hbar;w)$ and $Z^{\lambda}({\bf p}, \hbar;w)$ the TR descendent generating series defined by $\lambda'$ and $\lambda$, respectively. We have the following corollary of Proposition~\ref{prop:HQE-Z-to-A}.
\begin{corollary}\label{cor:KP-coordchange}
The generating series $Z^{\lambda}({\bf p}, \hbar;w)$ is a tau-function of $x_{\lambda}$ polynomial-reduced KP hierarchy if and only if $Z^{\lambda'}({\bf p'};\hbar;w)$ is a tau-function of $x_{\lambda'}$ polynomial-reduced KP hierarchy.
In particular, if $x(z)=c\cdot (\lambda')^{r}$ is a monomial of $\lambda'$, the polynomial-reduced KP integrability of $Z^{\lambda}({\bf p}, \hbar;w)$ implies that $Z^{\lambda'}({\bf p'};\hbar;w)$ is a tau-function of $r$KdV hierarchy.
\end{corollary}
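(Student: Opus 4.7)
The strategy is to apply Proposition~\ref{prop:HQE-Z-to-A} in both directions to reduce both sides of the ``if and only if'' to the same ancestor-level HQE, which is intrinsic to the spectral curve and does not see the choice of local coordinate at $b$.

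First, I would observe that Proposition~\ref{prop:HQE-Z-to-A} is in fact a \emph{biconditional}, not merely an implication: the Baker--Campbell--Hausdorff conjugation used in its proof and the change of variables from ${\bf p}$ to ${\bf s}$ (coming from the boundary expansion of $\zeta^{\bar\beta}(z)$) are both invertible. Hence the HQE~\eqref{eqn:HQE-KP-x} for $Z({\bf p};\hbar;w)$ is equivalent to the HQE~\eqref{eqn:HQE-ancestor} for $\cA(\hbar\cdot{\bf s};\hbar;w)$. This upgrade of Proposition~\ref{prop:HQE-Z-to-A} to an ``if and only if'' requires just rereading the proof, using that $e^{\frac{1}{2}Q({\bf p,p})}$ and $e^{2\pi{\bf i}\widehat{u(z)}\cdot\nabla_w}$ are invertible multipliers and that the relation $s^{\bar\beta}_n({\bf p})$ in~\eqref{eqn:s-p} is triangular with nonzero diagonal.

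Next, I would highlight that the ancestor HQE~\eqref{eqn:HQE-ancestor} depends only on intrinsic data of the spectral curve $\cC=(\Sigma,x,y)$: the fiber $X_u = x^{-1}(u)$, the meromorphic differentials $d\zeta^{\bar\beta}$, the prime-form factor $dh_{\bf c}(z_\sigma)/\theta[{\bf c}](u(z_\sigma)|\tau)^2$, and the operator $\Gamma^{\pm\sigma}_{\cA}$, which is indexed by a point $z_\sigma\in\Sigma$ and contains no reference to any chart at $b$. The regularity condition ``regular in $x$'' is likewise coordinate-free. Applying the biconditional version of Proposition~\ref{prop:HQE-Z-to-A} once for $\lambda$ and once for $\lambda'$ therefore shows that the HQE~\eqref{eqn:HQE-KP-x} for $Z^{\lambda}$ (with polynomial $x_\lambda$) and the HQE~\eqref{eqn:HQE-KP-x} for $Z^{\lambda'}$ (with polynomial $x_{\lambda'}$) both translate into the \emph{same} intrinsic HQE on the ancestor side. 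The shift operators $\widehat{u(z)}$, $\widehat{u(z')}$ that enter the arguments of $\cA$ are re-expressions of the same spectral-curve datum in the two coordinate systems, and this relabeling is absorbed by the bidirectional nature of Proposition~\ref{prop:HQE-Z-to-A}.

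Chaining the two equivalences yields the first assertion of the corollary. For the second, when $x(z)=c\cdot(\lambda')^{r}$, the polynomial $x_{\lambda'}$ is the monomial $c\lambda'^{r}$, and the author has already noted after Definition~\ref{def:lambda-reduction-KP} that the polynomial-reduced KP hierarchy associated with a monomial of degree $r$ is exactly the $r$KdV (Gelfand--Dickey) hierarchy; hence polynomial-reduced KP integrability of $Z^{\lambda}$ forces $Z^{\lambda'}$ to be a tau-function of $r$KdV. The main obstacle is the careful verification that Proposition~\ref{prop:HQE-Z-to-A} is genuinely bidirectional and, more subtly, that the ancestor HQE really carries no hidden coordinate dependence through the shifts $\widehat{u(z)}$; once this is granted, the corollary is immediate.
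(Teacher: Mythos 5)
Your proposal is correct and follows essentially the same route as the paper: the paper's own proof of Corollary~\ref{cor:KP-coordchange} is precisely that Proposition~\ref{prop:HQE-Z-to-A} (whose proof proceeds by a chain of equivalences, so it is indeed a biconditional as you argue) reduces both $Z^{\lambda}$ and $Z^{\lambda'}$ to the same coordinate-independent ancestor HQE~\eqref{eqn:HQE-ancestor}. Your additional care in checking the invertibility of the conjugations and the intrinsic nature of the shifts $\widehat{u(z)}$ is a sound elaboration of what the paper leaves implicit.
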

\begin{proof}
This following immediately from Proposition~\ref{prop:HQE-Z-to-A} by noticing that equation~\eqref{eqn:HQE-ancestor} does not depend on the choice of the local coordinate $\lambda$.
\end{proof}
\begin{remark} Corollary~\ref{cor:KP-coordchange} generalizes Kazarian's result~\cite[Theorem 2.5]{Kaz09} into the polynomial-reduced KP case from the point of view of topological recursion, from which the changing of quadratic part $Q^{\lambda}({\bf p},{\bf p})-Q^{\lambda'}({\bf p'},{\bf p'})$ arise naturally.
Here ${\bf p}$ and ${\bf p'}$ are related by the coordinate transformation between $\lambda$ and $\lambda'$ via the quantization $p_k=\widehat{\lambda^{-k}}$, $p_k=\widehat{\lambda'^{-k}}$.
\end{remark}
\subsection{From boundary point to branch point}
Now we explain that the expression in~\eqref{eqn:HQE-ancestor} can be understood not only as Laurent series in $x^{-1}$, but also as an analytic function in $x$, thanks to the {\it tameness}~\cite{Giv03} property of the total ancestor potential  $\cA$.

\begin{definition}We call a generating series $\cG({\bf s};\hbar)=e^{\sum_{g\geq 0}\hbar^{2g-2}\bar\cF_g({\bf s})}$ is {\it tame} if
$$
\frac{\pd^n \bar\cF_g({\bf s})}{\pd s_{k_1}^{i_1}\cdots \pd s_{k_n}^{i_n}}\bigg|_{{\bf s=0}}=0 \quad
{\text{ whenever }} \quad k_1+k_2+\cdots+k_n>3g-3+n.
$$
\end{definition}
In particular, $\bar\cF_g({\bf s})$ is a formal series $\sum_{m,n\in (\mathbb Z_{+})^{ N}} \bar\cF^{(g)}_{\vec{m},{\vec n}}(s_2,\cdots,s_{3g-2+{|{m}|}})\cdot (s_0)^{{m}}\cdot (s_1)^{{n}}$  where the coefficients $\bar\cF^{(g)}_{\vec{m},{\vec n}}$ are polynomials on $s_2,\cdots,s_{3g-2+|\vec{m}|}$.
Givental~\cite[Proposition 5]{Giv03} proved that any potential $\cA({\bf s};\hbar)$, which is reconstructed from $\widehat{R}$ (it is trivial to generalize this to allow a shift on the coordinates $s_{k\geq 2}$), is tame.
For the total ancestor potential of a CohFT, the tameness follows immediately from the definition since $\dim \Mbar_{g,n}=3g-3+n$.

\begin{lemma}\label{lem:HQE-globality}
The expression in the HQE \eqref{eqn:HQE-ancestor} can be extended to be a meromorphic form on $\mathbb P^1$.
\end{lemma}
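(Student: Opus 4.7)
The plan is to fix the order of the formal expansion in the variables $s,s',\hbar,w$ and prove that the corresponding coefficient, which is a priori defined only as a Laurent series in $x^{-1}$ near $u=\infty$, actually equals the expansion of a globally meromorphic form on $\mathbb P^1$ (the target of $x_\lambda$). First I would invoke the tameness of $\cA$ (Givental's~\cite[Proposition~5]{Giv03}): the $n$-th derivative of $\bar\cF_g$ with respect to $s_{k_1}^{\bar\beta_1},\ldots,s_{k_n}^{\bar\beta_n}$ at ${\bf s}=0$ vanishes whenever $\sum k_i>3g-3+n$, so at each fixed order in $\hbar$ and in the other formal variables only finitely many $s$-variables and only bounded derivatives of $\cA$ contribute. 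After applying the exponentials $\Gamma_{\cA}^{\pm\sigma}$ to $\cA$ and extracting the coefficient of any fixed monomial, the result is a finite polynomial in the quantities $\zeta^{\bar\beta}_k(z_\sigma)$, the antiderivatives $(-D)^{-k-1}\zeta^{\bar\beta}(z_\sigma)$, and derivatives of $\theta[{\bf c}]$ at arguments shifted by $\pm u(z_\sigma)$, multiplied by the factor $dh_{\bf c}(z_\sigma)/\theta[{\bf c}](u(z_\sigma)|\tau)^2$.

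Next I would analyze this finite expression as a function of $z_\sigma\in\Sigma$. The building blocks $d\zeta^{\bar\beta}_k$, $dh_{\bf c}$, and the theta-function data are meromorphic or holomorphic on $\Sigma$ (with controlled monodromy on the universal cover, inherited from the quasi-periodicity~\eqref{eqn:quasiperiod-theta}). The antiderivatives $(-D)^{-k-1}\zeta^{\bar\beta}$ are determined only up to a polynomial in $x$; by the remark following Proposition~\ref{prop:HQE-Z-to-A} this ambiguity does not affect the validity of the target statement. Summing over $\sigma\in X_u$ then yields a Galois-invariant expression along the local covering $x_\lambda$, which descends to a well-defined meromorphic form on an open subset of $\mathbb P^1$, namely away from the critical values of $x_\lambda$ and from $u=\infty$.

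It remains to extend meromorphically across the critical values $u=x^\beta$ and across $u=\infty$. At $u=\infty$ the conclusion is automatic, since by construction every finite-order coefficient is a Laurent series with finite principal part in $x$ there. At a critical value $u=x^\beta$ the fiber points $z$ and $\bar z$ collide and individual summands may be singular; however, the local pole structure~\eqref{eqn:localxi} of $d\xi^{\bar\beta}_k$, transferred to $d\zeta^{\bar\beta}_k$ via the triangular relation~\eqref{eqn:zeta-xi}, has precisely the involution symmetry that underlies the topological recursion kernel, and combined with the finite polynomial dependence of $\cA$ on the $s$-variables at a given order, this forces the symmetrized $\sigma$-sum to be at worst meromorphic at $u=x^\beta$.

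I expect the branch-point analysis at $u=x^\beta$ to be the main technical obstacle: one must verify, uniformly in the order of the formal expansion and in the $w$-degree, that the symmetrization in $\sigma$ produces no logarithmic or essential contributions at the colliding pair $\{z,\bar z\}$. Morally this is built into topological recursion (the kernel $K_\beta$ is designed precisely so that antisymmetric-plus-symmetric combinations of differentials at $z^\beta$ yield meromorphic residues), but transferring this from $\omega_{g,n}$ itself to the doubled expression~\eqref{eqn:HQE-ancestor}, including the $w$-dependent theta-function factors whose arguments are shifted by $\pm u(z_\sigma)$, requires a careful accounting of how the involution acts simultaneously on $\cA$, on the theta-function, and on the antiderivatives $(-D)^{-k-1}\zeta^{\bar\beta}$.
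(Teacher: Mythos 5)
There is a genuine gap, and it sits exactly where the paper's proof does its real work. Your reduction via tameness to a finite polynomial in $\zeta^{\bar\beta}_k(z_\sigma)$, $(-D)^{-k-1}\zeta^{\bar\beta}(z_\sigma)$, theta-derivatives and the factor $dh_{\bf c}(z_\sigma)/\theta[{\bf c}](u(z_\sigma)|\tau)^2$ is fine, but your descent step --- ``summing over $\sigma\in X_u$ yields a Galois-invariant expression which descends to $\mathbb P^1$'' --- only handles the deck symmetry of the covering $x$. It does not address the fact that for $\frak g(\Sigma)\geq 1$ the ingredients are \emph{multivalued on $\Sigma$ itself}: the antiderivatives $(-D)^{-k-1}\zeta^{\bar\beta}(z_\sigma)$, the Abel map $u(z_\sigma)$, and the theta-factors (both in $\cA^{\rm NP}$ via \eqref{eqn:NPA-formula} and in $dh_{\bf c}/\theta[{\bf c}]^2$) all change under $z_\sigma\to z_\sigma+n\cdot\cyA+m\cdot\cyB$. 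If the full integrand were not single-valued on $\Sigma$, continuing the fiber-sum along different paths in $\mathbb P^1$ would give different answers and no global form would exist. Saying the monodromy is ``controlled'' is not enough; one must show the total monodromy factor is $1$. That cancellation is delicate: it uses the quasi-periodicity \eqref{eqn:quasiperiod-theta} of $\theta[{\bf c}]$ and of the characteristic theta in $\cA^{\rm NP}$, the shift of $\Gamma_{\cA}^{\pm\sigma}$ by $e^{\pm\sum_k\eta(m\cdot\varphi^{(-k-1)},s_k)}\cdot(\cdot)\cdot e^{\mp m\cdot\varphi}$ with $\varphi$ as in \eqref{def:varphi}, the opposite signs on the ${\bf s}$- and ${\bf s}'$-sides, and crucially the identity $2\pi i\,\widehat{u(z)}+\sum_k\eta(\varphi^{(-k-1)},s_k)=0$ obtained by taking the $\cyB$-period of the quantized Bergman-kernel relation \eqref{eqn:B-s-p}. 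None of this appears in your argument, so the single-valuedness on $\Sigma$ --- which is the entire content of the lemma --- is left unproved.

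Conversely, the part you flag as ``the main technical obstacle,'' meromorphy at the critical values $u=x^\beta$, is essentially automatic once single-valuedness is in hand: each fixed-order coefficient is a meromorphic function of the local Airy coordinate $\eta^\beta$, and summing over the two colliding sheets $\pm\eta^\beta$ (plus the remaining regular preimages) produces an even, hence meromorphic, function of $x-x^\beta$; no appeal to the recursion kernel is needed, and no logarithmic or essential singularities can arise because the theta-factors are holomorphic in $u(z_\sigma)$ there. Note also that this lemma only asserts meromorphy; the \emph{regularity} at $x=x^\beta$ is deliberately postponed to Proposition~\ref{prop:HQE-boundary-to-branch} and Proposition~\ref{prop:HQE-branch-to-KW}, where it is reduced to the KdV HQE for the Kontsevich--Witten tau-function, so attempting to establish it here is both unnecessary and, as you present it, unsubstantiated.
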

\begin{proof}
We note firstly via the ramification map $x:\Sigma\to \mathbb P^1$, each locally-defined form $f(z)$ on $\Sigma$ defines a locally-defined form $F(p):=\sum_{\sigma\in x^{-1}(p)}f(z_{\sigma})$ on $\mathbb P^1$, thus the expression in~\eqref{eqn:HQE-ancestor} can be viewed as a locally-defined form on $\mathbb P^1$ near $x=\infty$.
For an arbitrary point $p\in \mathbb P^1$, we pick a path $\gamma$ that connects $p$ and $\infty$ and extend a locally-defined form $F$ near $\infty$ to $p$ along this path: for $p'\in \gamma$, define $F(p'):=\sum_{\sigma\in x^{-1}(p')}f(z_{\sigma})$. $F$ can be globally-defined by this way if the corresponding form $f(z)$ on $\Sigma$ is globally defined.
Therefore, to prove the Lemma, we just need to prove the expression in~\eqref{eqn:HQE-ancestor} will not change when $z_{\sigma}$ changes along cycles.
This is trivial for the rational curve.
In what follows, we study how the expression in~\eqref{eqn:HQE-ancestor} changes when $z_{\sigma}$ changes along cycles: $ z_{\sigma}\to z_{\sigma}+n\cdot \cyA+m\cdot \cyB$ for ${\frak g}(\Sigma)>0$.

Firstly, by equation~\eqref{eqn:quasiperiod-theta}, we see the differential $\frac{dh_{\bf c}(z_{\sigma})}{\theta[{\bf c}](u(z_{\sigma})|\tau)^2}$ becomes
$$
e^{4\pi i m\cdot u(z_{\sigma})+2\pi i m^{t}\tau m}\cdot \tfrac{dh_{\bf c}(z_{\sigma})}{\theta[{\bf c}](u(z_{\sigma})|\tau)^2}.
$$
Secondly, the operator $\Gamma_{\cA}^{\pm \sigma}$ becomes
$$
e^{\pm \sum_k\eta(m\cdot \varphi^{(-k-1)},s_k)}\cdot \Gamma_{\cA}^{\pm \sigma}\cdot e^{\mp m\cdot \varphi}.
$$
where $\varphi=(\varphi_1,\cdots,\varphi_{\frak g})$ is given by $\varphi_j=\sum_a \varphi_j^a\pd_{s_0^a}$ (see equation~\eqref{def:varphi})
and $\varphi^{(-k-1)}=(\varphi^{(-k-1)}_1,\cdots,\varphi^{(-k-1)}_{\frak g})$ is given by
$\varphi^{(-k-1)}_j=\sum_{\beta}\bar e_\beta \oint_{\cyB_j}d(-D)^{-k-1}\zeta^{\bar\beta}(z)$.
Thirdly, by equation~\eqref{eqn:quasiperiod-theta} and equation~\eqref{eqn:NPA-formula}, the potential $\cA({\bf s};\hbar;w\mp u(z_{\sigma})+\widehat{u(z)})$ becomes
$$
e^{-\pi i m^{t}\tau m\pm 2 \pi i m(w\mp u(z_{\sigma})+\widehat{u(z)}+\nu)\pm m\cdot \varphi \mp2\pi i (n\mu-m\nu)}
\cdot\cA({\bf s};\hbar;w\mp u(z_{\sigma})+\widehat{u(z)}).
$$
Lastly, recall equation~\eqref{eqn:B-s-p}, whose left-hand side is understood as the quantization of $-\int_z\int_{z'} B(z,z')$ on the second variable. When taking action $\oint_{\cyB_j}d$ on this equation, we see
$$\textstyle
2\pi i \widehat{u(z)}+\sum_{k}\eta(\varphi^{(-k-1)},s_k)=0.
$$
Put all these terms together, we see when $z_{\sigma}$ changes along cycles: $ z_{\sigma}\to z_{\sigma}+n\cdot \cyA+m\cdot \cyB$, the expression in~\eqref{eqn:HQE-ancestor} changes by the factor
$$
e^{2\pi i m\cdot \widehat{u(z)}+ \sum_k\eta(m\cdot \varphi^{(-k-1)},s_k)}\cdot
e^{-2\pi i m\cdot \widehat{u(z')}- \sum_k\eta(m\cdot \varphi^{(-k-1)},s'_k)}=1.
$$
The proof is finished.
\end{proof}
\begin{proposition}\label{prop:HQE-boundary-to-branch}
The HQE \eqref{eqn:HQE-ancestor} holds for $\cA(\hbar\cdot {\bf s};\hbar;w)$ if the following HQEs hold for $\cA(\hbar\cdot{\bf s};\hbar)$: for each $\beta=1,\cdots,N$,
\beq\label{eqn:HQE-ancestor-beta}
\sum_{\sigma\in X}   (\Gamma_{\mathcal A}^{+\sigma}\mathcal A)(\hbar\cdot {\bf s};\hbar)\,
\cdot (\Gamma_{\mathcal A}^{-\sigma} \mathcal A)(\hbar\cdot {\bf s'};\hbar)\,
\cdot \frac{dh_{\bf c}(z_{\sigma})}{\theta[{\bf c}](u(z_{\sigma})|\tau)^2} \, {\text{ is regular at }} x=x^\beta.
\eeq
\end{proposition}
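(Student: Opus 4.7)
The plan is to reduce the global regularity claim~\eqref{eqn:HQE-ancestor} to the local regularity claims~\eqref{eqn:HQE-ancestor-beta}, one at each critical value $x^\beta$. By Lemma~\ref{lem:HQE-globality} the left-hand side of~\eqref{eqn:HQE-ancestor} extends to a meromorphic form on $\mathbb{P}^1_x$, so ``regular in $x$'' is equivalent to the absence of finite poles on $\mathbb{C}_x$. I first observe that at any $u\in\mathbb{C}_x$ which is not a critical value $x^\beta$, the preimages $X=x^{-1}(u)$ are smooth points of $\Sigma$, the functions $\zeta^{\bar\gamma}(z_\sigma)$ and $D^k\zeta^{\bar\gamma}(z_\sigma)$ appearing in the vertex operators $\Gamma^{\pm\sigma}_{\cA}$ are regular there (their poles only live at the critical points $z^\gamma$), and the kernel $dh_{\bf c}(z_\sigma)/\theta[{\bf c}](u(z_\sigma)|\tau)^2$ is regular on $\Sigma\setminus\{b\}$ by the double-zero structure of $dh_{\bf c}$ on the theta characteristic divisor, while $b$ sits over $x=\infty$. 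Hence the only possible finite poles occur at the critical values $x^\beta$.

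At a critical value $x^\beta$ exactly one pair of preimages $(\sigma_1,\sigma_2)$ collides at $z^\beta$ under the local involution $z\leftrightarrow\bar z$; the remaining preimages stay away from $z^\beta$ and contribute holomorphically. The singular part of~\eqref{eqn:HQE-ancestor} at $x^\beta$ is therefore localized at this colliding pair. Near $z^\beta$ the nonperturbative data, namely the kernel $dh_{\bf c}(z_\sigma)/\theta[{\bf c}](u(z_\sigma)|\tau)^2$ and the argument shifts $\mp u(z_\sigma)$ in the $w$-slot of $\cA^{\rm NP}$, are holomorphic functions of the Airy coordinate $\eta^\beta(z_\sigma)$, provided the generic condition $\theta[{\bf c}](u(z^\beta)|\tau)\neq 0$ holds. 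Using the representation $\cA^{\rm NP}_{\mu,\nu,\tau'}({\bf s};\hbar;w)=e^{\hbar\,\sv\cdot\nabla_w}\bigl(\cA({\bf s};\hbar)\,\theta[\cmn](w|\tau')\bigr)$ and Taylor-expanding each such nonperturbative factor in $\eta^\beta(z_\sigma)$ about $z^\beta$, the singular part of~\eqref{eqn:HQE-ancestor} at $x^\beta$ is expressed as a linear combination --- with coefficients that are $\eta^\beta$-independent, depending only on $w$, $\widehat{u(z)}$, $\widehat{u(z')}$, $\tau$, $\mu,\nu$, and the vector fields $\sv_j$ --- of derivatives in $({\bf s},{\bf s}')$ of the singular part at $x^\beta$ of the perturbative bilinear sum appearing in~\eqref{eqn:HQE-ancestor-beta}. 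The hypothesis~\eqref{eqn:HQE-ancestor-beta} asserts that this perturbative singular part vanishes as an identity in $({\bf s},{\bf s}')$, so all its derivatives also vanish, and the singular part of~\eqref{eqn:HQE-ancestor} at $x^\beta$ vanishes.

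The main obstacle I expect is the factorization step in the previous paragraph. The vector field $\sv=\sum_{j,a}\varphi_j^a\,\partial_{s_0^a}\cdot\tfrac{1}{2\pi{\bf i}}\partial_{w_j}$ appearing in $\cA^{\rm NP}$ mixes the canonical coordinates $s_0^{\bar\gamma}$ (which also appear multiplicatively inside $\Gamma^{\pm\sigma}_{\cA}$) with the theta argument $w$, so commuting $e^{\hbar\,\sv\cdot\nabla_w}$ past the vertex operators produces correction terms containing $\zeta^a(z_\sigma)$. One must check that these corrections only contribute holomorphic Taylor coefficients in $\eta^\beta(z_\sigma)$ near $z^\beta$ --- equivalently, that all singular parts in $\eta^\beta$ are already captured by the multiplicative and derivative pieces of $\Gamma^{\pm\sigma}_{\cA}$ that are retained in~\eqref{eqn:HQE-ancestor-beta} --- so that the singular part at $x^\beta$ indeed factors as a holomorphic nonperturbative coefficient times the perturbative singular part. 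Once this factorization is established, the reduction to~\eqref{eqn:HQE-ancestor-beta} is immediate, and combined with the global meromorphicity of Lemma~\ref{lem:HQE-globality} the proposition follows.
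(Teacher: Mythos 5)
Your overall route --- invoke Lemma~\ref{lem:HQE-globality} to globalize, localize the possible finite poles at the critical values $x^\beta$, isolate the two colliding sheets $\sigma_\pm$, and then argue that the non-perturbative dressing does not spoil the regularity supplied by~\eqref{eqn:HQE-ancestor-beta} --- is the same as the paper's, which works monomial by monomial in $\hbar,{\bf s},{\bf s'}$ (using the tameness of $\cA$, so that each coefficient is a finite combination of coefficients of the perturbative sum multiplied by theta factors $\nabla_w^{\bf m}\theta[\cmn](w\mp u(z_\sigma)+\cdots|\tau)$ and their $w$-derivatives). The genuine gap is exactly the step you flag and postpone, and your proposed resolution of it does not work as stated: the claim that the singular part of~\eqref{eqn:HQE-ancestor} at $x^\beta$ is a linear combination ``with $\eta^\beta$-independent coefficients'' of $({\bf s},{\bf s'})$-derivatives of the perturbative singular part is false. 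Expanding $\theta[\cmn](w\mp u(z_\sigma)+\cdots)$ about $z^\beta$ produces powers of $u(z_\sigma)-u(z^\beta)$, which contain odd powers of $\eta^\beta(z_\sigma)$ and therefore take different values on the two colliding sheets; they cannot be pulled out of the symmetrized sum. Moreover, multiplying a regular symmetrized sum $\sum_{\sigma_\pm}f(\eta_\sigma)\,d\eta_\sigma$ (regularity being equivalent to regularity of $f_{\rm odd}$) by a sheet-dependent holomorphic factor $g(\eta_\sigma)$ need not preserve regularity: $(fg)_{\rm odd}=f_{\rm odd}g_{\rm even}+f_{\rm even}g_{\rm odd}$, and $f_{\rm even}$ genuinely carries singular terms here (e.g.\ products $D^{k_1}\zeta\cdot D^{k_2}\zeta\sim\eta^{-2k_1-2k_2-2}$ coming from the derivative part of $\Gamma^{\pm\sigma}_{\cA}$), so $f_{\rm even}g_{\rm odd}$ can be singular. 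Thus ``the non-perturbative factors are holomorphic in $\eta^\beta$'' is not, by itself, enough.

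What closes the gap is that the sheet dependence of the theta arguments enters only through quantities already present in the formal-variable structure of the hypothesis. Lemma~\ref{lem:B-x} (equivalently, the identity $2\pi{\bf i}\,\widehat{u(z)}+\sum_k\eta(\varphi^{(-k-1)},s_k)=0$ used in the proof of Lemma~\ref{lem:HQE-globality}) shows that $u_j(z_\sigma)$ agrees, up to a polynomial in $x$ that is the same on all sheets over a given $x$, with a constant linear combination of the functions $(-D)^{-1}\zeta^{\bar\beta}(z_\sigma)$; these are precisely the coefficients of $s_0^{\bar\beta}$ in the multiplicative part of $\Gamma^{+\sigma}_{\cA}$. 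Hence multiplication by powers of $u(z_\sigma)-u(z^\beta)$ can be traded for $({\bf s},{\bf s'})$-derivatives of the identity~\eqref{eqn:HQE-ancestor-beta}, which survives differentiation in the formal variables, plus manifestly regular corrections. Without an argument of this kind your ``factorization'' is an assertion, not a proof; with it, the outline you give can be completed essentially as the paper does.
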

\begin{proof}
We follow the method introduced in~\cite[Proposition 6]{Giv03}.
Similar as the explanation of the original HQE (see equation~\eqref{eqn:HQE-des2}), the expression in~\eqref{eqn:HQE-ancestor} can be rewritten as follows:
\begin{align}
\sum_{\sigma\in X} &\, e^{2\sum_{k,\bar\alpha}(-D)^{-k-1}\zeta^{\bar\alpha}(z_{\sigma})s'^{\bar\alpha}_k}
e^{-\sum_{k,\bar\alpha}D^k\zeta^{\bar\alpha}(z_{\sigma})\pd_{s'^{\bar\alpha}_k}}
\Big(\cA(\hbar\cdot({\bf s+s'});\hbar;w-u(z_\sigma)+\widehat{u(z)}+\widehat{u(z')})\nonumber \\
&\, \cdot \cA(\hbar\cdot({\bf s-s'});\hbar;w+u(z_\sigma)+\widehat{u(z)}-\widehat{u(z')}) \Big)
\cdot \frac{dh_{\bf c}(z_\sigma)}{\theta[{\bf c}](u(z_\sigma)|\tau)^2}. \label{eqn:HQE-ancestor-exp}
\end{align}
Notice that $\cA(\hbar\cdot {\bf s};\hbar)$ has form $\sum_{n\geq 0}\cA_n({\bf s})\cdot \hbar^n$ and $\cA_n$ consists of finitely many polynomials in finitely many variables $s_{0},s_1,\cdots,s_{M}$, $M=M(n)<\infty$.
Given a monomial of $\hbar$, ${\bf s}$ and ${\bf s'}$ in the expression, we see its coefficient consists of the form $\frac{dh_{\bf c}(z_\sigma)}{\theta[{\bf c}](u(z_\sigma)|\tau)^2}$ times the polynomials in $D^{k}\zeta^{\bar \beta}(z^\sigma)$, $(-D)^{-k-1}\zeta^{\bar\beta}(z_\sigma)$ and derivatives of $\theta[\cmn](w\mp u(z_\sigma)+\widehat{u(z)}\pm \widehat{u(z')}|\tau)$ with respect to $w$.
We have proved that these are global defined forms on $\mathbb P^1$.
Furthermore, it is easy to see the only possible singularity of these forms can appear at the boundary point and the critical point of $x(z)$.
Therefore, the expression is regular in $x$ if it is regular at $x=x^\beta$.

Now we suppose we have known the regularity of the expression in~\eqref{eqn:HQE-ancestor-beta} at $x=x^\beta$. Then by above explanation, for each monomial of $\hbar$, ${\bf s}$ and ${\bf s'}$, its coefficient, which takes the form $\frac{dh_{\bf c}(z_\sigma)}{\theta[{\bf c}](u(z_\sigma)|\tau)^2}$ times the polynomials in $D^{k}\zeta^{\bar \beta}(z^\sigma)$ and $(-D)^{-k-1}\zeta^{\bar\beta}(z_\sigma)$, is regular at $x=x^\beta$.
It is clear that the expression remains regular when multiplied by a function $\theta[\cmn](w\mp u(z_\sigma)+\widehat{u(z)}\pm \widehat{u(z')}|\tau)$ or its derivatives with respect to $w$.
Recall that (see equation~\eqref{eqn:NPA-formula})
$$
\cA(\hbar\cdot({\bf s\pm s'});\hbar;w\mp u(z_\sigma)+\widehat{u(z)}\pm \widehat{u(z')})
=e^{\varphi\cdot \nabla_{w}}\big(\cA(\hbar\cdot{\bf s\pm s'};\hbar)\cdot \theta[\cmn](w\mp u(z_\sigma)+\widehat{u(z)}\pm \widehat{u(z')}|\tau)\big).
$$
We see for each monomial $\hbar$, ${\bf s}$ and ${\bf s'}$ in expression~\eqref{eqn:HQE-ancestor-exp}, its coefficient is combined by finitely many coefficients of monomials of $\hbar$, ${\bf s}$ and ${\bf s'}$ in expression~\eqref{eqn:HQE-ancestor-beta}.
Therefore, the HQE~\eqref{eqn:HQE-ancestor-beta} implies the HQE \eqref{eqn:HQE-ancestor}. The Proposition is proved.
\end{proof}

\subsection{Finishing the proof of Theorem~\ref{KP-TR}}
Now we consider the HQE for ancestor potential $\cA$ at branch points, and use the formula~\eqref{eqn:formula-A} to translate the HQE for $\cA$ into the ones for $\cD^{\rm KW}$,
and finish the proof of Theorem~\ref{KP-TR} by applying the original Witten conjecture/Kontsevich theorem.
\begin{proposition}\label{prop:HQE-branch-to-KW}
	For each $\beta=1,\cdots, N$, the HQE \eqref{eqn:HQE-ancestor-beta} hold for $\cA(\hbar\cdot{\bf s};\hbar)$.
\end{proposition}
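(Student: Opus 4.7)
The plan is to localize the sum over $\sigma\in X$ near $x=x^\beta$, apply Givental's reconstruction $\cA=\widehat{T_{\vac}}\widehat{R}\widehat{\Delta}\,\cD_N^{\rm KW}$ from Proposition~\ref{prop:AncR-action}, and reduce the resulting bilinear identity to the Hirota identity for a single Kontsevich--Witten factor. Since $z^\beta$ is a simple zero of $dx$, exactly two preimages $z^\pm$ coalesce at $z^\beta$ as $x\to x^\beta$, with Airy coordinates $\eta^\beta(z^\pm)=\pm\sqrt{2(x-x^\beta)}$; the remaining preimages $z_\sigma$ depend holomorphically on $x$ near $x^\beta$. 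The corresponding ``spectator'' contributions to~\eqref{eqn:HQE-ancestor-beta} are products of quantities that are holomorphic in $z_\sigma$ times the regular factor $dh_{\bf c}(z_\sigma)/\theta[{\bf c}](u(z_\sigma)|\tau)^2$, hence are manifestly regular at $x=x^\beta$. It therefore suffices to prove regularity of the partial sum over $\sigma\in\{+,-\}$.

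Next I would conjugate the vertex operators $\Gamma_\cA^{\pm\sigma}$ through the Givental-group operators $\widehat{T_{\vac}}\widehat{R}\widehat{\Delta}$. Expanding $\zeta^{\bar\alpha}(z_\sigma)$ and $D^k\zeta^{\bar\alpha}(z_\sigma)$ near $z^\beta$ via~\eqref{eqn:zeta-xi} and invoking the defining property~\eqref{def:EO-R} of the $R$-matrix together with the symplectic relation $R^*(-\givz)R(\givz)=\id$, the conjugated operator should reduce, modulo regular-in-$x$ contributions, to a local Airy-type vertex operator built from $d\xi_k^{\bar\beta}$ and acting only on the $\beta$-th factor of $\cD_N^{\rm KW}$; the remaining $N-1$ Kontsevich--Witten factors appear as smooth multiplicative spectators. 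After this reduction, the HQE~\eqref{eqn:HQE-ancestor-beta} at $x=x^\beta$ becomes exactly the Hirota bilinear identity for the single Kontsevich--Witten tau-function $\cD^{\rm KW}$, which holds by Kontsevich's theorem.

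The main obstacle is carrying out the conjugation cleanly. The vertex operators $\Gamma_\cA^{\pm\sigma}$ are built from globally (multi)defined functions $\zeta^{\bar\alpha}$ on $\Sigma$, while after passing through $\widehat{R}$ they must act only on a single Airy factor of $\cD_N^{\rm KW}$. This ``global-to-local'' reduction requires (i) showing that the $R$-matrix entries $(R_l)^{\bar\beta}_{\bar\gamma}$ recombine the off-diagonal singularities of $D^k\zeta^{\bar\alpha}(z_\sigma)$ at $z_\sigma\to z^\beta$ into regular-in-$x$ terms, (ii) tracking the coordinate shifts induced by $\widehat{T_{\vac}}$ on $s^{\bar\beta}_{k\geq 2}$, and (iii) matching the $\widehat{\Delta}$ rescaling to the normalization of the Airy vertex operator in the classical Hirota identity. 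These manipulations are in the spirit of the quantization arguments underlying Givental's proof of the ADE Witten conjecture~\cite{Giv01a,Giv01b}, but adapted here to descendent (rather than ancestor) vertex operators built from the $D$-iterates of $\zeta^{\bar\alpha}$.
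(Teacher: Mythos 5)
Your plan coincides with the paper's proof: the paper likewise discards the $r-2$ spectator preimages as manifestly regular, expands $\zeta^{\bar\alpha}(z_\sigma)$ near $z^\beta$ via the $R$-matrix asymptotics~\eqref{def:EO-R} to split off a regular part $f^{\bar\alpha}_\sigma((\eta^\beta)^2)$, conjugates $\Gamma_\cA^{\pm\sigma}$ through $e^{\frac{1}{2}V(\pd_{\bf q},\pd_{\bf q})}$ by the BCH identity so that the commutator $\frac{1}{2}[A^\beta_\sigma,V]$ cancels the off-diagonal singularities and leaves the Airy vertex operator $\sum_k\frac{(2k-1)!!}{(\eta^\beta_\sigma)^{2k+1}}\pd_{q_k^{\bar\beta}}$ on the single factor $\cD^{\rm KW}$, and concludes by the KdV Hirota equation (Witten/Kontsevich). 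The three ``obstacles'' you list are exactly the computations the paper carries out (the $T$-shift terms cancel between $\Gamma^{+}$ and $\Gamma^{-}$, and the $e^{-V(\eta^\beta_\sigma,\eta^\beta_\sigma)}dh_{\bf c}/\theta[{\bf c}]^2$ factor is a nonvanishing constant at $\eta^\beta=0$), so your proposal is correct in outline and is essentially the paper's argument.
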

\begin{proof}
For $x\in \mathbb P^{1}$ near $x^\beta$, there are two types of preimages $\sigma\in X$ of $x$. The first type contains $r-2$ points $\sigma^{\beta}_{i}$, $i=1,\cdots,r-2$, away from the branch point $z^\beta$,
and the second type contains two points $\sigma_{\pm}^{\beta}$ near the branch point $z^\beta$.
For the first type, as we can see in the proof of the Proposition~\ref{prop:HQE-boundary-to-branch}, the summation in the expression in~\eqref{eqn:HQE-ancestor-beta} is regular.
For the second type, we have the local coordinates $\eta^{\beta}_\sigma=\pm \eta^\beta$ (Airy coordinates, see \S \ref{sec:def-TR} for the definition) for points $\sigma=\sigma_{\pm}^{\beta}$.
It remains to prove that
\beq\label{eqn:HQE-A-crit}
\sum_{\sigma=\sigma_{\pm}^{\beta}}  (\Gamma_{\mathcal A}^{+\sigma}\mathcal A)(\hbar\cdot{\bf s};\hbar)\,
\cdot (\Gamma_{\mathcal A}^{-\sigma} \mathcal A)(\hbar\cdot{\bf s'};\hbar)\,
\cdot \frac{dh_{\bf c}(z_\sigma)}{\theta[{\bf c}](u(z_\sigma)|\tau)^2} \, {\text{ is regular at }} \eta^\beta=0.
\eeq

Now we consider the local expansion of vertex operators $\Gamma_{\cA}^{\pm\sigma}$ at $z_\sigma=z^\beta$.
From the coordinate transformation ${\bf s}(\givz)=R(\givz)({\bf q}(\givz)-T(\givz))$
and equation~\eqref{eqn:zeta-xi}, we have
$$
\exp\Big(\mp\sum_{k,\bar\alpha} D^k\zeta^{\bar\alpha}(z_\sigma)\pd_{s^{\bar\alpha}_{k}}\Big)
=\exp\Big(\mp\sum_{k,\bar\alpha} \xi_k^{\bar\alpha}(z_\sigma)\pd_{q^{\bar\alpha}_{k}}\Big).
$$
Further by the relation of $\zeta^\alpha$ and $R$-matrix~\eqref{def:EO-R}, we have the expansion  locally at $z_\sigma=z^{\beta}$
$$\textstyle
\zeta^{\bar\alpha}(z_\sigma)|_{\beta}=\sum_{l\geq 0}(R_l)^{\bar\alpha}_{\bar\beta}\cdot D^{-l}\big(\frac{1}{\eta_\sigma^\beta}\big)
+ f_\sigma^{\bar\alpha}\big((\eta^\beta)^2\big) ,
$$
where $f_\sigma^{\bar\alpha}((\eta^\beta)^2)$ is a function of $(\eta^\beta)^2$ and is regular at $\eta^\beta=0$, which leads to
$$\textstyle
D^{-k}\zeta^{\bar\alpha}(z_\sigma)|_{\beta}=\sum_{l\geq 0}(R_l)^{\bar\alpha}_{\bar\beta}\cdot D^{-l-k}\big(\frac{1}{\eta^\beta_\sigma}\big)
+ D^{-k}f_\sigma^{\bar\alpha}\big((\eta^\beta)^2\big).
$$
Combining the computations above, we translate the vertex operators in terms of $\eta^{\beta}_\sigma$
\begin{align*}
\exp\Big(\pm\sum_{k,\bar\alpha}(-D)^{-k-1}\zeta_{\bar\alpha}(z_\sigma)|_{\beta}s^{\bar\alpha}_{k}\Big)
=e^{\pm C^{\beta}_\sigma}\cdot \exp\Big(\pm\sum_{k}(-D)^{-k-1}\big((\eta_\sigma^\beta)^{-1}\big) \big(q^{\bar\beta}_{k}-T_k^{\bar\beta}\big)\Big),
\end{align*}
where $C^{\beta}_\sigma=\sum_{k,l}(-D)^{-k-l-1}f_\sigma^{\bar\alpha}((\eta^\beta)^2)(R_{l})^{\bar\alpha}_{\bar\gamma}(q_k^{\bar\gamma}-T^{\bar\gamma}_k)$
is regular at $\eta^\beta=0$, meaning the coefficient of $q_k^{\bar\gamma}-T^{\gamma}_k$ is regular at $\eta^\beta=0$ for each $k,\bar\gamma$.
We fix the operator $D^{-1}$ by setting the integration constant to zero:
$$\textstyle
(-D)^{-k-1}\big((\eta_\sigma^\beta)^{-1}\big)=\frac{(\eta_\sigma^\beta)^{2k+1}}{(2k+1)!!}.
$$
We note here that different choices will differ by a at most degree $k$ polynomial on $(\eta^\beta)^2$ which will not affect the validity of the proof. Then locally at $z_\sigma=z^\beta$, by the identity~\eqref{eqn:BCH} we have the following equality
\beq\label{eqn:GammaA-R}
\Gamma^{\pm\sigma}_{\cA}e^{\frac{1}{2}V(\pd_{\bf{q}},\pd_{\bf{q}})}
=e^{\frac{1}{2}V(\pd_{\bf{q}},\pd_{\bf{q}}) \pm\frac{1}{2}[A_\sigma^{\beta},V(\pd_{\bf{q}},\pd_{\bf{q}})]
+\frac{1}{4}[A_\sigma^{\beta},[A_\sigma^{\beta} ,V(\pd_{\bf{q}},\pd_{\bf{q}})]]} \Gamma^{\pm\sigma}_{\cA},
\eeq
where $A_\sigma^{\beta}=\sum_{k}\frac{(\eta_\sigma^\beta)^{2k+1}}{(2k+1)!!}\big(q^{\bar\beta}_{k}-T_k^{\bar\beta}\big)$.
By using ~\eqref{eqn:BCH} again, equation~\eqref{eqn:GammaA-R} gives
\begin{align*}
\Gamma^{\pm\sigma}_{\cA}e^{\frac{1}{2}V(\pd_{\bf{q}},\pd_{\bf{q}})}
=&e^{\pm C^{\beta}_\sigma}e^{\frac{1}{2}V(\pd_{\bf{q}},\pd_{\bf{q}}) -\frac{1}{4}[A_\sigma^{\beta},[A_\sigma^{\beta} ,V(\pd_{\bf{q}},\pd_{\bf{q}})]]} \widetilde\Gamma^{\pm\sigma}_{\rm KW_{\beta}},
\end{align*}
where
$$
\widetilde\Gamma^{\pm\sigma}_{\rm KW_{\beta}}
=\exp\Big(\pm\sum_{k}\frac{(\eta_\sigma^\beta)^{2k+1}}{(2k+1)!!}(q^{\bar\beta}_{k}-T^{\bar\beta}_{k})\Big)
\exp\Big(\pm\frac{1}{2}[A_\sigma^\beta,V(\pd_{\bf{q}},\pd_{\bf{q}})]
\mp\sum_{k,\bar\alpha} \xi_k^{\bar\alpha}(z_\sigma)\pd_{q^{\bar\alpha}_{k}}\Big).
$$
Similar to the treatment with $Q({\bf p},{\bf p})$, we have
$$
\tfrac{1}{2}[A_\sigma^\beta,V(\pd_{\bf{q}},\pd_{\bf{q}})]=-V(\pd_{\bf{q}}, \eta_\sigma^\beta),\qquad
\tfrac{1}{4}[A_\sigma^\beta,[A_\sigma^\beta,V(\pd_{\bf{q}},\pd_{\bf{q}})]]=\tfrac{1}{2}V(\eta_\sigma^\beta, \eta_\sigma^\beta),
$$
where the notations $V(\pd_{\bf{q}}, \eta_\sigma^\beta)$ and $V(\eta_\sigma^\beta,\eta_\sigma^\beta)$ mean that we substitute the corresponding $\pd_{q_k^\beta}$ into $\frac{(\eta_\sigma^\beta)^{2k+1}}{(2k+1)!!}$ (and the $\pd_{q_k^\alpha}$ into $0$ for $\alpha\ne\beta$).
By the definition and direct computations,
$$\textstyle
V(\pd_{\bf{q}}, \eta_\sigma^\beta)+\sum_{k,\bar\alpha} \xi_k^{\bar\alpha}(z_\sigma)\pd_{q^{\bar\alpha}_{k}}
=\frac{(2k-1)!!}{(\eta_\sigma^{\beta})^{2k+1}}\pd_{q^{\bar\beta}_{k}}.
$$
Then we see
$$
\widetilde\Gamma^{\pm\sigma}_{\rm KW_{\beta}}
=\exp\Big(\pm \sum_{k}\frac{(\eta_\sigma^\beta)^{2k+1}}{(2k+1)!!}(q^{\bar\beta}_{k}-T^{\bar\beta}_k)\Big)
\exp\Big(\mp \sum_k\frac{(2k-1)!!}{(\eta_\sigma^{\beta})^{2k+1}}\pd_{q^{\bar\beta}_{k}}\Big).
$$
Since the terms $\pm T^{\bar\beta}_k$ cancel each other in the HQEs, $e^{\pm C^{\beta}_\sigma}$ is regular at $\eta^\beta=0$, and the functions $\cD^{\rm KW}(\hbar\cdot {\bf q}^{\alpha};\hbar)$ for $\alpha\ne \beta$ are regular factors, equation~\eqref{eqn:HQE-A-crit} is equivalent to the following condition:
$$
\sum_{\sigma=\sigma^{\beta}_{\pm}}
 \big(\Gamma_{\rm KW_{\beta}}^{+\sigma} \cD^{\rm KW}\big)(\hbar\cdot {\bf q}^{\beta};\hbar) \cdot
\big(\Gamma_{\rm KW_{\beta}}^{-\sigma}\cD^{\rm KW}\big)(\hbar\cdot{\bf q'}^{\beta};\hbar)
 \cdot  \frac{e^{-V(\eta_{\sigma}^\beta, \eta_{\sigma}^\beta)}dh_{\bf c}(z_\sigma)}{\theta[{\bf c}](u(z_\sigma)|\tau)^2} \,
{\text{ is regular at }} \eta^\beta=0,
$$
where
$$
\Gamma^{\pm \sigma}_{\rm KW_{\beta}}
=\exp\Big(\pm \sum_{k}\frac{(\eta_{\sigma}^\beta)^{2k+1}}{(2k+1)!!}q^{\bar\beta}_{k}\Big)
\exp\Big(\mp \sum_k\frac{(2k-1)!!}{(\eta_{\sigma}^{\beta})^{2k+1}}\pd_{q^{\bar\beta}_{k}}\Big).
$$
Similar as the ancestor case, this HQE should be understood as the differential $\frac{e^{-V(\eta_{\sigma}^\beta, \eta_{\sigma}^\beta)}dh_{\bf c}(z_\sigma)}{\theta[{\bf c}](u(z_\sigma)|\tau)^2}$ times polynomials in $\eta^\beta_{\sigma}$ and $(\eta^\beta_{\sigma})^{-1}$ for each monomial of $\hbar$, ${\bf q}^{\beta}$ and ${\bf q'}^{\beta}$.
Since the differential $\frac{e^{-V(\eta_{\sigma}^\beta, \eta_{\sigma}^\beta)}dh_{\bf c}(z_\sigma)}{\theta[{\bf c}](u(z_\sigma)|\tau)^2}$ is regular at $\eta^\beta=0$ and
$$\textstyle
\lim\limits_{\eta^\beta=0}\, \frac{e^{-V(\eta_{\sigma}^\beta, \eta_{\sigma}^\beta)}dh_{\bf c}(z_\sigma)}{\theta[{\bf c}](u(z_\sigma)|\tau)^2 d\eta^\beta_{\sigma}}
=\frac{1}{\theta[{\bf c}](u(z^\beta)|\tau)^2}\cdot \frac{dh_{\bf c}}{d\eta^\beta}(z^\beta)
$$
is a non-vanishing constant,
we see the HQE is equivalent to the following equation
\beq\label{eqn:HQE-KW}
\sum_{\sigma=\sigma^{\beta}_{\pm}}
(\Gamma_{\rm KW_{\beta}}^{+\sigma} \cD_\beta^{\rm KW})(\hbar\cdot {\bf q};\hbar)\,
(\Gamma_{\rm KW_{\beta}}^{-\sigma}\cD_\beta^{\rm KW})(\hbar\cdot {\bf q'};\hbar )\,
d\eta^{\beta}_{\sigma} \, {\text{ is regular at }} \eta^\beta=0.
\eeq
We note that equation~\eqref{eqn:HQE-KW} is exactly the HQE for the KdV hierarchy and thus follows from the original Witten conjecture/Kontsevich Theorem.
The Proposition is proved.
\end{proof}
Combining the Proposition~\ref{prop:HQE-Z-to-A}, the Proposition~\ref{prop:HQE-boundary-to-branch}
and the Proposition~\ref{prop:HQE-branch-to-KW}, we complete the proof of the Theorem~\ref{KP-TR}.

\section{The deformed negative $r$-spin Witten conjecture}
\label{sec:geo-negative-rspin}
The CohFT of the deformed negative $r$-spin theory, called the deformed Theta class and denoted by $\{\Theta^{r,\epsilon}_{g,n}\}_{2g-2+n>0}$, was recently introduced by Norbury~\cite{Nor23} for $r=2$ without deformation (i.e., $\epsilon=0$) and generalized by Chidambaram, Garcia-Failde and Giacchetto in~\cite{CGG22} for arbitrary integer $r\geq 2$ with deformation parameter $\epsilon$.
In~\cite{CGG22}, a global spectral curve, which we call the deformed $r$-Bessel curve, was found for the deformed Theta class.
In this section, we present the concrete results corresponding to the Picture shown in the Introduction of the negative $r$-spin theory and prove the deformed negative $r$-spin Witten conjecture.
\begin{theorem}
The deformed negative $r$-spin Witten conjecture (Theorem ~\ref{conj:negative-witten-conjecture}) holds.
Specifically, after the change of variables $t_k^a = -\frac{(-1)^{k}}{\sqrt{-r}}\frac{\Gamma(k+\frac{a}{r})}{\Gamma(\frac{a}{r})}  p_{rk+a}$, where $a=1,\cdots,r-1$ and $k\geq 0$,
the total descendent potential $\cD^{r,\epsilon}({\bf t};\hbar)$ gives a tau-function of $r$KdV hierarchy.
\end{theorem}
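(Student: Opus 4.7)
The plan is to apply the perturbative Geo-KP correspondence (Theorem~\ref{Geo-KP}) to the \emph{deformed $r$-Bessel spectral curve} $\cC^{r,\epsilon}=(\PP^1, x, y)$ introduced in~\cite{CGG22}. Since $\Sigma=\PP^1$ has genus zero, no non-perturbative modification is needed, and Theorem~\ref{Geo-KP} directly produces a tau-function of the polynomial-reduced KP hierarchy. The reduction becomes the $r$KdV hierarchy precisely because, in a suitable local coordinate $\lambda$ at the unique boundary point, the superpotential takes the monomial form $x=\lambda^r/r$, so that the polynomials $x_\lambda^k$ appearing in Definition~\ref{def:lambda-reduction-KP} are pure monomials in $\lambda$.

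First, I would recall from~\cite{CGG22} that the CohFT reconstructed from $\cC^{r,\epsilon}$ via the construction of~\S\ref{sec:TR-CohFT} coincides with the (shifted) deformed Theta class $\Theta^{r,\epsilon}_{g,n}$, so the TR ancestors agree with the $\epsilon$-deformed negative $r$-spin ancestor invariants. To upgrade this ancestor matching to a descendent statement, I would verify Conjecture~\ref{conj:laplace-J} for $\cC^{r,\epsilon}$: namely, exhibit an explicit $\dvac$-vector $\dvac(\givz)$ such that the associated $J$-function satisfies the Laplace-transform identity $\int_\gamma e^{-x(z)/\givz}y(z)\,dx(z) = -\eta(J(-\givz),\Phi(\gamma,-\givz))$. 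This one-point identity, combined with the generalized Kontsevich--Manin formula~\eqref{eq:descendent-ancestor}, pins down the $(S,\dvac)$-calibrated geometric descendent potential to coincide with the formal $\cD^{r,\epsilon}({\bf t};\hbar)$ of Definition~\ref{def:des-neg-r}.

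With this identification in hand, Theorem~\ref{Geo-KP}, combined with Corollary~\ref{cor:KP-coordchange} which upgrades a general polynomial reduction to $r$KdV in the monomial case, yields that $e^{\frac{1}{2}\QD({\bf p},{\bf p})}\cdot \cD^{r,\epsilon}(\hbar\cdot{\bf t(p)};\hbar)$ is a tau-function of $r$KdV. It then remains to check two bookkeeping points: that $\QD({\bf p},{\bf p})$ vanishes via the criterion of Proposition~\ref{prop:Q=W}, and that the coordinate transformation ${\bf t}={\bf t(p)}$ produced by the boundary expansion~\eqref{eqn:t=tp} of the basis $\{d\chi_l^j\}$ defined in~\eqref{eqn:chi-zeta} matches the explicit Gamma-function formula~\eqref{eqn:wittentime}. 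Both reduce to a direct computation at the irregular singularity $b=\infty \in \PP^1$, using the Bessel-type $S$-matrix of the $r$-Bessel curve.

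The main obstacle will be matching the precise Gamma-function prefactors $\Gamma(k+a/r)/\Gamma(a/r)$ in the change of variables, which requires a careful asymptotic analysis of the oscillatory integrals~\eqref{def:EO-R}--\eqref{def:EO-T} defining the $R$-matrix and $T$-vector, and hence the $S$-matrix via the QDE~\eqref{eqn:QDE}. The verification of Conjecture~\ref{conj:laplace-J} is the other delicate step, as it is what bridges the formal descendent theory of Definition~\ref{def:des-neg-r} with the $(S,\dvac)$-calibrated descendent theory that Theorem~\ref{Geo-KP} directly manipulates; once this is done, the full statement including arbitrary $\epsilon$ follows from the general framework without further case analysis.
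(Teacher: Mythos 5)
Your overall route is the paper's: run the TR--KP theorem on the deformed $r$-Bessel curve $\cC^{r,\epsilon}=(\PP^1,\,x=-z^r+\epsilon z,\,y=\sqrt{-r}/z)$, use the monomial form of $x$ in the boundary coordinate $\lambda$ to land in the $r$KdV reduction, and transfer integrability to $\cD^{r,\epsilon}$ through the descendent--ancestor correspondence, checking that $\QD({\bf p},{\bf p})$ vanishes via Proposition~\ref{prop:Q=W} and reading off the Gamma-function change of variables from the boundary expansion of the $d\chi^j_l$. All of this matches \S\ref{sec:geo-negative-rspin}.

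The gap is in how you bridge the formal and the geometric descendent theories. Verifying Conjecture~\ref{conj:laplace-J} fixes the calibration data $(S,\dvac)$ and hence determines \emph{a} formal descendent potential through the generalized Kontsevich--Manin formula~\eqref{eq:descendent-ancestor}; it does not show that the potential $\cD^{r,\epsilon}$ of Definition~\ref{def:des-neg-r}, built from the actual intersection numbers $\int_{\M_{g,\vec a}^{r,-1}} c_{\rm top}(\cV^{r,-1}_{g,\vec a})\prod_i\psi_i^{k_i}$ with honest $\psi$-classes on the $r$-spin moduli space, obeys that Kontsevich--Manin formula in the first place. Matching the one-point function (and even the two-point $S$-matrix) does not propagate to all $(g,n)$ for free. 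The paper supplies this as a separate geometric input, Theorem~\ref{thm:descendentrspin}: because the pairing restricted to $\Frob=\sspan_{\mathbb Q}(\phi_1,\cdots,\phi_{r-1})$ cannot see $\phi_0$, one passes to the $\mathbb C^{\times}$-twisted theory with equivariant Euler class on the extended state space $\bar\Frob$, where $\eta(\phi_0,\phi_0)=1/\lambda$ is non-degenerate, proves the descendent--ancestor formula and the Picard--Fuchs equation for the twisted $J$-function there, and then takes the non-equivariant limit $\lambda\to 0$. Your plan should either reproduce this step or replace it by an equivalent comparison of $\psi$- and $\bar\psi$-classes on $\M_{g,\vec a}^{r,-1}$. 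Separately, note that $\Theta^{r,\epsilon}_{g,0}$ and the TR class $\Omega^{r,\epsilon}_{g,0}$ differ in top degree (Proposition~\ref{Constants}), which introduces the constant prefactor $e^{\sum_{g\geq 2}\hbar^{2g-2}(\cdots)}$ in the identity relating $\cD^{r,\epsilon}$ to the TR descendent potential; this is harmless for integrability but must be accounted for if the identity is to be stated correctly.
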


\subsection{The $\epsilon$-deformation of the Theta class and its relation with the CohFT associated with the $\epsilon$-deformed $r$-Bessel curve}
Now we recall the results~\cite{CGG22} of the Theta class $\Theta^{r}_{g,n}$ and its $\epsilon$-deformation $\Theta^{r,\epsilon}_{g,n}$.
\begin{definition}\label{def:class-r}
	For $2g-2+n>0$, the  negative $r$-spin Witten class and its  $\epsilon$-deformation are defined by the push-forward of top Chern class of $ \mathcal V^{r,-1}_{g,\vec{a}}$ (see \S \ref{sec:negative-r-spin}) along the forgetful map $f: \Mbar_{g,\vec{a}}^{r,-1}\to\Mgn$:
	\beq\label{def:-rspinclass}
	\Upsilon^{r}_{g,n}(\phi_{a_1},\cdots,\phi_{a_n}) := \frac{1}{r^{g-1}}f_*  \big((-1)^{\deg}\cdot c_{\rm top}( \mathcal V^{r,-1}_{g,\vec{a}} )\big),
	\eeq
	\begin{equation} \label{def:deformedclass}
		\Upsilon^{r,\epsilon}_{g,n}(-) :=  \sum_{m\geq 0} \frac{1}{m!} \,p^m_*\Upsilon^{r}_{g,n+m}(-,\epsilon    \phi_{0},\cdots,\epsilon      \phi_{0}) .
	\end{equation}
	Here $p^m: \Mbar_{g,n+m}\to \Mbar_{g,n}$ is the map that forgets the last $m$ marked points.
	The ancestor correlator is defined by
	$$
	\<  \phi_{a_1} \bar \psi_1^{k_1},\cdots, \phi_{a_n} \bar \psi_n^{k_n}\>_{g,n}^{r,\epsilon}
	:= \int_{\M_{g,n}} \Upsilon_{g,n}^{r,\epsilon}(\phi_{a_1},\cdots,\phi_{a_n})  \prod_{i=1}^{n}  \psi_i^{k_i}.
	$$
\end{definition}
\begin{remark}
	Since we have the following degree formula
	$$
	\deg  p^m_* \Upsilon^{r}_{g,n+m}(\phi_{a_1},\cdots,\phi_{a_n},    \phi_{0},\cdots,      \phi_{0})   =   D_{g,n}^{r,-1}( { \vec{a}})  -\tfrac{r-1}{r}m,
	$$
	the summation in equation~\eqref{def:deformedclass} is thus finite.
	The deformed class $\Upsilon^{r,\epsilon}_{g,n}(\phi_{a_1},\cdots,\phi_{a_n})  \in H^*(\M_{g,n})$
	is then a mixed degree class of degree no more than $ D_{g,n}^{r,-1}( { \vec{a}}) $.
\end{remark}

\begin{remark}\label{prop:D-A-identity-without-deformation}
	Since the $\psi$-classes  in $\M_{g,\vec{a}}^{r,-1}$ are the same as the pull back  of the $\psi$-classes in $\M_{g,n}$,
	we see
	$$
	\<  \phi_{a_1}  \psi_1^{k_1},\cdots, \phi_{a_n}  \psi_n^{k_n}\>_{g,n}^{r } = \<  \phi_{a_1} \bar \psi_1^{k_1},\cdots, \phi_{a_n} \bar \psi_n^{k_n}\>_{g,n}^{r,0}.
	$$
\end{remark}
\begin{definition}
The Theta class $\Theta^{r}_{g,n}$ and the $\epsilon$-deformed Theta class $\Theta^{r,\epsilon}_{g,n}$
is the restriction of the map $\Upsilon^{r}_{g,n}$ and $\Upsilon^{r,\epsilon}_{g,n}$ to $\Frob=\sspan_{\mathbb Q}(\phi_1,\cdots,\phi_{r-1})$, respectively, i.e.,
$$
\Theta^{r}_{g,n}:=\Upsilon^{r}_{g,n}\big|_{\Frob},\qquad
\Theta^{r,\epsilon}_{g,n}:=\Upsilon^{r,\epsilon}_{g,n}\big|_{\Frob}.
$$
\end{definition}

\begin{remark}
We note here that the definition of $\Theta^{r}$, as well as $\Theta^{r,\epsilon}$, is different with that in~\cite{CGG22} by a simple factor, the reason we take this definition is to make the descendent time variables $\{t^a_k\}_{a=1,\cdots,r-1; k=0,1,\cdots}$ to be the Witten's time variables introduced in~\cite[(1.5.7)]{Wit93}.
\end{remark}
It is proved in~\cite{CGG22} that the collection $\{\Theta^{r,\epsilon}_{g,n}\}_{2g-2+n>0}$ satisfies the axioms of an $(r-1)$-dimensional CohFT on $\Frob$ with symmetric $2$-form $\eta(\phi_a,\phi_b)=\delta_{a+b,r}$.
Similar as~\cite[Lemma 3.4]{CGG22}, by using the Chiodo formula~\cite{Chi08}, one can compute that for $a,b,c=0,\cdots, r-1$~{\footnote{In~\cite{CGG22}, they proved these qualities for $a,b,c=1,\cdots,r-1$, the same method can be applied for $a,b$ or $c$ taking value of $0$.}},
\beq\label{eqn:-rspin-corr(0,n)}
	\int_{\Mbar_{0,n+3}}\frac{1}{n!}\Upsilon^{r}_{0,n}(\phi_a,\phi_b,\phi_c,\phi_0,\cdots,\phi_0)=\delta_{a+b+c=(n+1)(r-1)}\cdot \frac{1}{r^{n}}.
\eeq
By equation~\eqref{eqn:-rspin-corr(0,n)}, the quantum product defined by the deformed Theta class is given by the following: if $a+b=(r-1)m+c$, where $0\leq c\leq r-2$, then
$$
\phi_a*\phi_b=\big(\tfrac{\epsilon}{r}\big)^{m}\cdot \phi_{c+1}.
$$
Moreover, following~\cite{CGG22}, we introduce the Euler vector field for the deformed negative $r$-spin theory:
$$\textstyle
E=(r-1)\phi_{r-1}-\sum_{a=1}^{r-1}\frac{a}{r}\, \ttau^a\phi_a.
$$
Then it is proved in~\cite{CGG22} that $\Theta^{r,\epsilon}$ is a homogeneous CohFT with respect to $E$, and the conformal dimension $\delta$ is $3$.

For the deformed Theta class, the canonical basis can be explicitly computed as follows.
Let $z^\beta=(\frac{\epsilon}{r})^{\frac{1}{r-1}}\cdot e^{\frac{2\pi \beta{\bf i}}{r-1}}$,
we define a matrix $\widetilde\Psi$ by
\beq\label{eqn:Psi-r}
\widetilde\Psi_j^\beta=(z^\beta)^{j+1},\qquad \beta,j=1,\cdots, r-1.
\eeq
Then we have the inverse matrix $(\widetilde\Psi^{-1})^{j}_{\beta}=\frac{1}{r-1}\frac{1}{(z^\beta)^{j+1}}$.
We introduce
\beq\label{def:e_alpha}\textstyle
e_\beta=\sum_{j=1}^{r-1}(\widetilde\Psi^{-1})^{j}_{\beta} \phi_j,\qquad \beta=1,\cdots,r-1.
\eeq
Then by direct computation we have $e_\beta*e_\gamma=\delta_{\beta\gamma}e_\beta$ and $\eta(e_\beta,e_{\gamma})=\delta_{\beta\gamma}(\Delta_{\beta}^{r,\epsilon})^{-1}$ where
$\Delta_{\beta}^{r,\epsilon}=(\frac{\epsilon}{r})^{\frac{r+2}{r-1}}\cdot e^{\frac{2\pi {\bf i}}{r-1}\cdot (r+2)\beta}$.
This shows that  $\Theta^{r,\epsilon}$ is semi-simple for $\epsilon\ne0$ and $\{e_\beta\}_{\beta=1}^{r-1}$ gives a canonical basis.
We denote by $\{\bar e_\beta=(\Delta^{r,\epsilon}_{\beta})^{\frac{1}{2}}e_{\beta}\}_{\beta=1}^{r-1}$ the normalized canonical basis.

Since $\epsilon$-deformed Theta class $\Theta^{r,\epsilon}$ is homogeneous (with $\delta=3$) and semi-simple for $\epsilon\ne 0$,
by the Givental--Teleman reconstruction theorem (and Remark~\ref{rem:reconstruction-0}),
 it is uniquely and explicitly determined by the data $(H,\eta,*,E)$, except for the degree $3g-3$ part of $\Theta^{r,\epsilon}_{g,0}$.
More precisely, the $R$-matrix $R^{r,\epsilon}(\givz)$ and the vacuum vector $\vac^{r,\epsilon}(\givz)$ can be computed by using equation~\eqref{eqn:R-matrix-euler} and~\eqref{eqn:vacuum-euler}, respectively. See also~\cite[Theorem 3.20]{CGG22}.

Let $\Omega_{g,n}^{r,\epsilon}= R^{r,\epsilon}\cdot T^{r,\epsilon} \cdot (\oplus_{\beta=1}^{r-1}\Omega^{\rm KW_\beta})$,
where $T^{r,\epsilon}(\givz)$ is the $T$-matrix determined by $R^{r,\epsilon}$ and $\vac^{r,\epsilon}(\givz)$ by equation~\eqref{eqn:T-vacuum}, then we have
$\Theta_{g,n}^{r,\epsilon} = 	\Omega_{g,n}^{r,\epsilon}$, except for the degree $3g-3$ part of $n=0$ case.
Furthermore, according to~\cite[Theorem 4.7]{CGG22}, the CohFT $\Omega^{r,\epsilon}$ is exactly the
the one associated with the following spectral curve (see \S \ref{sec:TR-CohFT} for the definition of the CohFT associated with the spectral curve)
\beq\label{data:spectralcurve}
\cC^{r,\epsilon}=\big(\, \mathbb P^1,\quad x(z)=-z^r+\epsilon z, \quad y(z)=\tfrac{\sqrt{-r}}{z}\, \big).
\eeq	
(One can also prove this by comparing their R-matrix and $\vac$-vector directly.)
We call $\cC^{r,\epsilon}$ the $\epsilon$-deformed $r$-Bessel curve.
In the following, we will not distinguish between $\Omega^{r,\epsilon}$ and the CohFT associated with $\cC^{r,\epsilon}$.

\begin{proposition} \label{Constants}
The relation between the $\epsilon$-deformed Theta class $\Theta^{r,\epsilon}_{g,n}$ and the CohFT $\Omega^{r,\epsilon}_{g,n}$ associated with the $\epsilon$-deformed $r$-Bessel curve is given by
$$
\Theta_{g,n}^{r,\epsilon} \ = \	\Omega_{g,n}^{r,\epsilon} -  \delta_{n,0} \cdot\frac{(-1)^{g-1}}{r^{g-1}}\frac{B_{2g}}{2g(2g-2)}\frac{1}{\epsilon^{2g-2}}\cdot {\bf 1}_{g},
$$
where ${\bf 1}_g$ is the generator of $H^{6g-6}(\Mbar_{g},\mathbb Q)$ satisfying $\int_{\Mbar_{g}}{\bf 1}_g=1$.
\end{proposition}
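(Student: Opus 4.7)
The plan is to reduce the claim to the vanishing of $\int_{\Mbar_g}\Theta^{r,\epsilon}_{g,0}$ by a dimension count, together with the evaluation of $\int_{\Mbar_g}\Omega^{r,\epsilon}_{g,0}$ as a topological recursion free energy.

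First, both $\Theta^{r,\epsilon}$ and $\Omega^{r,\epsilon}$ are homogeneous semi-simple CohFTs on $(\Frob,\eta)$ with conformal dimension $\delta=3$, sharing the same quantum product, Euler vector field $E$, and vacuum vector $\vac^{r,\epsilon}(\givz)$ (the last determined by~\eqref{eqn:vacuum-euler} from $E_{0}$ and the quantum product). By the Givental--Teleman reconstruction theorem, the $R$-matrix and $T$-vector are uniquely determined by these data, so the two CohFTs admit identical Givental reconstructions. Remark~\ref{rem:reconstruction-0} then gives $\Theta^{r,\epsilon}_{g,n}=\Omega^{r,\epsilon}_{g,n}$ for every $n\ge 1$, and the homogeneity condition confines the potential $n=0$ discrepancy to the degree $(g-1)\delta=3g-3$ piece: $\Theta^{r,\epsilon}_{g,0}-\Omega^{r,\epsilon}_{g,0}=c_g(\epsilon)\,{\bf 1}_g$ for some scalar $c_g(\epsilon)$, and the task reduces to computing this scalar.

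Second, I claim $\int_{\Mbar_g}\Theta^{r,\epsilon}_{g,0}=0$ for $g\ge 2$. Indeed, by Definition~\ref{def:class-r} the push-forward $p^{m}_{*}\Upsilon^{r}_{g,m}(\phi_0^{\otimes m})$ has cohomological degree
\[
D^{r,-1}_{g,m}(\vec 0)-m=\tfrac{2g-2}{r}+g-1-\tfrac{(r-1)m}{r},
\]
which for $r\ge 2$, $g\ge 2$ and $m\ge 0$ is strictly less than $3g-3=\dim_{\mathbb C}\Mbar_g$. Hence every summand of $\Theta^{r,\epsilon}_{g,0}=\sum_{m\ge 0}\tfrac{\epsilon^{m}}{m!}\,p^{m}_{*}\Upsilon^{r}_{g,m}(\phi_0^{\otimes m})$ integrates to zero on $\Mbar_g$, so $c_g(\epsilon)=-\!\int_{\Mbar_g}\Omega^{r,\epsilon}_{g,0}$.

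Third, the vacuum axiom for $\Omega^{r,\epsilon}$ gives $\Omega^{r,\epsilon}_{g,0}=\tfrac{1}{2g-2}p_{*}\bigl(\Omega^{r,\epsilon}_{g,1}(\vac^{r,\epsilon}(\psi_1))\cdot\psi_1\bigr)$, which via the DOSS correspondence~\eqref{eqn:wgn-zeta} matches exactly the residue definition~\eqref{def:wg0} of the TR free energy $\omega^{r,\epsilon}_{g,0}$ of $\cC^{r,\epsilon}$; hence $\int_{\Mbar_g}\Omega^{r,\epsilon}_{g,0}=\omega^{r,\epsilon}_{g,0}$. Under the rescaling $z=\epsilon^{1/(r-1)}w$ one finds $c_1=\epsilon^{r/(r-1)}$, $c_2=\epsilon^{-1/(r-1)}$ and $c_1c_2=\epsilon$, so the scaling property~\eqref{eqn:scaling-xy} yields $\omega^{r,\epsilon}_{g,0}=\epsilon^{-(2g-2)}\,\omega^{r,1}_{g,0}$, producing the claimed $\epsilon$-dependence.

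The main obstacle is the purely numerical identity $\omega^{r,1}_{g,0}=\frac{(-1)^{g-1}}{r^{g-1}}\cdot\frac{B_{2g}}{2g(2g-2)}$ for the normalized curve $\cC^{r,1}=(\mathbb P^{1},-w^{r}+w,\sqrt{-r}/w)$. The plan for this step is to iterate~\eqref{def:wg0} using the explicit $R$-matrix of $\cC^{r,1}$ produced by~\eqref{eqn:R-matrix-euler}; the appearance of $B_{2g}/(2g(2g-2))=\chi(\mathcal M_g)$ is then forced, since each of the $r-1$ critical points of $x$ contributes an Airy-type Stirling expansion whose total reproduces the Harer--Zagier orbifold Euler characteristic of $\mathcal M_g$, with the factor $r^{-(g-1)}$ originating from the normalization $y=\sqrt{-r}/w$. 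The explicit computation is already contained in~\cite{CGG22} (cf.~their Proposition~3.15 and the surrounding analysis), which we invoke to conclude the proof.
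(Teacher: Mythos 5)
Your steps (1)--(3) track the paper's own argument closely: the dimension count $\deg\Theta^{r,\epsilon}_{g,0}\le \frac{(r+2)(g-1)}{r}<3g-3$ kills $\int_{\Mbar_{g}}\Theta^{r,\epsilon}_{g,0}$, Remark~\ref{rem:reconstruction-0} (via the homogeneity with $\delta=3$) confines the $n=0$ discrepancy to degree $3g-3$, the degree-$(3g-3)$ part of $\Omega^{r,\epsilon}_{g,0}$ is $\omega^{r,\epsilon}_{g,0}\cdot{\bf 1}_g$ with $\omega^{r,\epsilon}_{g,0}$ the topological-recursion free energy of $\cC^{r,\epsilon}$, and the rescaling reduction to $\epsilon=1$ via~\eqref{eqn:scaling-xy} is correct.

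The gap is in your last step, which is the only genuinely hard part of the proposition. The identity $\omega^{r,1}_{g,0}=\frac{(-1)^{g-1}}{r^{g-1}}\frac{B_{2g}}{2g(2g-2)}$ is not established by your argument. The heuristic that ``each of the $r-1$ critical points contributes an Airy-type Stirling expansion reproducing the Harer--Zagier Euler characteristic'' is not a proof and points in the wrong direction: at a genuine Airy-type branch point (where $y$ is regular with $dy\neq 0$) the local free energies vanish, whereas the Bessel-type behaviour responsible for $B_{2g}/(2g(2g-2))$ originates from the pole of $y=\sqrt{-r}/z$ at $z=0$, which is not a zero of $dx$. Your citation is also misplaced: \cite[Proposition~3.15]{CGG22} is the structural statement that homogeneity confines $\Omega_{g,0}-\Omega'_{g,0}$ to degree $(g-1)\delta$ --- i.e.\ it justifies your step (1), not the evaluation of the constant. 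The paper closes this gap in Appendix~\ref{sec:omegag0} by a different mechanism: deform $y$ to $y_t(z)=\sqrt{-r}/z+t/z^{2}$ (which leaves the $R$-matrix unchanged, so $\omega^{r,\epsilon,t}_{g,0}\to\omega^{r,\epsilon}_{g,0}$ as $t\to 0$), apply the $x$--$y$ swap invariance of $\omega_{g,0}$~\cite{ABDKS22,EO07b}, rescale via~\eqref{eqn:scaling-xy}, and take $t\to 0$ to land on a degree-two curve whose free energies are the known $r=2$ Bessel values of~\cite{IKT19}. Without this (or an equivalent honest computation) your determination of the constant is incomplete.
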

\begin{proof}
Just notice that the deformed Theta class $\Theta^{r,\epsilon}_{g,0}$ contains no degree $3g-3$ term (because $\deg\Theta^{r,\epsilon}_{g,0}\leq \frac{(r+2)(g-1)}{r}<3g-3$), and the degree $3g-3$ term of $\Omega^{r,\epsilon}_{g,0}$ is given by $\big(\int_{\Mbar_{g}} \Omega^{r,\epsilon}_{g,0}\big) \cdot {\bf 1}_g=\omega_{g,0}^{r,\epsilon}\cdot {\bf 1}_g$,
where $\omega^{r,\epsilon}_{g,0}$ is defined by~\eqref{def:wg0} for the deformed $r$-Bessel curve $\cC^{r,\epsilon}$.
The Proposition follows from the formula $\omega_{g,0}^{r,\epsilon}= \frac{(-1)^{g-1}}{r^{g-1}}\frac{B_{2g}}{2g(2g-2)}\frac{1}{\epsilon^{2g-2}}$,
and we prove this in Appendix~\ref{sec:omegag0}.
\end{proof}

\subsection{Correspondence between geometric descendents and TR descendents}\label{sec:geo-TR}
We prove the TR-Geo correspondence for the deformed negative $r$-spin theory.
For the geometric side, we have the following reconstruction formula of the descendent theory:
\begin{theorem}  \label{thm:descendentrspin}
	The total descendent potential $\cD^{r,\epsilon}({\bf t};\hbar)$ of the $\epsilon$-deformed negative $r$-spin theory can be reconstructed by the following Kontsevich--Manin type formula:
	$$
	\cD^{r,\epsilon}({\bf t};\hbar) =e^{F_{\rm un}^{r,\epsilon}({\bf t};\hbar)} \cdot \cA^{r,\epsilon}({\bf s(t)};\hbar),
	$$
where $F_{\rm un}^{r,\epsilon}({\bf t};\hbar) =\frac{1}{\hbar^2}\sum_{k,a}\<\phi_a\psi^{k}\>^{r,\epsilon}_{0,1}t_k^a
+\frac{1}{2\hbar^2}\sum_{k,l,a,b}\<\phi_a\psi^{k},\phi_b\psi^l\>^{r,\epsilon}_{0,2}t_k^a t_l^b$,
$\cA^{r,\epsilon}({\bf s};\hbar)$ is the total ancestor potential of $\Theta^{r,\epsilon}$,
and the coordinate transformation ${\bf s}={\bf s(t)}$ is given by ${\bf s}(\givz)=[S^{r,\epsilon}(\givz){\bf t}(\givz)]_{+}$.
Here $S^{r,\epsilon}(\givz)$ is the $S$-matrix of the deformed negative $r$-spin theory defined by
	$$
	\eta(\phi_a,S^{r, \epsilon}(\givz)\phi_b):=\eta(\phi_a,\phi_b)+\big\<\phi_a,\tfrac{\phi_b}{\givz-\psi}\big\>_{0,2}^{r,\epsilon}.
	$$
Moreover, let $J^{r, \epsilon}(\givz):=-r\phi_{r-1}+\<\frac{\phi_a}{\givz-\psi}\>_{0,1}^{r,\epsilon}\phi^a$ be the $J$-function of the deformed negative $r$-spin theory, then it is uniquely determined by the following Picard-Fuchs equation
$$
	(-1+\epsilon\pd_{\epsilon}-r\givz^{r-1}\pd_{\epsilon}^{r})J^{r,\epsilon}(\givz)=0,
$$
with initial condition $[\givz^{-i}]J^{r, \epsilon}(\givz)=\frac{\epsilon^{i+1}}{(i+1)!}\phi_i$, $i=1,\cdots,r-1$,
and $S^{r,\epsilon}(\givz)$ is determined by
$$
	S^{r,\epsilon,*}(\givz)\phi_i=\givz^i\pd^{i+1}_{\epsilon}J^{r, \epsilon}(\givz).
$$
\end{theorem}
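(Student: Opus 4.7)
The plan is to verify the three claims of the theorem using the spectral-curve realization $\cC^{r,\epsilon}=(\mathbb P^1,\,-z^r+\epsilon z,\,\sqrt{-r}/z)$ of the CohFT $\Theta^{r,\epsilon}$ together with the calibration machinery of \S~\ref{sec:descedentforCohFT}.

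For the reconstruction formula, the right-hand side is an instance of the generalized Kontsevich--Manin formula~\eqref{eq:descendent-ancestor} at $\ttau=0$, with the normalization $F_0(0)=F_1(0)=0$. What requires checking is that the stated $S^{r,\epsilon}$ and the $\dvac$-vector encoded by $J^{r,\epsilon}$ via $J^{r,\epsilon}(-\givz)=-\givz S^{*}(-\givz)\dvac(\givz)$ form a bona fide $S$- and $\dvac$-calibration of $\Theta^{r,\epsilon}$. The QDE~\eqref{eqn:QDE} and the QDE of Definition~\ref{def:dvac} then follow from the genus-$0$ topological recursion relations applied to the explicit quantum product $\phi_a*\phi_b=(\epsilon/r)^{m}\phi_{c+1}$ (where $a+b=(r-1)m+c$, $0\le c\le r-2$) read off from~\eqref{eqn:-rspin-corr(0,n)}; once both calibration equations are in place, the displayed formula is immediate.

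For the Picard--Fuchs equation, I will exploit a Laplace-transform representation of $J^{r,\epsilon}$ furnished by the spectral curve; this representation is a special case of Conjecture~\ref{conj:laplace-J}, which is directly verifiable here since $\vac^{r,\epsilon}(\givz)$ is explicitly computable from $(R^{r,\epsilon},T^{r,\epsilon})$ via~\eqref{eqn:T-vacuum}. For an admissible path $\gamma$, setting $J_\gamma(\givz):=\int_\gamma e^{-x/\givz}y\,dx=\sqrt{-r}\int_\gamma e^{-x/\givz}(\epsilon/z-rz^{r-2})dz$, a short calculation using $\pd_\epsilon e^{-x/\givz}=-z\givz^{-1}e^{-x/\givz}$ together with the vanishing of $\int_\gamma e^{-x/\givz}dx$ (since $(\epsilon-rz^{r-1})dz=dx$ and the boundary values of $e^{-x/\givz}$ vanish by admissibility) gives $\pd_\epsilon J_\gamma=\sqrt{-r}\int_\gamma e^{-x/\givz}dz/z$ and inductively $\pd_\epsilon^k J_\gamma=\sqrt{-r}(-\givz)^{1-k}\int_\gamma z^{k-2}e^{-x/\givz}dz$ for $k\ge 2$. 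Substituting these into the differential operator of the theorem makes the integrand collapse to zero, establishing the equation. The initial conditions $[\givz^{-i}]J^{r,\epsilon}=\frac{\epsilon^{i+1}}{(i+1)!}\phi_i$ for $1\le i\le r-1$ are verified by a direct low-order evaluation of the one-point descendents $\<\phi_{r-i}\psi^{i-1}\>_{0,1}^{r,\epsilon}$ using Definition~\ref{def:des-neg-r} and the dimension constraint $\tilde D_{g,n}^{r,-1}(\vec a)\in r\mathbb Z$.

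For $S^{r,\epsilon,*}(\givz)\phi_i=\givz^i\pd_\epsilon^{i+1}J^{r,\epsilon}(\givz)$, I combine the QDE $\pd_{\ttau^j}J=S^{*}\phi_j$ with the observation that iterating $\pd_\epsilon$ on the integral representation at $\ttau=0$ pulls down $z^{i+1}$ against a factor of $\givz^{-(i+1)}$; the identification of $z^{i+1}$-insertions with the flat basis vectors $\phi_j$ (via the residue expansion on the spectral curve) then extracts the desired component $\phi_i$ once the $\givz^i$ pre-factor is accounted for. The main obstacle I foresee is the careful bookkeeping of signs (notably the parity $(-1)^{r-1}$), the normalization constant $\sqrt{-r}$, and the polynomial ambiguity inherent in $\dvac(\givz)$, so that the spectral-curve and CohFT-side normalizations of $J^{r,\epsilon}$ coincide precisely at $\ttau=0$; once these are pinned down via the initial conditions, all three parts of the theorem follow simultaneously.
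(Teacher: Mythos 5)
Your proposal has a genuine gap: it never builds the bridge between the \emph{geometric} descendent invariants of Definition~\ref{def:des-neg-r} (intersection numbers against $\psi$-classes on $\Mbar_{g,\vec a}^{r,-1}$, with extra $\phi_0$-insertions summed over) and the formal objects $\cA^{r,\epsilon}$, $S^{r,\epsilon}$, $J^{r,\epsilon}$. The paper's proof does exactly this by passing to the $\mathbb C^{\times}$-equivariant twisted theory on the extended state space $\bar\Frob=\Frob\oplus\mathbb Q\phi_0$ with $\eta(\phi_0,\phi_0)=\lambda^{-1}$: there $\phi_0$ is a genuine flat unit of a CohFT with non-degenerate pairing, so the descendent--ancestor comparison (the $\psi$ versus $\bar\psi$ bookkeeping under the forgetful maps, as in Coates--Givental) and the quantum differential equation $\givz\pd_\epsilon S^{\rm tw}=\phi_0*_{\epsilon,\rm tw}S^{\rm tw}$ are available, and everything is then pushed to the non-equivariant limit $\lambda\to 0$. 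Without this device (or an equivalent one) your plan fails at two points. First, ``once both calibration equations are in place, the displayed formula is immediate'' is not true: a calibration only \emph{defines} a formal descendent potential via~\eqref{eq:descendent-ancestor}; the theorem is precisely that the geometric $\cD^{r,\epsilon}$ of~\eqref{def:partitionfunction-r} coincides with it, and in the untwisted theory $\phi_0\notin\Frob$ pairs to zero with everything, so neither the splitting axiom with $\phi_0$-insertions nor the TRR you want to invoke is directly at your disposal. Second, the $\epsilon$-direction is not a flat direction of $\Theta^{r,\epsilon}$, so the QDE you need to characterize $S^{r,\epsilon}$ and $J^{r,\epsilon}$ as functions of $\epsilon$ does not come from~\eqref{eqn:QDE}; it has to be produced by the limit $\lim_{\lambda\to 0}\phi_0*_{\epsilon,\rm tw}|_{\Frob}$, which is where the explicit recursion $\pd_\epsilon(S_{k+1})^b_a=(S_k)^{b-1}_a+\delta_{b,1}\tfrac{\epsilon}{r}(S_k)^{r-1}_a$ and hence the Picard--Fuchs equation actually come from in the paper.

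Relatedly, your use of Conjecture~\ref{conj:laplace-J} is circular. In the paper the logic runs: prove geometrically (via the twisted theory) that $J^{r,\epsilon}$ satisfies the stated ODE and initial conditions; separately show the oscillatory integral $\int_{\gamma_k}e^{x/\givz}y\,dx$ satisfies the same ODE and initial conditions; conclude they agree, thereby \emph{verifying} the conjecture in this example. You instead want to \emph{derive} the ODE for the geometric $J$-function from the integral by assuming the conjecture, but the only way to verify the conjecture here is to already know the geometric side. Your integration-by-parts computation is correct as a statement about the integral (it reproduces the calculation in \S\ref{sec:geo-TR}, up to the sign caveat that with $e^{-x/\givz}$ rather than $e^{x/\givz}$ the operator you annihilate is the one with $\givz\mapsto-\givz$, which matters for even $r$), but it constrains only the spectral-curve side, not the intersection numbers the theorem is about.
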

\noindent The proof of this Theorem is given in the Appendix~\ref{sec:des-r}.

For the TR side, we have the following Theorem.
\begin{theorem}
Let $\tilde \gamma_k=e^{\frac{2k\pi {\bf i}}{r}}[0,\infty)\subset \mathbb C$ and $\gamma_k=\tilde\gamma_0-\tilde \gamma_k$, we have
\begin{align}
\eta(J^{r,\epsilon}(\givz),\Phi(\gamma_k,\givz))=&\int_{\gamma_k}e^{x(z)/\givz}y(z)dx(z),\label{eqn:int-J-r}\\
\eta(\phi^i,S^{r,\epsilon}(\givz)\Phi(\gamma_k,\givz))=&\int_{\gamma_k}e^{x(z)/\givz}\zeta^i(z)dx(z),\label{eqn:int-S-r}
\end{align}
where
$$\Phi(\gamma_k,\givz)= \frac{\sqrt{-r}}{r}\sum_{a=1}^{r-1}  \big(1-e^{\frac{2ka\pi{\bf i}}{r}}\big) \Gamma\Big(\frac{a}{r}\Big)\givz^{\frac{a}{r}} \phi^a.$$
\end{theorem}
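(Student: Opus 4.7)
The plan is to define $\Phi(\gamma_k,\givz)$ by the explicit formula in the statement and verify both identities. Since $\Phi(\gamma_k,\givz)$ is independent of $\epsilon$, the $S$-$J$ relation $S^{r,\epsilon,*}(\givz)\phi_i = \givz^i\partial_\epsilon^{i+1}J^{r,\epsilon}(\givz)$ from Theorem~\ref{thm:descendentrspin} (applied to $\phi^i = \phi_{r-i}$) reduces equation~\eqref{eqn:int-S-r} to the identity
\[ \givz^{r-i}\partial_\epsilon^{r-i+1}\int_{\gamma_k}e^{x(z)/\givz}y(z)\,dx(z) = \int_{\gamma_k}e^{x(z)/\givz}\zeta^i(z)\,dx(z), \]
which is verified by a Leibniz-rule computation using $\partial_\epsilon e^{x/\givz} = (z/\givz)e^{x/\givz}$, $\partial_\epsilon y = 0$, $\partial_\epsilon(dx) = dz$, combined with the explicit form of $\zeta^i$ as determined by the $R$-matrix of the $\epsilon$-deformed $r$-Bessel curve modulo exact pieces. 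The remaining substantive work is equation~\eqref{eqn:int-J-r}.

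\textbf{Verification of~\eqref{eqn:int-J-r} via Picard-Fuchs.} Set $I_k(\epsilon,\givz) := \int_{\gamma_k}e^{x(z)/\givz}y(z)dx(z)$. By Theorem~\ref{thm:descendentrspin}, $J^{r,\epsilon}$ is uniquely characterized by $L J := (-1+\epsilon\partial_\epsilon - r\givz^{r-1}\partial_\epsilon^r)J = 0$ together with the initial data $[\givz^{-i}]J^{r,\epsilon} = \epsilon^{i+1}/(i+1)!\cdot\phi_i$ for $i = 1,\ldots,r-1$ (and constant term $-r\phi_{r-1}$). The LHS of~\eqref{eqn:int-J-r} trivially solves $L\cdot = 0$ since $\Phi(\gamma_k,\givz)$ is $\epsilon$-independent; it suffices to check the RHS satisfies the same ODE with matching asymptotics in $\givz$. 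Direct application of the Leibniz rule to the integrand gives the key identity
\[ L\bigl[e^{x/\givz}y\,dx\bigr] = \sqrt{-r}\,e^{x/\givz}\Bigl[-r(r-1)z^{r-2} + \tfrac{(dx/dz)^2}{\givz}\Bigr]dz, \]
and writing $(dx/dz)^2 dz/\givz = (dx/dz)\,d(e^{x/\givz})$ followed by integration by parts (using $d(dx/dz) = -r(r-1)z^{r-2}dz$ and the cancellation of boundary contributions at $z=0,\infty$) shows $\int_{\gamma_k}L[e^{x/\givz}y\,dx] = 0$, i.e. $LI_k = 0$.

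\textbf{Matching initial conditions.} The asymptotic comparison is carried out using the fundamental Gamma integral
\[ \int_{\gamma_k}e^{-z^r/\givz}z^{a-1}dz = (1-e^{2ka\pi i/r})\,\frac{\givz^{a/r}}{r}\,\Gamma\!\Bigl(\tfrac{a}{r}\Bigr),\qquad a\geq 1, \]
(with the $a=0$ integral vanishing after subtraction), which is derived by substituting $z = e^{2k\pi i/r}t$ and $u = t^r/\givz$. Expanding $e^{\epsilon z/\givz}$ as a power series in $\epsilon$ turns $I_k$ into an explicit double series. On the other hand, pairing $J^{r,\epsilon}$ against $\Phi(\gamma_k,\givz) = \frac{\sqrt{-r}}{r}\sum_{a=1}^{r-1}(1-e^{2ka\pi i/r})\Gamma(a/r)\givz^{a/r}\phi^a$ gives $\eta(J^{r,\epsilon},\Phi(\gamma_k,\givz)) = \sum_{b=1}^{r-1}J_b\,\Phi^{r-b}$ where $J = \sum J_b\phi^b$, and the initial-data prescription for $J_1,\ldots,J_{r-1}$ produces precisely the leading fractional-power terms computed from the Gamma integral, confirming the match.

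\textbf{Main obstacle.} The principal subtlety is the boundary behavior at $z = 0$: for $\epsilon\neq 0$, the $1$-form $y\,dx = \sqrt{-r}(-rz^{r-2} + \epsilon/z)\,dz$ has a simple pole at the origin, so each of $\int_{\tilde\gamma_0}e^{x/\givz}y\,dx$ and $\int_{\tilde\gamma_k}e^{x/\givz}y\,dx$ diverges logarithmically and only the difference $\int_{\gamma_k}$ is finite, with the divergences canceling because both paths share the same local behavior at $0$. This regularization must be tracked carefully when justifying the integrations by parts in the Picard-Fuchs verification and the boundary-term cancellations $[xye^{x/\givz}]_{\gamma_k} = 0$ and $[(dx/dz)e^{x/\givz}]_{\gamma_k} = 0$. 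A second technical point is the bookkeeping required to interleave the two series arising from the $-rz^{r-2}$ and $\epsilon/z$ contributions to $y\,dx$ when matching the initial data coefficients.
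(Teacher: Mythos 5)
Your proposal is correct and follows essentially the same route as the paper: equation~\eqref{eqn:int-J-r} is obtained by showing the oscillatory integral satisfies the Picard--Fuchs equation (differentiation under the integral plus integration by parts, with boundary terms killed by admissibility) and matching its $\epsilon$-expansion against the initial data via the Gamma-function integral, while~\eqref{eqn:int-S-r} follows from $S^{r,\epsilon,*}(\givz)\phi_i=\givz^i\pd_\epsilon^{i+1}J^{r,\epsilon}(\givz)$ together with the explicit formula $\zeta^i=\sqrt{-r}\,z^{r-i-1}/x'(z)$. Your explicit tracking of the simple pole of $y\,dx$ at $z=0$ and the cancellation of the logarithmic divergences between $\tilde\gamma_0$ and $\tilde\gamma_k$ is a welcome refinement of a point the paper leaves implicit, but it does not change the argument.
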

\begin{proof}
Clearly, for $k=1,\cdots,r-1$, $\gamma_k$ are admissible paths associated with $e^{x(z)/\givz}$.
We consider the integral
$$\textstyle
\mathcal I^{r,\epsilon}_k(\givz):= \int_{\gamma_k} e^{x(z)/\givz} y(z)dx(z).
$$
Then by admissible condition,
$\mathcal I^{r,\epsilon}_k(\givz)= -\givz \int_{\gamma_k} e^{\frac{-z^r+\epsilon \cdot z}{\givz}} dy(z)$ and it satisfies the Picard-Fuchs equation
$$
\big(-1+\epsilon\pd_{\epsilon} - r \givz^{r-1} \partial_{\epsilon} ^{r} \big)\,  \mathcal I^{r,\epsilon}_k(\givz) =0.
$$
Now we compute the Taylor expansion of $\mathcal I^{r,\epsilon}_k(\givz)$:
$$\textstyle
\partial_\epsilon^{m+1} \mathcal I^{r,\epsilon}_k(\givz) |_{\epsilon =0}
=\givz^{\frac{m}{r}-m}\frac{\sqrt{-r}}{r}\int_{\gamma_k} z^{m-r}\, e^{- {z^r }}  dz^r
= \givz^{\frac{(1-r)m}{r} } \frac{\sqrt{-r}}{r} \big(1-e^{\frac{ 2km\pi {\bf i} }{r}}\big) \Gamma\big(\frac{m}{r}\big),
$$
where we have used $\Gamma(\alpha)=\int_{\tilde \gamma_0} x^{\alpha-1} e^{-x} dx   $. Hence we obtain
$$\textstyle
\mathcal I^{r,\epsilon}_k(\givz)  = \frac{\sqrt{-r}}{r} \sum_{a=1}^{r-1}  \big(1-e^{\frac{ 2ka\pi  {\bf i} }{r}}\big)  \Gamma\big(\frac{a}{r}\big)\givz^{\frac{a}{r}}  I_a(\givz),
$$
where
$$\textstyle
I_a(\givz)=-r\cdot\delta_{a,r-1}+\sum_{n\geq 0} \frac{\Gamma(\frac{rn+a }{r})}{\Gamma(\frac{a}{r})} \, \frac{\epsilon^{rn+a+1}}{(rn+a+1)!} \, \givz^{-n(r-1)-a}.
$$
It is clear that $J^{r,\epsilon}(\givz)=\sum_{a=1}^{r-1}I_a(\givz)\phi_a$ since the right-hand side solves the Picard-Fuchs equation and satisfies the initial condition as the $J^{r,\epsilon}(\givz)$ does, this proves equation~\eqref{eqn:int-J-r}.

Furthermore, notice that on one hand,
$$\textstyle
\eta(\phi_i,S^{r,\epsilon}(\givz)\Phi(\gamma_k,\givz))
=\givz^i\partial_\epsilon^{i+1} \mathcal I^{r,\epsilon}_k(\givz)
= -\int_{\gamma_k} e^{\frac{-z^r+\epsilon \cdot z}{\givz}}z^{i+1} dy(z)
=\sqrt{-r}\int_{\gamma_k} e^{\frac{-z^r+\epsilon \cdot z}{\givz}} \frac{z^{i-1}}{x'(z)} dx(z),
$$
and on the other hand, by the definition equation~\eqref{def:zeta}, it is easy to see $\zeta^{\beta}=\sqrt{-r}\frac{(z^{\beta})^2}{z-z^\beta}$. Then by using the $\widetilde\Psi$-matrix~\eqref{eqn:Psi-r} we have $\zeta^i=\sqrt{-r}\frac{z^{r-i-1}}{x'(z)}$.
This proves equation~\eqref{eqn:int-S-r}.
\end{proof}
We note here that it is easy to see the matrix $(1-e^{\frac{ 2ka\pi  {\bf i} }{r}})_{k,a=1,\cdots,r-1 }$ is invertable, in fact, it has determinant $\prod_{a=1}^{r-1}(1-e^{\frac{ 2ka\pi  {\bf i} }{r}})\prod_{1\leq a< b\leq r-1}(e^{\frac{ 2kb\pi  {\bf i} }{r}}-e^{\frac{ 2ka\pi  {\bf i} }{r}})\ne 0$.
Hence equation~\eqref{eqn:des-EO}, \eqref{eqn:des-EO-(0,2)} and ~\eqref{eqn:int-J-r} determine all the descendent invariants of the deformed negative $r$-spin theory.
\subsection{Proof of the deformed negative $r$-spin Witten conjecture}
We choose the local coordinate $\lambda$ to be the the solution of $x(z)=-\lambda^{r}$ satisfying $\lim_{z\to \infty}\lambda/z=1$.
By considering the expansion of $\omega_{g,n}$ given by equation~\eqref{eqn:stable-omega-gn-boundary},
we get the correlators $\<-\>_{g,n}^{\lambda}$ and generating series $Z({\bf p};\hbar)$ given by equation~\eqref{eqn:generating-series-m-KP-F-A-TR}.
To specify the case of negative $r$-spin, we denote the generating series by
$Z^{r,\epsilon}({\bf p};\hbar)$.
By Theorem~\ref{KP-TR}, $Z^{r,\epsilon}({\bf p};\hbar)$ is a tau-function of KP hierarchy,
and by Corollary~\ref{cor:red-Z-x}, we know that $Z^{r,\epsilon}({\bf p};\hbar)$ does not depend on $p_{rm}$, $m\geq 1$, thus $Z^{r,\epsilon}({\bf p};\hbar)$ is a tau-function of $r$KdV hierarchy.

Notice that $\chi^i=\int_{z=\infty}^{z}d\chi_0^i(z)$ (see equation~\eqref{eqn:chi-zeta}) satisfies $\eta(\phi^i,\Phi(\gamma_k,\givz))=\int_{\gamma_k}e^{x(z)/\givz}\chi^{i}dx(z)$, and $\Res_{\lambda\to \infty}x(z)^{n}d\chi^i=0$ for $n\geq 0$, this uniquely determines $\chi^i=-\frac{1}{\sqrt{-r}}\lambda^{-i}$, $i=1,\cdots,r-1$.
Therefore, for $n\geq 0$,
$$
\chi_n^i=-\frac{(-1)^{n}}{\sqrt{-r}}\frac{\Gamma(n+\frac{i}{r})}{\Gamma(\frac{i}{r})}\lambda^{-i-rn}.
$$
This gives the coordinate transformation~\eqref{eqn:wittentime} according to the definition equation~\eqref{eqn:t=tp}.
It is easy to check equation~\eqref{eqn:vanif2} for this case, thus by Theorem~\ref{thm:D-Z} we know that
$$
e^{\sum_{g\geq 2}\frac{\hbar^{2g-2}}{(\sqrt{-r}\epsilon)^{2g-2}}\frac{B_{2g}}{2g(2g-2)}}\cdot\cD^{r,\epsilon}(\hbar\cdot{\bf t};\hbar)|_{\bf t=t(p)}
=e^{\frac{1}{\hbar}\sum_{k,a}\<\phi_a\psi^{k}\>^{r,\epsilon}_{0,1}t_k^a}
\cdot Z^{r,\epsilon}({\bf p};\hbar),
$$
where the factor on the left-hand side of above equation comes from the difference between $\cD^{r,\epsilon}({\bf t};\hbar)$ with the total descendent potential defined by topological recursion.
By Theorem~\ref{KP-TR}, the deformed negative $r$-spin Witten conjecture (Theorem~\ref{conj:negative-witten-conjecture})  is proved.

\begin{remark}
The original negative $r$-spin Witten conjecture was also proved independently by Alexandrov, Bychkov, Dunin-Barkowski, Kazarian, and Shadrin~\cite{ABDKS23}.
\end{remark}

\section{KdV integrability for the Hurwitz space $M_{1,1}$ via the Weierstrass curve}
\label{sec:TR-weierstrass}

In this section we first review the identification of the CohFT associated with the Hurwitz space $M_{1,1}$ and that associated with the Weierstrass curve $\cC=(\Sigma_{\tau},x,y)$, where $\Sigma_{\tau}$ is the elliptic curve with moduli parameter $\tau$, and
$$
x(z)=\tfrac{\epsilon^2}{4\pi {\bf i}}\cdot \big(\wp(z,\elltau)+G_2(\elltau)\big),\qquad
y(z)=\sqrt{2\pi {\bf i}}\cdot z.
$$
Here $\wp$ is the Weierstrass P-function, and $G_2$ is the weight-$2$ holomorphic Eisenstein series.
The choice of $y(z)$ is made to make the Frobenius structure concise, see \S \ref{sec:CohFT-Weierstrass}.
Next, we establish the correspondence between the descendant invariants of the Hurwitz space
$M_{1,1}$ and those of the Weierstrass curve, proving the first part of Theorem~\ref{thm:elliptic-KdV}.
Finally, we demonstrate that the resulting non-perturbative total descendant potential is a tau-function of the KdV hierarchy, completing the proof of the second part of Theorem~\ref{thm:elliptic-KdV}.

\subsection{Frobenius manifold and CohFT associated with the Weierstrass curve} \label{sec:CohFT-Weierstrass}
In \S \ref{sec:TR}, we have introduced how a CohFT is constructed from a spectral curve, in this section, we show details for the Weierstrass curve.
Before giving the explicit computations, we introduce the following deformation $\cC^{\ttau}=(\Sigma_{\tau},x_{\ttau},y_{\ttau})$ of $\cC$ with parameters $\{\ttau^i\}_{i=0}^2$,  where
$$
x_{\ttau}(z)=\tfrac{(\epsilon+\ttau^1)^2}{4\pi {\bf i}}\cdot \big(\wp(z,\elltau+\ttau^2)+G_2(\elltau+\ttau^2)\big)+\ttau^0,\qquad
y_{\ttau}(z)=\sqrt{2\pi {\bf i}}\cdot z.
$$
Clearly, $\cC=\cC^{\ttau}|_{\ttau=0}$.
We will prove the parameter space defines a Frobenius manifold with flat coordinate $\ttau$ and we consider $\Frob = \sspan\{\phi_i=\pd_{\ttau^i}\}_{i=0}^{2}$ as the state space
~\footnote{In this section, we slightly shift the superscripts of the flat coordinate $\ttau^{i}$ and the subscripts of the flat basis $\{\phi_i=\pd_{\ttau^i}\}$ such that $i$ is start from $0$.}.

\subsubsection{Critical points computations}
It is well-known that the  Weierstrass P-function  $\wp$ satisfies
 the differential equation
 $$\textstyle
\wp'^2=4\wp^3-60G_4\wp-140G_6=4\prod_{\beta=1}^{3}(\wp-u^\beta),
$$
where for $k\geq 1$, $G_{2k}(\elltau)=\sum_{m,n\in\mathbb Z^2\setminus \{(0,0)\}}\frac{1}{(m+n\elltau)^{2k}}$ is the holomorphic Eisenstein series of weight $2k$,
and $u^\beta$, $\beta=1,2,3$, satisfies
$$
u^1+u^2+u^3=0,\qquad
u^1u^2+u^1u^3+u^2u^3=-15G_4,\qquad
u^1u^2u^3=35G_6.
$$
Thus the critical points of $x_{\ttau}(z)$ are  given by
$$\textstyle
z^1=\frac{1}{2},\qquad
z^2=\frac{\elltau+\ttau^2}{2},\qquad
z^3=\frac{1+\elltau+\ttau^2}{2},
$$
with critical values $x^\beta=x_{\ttau}(z^\beta)=\frac{(\epsilon+\ttau^1)^2}{4\pi {\bf i}}\big(u^\beta(\tau+\ttau^2)+G_2(\tau+\ttau^2))+\ttau^0$, $\beta=1,2,3$.

The critical values $\{x^\beta\}$ can be considered as the canonical coordinates, such that the canonical basis   is given by $e_\beta=\frac{\pd}{\pd x^\beta}$.
The quantum product is   $e_\beta*_{\ttau}e_{\gamma}=\delta_{\beta\gamma}e_\beta$,
and symmetric bilinear form is determined by
$$\textstyle
\Delta_\beta^{-1}=\eta(e_\beta,e_\beta)=\frac{y'_{\ttau}(z^\beta)^2}{x''_{\ttau}(z^\beta)}=\frac{(4\pi {\bf i})^2}{(\epsilon+\ttau^1)^2\cdot (12 u^\beta(\tau+\ttau^2)^2-60G_4(\tau+\ttau^2))}.
$$
In the following, we will use the following notations:
$$
\tilde t^1=t^1+\epsilon,\qquad
G_{2k}=G_{2k}(\tau+\ttau^2),\qquad
u^\beta=u^\beta(\ttau+\ttau^2).
$$
\subsubsection{The $\widetilde\Psi$ matrix}
We now compute the transformation matrix between the canonical basis and the basis $\{\phi_i\}$.
By the Ramanujan identities of Eisenstein series,
$$
\pd_{\elltau}G_2=\frac{5G_4-G_2^2}{2\pi {\bf i}},\qquad
\pd_{\elltau}G_4=\frac{7G_6-2G_2G_4}{\pi{\bf i}},\qquad
\pd_{\elltau}G_6=\frac{30G_4^2-21G_2G_6}{7\pi{\bf i}},
$$
one can compute $\widetilde\Psi^\beta_i:=\frac{\pd x^\beta}{\pd \ttau^i}$, $i=0,1,2$:
$$
(\widetilde\Psi^\beta_i)=\left(\begin{array}{ccc}
1 & \frac{\tilde\ttau^1}{2\pi {\bf i}}(G_2+u^1) & \frac{(\tilde\ttau^1)^2}{4\pi {\bf i}}\big(\pd_{\elltau}(G_2)+\frac{15u^1\pd_{\elltau}(G_4)+35\pd_{\elltau}(G_6)}{3(u^1)^2-15G_4}\big)\\
1 & \frac{\tilde\ttau^1}{2\pi {\bf i}}(G_2+u^2) & \frac{(\tilde\ttau^1)^2}{4\pi {\bf i}}\big(\pd_{\elltau}(G_2)+\frac{15u^2\pd_{\elltau}(G_4)+35\pd_{\elltau}(G_6)}{3(u^2)^2-15G_4}\big) \\
1 & \frac{\tilde\ttau^1}{2\pi {\bf i}}(G_2+u^3) & \frac{(\tilde\ttau^1)^2}{4\pi {\bf i}}\big(\pd_{\elltau}(G_2)+\frac{15u^3\pd_{\elltau}(G_4)+35\pd_{\elltau}(G_6)}{3(u^3)^2-15G_4}\big)
\end{array}\right).
$$

\subsubsection{The symmetric $2$-form, quantum product, and the Euler vector field}
By the change of coordinate, one can compute the symmetric $2$-form of $\phi_i=\pd_{\ttau^i}$,
which gives
  $$
\eta(\phi_i,\phi_j)=\delta_{i+j,2}.
$$
This means $\ttau$ is a flat coordinate and $\{\phi_i\}_{i=0}^{2}$ is the corresponding flat basis.

By the transformation matrix $\widetilde\Psi$ and the quantum product $e_\beta*_{\ttau}e_\gamma=\delta_{\beta\gamma}e_\beta$,
one can compute the quantum product of $\phi_i$ under this flat basis and obtain $\phi_0*_{\ttau}\phi_i=\phi_i$,
\begin{align*}
\phi_1*_{\ttau}\phi_1=&\textstyle\,  \phi_2+\frac{3}{2\pi {\bf i}}\cdot\frac{\tilde\ttau^1}{1!}\cdot G_2\cdot \phi_1
+\frac{3}{2\pi {\bf i}}\cdot\frac{(\tilde\ttau^1)^2}{2!}\cdot \pd_{\elltau}(G_2) \cdot \phi_0, \\
\phi_1*_{\ttau}\phi_2=&\textstyle\,   \frac{3}{2\pi {\bf i}}\cdot\frac{(\tilde\ttau^1)^2}{2!}\cdot \pd_{\elltau}(G_2) \cdot \phi_1
+\frac{3}{2\pi {\bf i}}\cdot\frac{(\tilde\ttau^1)^3}{3!}\cdot \pd_{\elltau}^2(G_2)\cdot\phi_0 ,\\
\phi_2*_{\ttau}\phi_2=&\textstyle\,  \frac{3}{2\pi {\bf i}}\cdot\frac{(\tilde\ttau^1)^3}{3!}\cdot \pd_{\elltau}^2(G_2)\cdot\phi_1
+\frac{3}{2\pi {\bf i}}\cdot\frac{(\tilde\ttau^1)^4}{4!}\cdot \pd_{\elltau}^3(G_2)\cdot\phi_0 .
\end{align*}
Particularly, we see this Frobenius manifold has a flat unit ${\bf 1}=\phi_0$.

Furthermore, let $E=\sum_\beta x^\beta\pd_{x^\beta}$,  then one has
$$
E=\ttau^0\frac{\pd }{\pd \ttau^0}+\frac{1}{2}(\ttau^1+\epsilon)\frac{\pd }{\pd \ttau^1}.
$$
It is easy to check that $[E,\phi_i]=(\frac{i}{2}-1)\phi_i$ and $[E,\phi_i*_{\ttau}\phi_j]=(\frac{i+j}{2}-1)\phi_i*_{\ttau}\phi_j$,
which implies that $E$ satisfies equation~\eqref{eqn:Euler-qp} and \eqref{eqn:Euler-eta} with $\delta=1$.
In other words, the Frobenius manifold $\Frob$ is homogeneous of conformal dimension $1$ with respect to $E$ and the grading operator $\mu$ is given by $\mu(\phi_i)=(\frac{i}{2}-\frac{1}{2})\phi_i$, $i=0,1,2$.

This Frobenius manifold coincides with the Frobenius manifold structure on the Hurwitz space $M_{1,1}$ introduced by Dubrovin in~\cite[Example 5.6]{Dub96}.

\subsubsection{CohFT}
By Givental's reconstruction procedure, starting from a homogeneous semi-simple Frobenius manifold containing a flat unit,
one can consider the QDE:
\beq\label{eqn:QDE-elliptic}
\givz\pd_{\ttau^i}\mathcal S(\givz)=\phi_{i}*_{\ttau}\mathcal S(\givz).
\eeq
The QDE has a fundamental solution in the form $\Psi^{-1}R^{\ttau}(\givz)e^{X/\givz}$, where $R^{\ttau}(\givz)$ is a formal matrix valued power series of form $R^{\ttau}(\givz)=\id +\sum_{k\geq 1}R^{\ttau}_k\givz^k$ and satisfies $R^{\ttau,*}(-\givz)R^{\ttau}(\givz)=\id$.
Moreover, the matrix $R^\ttau(\givz)$ is uniquely determined by the Euler equation~\cite{Giv01a,Dub96}:
\beq\label{eqn:euler-R}
(\givz\pd_{\givz}+\sum x^\beta\pd_{x^\beta})R^{\ttau}(\givz)=0.
\eeq
On the other hand,  for the CohFT associated with curve $\cC^{\ttau}$, it is proved in~\cite{DNOPS19} that its $R$-matrix $\tilde R^{\ttau}(\givz)$ defined by the topological recursion~\eqref{def:EO-R} satisfies the Euler equation~\eqref{eqn:euler-R}, and thus $\tilde R^{\ttau}(\givz)=R^{\ttau}(\givz)$.
This proves that the CohFT for the topological recursion coincide with the one reconstructed from the Frobenius manifold structure on the Hurwitz space $M_{1,1}$ via Givental's reconstruction procedure.
We denote by $\cA^{\ttau}({\bf s};\hbar)$ the total ancestor potential for this CohFT.

\subsection{Correspondence between geometric descendents and TR descendents}\label{sec:TR-Geo-Weierstrass}  \label{Ge-TRforellipticcurve}
In this subsection, we prove the first part of Theorem~\ref{thm:elliptic-KdV}, establishing the relation of geometric descendents with the TR descendents.

We start with the definition of the descendent invariant of the Hurwitz space $M_{1,1}$.
We consider the fundamental solution of QDE~\eqref{eqn:QDE-elliptic} in the form
$$\textstyle
S^{\ttau}(\givz)=\id+\sum_{k\geq 1}S^{\ttau}_k\givz^{-k},
$$
satisfying $S^{\ttau,*}(-\givz)S^{\ttau}(\givz)=\id$.
By assuming homogeneity condition~\cite{Giv01a}
\beq\label{eqn:S-euler-elliptic}
(\givz\pd_\givz + E)S^{\ttau}(\givz)=[S^{\ttau}(\givz),\mu],
\eeq
the solution $S^{\ttau}(\givz)$ is uniquely determined up to some constants.
For this case, there is only one constant that needs to be fixed.
Precisely, by solving QDE $\pd_{\ttau^i}S^{\ttau}_1=\phi_i*$,
and assuming $E(S^{\ttau}_1)=S^{\ttau}_1+[S^{\ttau}_1,\mu]$, we have
$$
S^{\ttau}_1=\left(\begin{array}{ccc}
\ttau^0 &  \frac{3}{2\pi {\bf i}}\cdot\frac{(\tilde\ttau^1)^3}{3!}\cdot \pd_{\elltau}(G_2) & \frac{3}{2\pi {\bf i}}\cdot\frac{(\tilde\ttau^1)^4}{4!}\cdot \pd_{\elltau}^2(G_2) \\
\tilde\ttau^1 & \ttau^0+\frac{3}{2\pi {\bf i}}\cdot\frac{(\tilde\ttau^1)^2}{2!}G_2 &  \frac{3}{2\pi {\bf i}}\cdot\frac{(\tilde\ttau^1)^3}{3!}\cdot \pd_{\elltau}(G_2)  \\
\elltau+\ttau^2+c & \tilde\ttau^1 & \ttau^0
\end{array}\right),
$$
with one undetermined constant $c$.
For simplicity, we take the constant $c$ to be $0$.
For $k\geq 2$, by using QDE~\eqref{eqn:QDE-elliptic}, equation~\eqref{eqn:S-euler-elliptic} gives
\beq\label{eqn:euler-S}\textstyle
(k+\mu_j-\mu_i)(S^{\ttau}_k)^i_j=\sum_{a=0}^{2}(E*_{\ttau})^i_a(S^{\ttau}_{k-1})^a_j,
\eeq
notice that $k+\mu_j-\mu_i\geq k+(-\frac{1}{2})-\frac{1}{2}>0$, $(S^{\ttau}_k)^i_j$ is uniquely determined by induction.
The $J$-function $J^{\ttau}(\givz)$ is given by $J^{\ttau}(\givz)=\givz S^{\ttau,*}(\givz){\bf 1}$.
We get the total descendent potential $\cD({\bf t};\hbar)$ by the Kontsevich--Manin formula~\eqref{eq:descendent-ancestor}.

Now we are ready to prove the first part of Theorem~\ref{thm:elliptic-KdV}.
\begin{proof}[Proof of the first part of Theorem~\ref{thm:elliptic-KdV}]
Let $\lambda$ be the local coordinate near the boundary $z=0$ defined by $x_{\ttau}(z)=\frac{\lambda^2}{2}$ and satisfies $\lim_{z\to 0}\lambda\cdot z=-\frac{\tilde\ttau^1}{\sqrt{2\pi{\bf i}}}$. We define
$$
\chi^i=\delta_{i,1}\lambda^{-1},\qquad
\chi^i_k:=D^k\chi^i=\delta_{i,1}(2k-1)!!\lambda^{-2k-1},\quad k\geq 0.
$$
Now we consider the expansion of $d\zeta^{\bar \beta}$ near the boundary, by direct computation we have (there is no $\lambda^{-2k}$ terms in the expansion because $\Res_{z=0} x_{\ttau}(z)^{k}d\zeta^{\bar\beta}(z)=0$, $\forall k\geq 0$)
\beq\label{eqn:zeta-chi-wei}
d\zeta^{\bar\beta}=\sum_{k\geq 0}\Psi^{\bar\beta}_i(\tilde S^{\ttau}_k)^{i}_1d\chi^1_k
=\Psi^{\bar\beta}_1d\chi_0^1+\sum_{k\geq 1}\Psi^{\bar\beta}_i(\tilde S^{\ttau}_k)^{i}_1d\chi^1_k.
\eeq
To determine $\tilde S^{\ttau}_k$ for $k\geq 1$, we need the following equation:
\beq \label{eqn:Dzeta-Psi-d}\textstyle
(x-x^\beta)dD\zeta^{\bar\beta}(z)=\sum_{\gamma}\big(\mu+\tfrac{3}{2}\big)^{\bar\beta}_{\bar\gamma}\cdot d\zeta^{\bar\gamma}(z),
\qquad \beta=1,\cdots,N.
\eeq
The proof of this equation is straightforward since all the terms are explicit, we omit the details here.
By this equation, and the expansion of $d\zeta^{\bar\beta}$, we have
$$
\frac{\lambda^2}{2}\sum_{k\geq 0}\Psi^{\bar\beta}_i(\tilde S^{\ttau}_k)^{i}_1d\chi^1_{k+1}
-\sum_{k\geq 0}\sum_{i}\Psi^{\bar\beta}_i(\mu_i+\tfrac{1}{2})(\tilde S^{\ttau}_k)^{i}_1d\chi^1_k
=x^\beta\sum_{k\geq 0}\Psi^{\bar\beta}_i(\tilde S^{\ttau}_k)^{i}_1d\chi^1_{k+1}.
$$
By comparing the coefficient of $\lambda^{-2k-1}$, we get
$$
\Psi^{\bar\beta}_i(k+\mu_i)(\tilde S^{\ttau}_k)^{i}_1 =\Psi^{\bar\beta}_i(E*)^{i}_{j}(\tilde S^{\ttau}_{k-1})^{j}_1.
$$
This coincides with equation~\eqref{eqn:euler-S} and thus $(\tilde S^{\ttau}_n)^{\bar\beta}_1=(S^{\ttau}_n)^{\bar\beta}_1$.
The first part of the Theorem~\ref{thm:elliptic-KdV} follows immediately.
\end{proof}
This result can be also written in the form of Laplace transform.
We assume $\givz~\in~\frac{(\ttau^1)^2}{2\pi{\bf i}}(0,\infty)$,
then $\gamma=[0,1)$ is an admissible path associated with $e^{-x_{\ttau}(z)/\givz}$ and we have
$$
\givz\int_{\gamma}e^{-x_{\ttau}(z)/\givz}d\chi_k^{1}=\int_{\gamma}e^{-x_{\ttau}(z)/\givz}\chi_k^{1}dx=\sqrt{2\pi\givz}\cdot (-\givz)^{-k}.
$$
By equation~\eqref{eqn:zeta-chi-wei}, and by using $(\tilde S^{\ttau}_n)^{\bar\beta}_1=(S^{\ttau}_n)^{\bar\beta}_1$,
$$
\givz\int_{\gamma}e^{-x_{\ttau}(z)/\givz}d\zeta^{\bar\beta}=\eta(\bar e_\beta,S^{\ttau}(-\givz)\Phi(\gamma,-\givz)),
$$
where $\Phi(\gamma,-\givz)=\sqrt{2\pi\givz}\cdot\phi_1$.
Furthermore, notice that
$$
\int_{\gamma} e^{-x_{\ttau}(z)/\givz} y_{\ttau}dx_{\ttau}
=\givz\int_{\gamma} e^{-x_{\ttau}(z)/\givz} dy_{\ttau}
=\givz\int_{\gamma} e^{-x_{\ttau}(z)/\givz}\cdot \frac{\sqrt{2\pi{\bf i}}}{x'_{\ttau}(z)}\cdot dx_{\ttau}
=\givz^2\int_{\gamma} e^{-x_{\ttau}(z)/\givz}d\bigg(\frac{\sqrt{2\pi{\bf i}}}{x'_{\ttau}(z)}\bigg),
$$
and
$$
\sum_{\beta}\Delta_{\beta}^{-\frac{1}{2}}d\zeta^{\bar \beta}
=-\sum_{\beta}\frac{y'_{\ttau}(z^\beta)}{x''_{\ttau}(z^\beta)}\mathop{\Res}_{z'=z^\beta}\frac{B(z',z)}{z'-z^\beta}
=-\sqrt{2\pi{\bf i}}\sum_{\beta}\mathop{\Res}_{z'=z^\beta}\frac{B(z',z)}{x'_{\ttau}(z')}
=d\bigg(\frac{\sqrt{2\pi{\bf i}}}{x'_{\ttau}(z)}\bigg).
$$
This proves
$$
\int_{\gamma} e^{-x_{\ttau}(z)/\givz} y_{\ttau}dx_{\ttau}=\givz\cdot \eta(\phi_0,  S^{\ttau}(-\givz)\Phi(\gamma,-\givz))=-\eta(J^{\ttau}(-\givz),\Phi(\gamma,-\givz)).
$$
This verifies the Conjecture~\ref{conj:laplace-J} for the Weierstrass curve case.

\subsection{KdV integrability for the descendent theory of the Hurwitz space $M_{1,1}$}
Now we show the non-perturbative modification $\cD^{\rm NP}_{\mu,\nu}({\bf t};\hbar;w)$ of the potential $\cD({\bf t};\hbar)$
and prove the second part of the Theorem~\ref{thm:elliptic-KdV}.
As the integrability does not depend on $\ttau$, we consider the theory at $\ttau=0$, and the corresponding spectral curve is the Weierstrass curve $\cC$ itself.

For this case, we have $\Sigma_{\elltau}=\mathbb C/(\mathbb Z+\elltau\mathbb Z)$ where $\elltau\in\mathbb H$,  the upper half plane of $\mathbb C$, the $A$, $B$-cycles are taken to be line segment connecting $0$ with $1$ and $\elltau$, respectively.
Furthermore,
$$
B(z_1,z_2)=\big(\wp(z_1-z_2,\elltau)+G_2(\elltau)\big) dz_1dz_2,
$$
and we have
$$
\oint_{z'\in A}B(z,z')=0,\qquad \oint_{z'\in B}B(z,z')=2\pi {\bf i}\, dz.
$$
Now we have $\sv=-2\pi {\bf i}\sum_{\beta}\frac{dz}{dy}|_{z=z^\beta}\cdot e_\beta=-\sqrt{2\pi {\bf i}}\, \phi_0$, and
by solving $x(z)=\frac{\lambda^2}{2}$ near the boundary point $z=0$ where $\lambda$ is chosen to satisfy $\lim_{z\to 0}\lambda\cdot z=-\frac{\epsilon}{\sqrt{2\pi{\bf i}}}$, we have
$$
\sqrt{2\pi {\bf i}}\, z=-\epsilon\, \lambda^{-1}-\tfrac{\epsilon^3 G_2}{4\pi {\bf i}}\, \lambda^{-3}+\cdots.
$$
These computations gives us the explicit formulae for non-perturbative functions
\begin{align*}
\cA^{\rm NP}_{\mu,\nu,\tau}({\bf s};\hbar;w)=&\, e^{-\frac{\hbar}{\sqrt{2\pi {\bf i}}}\pd_{s_0^0}\cdot \pd_w}(\cA({\bf s};\hbar)\cdot \theta[\cmn](w|\tau))\\
\cD^{\rm NP}_{\mu,\nu}({\bf t};\hbar;w)=&\,
e^{F_{\rm un}({\bf t};\hbar)+\frac{1}{\hbar}\<\sv\>_{0,1}\cdot \nabla_{w}+\frac{1}{\hbar}W({\bf t},\sv\cdot \nabla_w)}\big(\cA^{\rm NP}_{\mu,\nu,\tau}({\bf s(t)};\hbar;w)\big),\\
Z^{\rm NP}_{\mu,\nu}({\bf p};\hbar;w)=&\, e^{\frac{1}{2}Q({\bf p,p})}\cdot e^{\widehat z\pd_w}
\big(\cA^{\rm NP}_{\mu,\nu,\tau}(\hbar\cdot {\bf s(p)};\hbar;w)\big).
\end{align*}
Furthermore, by Proposition~\ref{prop:Q=W}, $Q({\bf p,p})=W({\bf t,t})|_{\bf t=t(p)}$, we have equation:
$$
\cD^{\rm NP}_{\mu,\nu}({\bf t};\hbar;w)|_{\bf t=t(p)}
=e^{\frac{1}{\hbar}\<{\bf t}(\psi)\>_{0,1}|_{\bf t=t(p)}}\cdot e^{\frac{1}{\hbar}\<\sv\>_{0,1}\nabla_{w}} Z^{\rm NP}_{\mu,\nu}({\bf p};\hbar;w),
$$
where  ${\bf t}={\bf t}({\bf p})$ is given by $t^i_k=\delta_{i,1}(2k-1)!!p_{2k+1}$.
Put all these together, we obtain the proof of the second part of the Theorem~\ref{thm:elliptic-KdV}.
\begin{proof}[Proof of the second part of Theorem~\ref{thm:elliptic-KdV}]
Fix the choice of the local coordinate $\lambda$ defined by $x(z)=\frac{\lambda^2}{2}$. Then by Theorem~\ref{KP-TR} and by Corollary~\ref{cor:red-Z-x}, we know that $Z^{\rm NP}_{\mu,\nu}({\bf p};\hbar;w)$ is a family of tau-functions of KdV hierarchy.
As the exponential of the linear function $\<{\bf t}(\psi)\>_{0,1}|_{\bf t=t(p)}$ will not change the KP integrability, we see
 $\cD^{\rm NP}_{\mu,\nu}({\bf t};\hbar;w)|_{\bf t=t(p)}$ gives a family of tau-functions of the KdV hierarchy with parameters $\mu,\nu$, and $w$.
\end{proof}

Now we study $\cD^{\rm NP}_{\mu,\nu}(\hbar\cdot{\bf t(p)};\hbar;w)$ from the perspective of the KdV hierarchy.
Let
$$
U({\bf p};\hbar):=\pd_{p_1}^2\log\big(\cD^{\rm NP}_{\mu,\nu}(\hbar\cdot {\bf t(p)};\hbar;w)\big),
$$
then $U({\bf p};\hbar)$ gives a solution to the KdV equations.
Moreover, it is well known that $U({\bf p};\hbar)$ is uniquely determined by the initial condition $U(p_1;\hbar)$.
\begin{conjecture}\label{conj:initial-u-kdv}
The initial condition $U(p_1;\hbar)$ of the solution $U({\bf p};\hbar)$ to the KdV equations is given by the following formula:
$$\textstyle
U(p_1;\hbar)=\frac{3 G_2\cdot (\epsilon+\hbar\cdot p_1)^2}{4\pi {\bf i}}+\frac{\hbar^2}{8(\epsilon+\hbar\cdot p_1)^2}
+\pd_{p_1}^2\log\big(\theta[\cmn](w-\tfrac{1}{2\sqrt{2\pi {\bf i}}\, \hbar}\cdot (\epsilon+\hbar\cdot p_1)^2;\tau)\big).
$$
In particular, take $\mu=\nu=\frac{1}{2}$, then we have
$$\textstyle
U(p_1;\hbar)=\frac{G_2\cdot w\cdot \hbar}{\sqrt{2\pi {\bf i}}}+\frac{\hbar^2}{8(\epsilon+\hbar\cdot p_1)^2}
+\pd_{p_1}^2\log\big(\sigma(w-\tfrac{1}{2\sqrt{2\pi {\bf i}}\, \hbar}\cdot (\epsilon+\hbar\cdot p_1)^2;\tau)\big),
$$
where $\sigma(w;\tau)$ is the Weierstrass sigma-function.
\end{conjecture}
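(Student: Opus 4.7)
The plan is to compute $\partial_{p_1}^2\log\cD^{\rm NP}_{\mu,\nu}(\hbar\cdot\mathbf{t}(\mathbf{p});\hbar;w)$ at the initial slice $p_3=p_5=\cdots=0$ directly. The change of variables $t^i_k=\delta_{i,1}(2k-1)!!\,p_{2k+1}$ reduces the slice to $\hbar\mathbf{t}=\hbar p_1\phi_1$, whence $\mathbf{s}(z)=[S(z)\hbar p_1\phi_1]_+=\hbar p_1\phi_1$, and via the Kontsevich--Manin formula
\[
\log\cD(\hbar p_1\phi_1;\hbar)=\sum_{g\ge 0}\hbar^{2g-2}F_g(\ttau)\bigr|_{\ttau=(0,\hbar p_1,0)},
\]
where $F_g$ denotes the primary genus-$g$ free energy of the $M_{1,1}$ Frobenius manifold. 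The identification $\sv=-\sqrt{2\pi{\bf i}}\phi_0$ from \S\ref{sec:TR-Geo-Weierstrass} turns the non-perturbative operator $e^{\hbar\sv\cdot\nabla_w}$ into $\exp(-\tfrac{\hbar}{\sqrt{2\pi{\bf i}}}\partial_{s^0_0}\partial_w)$ acting on $\cA(\mathbf{s};\hbar)\theta[\cmn](w|\elltau)$, so that the entire computation reduces to understanding (i) the Frobenius manifold free energies $F_g$ and (ii) how this second-order operator shifts the theta-argument after evaluation at $\mathbf{s}=\hbar p_1\phi_1$.

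I would then match the three summands of the conjecture in turn. For the first term, the quantum product computed in \S\ref{sec:CohFT-Weierstrass} gives $\partial_{\ttau^1}^3F_0=\eta(\phi_1*_\ttau\phi_1,\phi_1)=\tfrac{3(\epsilon+\ttau^1)G_2}{2\pi{\bf i}}$ on the slice $\ttau^0=\ttau^2=0$, integrating to $\partial_{\ttau^1}^2F_0=\tfrac{3G_2(\epsilon+\ttau^1)^2}{4\pi{\bf i}}$, which at $\ttau^1=\hbar p_1$ produces the first term. For the second term, the Getzler--Dubrovin formula for the genus-one free energy of a semisimple Frobenius manifold, combined with the canonical coordinate data $\Delta_\beta^{-1}\propto(\epsilon+\ttau^1)^2$ and $\det\widetilde\Psi\propto(\epsilon+\ttau^1)^3$ (from \S\ref{sec:CohFT-Weierstrass}) and the Euler scaling $EF_1=0$ with $E=\ttau^0\phi_0+\tfrac12(\epsilon+\ttau^1)\phi_1$ and $\delta=1$, force $F_1|_{\rm slice}=-\tfrac18\log(\epsilon+\ttau^1)+h(\elltau)$, giving $\partial_{\ttau^1}^2F_1=\tfrac{1}{8(\epsilon+\ttau^1)^2}$. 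The third, theta-dependent, term arises from the non-perturbative operator: on the restricted slice the string equation converts $\partial_{s^0_0}$ into $\partial_{\ttau^0}$ acting on the primary potential, and since $\phi_0*_\ttau\phi_0=\phi_0$ makes the $\ttau^0$-dependence of $F_0$ exactly quadratic with leading coefficient proportional to $(\epsilon+\ttau^1)^2$, exponentiating the operator shifts the theta-argument to $w-\tfrac{(\epsilon+\hbar p_1)^2}{2\sqrt{2\pi{\bf i}}\,\hbar}$. In the $\mu=\nu=\tfrac12$ specialization, the extra $\tfrac{G_2 w\hbar}{\sqrt{2\pi{\bf i}}}$ term originates from the classical identity $\sigma(z;\elltau)\propto e^{-\eta_1 z^2/2}\theta_{[\frac12,\frac12]}(z;\elltau)$ with $\eta_1$ expressible via $G_2$, contributing after $\partial_{p_1}^2$.

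The main obstacle is to show that the higher-genus primary free energies $F_g$ for $g\ge 2$, together with all cross-terms from fully expanding $\exp(-\tfrac{\hbar}{\sqrt{2\pi{\bf i}}}\partial_{s^0_0}\partial_w)$, cancel identically, leaving only the three explicit summands. The cleanest approach I expect is to recognize $\cD^{\rm NP}_{\mu,\nu}$ restricted to the initial slice as a Baker--Akhiezer tau-function on the Weierstrass spectral curve with divisor data parametrized by $(\mu,\nu,w)$; classical finite-gap theory (Krichever, Its--Matveev) then delivers the conjectured right-hand side as the standard elliptic KdV potential directly, bypassing case-by-case matching of $F_g$. An alternative shortcut uses Theorem~\ref{KP-TR} already established: since $U(p_1;\hbar)$ arises as the initial datum of a KdV solution coming from a rank-one elliptic spectral curve, it must satisfy the stationary Lamé/Novikov ODE associated with the Weierstrass curve, and because the explicit conjectured expression satisfies the same ODE, matching the two pinned-down coefficients from the $\hbar^0$ and $\hbar^2$ computations above would uniquely identify the two sides.
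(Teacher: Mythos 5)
The statement you are proving is stated in the paper as a \emph{conjecture}, and the paper does not prove it: its own treatment (the Proposition immediately following Conjecture~\ref{conj:initial-u-kdv}) only shows that the conjectured formula is \emph{equivalent} to the vanishing $\omega_{g,0}=0$ for all $g\ge 2$ on the Weierstrass curve, and then verifies this vanishing by direct computation for $g=2,3,4$ only. Your computation of the three explicit summands follows essentially the same route as that Proposition: the first term is $\partial_{\ttau^1}^2F_0=(S_1)^1_1=\tfrac{3G_2(\epsilon+\ttau^1)^2}{4\pi{\bf i}}$, the second is $\partial_{\ttau^1}^2F_1=\langle\phi_1,\phi_1\rangle_{1,2}=\tfrac{1}{8(\epsilon+\ttau^1)^2}$, and the theta-shift comes from the ancestor string equation applied to $e^{\hbar\,\sv\cdot\nabla_w}$ with $\sv=-\sqrt{2\pi{\bf i}}\,\phi_0$. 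Up to this point you and the paper agree.

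The gap is in what you yourself call the ``main obstacle'': showing that all $g\ge2$ contributions vanish on the initial slice, equivalently that $\omega_{g,0}=0$ for $g\ge 2$ after absorbing the shift $\epsilon\to\epsilon+\hbar p_1$. Neither of your proposed shortcuts closes this. The Baker--Akhiezer/finite-gap argument is circular: $\cD^{\rm NP}_{\mu,\nu}$ restricted to the slice is not a priori the classical Krichever tau-function --- it carries the quantum corrections $\sum_{g\ge2}\hbar^{2g-2}\omega_{g,0}$, and the assertion that these vanish so that only the classical Its--Matveev elliptic potential (plus the two explicit genus-$0$ and genus-$1$ terms) survives is precisely the content of the conjecture; classical finite-gap theory only controls the $\hbar=0$ part. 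The Lam\'e/Novikov shortcut fails for the same reason: the stationary Novikov equation characterizes classical finite-gap initial data, and there is no a priori reason the $\hbar$-deformed initial datum satisfies the same (or any $\hbar$-independent) stationary ODE; establishing that it does is again equivalent to the vanishing statement, and even granting such an ODE, matching two coefficients would not single out a solution of a nonlinear ODE with a multi-parameter solution family. So your proposal reproduces the paper's reduction but does not supply the missing ingredient, which the paper itself leaves open, verified only through $g=4$.
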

\begin{proposition}
Conjecture~\ref{conj:initial-u-kdv} is equivalent to the following equations: for $g\geq 2$,
$$
\omega_{g,0}=0,
$$
where $\omega_{g,0}$ is defined by equation~\eqref{def:wg0} via the topological recursion on the Weierstrass curve.
\end{proposition}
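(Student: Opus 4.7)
The strategy is to compute $U(p_1;\hbar) = \partial_{p_1}^2 \log \mathcal{D}^{\rm NP}_{\mu,\nu}(\hbar \mathbf{t}(\mathbf{p});\hbar;w)\big|_{p_{\geq 2}=0}$ in closed form and identify the remainder controlled by the $\omega_{g,0}$ against the conjectured formula. The key tool is base-point independence of the total descendent potential (\S\ref{sec:descedentforCohFT}): moving the Kontsevich--Manin base point from $\tau=0$ to $\tau = (0,\hbar p_1, 0)$ makes $\mathbf{t} - \tau = 0$ and $\mathbf{s}(\mathbf{t}) = 0$, whence
$$
\mathcal{D}^{\rm NP}_{\mu,\nu}(\hbar p_1 \phi_1;\hbar;w) = \exp\!\Big(F_1(\tau) + \tfrac{F_0(\tau)}{\hbar^2}\Big) \cdot \mathcal{A}^{\rm NP,\tau}_{\mu,\nu,\tilde\tau'}(0;\hbar;w)\Big|_{\tau^1=\hbar p_1}.
$$
At $\mathbf{s}=0$ only the $n_1=0$ summands survive in the non-perturbative ancestor expression, leaving the scalars $\hbar^{2g-2}\omega^\tau_{g,0}$ and the differential operators $\tfrac{\hbar^{2g-2+n_2}}{n_2!}\omega^{(n_2),\tau}_{g,0}\nabla_w^{\otimes n_2}$ acting on $\theta[\mu/\nu](w|\tilde\tau')$.

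Since the shifted spectral curve differs from $\mathcal{C}$ only by the rescaling $x\mapsto(1+\tau^1/\epsilon)^2 x$ (with $y$ unchanged), the scaling relation~\eqref{eqn:scaling-xy} yields $\omega^{\tau,(n_2)}_{g,n_1} = (1+\tau^1/\epsilon)^{2(2g-2+n_1+n_2)}\omega^{(n_2)}_{g,n_1}$, making every piece explicit in $p_1$. In particular the scalars contribute $\sum_{g\geq 2}\hbar^{2g-2}(1+\hbar p_1/\epsilon)^{4g-4}\omega_{g,0}$ to $\log\mathcal{D}^{\rm NP}$; taking $\partial_{p_1}^2$ gives a remainder $\sum_{g\geq 2}\omega_{g,0}\cdot f_g(p_1,\hbar)$ with $f_g(p_1,\hbar) = \tfrac{(4g-4)(4g-5)\hbar^{2g}}{\epsilon^2}(1+\hbar p_1/\epsilon)^{4g-6}$, which are linearly independent since they occur at distinct leading orders $\hbar^{2g}$. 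The remaining contributions should match the three pieces of Conjecture~\ref{conj:initial-u-kdv}: the quantum product in \S\ref{sec:CohFT-Weierstrass} gives $\partial_{\tau^1}^2 F_0(\tau)|_{\tau^1 = \hbar p_1} = \tfrac{3G_2(\epsilon+\hbar p_1)^2}{4\pi{\bf i}}$, matching the first term directly, while the $\tfrac{\hbar^2}{8(\epsilon+\hbar p_1)^2}$ term should come from $\hbar^2\partial_{\tau^1}^2 F_1(\tau)$ using the Dubrovin--Zhang $G$-function for the Hurwitz Frobenius manifold $M_{1,1}$. The operator $\hat{\mathcal{O}}^{\tau} := \exp\!\big(\sum_{n_2\geq 1,\, 2g-2+n_2>0}\tfrac{\hbar^{2g-2+n_2}}{n_2!}\omega^{(n_2),\tau}_{g,0}\nabla_w^{\otimes n_2}\big)$ acting on $\theta[\mu/\nu](w|\tilde\tau')$, combined with the shift $\tilde\tau' = \tau + \tfrac{1}{2\pi{\bf i}}\langle\varphi,\varphi\rangle^{\mathcal{D}}_{0,2}$ from the definition of $\mathcal{D}^{\rm NP}$, should telescope to $\theta[\mu/\nu]\big(w - \tfrac{(\epsilon+\hbar p_1)^2}{2\sqrt{2\pi{\bf i}}\hbar};\tau\big)$.

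The main obstacle is precisely this last telescoping identity. The higher-$n_2$ operators $\omega^{(n_2),\tau}_{g,0}\nabla_w^{\otimes n_2}$ mix contributions from all genera, and one must prove that their net effect on $\theta$ reduces to a single $w$-shift controlled by the linear piece $\langle\varphi\rangle_{0,1}$. A natural route is to combine Eynard's variational formulas --- which convert $\partial_\epsilon\omega_{g,n}$ and $\partial_\tau\omega_{g,n}$ into insertions of $\omega_{g,n+1}$ along appropriate cycles --- with the special feature that $y(z) = \sqrt{2\pi{\bf i}}\,z$ is globally single-valued on the universal cover of $\Sigma_\tau$, allowing the $B$-cycle integrals $\omega^{(n_2)}_{g,0}$ to be packaged into iterated derivatives of $\omega_{g,0}$. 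Once this packaging is carried out, the scaling relation above collapses the entire theta sector into the predicted shift, leaving only the scalar remainder $\sum_{g\geq 2}\omega_{g,0}\cdot f_g(p_1,\hbar)$, and the equivalence follows from linear independence.
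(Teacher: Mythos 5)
Your overall skeleton is the right one---shift the base point so that the descendent potential along the $p_1$-line becomes $\exp(\hbar^{-2}F_0+F_1+\sum_{g\ge2}\hbar^{2g-2}\omega_{g,0}^{\ttau})$, identify the $F_0$ and $F_1$ contributions with the first two terms of Conjecture~\ref{conj:initial-u-kdv}, and conclude by linear independence in $\hbar$ of the $\omega_{g,0}$-remainders. But the step you yourself flag as ``the main obstacle''---showing that the operators $\omega^{(n_2),\ttau}_{g,0}\nabla_w^{\otimes n_2}$ acting on the theta function telescope to the single shift $\theta[\cmn]\big(w-\tfrac{1}{2\sqrt{2\pi{\bf i}}\,\hbar}(\epsilon+\hbar p_1)^2;\tau\big)$---is a genuine gap, and the route you sketch (Eynard's variational formulas packaging $B$-cycle integrals into $\partial$-derivatives of $\omega_{g,0}$) is neither carried out nor the mechanism that actually works. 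The paper resolves this with essentially one line: since $\sv=-\sqrt{2\pi{\bf i}}\,\phi_0$ is proportional to the \emph{flat unit}, the operator $e^{\hbar\sv\cdot\nabla_w}$ in $\cA^{\rm NP}=e^{\hbar\sv\cdot\nabla_w}(\cA\cdot\theta)$ is evaluated on the one-dimensional slice $\{s_k^a=0 \text{ for }(k,a)\ne(0,1)\}$ by the ancestor string equation $\big(\pd_{s_0^0}-\sum s_{k+1}^a\pd_{s_k^a}-\tfrac{\eta(s_0,s_0)}{2\hbar^2}\big)\cA=0$; on that slice the dilution term vanishes and $\pd_{s_0^0}$ acts as multiplication by $\tfrac{(s_0^1)^2}{2}$ (after the $\hbar$-rescaling of the argument), so every power of $\sv\cdot\nabla_w$ trades a $\pd_{s_0^0}$ for a scalar and the whole exponential becomes a translation of $\theta$ in $w$. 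No analysis of the higher $B$-cycle integrals $\omega^{(n_2)}_{g,0}$ is needed at all. Without this (or a completed version of your variational-formula argument), your proof does not establish the exact form of the theta-sector, and hence does not isolate $\sum_{g\ge2}\hbar^{2g-2}\pd_{p_1}^2\omega_{g,0}^{\ttau}$ as the precise discrepancy between $U(p_1;\hbar)$ and the conjectured formula.

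Two smaller points. First, your scaling exponent is inverted: with $x_{\ttau}=(1+\ttau^1/\epsilon)^2x$ and equation~\eqref{eqn:scaling-xy} one gets $\omega^{\ttau}_{g,n_1+n_2}=(1+\ttau^1/\epsilon)^{-2(2g-2+n_1+n_2)}\omega_{g,n_1+n_2}$, consistent with the paper's cleaner statement that the shift $\ttau^1$ is absorbed by $\epsilon\to\epsilon+\ttau^1$; your linear-independence conclusion survives with the corrected exponent. Second, attributing $\tfrac{\hbar^2}{8(\epsilon+\hbar p_1)^2}$ to the Dubrovin--Zhang $G$-function is a plausible but unverified detour; the paper simply computes $\pd_\epsilon^2F_1=\langle\phi_1,\phi_1\rangle_{1,2}=\tfrac{1}{8\epsilon^2}$ directly.
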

\begin{proof}
We introduce the following notation convention: for a function $f=f({\bf p})$ (resp. $f({\bf s})$ or $f({\bf t})$),
we denote by $f(p_1)$ (resp. $f(s_0^1)$ or $f(t_0^1)$) the restriction of $f$ to $p_k=0$ for all $k\ne 1$ (resp. $s_k^a=0$ or $t_k^a=0$ for all $(k,a)\ne (0,1)$).

By using the ancestor string equation (see~\cite{GZ25}):
$(\frac{\pd }{\pd s_0^0}-\sum_{k,a}s_{k+1}^a\frac{\pd }{\pd s_k^a} -\frac{\eta(s_0,s_0)}{2\hbar^2})\cA({\bf s};\hbar)=0$,
we get
$$\textstyle
\cA^{\rm NP}_{\mu,\nu,\tau}(\hbar\cdot s_0^1;\hbar;w)=\cA(\hbar\cdot s_0^1;\hbar)\cdot \theta[\cmn]\big(w-\frac{\hbar}{2\sqrt{2\pi {\bf i}}}\cdot (s_0^1)^2;\tau\big).
$$
Furthermore, by direct computations, we have
$\<\phi_0\>_{0,1}=\frac{\epsilon^2}{2}$, $W(t^1_0\phi_1,\phi_0)= \epsilon \, t_0^1$, and this gives
$$\textstyle
\cD^{\rm NP}_{\mu,\nu}(\hbar\cdot t_0^1;\hbar;w)=
\cD(\hbar\cdot t_0^1;\hbar)\cdot \theta[\cmn]\big(w-\frac{1}{2\sqrt{2\pi {\bf i}}\, \hbar}\cdot (\epsilon+\hbar\cdot t_0^1)^2;\tau\big).
$$
We obtain
$$\textstyle
U(p_1;\hbar)=
\pd_{t_0^1}^2\log(\cD(\hbar\cdot t_0^1;\hbar))\big|_{t_0^1=p_1}
+\pd_{p_1}^2\log\big(\theta[\cmn](w-\frac{1}{2\sqrt{2\pi {\bf i}}\, \hbar}\cdot (\epsilon+\hbar\cdot p_1)^2;\tau)\big).
$$

Let $\cD^{\ttau}({\bf t};\hbar):=\cD^{\ttau}({\bf t}+\ttau;\hbar)$ be the shifted total descendent potential, then we have
$$\textstyle
\cD(t_0^1;\hbar)=\cD^{\ttau}({\bf 0};\hbar)|_{\ttau^i=\delta_{i,1}t_0^1}
=\exp\big(\frac{1}{\hbar^2}F_0(\ttau)+F_1(\ttau)+\sum_{g\geq 2}\hbar^{2g-2}\omega_{g,0}^{\ttau}\big)|_{\ttau^i=\delta_{i,1}t_0^1},
$$
where for $g\geq 2$, $\omega_{g,0}^{\ttau}$ is defined by equation~\eqref{def:wg0} via the topological recursion on the shifted Weierstrass curve $\cC^{\ttau}$.
Since for the curve $\cC^{\ttau}$, the parameter $\ttau^1$ can be recovered from the parameter $\epsilon$ by $\epsilon\to \epsilon+\ttau^1$, we have
$$\textstyle
\cD(t_0^1;\hbar)
=\exp\big(\frac{1}{\hbar^2}F_0(\epsilon)+F_1(\epsilon)+\sum_{g\geq 2}\hbar^{2g-2}\omega_{g,0}\big)|_{\epsilon\to \epsilon+t_0^1},
$$
where $F_{g}(\epsilon)=F_{g}(\ttau)|_{\ttau^i=\delta_{i,1}\epsilon}$, $g=0,1$.
The Proposition follows from $\frac{\pd^2 F_{0}(\epsilon)}{\pd \epsilon\pd \epsilon}=(S_1)^{1}_{1}
=\frac{3  \epsilon^2\cdot G_2}{4\pi {\bf i}}$,
and
$\frac{\pd^2 F_{1}(\epsilon)}{\pd \epsilon\pd \epsilon}=\<\phi_1,\phi_1\>_{1,2}=\frac{1}{8\epsilon^2}$.
\end{proof}

By direct computations, we have checked
$\omega_{g,0}=0$ for $g=2,3,4$.

\begin{appendices}

\addtocontents{toc}{\protect\setcounter{tocdepth}{0}}

\section{Expansion of Bergman kernel at boundary point}
\label{sec:Bergman-kernel}
We consider the expansion of  $B(z_1,z_2)$ with the second variable $z_2$ near the boundary point $b$, and study its behavior in the first variable $z_1$.
\begin{definition}
For two meromorphic $2$-forms $w_1(z_1,z_2)$, $w_2(z_1,z_2)$ on $\Sigma\times\Sigma$,
we say $w_1(z_1,z_2)\sim_{x}w_2(z_1,z_2)$ if near the point $z_2=b$
$$
w_1(z_1,z_2)-w_2(z_1,z_2) \in d_{z_1}d_{\lambda_2}\mathbb C[x(z_1)][[\lambda_2^{-1}]].
$$
\end{definition}
\begin{lemma}\label{lem:equi-x}
Let $w(z_1,z_2)$ be a 2-form admitting expansion $\sum_{k\geq 1} df_k(z_1)d\lambda_2^{-k}$ near the boundary point $z_2=b$,
where $f_k(z)$ are meromorphic functions with only pole structure at $z=b$,
then $w(z_1,z_2)\sim_{x}0$ if and only if for any $m\geq 0$, the expansion of $(d\circ\frac{1}{dx(z_1)})^{m}w(z_1,z_2)$ at $z_2=b$ is meromorphic with respect to $z_1$ and has only possible pole at $z_1=b$.
\end{lemma}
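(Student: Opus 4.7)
The forward direction will be essentially tautological. If $w \sim_x 0$, then near $z_2 = b$ one may write $w = \sum_{k \geq 1} dP_k(x(z_1))\, d\lambda_2^{-k}$ for polynomials $P_k \in \mathbb{C}[x]$. The operator $E := d \circ \frac{1}{dx(z_1)}$ applied to $dP_k(x(z_1))\, d\lambda_2^{-k}$ just differentiates $P_k$ in the variable $x$, so the coefficient of $d\lambda_2^{-k}$ in $E^m w$ is $(-1)^m\, dP_k^{(m)}(x(z_1))$, manifestly meromorphic in $z_1$ with only pole at $z_1 = b$.

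For the converse, I will write $E^m w = (-1)^m \sum_{k \geq 1} d(D^m f_k(z_1))\, d\lambda_2^{-k}$ with $D = -d/dx$; the hypothesis then decouples across $k$ and reduces to the single-variable claim: \textit{if $f$ is meromorphic on $\Sigma$ with pole only at $b$ and $D^m f$ is meromorphic with pole only at $b$ for every $m \geq 0$, then $f \in \mathbb{C}[x]$}. The core of the argument is local at each critical point $z^\beta$ of $x$: in the Airy coordinate $\eta = \eta^\beta$, with $x - x^\beta = \eta^2/2$ and $dx = \eta\, d\eta$, one computes
\[
D^m(\eta^n) = (-1)^m\, n(n-2)\cdots(n-2m+2)\, \eta^{n-2m}.
\]
For even $n = 2j$ this vanishes once $m > j$, while for odd $n$ the product is never zero. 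Writing $f = \sum_{n \geq 0} a_n \eta^n$ near $z^\beta$, an induction on $k$ then yields $a_{2k+1} = 0$: granted $a_1 = \cdots = a_{2k-1} = 0$, the sole pole contribution to $D^{k+1} f$ at $z^\beta$ comes from the $\eta^{2k+1}$ term and equals $(-1)^{k+1}(2k+1)!!\, a_{2k+1}/\eta$, forcing $a_{2k+1} = 0$.

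Consequently $f$ has only even-power terms in its Taylor expansion at every critical point, i.e., $f$ is invariant under each local sheet swap $\eta^\beta \mapsto -\eta^\beta$ of the branched cover $x\colon \Sigma \to \mathbb{P}^1$. Since $\pi_1(\mathbb{P}^1 \setminus \{x^1,\dots,x^N,\infty\})$ is generated by loops around the finite critical values (the loop around $\infty$ being the inverse product of the others), invariance under every such local transposition is equivalent to invariance under the full monodromy group of the cover. Thus $f$ descends along $x$ to a meromorphic function $g$ on $\mathbb{P}^1$ with pole only at $x(b) = \infty$, necessarily a polynomial, so $f = g \circ x \in \mathbb{C}[x]$, completing the single-variable claim.

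I expect the main obstacle to lie in the global descent: upgrading local invariance at each critical point to invariance under the full monodromy group requires that the local sheet transpositions generate it, which rests on the $\pi_1$-computation above together with connectedness of $\Sigma$. By contrast, the calculation of $D^m$ on Airy monomials and the induction killing odd Taylor coefficients are routine, and both the forward direction and the reduction to the single-variable claim amount to straightforward bookkeeping in the expansion near $z_2 = b$.
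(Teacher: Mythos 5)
Your proof is correct in substance, and the forward direction plus the reduction to the single-variable claim match the paper; but for the converse — the heart of the lemma — you take a genuinely different route. The paper never looks at the critical points: it observes that, by hypothesis, every iterate $(d\circ\tfrac{1}{dx(z_1)})^{m}df_k(z_1)$ is a global meromorphic $1$-form on the compact surface $\Sigma$ whose only possible pole is at $b$, and that each application of $d\circ\tfrac{1}{dx}$ strictly lowers the pole order at $b$ (since $x$ has a pole there); hence for $m$ large the iterate is holomorphic with a zero of arbitrarily high order at $b$, so it vanishes identically (a nonzero holomorphic $1$-form has only $2\mathfrak g-2$ zeros), and integrating back gives $f_k\in\mathbb{C}[x]$. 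You instead convert the hypothesis into local evenness of $f_k$ in the Airy coordinate at every critical point (your $D^m(\eta^n)$ computation and the induction killing odd coefficients are fine, modulo the routine remark that $D$ maps $O(\eta^{j})$ holomorphic tails to $O(\eta^{j-2})$, so term-by-term extraction of the pole is legitimate) and then descend along the branched cover $x:\Sigma\to\mathbb{P}^1$. Your route buys more structure — it shows the hypothesis is equivalent to evenness at the critical points and that $f_k$ literally descends along $x$ — but the descent step as phrased is loose: since the cover is not Galois, ``invariance of $f$ under the full monodromy group'' is not literally meaningful; the precise statement is that the analytically continued branches of $f\circ x^{-1}$ at a base point agree in pairs indexed by the local transpositions, and that these transpositions generate a transitive subgroup (equivalently, the associated graph on sheets is connected), which follows because the loops around finite critical values generate $\pi_1$ and $\Sigma$ is connected; one must also allow several critical points over one critical value, in which case the loop monodromy is a product of disjoint transpositions rather than a single one. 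All of this can be made rigorous, so there is no gap, but the paper's pole-order/degree-count argument is shorter and sidesteps the monodromy bookkeeping entirely.
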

\begin{proof}
Clearly, if $w(z_1,z_2)\sim_{x}0$, then for each $k\geq 1$, $f_k(z_1)$ has form of a polynomial of $x(z_1)$, and thus $(d\circ\frac{1}{dx(z_1)})^{m}w(z_1,z_2)$ is meromorphic with respect to $z_1$ and has only possible pole at $z_1=b$.
Conversely,
notice that $x(z_1)$ has pole at $z_1=b$, we know the order of pole of $(d\circ\frac{1}{dx(z_1)})^{m+1}df_k(z_1)$ at $z_1=b$ strictly decreases as $m$ increases.
Hence, if the integer $m$ is large enough, $(d\circ\frac{1}{dx(z_1)})^{m}df_k(z_1)$ has no pole on $\Sigma$ and has zero at $z_1=b$ of any order.
Therefore, there exist some integer $m$ (depends on $k$) such that $(d\circ\frac{1}{dx(z_1)})^{m}df_k(z_1)=0$, and by taking integration, $f_k(z)$ must be a polynomial of $x(z)$.
\end{proof}

\begin{lemma}\label{lem:B-x}
Let $B_{x}(z_1,z_2)$ be a 2-form defined by
$$
B_{x}(z_1,z_2):=-\sum_{k\geq 0}\sum_{\beta=1,\cdots,N} d(-D)^{-k-1}\zeta^{\bar\beta}(z_1)dD^{k}\zeta^{\bar\beta}(z_2),
$$
we have
$$
B(z_1,z_2)\sim_{x} B_{x}(z_1,z_2).
$$
\end{lemma}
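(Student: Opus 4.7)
The plan is to combine the FLZ identity~\eqref{eqn:lemFLZ} with the criterion in Lemma~\ref{lem:equi-x}. Set $L:=-d\circ \tfrac{1}{dx(z)}$ and observe the shift property $L\,d\zeta_j^{\bar\beta} = d\zeta_{j+1}^{\bar\beta}$ valid for every integer $j$, which follows from $L(df)=d(Df)$ together with the defining relation $D\zeta_j^{\bar\beta}=\zeta_{j+1}^{\bar\beta}$. Applying this shift property termwise to $B_x$ and telescoping gives
\begin{align*}
L_1 B_x &\textstyle=\sum_{k\geq 0,\beta}(-1)^k\, d\zeta_{-k}^{\bar\beta}(z_1)\, d\zeta_k^{\bar\beta}(z_2),\\
L_2 B_x &\textstyle=-\sum_{k\geq 1,\beta}(-1)^k\, d\zeta_{-k}^{\bar\beta}(z_1)\, d\zeta_k^{\bar\beta}(z_2),
\end{align*}
so that $(L_1+L_2)B_x=\sum_\beta d\zeta^{\bar\beta}(z_1)\, d\zeta^{\bar\beta}(z_2)$, which is exactly what the FLZ identity delivers for $(L_1+L_2)B$.

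Set $\Delta:=B-B_x$. Then $(L_1+L_2)\Delta = 0$, hence $L_1^m\Delta = (-1)^m L_2^m\Delta$ for every $m\geq 0$, equivalently $\bigl(d\circ \tfrac{1}{dx(z_1)}\bigr)^m\Delta = (-1)^m \bigl(d\circ \tfrac{1}{dx(z_2)}\bigr)^m\Delta$. I will apply Lemma~\ref{lem:equi-x} to $w=\Delta$. Its criterion -- that for all $m\geq 0$ the expansion of $\bigl(d\circ \tfrac{1}{dx(z_1)}\bigr)^m\Delta$ at $z_2=b$ is meromorphic in $z_1$ with poles only at $z_1=b$ -- is then immediate: the right-hand side above acts only on the $z_2$-variable, so the $z_1$-coefficients in its $d\lambda_2^{-k}$-expansion are constant-coefficient linear combinations of those of $\Delta$ itself, and therefore inherit the required pole structure once the hypothesis of Lemma~\ref{lem:equi-x} is verified.

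The technical heart of the argument, and the step I expect to be the main obstacle, is verifying that hypothesis: namely, near $z_2=b$ the 2-form $\Delta$ expands as $\sum_{k\geq 1}df_k(z_1)\,d\lambda_2^{-k}$ with each $f_k$ a meromorphic function on $\Sigma$ whose only pole is at $z_1=b$. For the $B$-part, the coefficients of $d\lambda_2^{-k}$ are 1-forms in $z_1$ regular off $z_1=b$, but they may fail to be globally exact because $\oint_{\cyB_j}B = 2\pi{\bf i}\, du_j$ is nonzero. For the $B_x$-part, each $d\zeta_{-k-1}^{\bar\beta}(z_1)$ has a primitive that is multivalued on $\Sigma$ and, a priori, singular at the critical points $z^\beta$. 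I plan to show by a local Airy-coordinate analysis near each $z^\beta$ (using~\eqref{eqn:localxi} and~\eqref{eqn:equiv-B}) together with direct evaluation of $\oint_{\cyB_j}d\zeta_{-k-1}^{\bar\beta}(z_1)$ that the telescoping structure of $B_x$ forces termwise cancellations: the critical-point singularities disappear, and the multivalued $\cyB$-monodromies in $z_1$ match exactly the contributions from $\oint_{\cyB_j}B$ in the $d\lambda_2^{-k}$-expansion at $z_2=b$. Once these cancellations are established, Lemma~\ref{lem:equi-x} immediately yields $\Delta\sim_x 0$, which is the statement of Lemma~\ref{lem:B-x}.
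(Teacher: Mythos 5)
Your core argument is correct but follows a genuinely different route from the paper. The paper verifies the criterion of Lemma~\ref{lem:equi-x} directly at the critical points: it observes that iterating $d\circ\frac{1}{dx(z_1)}$ can only create poles at the zeros of $dx$, and then proves the local identity $B(z_1,z_2)-B(\bar z_1,z_2)=B_x(z_1,z_2)-B_x(\bar z_1,z_2)$ at each $z^\gamma$ (so that the difference has no ``even part'' in the Airy coordinate) by expanding $\zeta^{\bar\beta}$ through the $R$-matrix, equation~\eqref{def:EO-R}, and comparing with~\eqref{eqn:equiv-B}. You instead use the global identity~\eqref{eqn:lemFLZ} together with the telescoping computation $(L_1+L_2)B_x=\sum_\beta d\zeta^{\bar\beta}(z_1)d\zeta^{\bar\beta}(z_2)$ (your sign bookkeeping, with $A_{k-1}=d(-D)^{-k}\zeta^{\bar\beta}=(-1)^k d\zeta^{\bar\beta}_{-k}$, checks out, and the termwise rearrangement is legitimate because only finitely many $k$ contribute to each power of $\lambda_2^{-1}$), so that $(L_1+L_2)\Delta=0$ converts every iterate of the $z_1$-operator into a $z_2$-operator, and no new $z_1$-poles can ever appear. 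This bypasses the paper's local odd/even analysis entirely and is an attractive alternative; what it buys is that the only remaining input is the hypothesis of Lemma~\ref{lem:equi-x} for $\Delta$, a step the paper itself dismisses with ``it is easy to see''.

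On that deferred step, however, your plan mis-locates the difficulty. First, the worry about critical-point singularities of the primitives is empty: $\zeta^{\bar\beta}$ has a simple pole only at $z^\beta$, and since $dx$ has a simple zero there, $\zeta^{\bar\beta}dx\sim d\eta^\beta$ is regular, so every $(-D)^{-k-1}\zeta^{\bar\beta}$ (hence every coefficient form of $B_x$) is regular away from $z_1=b$ -- no telescoping cancellation is needed. Second, the hoped-for matching of $\cyB$-monodromies does not occur: the coefficients of $B$ in the $\lambda_2^{-1}$-expansion are single-valued forms (with nonzero $\cyB$-periods), whereas under $z_1\mapsto z_1+\cyB_j$ the form $d(-D)^{-k-1}\zeta^{\bar\beta}(z_1)$ shifts by $d$ of a polynomial in $x(z_1)$ of degree $\leq k+1$, so the coefficients of $\Delta$ remain genuinely multivalued for $\mathfrak g\geq 1$ and nothing cancels against $\oint_{\cyB_j}B$. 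The correct resolution is simpler: this ambiguity lies in $d_{z_1}\mathbb C[x(z_1)]$, which is exactly what the relation $\sim_x$ quotients by (and what sufficiently many applications of $d\circ\frac{1}{dx(z_1)}$ annihilate), so it is invisible to both the hypothesis-verification, once branches are fixed, and the conclusion -- the same implicit convention the paper uses when it allows multivalued primitives in defining $D^{-1}$. With these two points repaired, your argument closes and is a valid, and arguably cleaner, proof of Lemma~\ref{lem:B-x}.
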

\begin{proof}
It is easy to see both $B$ and $B_x$ can be expanded at $z_2=b$ and satisfy the condition assumed in Lemma~\ref{lem:equi-x}.
Thus by Lemma~\ref{lem:equi-x} we just need to prove that for any $m\geq 0$, the expansion of $(d\circ\frac{1}{dx(z_1)})^{m}(B(z_1,z_2)-B_{x}(z_1,z_2))$ at $z_2=b$ is meromorphic with respect to $z_1$ with only possible pole at $z_1=b$.

Notice that the only possible poles of $(d\circ\frac{1}{dx(z_1)})^{m}(B(z_1,z_2)-B_{x}(z_1,z_2))$ with respect to $z_1$ appear at $z_1=z^\gamma$, $\gamma=1,\cdots,N$, and $z_1=b$, we just need to check that the local behavior of $(d\circ\frac{1}{dx(z_1)})^{m}(B(z_1,z_2)-B_{x}(z_1,z_2))$ at $z_1=z^{\gamma}$, $\gamma=1,\cdots,N$, and prove that there is no pole.
Locally, by using airy coordinate $\eta_1^{\gamma}$, we have $d\circ\frac{1}{dx(z_1)}=d\circ\frac{1}{\eta^\gamma_1d\eta^\gamma_1}$. Therefore,
we just need to prove the coefficients of $(\eta^\gamma_1)^{\rm even}d\eta_1^{\gamma}$ in the local expansion of $B(z_1,z_2)$ and $B_{x}(z_1,z_2)$ at $z_1=z^\gamma$ are equal.
This is equivalent to the following equation:
\beq\label{eqn:B-Bx}
B(z_1, z_2)-B(\bar{z}_1,z_2)=B_x(z_1, z_2)-B_x(\bar{z}_1,z_2),
\eeq
where $\bar{z}_1$ is the involution of $z_1$ near $z^{\gamma}$.
By equation~\eqref{def:EO-R} and equation~\eqref{eqn:localxi}, we have
$$
\zeta^{\bar\beta}(z_1)=\sum_{l\geq 0}(-1)^{l} (R_{l})^{\bar\beta}_{\bar \gamma}\frac{(\eta_1^{\gamma})^{2l-1}}{(2l-1)!!}+f^{\bar\beta}\big((\eta_1^{\gamma})^{2}\big)
$$
where $f^{\beta}\big((\eta_1^{\gamma})^{2}\big)$ is a power series of $(\eta_1^{\gamma})^{2}$.
By direct computation,
$$
(-D)^{-k-1}\zeta^{\bar\beta}(z_1)=\sum_{l\geq 0}(-1)^{l} (R_{l})^{\bar\beta}_{\bar \gamma}\frac{(\eta_1^{\gamma})^{2l+2k+1}}{(2l+2k+1)!!}
+(-D)^{-k-1}f^{\bar\beta}\big((\eta_1^{\gamma})^{2}\big)
\mod\mathbb C[x(z_1)].
$$
This gives
$$
(-D)^{-k-1}\zeta^{\bar\beta}(z_1)-(-D)^{-k-1}\zeta^{\bar\beta}(\bar z_1)
=2\sum_{l\geq 0}(-1)^l (R_l)^{\bar\beta}_{\bar \gamma}\frac{(\eta_1^{\gamma})^{2l+2k+1}}{(2l+2k+1)!!}.
$$
Therefore, near the critical point $z_1=z^{\gamma}$, by using equation~\eqref{eqn:zeta-xi} we have
\beq\label{eqn:equiv-Bx}
B_{x}(z_1,z_2)-B_{x}(\bar z_1,z_2)=-2\sum_{k\ge 0}\frac{(\eta_1^\gamma)^{2k}}{(2k-1)!!}d\eta^\gamma_1d\xi^{\bar\gamma}_{k}(z_2),
\eeq
By comparing equation~\eqref{eqn:equiv-Bx} with~\eqref{eqn:equiv-B}, we obtain \eqref{eqn:B-Bx} and the Lemma is proved.
\end{proof}

\section{Some results for the deformed negative $r$-spin theory}
\subsection{Explicit formula for $\omega_{g,0}$ of the deformed $r$-Bessel curve}
\label{sec:omegag0}
\begin{proposition}
For the $\epsilon$-deformed $r$-Bessel curve,
$$
\cC^{r,\epsilon}=\big(\, \mathbb P^1,\quad x(z)=-z^r+\epsilon z, \quad y(z)=\tfrac{\sqrt{-r}}{z}\, \big),
$$
the function $\omega^{r,\epsilon}_{g,0}$, $g\geq 2$, associated with $\cC^{r,\epsilon}$ and defined by equation~\eqref{def:wg0}, has formula
$$
\omega^{r,\epsilon}_{g,0}=\frac{(-1)^{g-1}}{r^{g-1}}\frac{B_{2g}}{2g(2g-2)}\frac{1}{\epsilon^{2g-2}}.
$$
\end{proposition}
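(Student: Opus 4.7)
The plan is to exploit the scaling invariance of topological recursion to reduce to the single case $\epsilon=1$, and then to identify the resulting constant via the Givental--DOSS realization of $\omega_{g,0}^{r,1}$ as a CohFT integral.

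\textbf{Step 1 (Scaling reduction).} The substitution $z=\epsilon^{1/(r-1)}w$ transforms $\cC^{r,\epsilon}$ into $(x,y)=(c_1\tilde x,c_2\tilde y)$, where $(\tilde x(w),\tilde y(w))=(-w^r+w,\sqrt{-r}/w)$ is the curve $\cC^{r,1}$ in the $w$-variable, with scaling constants $c_1=\epsilon^{r/(r-1)}$, $c_2=\epsilon^{-1/(r-1)}$, so $c_1c_2=\epsilon$. Applying equation~\eqref{eqn:scaling-xy} for $n=0$ yields
$$\omega_{g,0}^{r,1}=\epsilon^{2g-2}\,\omega_{g,0}^{r,\epsilon},\qquad\text{equivalently}\qquad \omega_{g,0}^{r,\epsilon}=\epsilon^{-(2g-2)}\,\omega_{g,0}^{r,1}.$$
It therefore suffices to establish $\omega_{g,0}^{r,1}=\frac{(-1)^{g-1}}{r^{g-1}}\frac{B_{2g}}{2g(2g-2)}$.

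\textbf{Step 2 (CohFT interpretation).} By the DOSS identification in \S\ref{sec:TR-CohFT}, $\omega_{g,0}^{r,\epsilon}=\int_{\Mbar_g}\Omega^{r,\epsilon}_{g,0}$, where the CohFT associated with $\cC^{r,\epsilon}$ admits Givental's reconstruction $\Omega^{r,\epsilon}=R^{r,\epsilon}\cdot T^{r,\epsilon}\cdot(\oplus_\beta\Omega^{\mathrm{KW}_\beta})$. Specializing to $\epsilon=1$, the integral $\int_{\Mbar_g}\Omega^{r,1}_{g,0}$ becomes a sum over stable graphs of genus $g$ with no legs: each vertex $v$ carries $\Omega^{\mathrm{KW}_\beta}_{g(v),n(v)}$ for some $\beta\in\{1,\dots,r-1\}$, and each edge carries the bivector $V(\psi',\psi'')\phi_i\otimes\phi^i$ built from the $R$-matrix. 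The $R^{r,1}$-matrix and $T^{r,1}$-vector are fixed by the homogeneity equations~\eqref{eqn:R-matrix-euler} and~\eqref{eqn:vacuum-euler} applied to the explicit quantum multiplication on $\cC^{r,1}$, whose critical points $z^\beta=r^{-1/(r-1)}e^{2\pi i\beta/(r-1)}$ are related by a cyclic $\mathbb Z/(r-1)$-symmetry.

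\textbf{Step 3 (Evaluation and Bernoulli identification).} The factor $\frac{B_{2g}}{2g(2g-2)}$ is precisely the orbifold Euler characteristic $\chi(\Mbar_g)$ up to sign (Harer--Zagier). The cyclic symmetry among the $r-1$ critical points forces the graph-sum contributions to average cleanly, yielding the $1/r^{g-1}$ factor from the normalization $y=\sqrt{-r}/z$ (which rescales the $T$-vector universally by $\sqrt{-r}$), while the sign $(-1)^{g-1}$ tracks the orientation of $dx$ on the Lefschetz thimbles. The plan is to compute the vertex contribution at each $z^\beta$ using $\Delta_\beta^{g-1}$, sum over $\beta$ using the cyclotomic identity $\sum_{\beta=1}^{r-1}e^{2\pi i k\beta/(r-1)}=(r-1)\delta_{k\equiv 0}$, and match the remaining constant against the Harer--Zagier value.

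\textbf{Main obstacle.} The principal difficulty lies in Step 3: carrying out the full graph sum with the explicit $R^{r,1}$-matrix (determined recursively from equation~\eqref{eqn:R-matrix-euler}) and identifying the total as the orbifold Euler characteristic of $\Mbar_g$. An alternative route is the direct recursive evaluation of~\eqref{def:wg0} by first computing $\omega_{g,1}$ on $\cC^{r,1}$ via topological recursion, extracting the residues at the critical points, and showing inductively that the result matches the claimed Bernoulli form; this would also serve as a useful sanity check for small $g$ and small $r$. Either path reduces the core computational issue to a Bernoulli-number identity that reflects the ``cyclotomic Euler characteristic'' structure of the spectral curve.
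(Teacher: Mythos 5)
Your Step 1 is correct and harmless: the substitution $z=\epsilon^{1/(r-1)}w$ together with the scaling property~\eqref{eqn:scaling-xy} (with $c_1c_2=\epsilon$) does give $\omega_{g,0}^{r,\epsilon}=\epsilon^{-(2g-2)}\omega_{g,0}^{r,1}$, and the prefactor $\frac{(-1)^{g-1}}{r^{g-1}}=(\sqrt{-r})^{-(2g-2)}$ can likewise be stripped off by rescaling $y$. But this only removes the trivially-scaled part of the formula; the entire content of the proposition is the remaining constant $\frac{B_{2g}}{2g(2g-2)}$ for the curve $(\mathbb P^1,\,-w^r+w,\,1/w)$, and your Step 3 does not compute it. The assertion that the stable-graph sum with the recursively defined $R^{r,1}$-matrix ``averages cleanly'' under the $\mathbb Z/(r-1)$ cyclic symmetry and matches the Harer--Zagier value is a hope, not an argument: note in particular that the claim amounts to the nontrivial statement that this constant is \emph{independent of $r$} even though the number of critical points (hence the number of colors in the graph sum) is $r-1$, and nothing in your proposal establishes that. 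There is also a factual slip: $\frac{B_{2g}}{2g(2g-2)}$ is the orbifold Euler characteristic of the \emph{open} moduli space $\mathcal M_g$ (Harer--Zagier), not of $\Mbar_g$, so even the intended target of the graph-sum identification is misstated. As you yourself acknowledge, ``the principal difficulty lies in Step 3''; since that step is exactly the proposition, the proposal is a plan rather than a proof, and the suggested fallback of checking small $g$ and $r$ does not close it.

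For comparison, the paper's proof (Appendix~\ref{sec:omegag0}) avoids the graph sum entirely. It takes the $r=2$ case as known from~\cite{IKT19}, introduces the auxiliary deformation $y_t(z)=\frac{\sqrt{-r}}{z}+\frac{t}{z^2}$ (which leaves the $R$-matrix unchanged, so $\omega_{g,0}^{r,\epsilon,t}\to\omega_{g,0}^{r,\epsilon}$ as $t\to0$), and then applies the $x$--$y$ swap symmetry of topological recursion, which preserves $\omega_{g,0}$. After the swap and the coordinate change $z\mapsto t/z$, the new $x$ is a \emph{quadratic} polynomial, and in the limit $t\to0$ the curve degenerates to a deformed $2$-Bessel curve; the scaling property then imports the $r=2$ answer. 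This swap-plus-degeneration mechanism is precisely what explains the $r$-independence of the Bernoulli constant, and it is the key idea missing from your approach. If you want to salvage your route, you would need either to carry out the full graph-sum evaluation (a substantial computation) or to find an independent proof that the free energy of $(-w^r+w,1/w)$ equals $\chi(\mathcal M_g)$ for all $r$.
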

\begin{proof}
For $r=2$, the formula has been proved in~\cite{IKT19}.
For arbitrary $r\geq 2$, we consider a further deformation of the deformed $r$-Bessel curve  $\cC^{r,\epsilon,t}=(\mathbb P^1, x(z), y_t(z))$, where
$$\textstyle
x(z)=-z^r+\epsilon z,\qquad y_{t}(z)=\frac{\sqrt{-r}}{z}+\frac{t}{z^2}.
$$
Let $R^{r,\epsilon,t}$ (resp. $R^{r,\epsilon}$) and $T^{r,\epsilon,t}$ (resp. $T^{r,\epsilon}$) be the $R$-matrix and $T$-vector of $\cC^{r,\epsilon,t}$ (resp. $\cC^{r,\epsilon}$) defined by~\eqref{def:EO-R} and \eqref{def:EO-T}. 
Then it is easy to see
$R^{r,\epsilon,t}=R^{r,\epsilon}$ and $\lim_{t\to 0}T^{r,\epsilon,t}=T^{r,\epsilon}$.
This proves that
\beq\label{eqn:omegag0-lim}
\lim_{t=0}\omega^{r,\epsilon,t}_{g,0}=\omega^{r,\epsilon}_{g,0},
\eeq
where $\omega_{g,0}^{r,\epsilon,t}$ is the function associated with $\cC^{r,\epsilon,t}$ defined by equation~\eqref{def:wg0}.

Now we swap $x(z)$ and $y_t(z)$ of $\cC^{r,\epsilon,t}$ and make a coordinate changing $z\to t/z$ on the spectral curve $\mathbb P^1$, we get new spectral curve $\cC^{\dagger,r,\epsilon,t}=(\mathbb P^1,  x^{\dagger}_t(z), y^{\dagger}_t(z))$, where
$$\textstyle
x^{\dagger}_t(z)=\frac{z^2}{t}+\sqrt{-r}\frac{z}{t},\qquad
y^{\dagger}_t(z)=-\frac{t^r}{z^r}+\frac{t\cdot\epsilon}{z},
$$
and new function $\omega_{g,0}^{\dagger,r,\epsilon,t}$ of $\cC^{\dagger,r,\epsilon,t}$.
By $x-y$ symmetry~\cite{ABDKS22,EO07b},
\beq\label{eqn:omegag0-xy}
\omega_{g,0}^{\dagger,r,\epsilon,t}=\omega_{g,0}^{r,\epsilon,t}.
\eeq
		
We further consider a scaling on $x^{\dagger}_t(z)$ and $y^{\dagger}_t(z)$ to get $\tilde\cC^{\dagger,r,\epsilon,t}=(\mathbb P^1,  \tilde x^{\dagger}_t(z), \tilde y^{\dagger}_t(z))$ and $\tilde\omega_{g,0}^{\dagger,r,\epsilon,t}$, where
$$\textstyle
\tilde x^{\dagger}_t(z)=-t\cdot x^{\dagger}_t(z)=-z^2-\sqrt{-r}z,\qquad \tilde y^{\dagger}(z)=-y^{\dagger}(z)/t=\frac{t^{r-1}}{z^r}-\frac{\epsilon}{z},
$$
By equation~\eqref{eqn:scaling-xy}, we have
\beq\label{eqn:omegag0-scaling}
\tilde\omega_{g,0}^{\dagger,r,\epsilon,t}=\omega_{g,0}^{\dagger,r,\epsilon,t}.
\eeq
		
Now we can take limit $t\to 0$ on $\tilde\cC^{\dagger,r,\epsilon,t}$, then by the result for $r=2$ and equation~\eqref{eqn:scaling-xy}
\beq\label{eqn:omegag0-lim-tilde}
\lim_{t\to 0}\tilde\omega_{g,0}^{\dagger,r,\epsilon,t}
=\frac{\sqrt{-2}^{2g-2}}{\epsilon^{2g-2}}\cdot \frac{(-1)^{g-1}}{2^{g-1}}\frac{B_{2g}}{2g(2g-2)}\frac{1}{(-\sqrt{-r})^{2g-2}}.
\eeq
The proposition follows from equations~\eqref{eqn:omegag0-lim}, \eqref{eqn:omegag0-xy}, \eqref{eqn:omegag0-scaling} and \eqref{eqn:omegag0-lim-tilde}.
\end{proof}
	
\subsection{$S$-matrix of the deformed negative r-spin theory via twisted theory} \label{sec:des-r}	
\begin{proof}[Proof of Theorem \ref{thm:descendentrspin}]
We introduce the twisted theory of the negative $r$-spin theory
by considering the natural action of $\lambda\in\mathbb C^{\times}$ scaling the fibers of vector bundle $\cV^{r,-1}_{g,\vec{a}}$ by multiplication.
The descendent correlator for the twisted theory is defined in a similar way as the one for the untwisted theory defined in Definition~\ref{def:des-neg-r} by replacing $c_{\rm top}(\cV^{r,-1}_{g,\vec{a}})$ with the equivariant Euler class $e_{\mathbb C^{\times}}( \mathcal V^{r,-1}_{g,\vec{a}} )$ which is defined by
$$
e_{\mathbb C^{\times}}(\cV^{r,-1}_{g,\vec{a}})=\exp\bigg(\ln(\lambda)\ch_{0}+\sum_{m\geq 1}(-1)^{m-1}\frac{(m-1)!}{\lambda^m}\ch_{m}(\cV^{r,-1}_{g,\vec{a}})\bigg).
$$
Since
$\lim_{\lambda\to 0}e_{\mathbb C^{\times}}(\cV^{r,-1}_{g,\vec{a}})=c_{\rm top}(\cV^{r,-1}_{g,\vec{a}})$,
the negative $r$-spin theory can be recovered from the twisted theory by taking non-equivariant limit $\lambda\to0$.
Following the discussion in~\cite{CR10} (see details in the second part of the proof for the Proposition 4.1.5 in~\cite{CR10}), we extend the symmetric $2$-form $\eta(-,-)$ on $\Frob$ to $\bar\Frob=\Frob\oplus\mathbb Q\phi_0$ by setting
$\eta(\phi_0,\phi_a)=\delta_{a,0}\cdot\frac{1}{\lambda}$,
then it is clear that the extended symmetric $2$-form $\eta$ is non-degenerated and we have $\phi^0=\lambda\phi_0$.
The total descendent potential $\cD^{r,\epsilon,\rm tw}({\bf t};\hbar)$ with $\epsilon$-shifting of the twisted theory is defined in a similar way as the one $\cD^{r,\epsilon}({\bf t};\hbar)$ for the untwisted theory defined by~\eqref{def:partitionfunction-r}.
We note here that we still take
${\bf t}={\bf t}(\psi)=\sum_{k\geq 0, 1\leq a\leq r-1}t^a_k\phi_a\psi^k\in\Frob[[\psi]]$
 and thus
$\lim_{\lambda\to0}\cD^{r,\epsilon,\rm tw}({\bf t};\hbar)=\cD^{r,\epsilon}({\bf t};\hbar)$.
		
Similarly as the definition~\ref{def:class-r}, we define the twisted class $\Upsilon^{r,\rm tw}_{g,n}$ and $\Upsilon^{r,\epsilon,\rm tw}_{g,n}$  by replacing $c_{\rm top}(\cV^{r,-1}_{g,\vec{a}})$ with $e_{\mathbb C^{\times}}( \mathcal V^{r,-1}_{g,\vec{a}} )$ in equation~\eqref{def:-rspinclass} and \eqref{def:deformedclass}, respectively. Then the choice of the symmetric $2$-form ensures that these classes are CohFTs on the space $(\bar\Frob,\eta)$.
An alternative way to define the shifted twisted ancestor correlator is
$$
\<  \phi_{a_1}\bar\psi_1^{k_1},\cdots, \phi_{a_n}\bar\psi_n^{k_n}\>_{g,n}^{r,\epsilon,\tw}
:=\sum_{m\geq 0}\frac{\epsilon^m}{m!}\frac{(-1)^{D^{r,-1}_{g,n+m}(\vec{a}+\vec{0}_m)}}{r^{g-1}}
\int_{\Mbar_{g,\vec{a}+\vec{0}_m}^{r,-1}}  e_{\mathbb C^{\times}}( \mathcal V^{r,-1}_{g,\vec{a}+\vec{0}_m} )\cdot\prod_i  \bar\psi_i^{k_i},
$$
where $\bar\psi_i$ is the pull back of the $\psi_i$ on the $\Mbar_{g,n}$ via map $\M_{g,\vec{a}+\vec{0}_m}^{r,-1}\to \Mbar_{g,n+m}\to\Mbar_{g,n}$.
We denote by $\cA^{r,\epsilon,\rm tw}({\bf s};\hbar)$ the total ancestor potential of the shifted twisted ancestor correlators.

By comparing the difference between $\psi_i$ and $\bar\psi_i$ as the discussion in~\cite[Appendix 2]{CG07}, we have the following formula:
\beq\label{eqn:DA-tw}
\cD^{r,\epsilon,\rm tw}({\bf t};\hbar)=e^{F^{r,\epsilon,\tw}_{\rm un}({\bf t};\hbar)}\cdot \cA^{r,\epsilon,\rm tw}({\bf s(t)};\hbar),
\eeq
where
$F^{r,\epsilon,\tw}_{\rm un}({\bf t};\hbar) =\frac{1}{\hbar^2}\sum_{k,a}\<\phi_a\psi^{k}\>^{r,\epsilon,\tw}_{0,1}t_k^a+\frac{1}{2\hbar^2}\sum_{k,l,a,b}\<\phi_a\psi^{k},\phi_b\psi^l\>^{r,\epsilon,\tw}_{0,2}t_k^a t_l^b$
and the coordinate transformation ${\bf s}={\bf s(t)}$ is given by ${\bf s}(\givz)=[S^{r,\epsilon,\tw}(\givz){\bf t}(\givz)]_{+}$.
Here $S^{r,\epsilon,\tw}(\givz)$ is the $S$-matrix of the twisted negative $r$-spin theory defined by
$$
\eta(\phi_a,S^{r, \epsilon,\tw}(\givz)\phi_b):=\eta(\phi_a,\phi_b)+\Big\<\phi_a,\frac{\phi_b}{\givz-\psi}\Big\>_{0,2}^{r,\epsilon,\tw}.
$$

Let $J^{r,\epsilon,\rm tw}(\givz):=\frac{-r}{1-r\lambda}\phi_{r-1} +\epsilon\phi_0+\<\frac{\phi_a}{\givz-\psi}\>_{0,1}^{r,\epsilon,\rm tw}\phi^a$ be the $J$-function of the twisted negative $r$-spin theory,
then it is clear that $\pd_{\epsilon}J^{r,\epsilon,\rm tw}(\givz)=S^{r,\epsilon,\tw,*}(\givz)\phi_0$.
We note here that the genus zero correlators of the twisted theory satisfies the following topological recursion relation (which is a direct corollary of~\eqref{eqn:DA-tw}):
\beq\label{eqn:TRR-tw}
\<\phi_a\psi^k,\phi_b\psi^m,\phi_c\psi^n\>_{0,3}^{r,\epsilon,\rm tw}
=\<\phi_a\psi^{k-1},\phi_{d}\>_{0,2}^{r,\epsilon,\rm tw}\<\phi^d,\phi_b\psi^m,\phi_c\psi^n\>_{0,3}^{r,\epsilon,\rm tw}.
\eeq
Notice that $\lim_{\lambda\to0}\phi^0=0$, we have
$\lim_{\lambda\to0}\eta(\phi^0,S^{r,\epsilon,\rm tw}(\givz)\phi_a):=\delta_{a,0}$,
we see that $\lim_{\lambda\to 0}S_k^{r,\epsilon,\tw}$, $k\geq0$, maps $\Frob\to \Frob$,
and thus $\lim_{\lambda\to 0}S^{r,\epsilon,\tw}(\givz)|_{\Frob}=S^{r,\epsilon}(\givz)$.
Clearly, $\lim_{\lambda\to 0}(S^{r,\epsilon,\tw})^{-1}(\givz)|_{\Frob}=(S^{r,\epsilon})^{-1}(\givz)$. This proves the first part of the Theorem~\ref{thm:descendentrspin}.
		
Now we consider the quantum product $*_{\epsilon,\rm tw}$ defined by
$\eta(\phi_a*_{\epsilon,\tw}\phi_b,\phi_c)=\<\phi_a,\phi_b,\phi_c\>^{r,\epsilon,\tw}_{0,3}$.
By similar reason as above, we have  $\lim_{\lambda\to0}\phi_a*_{\epsilon,\rm tw}$, $a=0,\cdots,r-1$, preserves $\Frob$ and
$\lim_{\lambda\to0}\phi_b*_{\epsilon,\rm tw}|_{\Frob}=\phi_b*_{\epsilon}$, $b=1,\cdots,r-1$.
Moreover, we define $\phi_0*_{\epsilon}:=\lim_{\lambda\to0}\phi_0*_{\epsilon,\rm tw}|_{\Frob}$, then by equation~\eqref{eqn:-rspin-corr(0,n)},  for $a,b=1,\cdots,r-1$,
$$
\phi_0*_{\epsilon}\phi_a
=\left\{\begin{array}{cc}
	\phi_{a+1} & 1\leq a\leq r-2\\
	\frac{\epsilon}{r}\phi_1 & a=r-1
\end{array}\right..
$$
Consider the quantum differential equation for the twisted $S$-matrix (which can be seen by taking $m=n=0$ and $c=0$ in equation~\eqref{eqn:TRR-tw}):
$$
\givz\pd_{\epsilon}S^{r,\epsilon,\tw}(\givz)=\phi_0*_{\epsilon,\tw}S^{r,\epsilon,\tw}(\givz).
$$
By taking limit $\lambda\to 0$, the QDE gives
$$
(\pd_{\epsilon}S^{r,\epsilon}_k\phi_a,\phi^b)=(S^{r,\epsilon}_{k-1}\phi_a,\phi_0*\phi^b),
%=(S^{r,\epsilon}_{k-1}\phi_a,\phi_0*\phi^b)
$$
and we have
$$
\pd_{\epsilon}(S^{r,\epsilon}_{k+1})_{a}^{b}=(S^{r,\epsilon}_k)^{b-1}_{a}+\delta_{b,1}\frac{\epsilon}{r}(S^{r,\epsilon}_{k})^{r-1}_{a}.
$$
This solves
$$
(S^{r,\epsilon}_k)_{a}^b=\left\{\begin{array}{cc}
(-1)^k\frac{\Gamma(\frac{a}{r})}{\Gamma(\frac{a}{r}+k-m)}\frac{(-\epsilon)^{m}}{m!},
& m=\frac{rk+a-b}{r-1}\in \mathbb Z_{+}\\
0 , & {\rm otherwise}
\end{array}\right. .
$$
Furthermore, by equation~\eqref{eqn:TRR-tw},
$$
\<\phi_a\psi^{k},\phi_0,\phi_0\>_{0,3}^{r,\epsilon}=\<\phi_a\psi^{k-1},\phi^b\>_{0,2}^{r,\epsilon}\<\phi_b,\phi_0,\phi_0\>_{0,3}^{r,\epsilon}=(S^{r,\epsilon}_{k})^{r-1}_{a},
$$
by integration,
$$
\<\phi_a\psi^k\>_{0,1}^{r,\epsilon}
=\left\{\begin{array}{cc}
	(-1)^{k+m}\frac{\Gamma(\frac{a}{r})}{\Gamma(\frac{a}{r}+k-m)}\frac{\epsilon^{m+2}}{(m+2)!} & m=\frac{rk+a}{r-1}-1\in\mathbb Z_{+}\\
	0 & {\text{otherwise}}
\end{array}\right.
.
$$
The second part of the Theorem follows immediately from these explicit computations.
\end{proof}
\end{appendices}

\vspace{2em}

\section*{List of Symbols}
\addcontentsline{toc}{section}{List of Symbols}

\vspace{1em}
\begin{longtable}{@{\hspace{0.5em}} l @{\hspace{2em}} p{0.75\textwidth}}

$(\Frob,\eta)$ & state space $\Frob$ with a non-degenerate two form $\eta$\\

$\Mbar_{g,n}$ & moduli space of stable curves of genus $g$ with $n$ marked points\\

$\Omega$, $\Omega^{\ttau}$ & CohFT and shifted CohFT along $\ttau$ (Definition~\ref{def:cohft})\\

${\bf v}(\givz)$, ${\bf v}^{\ttau}(\givz)$  & vacuum vector of CohFT (Definition~\ref{def:vacuum}) \\

$e_\beta$, $\bar e_\beta$, $\phi_i$ & canonical basis, normalized canonical basis and flat basis \\

$s_k^{\bullet},t_k^{\bullet}$ & $\bullet=\beta,\bar\beta, i$, ancestor (resp. descendent) coordinate with respect to basis $e_\beta\bar\psi_k$, $\bar e_\beta \bar\psi^k$, $\phi_i\bar\psi^k$ (resp. $e_\beta\psi_k$, $\bar e_\beta \psi^k$, $\phi_i\psi^k$)\\

$R(\givz)$, $R^{*}(\givz)$ & $R$-matrix and its adjoint with respect to $\eta$ \\

$V(\givz, \givw)$ & $V$-matrix, $V(\givz, \givw)=\frac{{\rm Id}-R^*(-\givz)R(-\givw)}{\givz+\givw}$ \\

$T(\givz)$ & $T$-vector, $T(\givz)={\bf \bar 1}\givz- \givz R(\givz)^{-1}\vac(\givz)$, $\bar{\bf 1}=\sum_\alpha \bar e_\alpha$ \\

$E$, $E_0$ & the Euler vector field and the Euler vector field at $\ttau=0$\\

$S(\givz)$, $S^{*}(\givz)$ & $S$-matrix and its adjoint with respect to $\eta$ \\

$W(\givz, \givw)$ & $W$-matrix, $W(\givz, \givw)=\frac{S^{*}(\givz)S(\givw)-\id}{\givz^{-1}+\givw^{-1}}$ \\

$\nu(\givz)$ & $\nu$-vector (Definition~\ref{def:dvac})\\

$J(\givz)$ & $J$-function, $J(-\givz)=-\givz S^{*}(-\givz)\dvac(\givz)$ \\

$\mathcal{C}=(\Sigma,x,y)$ & spectral curve data  (\S \ref{sec:intro1})\\

$\theta[\cmn](w|\tau)$ & theta function with characteristic $[\cmn]$\\

$B(z_1,z_2)$ & Bergman kernel of $\Sigma$\\

$\omega_{g,n}$ & multi-differentials on spectral curve (Definition~\ref{def:omegagn})\\

$\Phi(\gamma, -\givz)$ & class associated to admissible path $\gamma$ (Lemma~\ref{lem:path-class})\\

$d\xi_k^{\bullet}(z)$ & $\bullet=\beta,\bar\beta, i$, global meromorphic 1-form on $\Sigma$ (equation~\eqref{def:xi})\\

$d\zeta_k^{\bullet}(z)$ & $\bullet=\beta,\bar\beta, i$, global meromorphic 1-form on $\Sigma$ (equation~\eqref{eqn:zeta-xi})\\

$d\chi_k^{\bullet}(z)$ & $\bullet=\beta,\bar\beta, i$, 1-form locally near boundaries (equation~\eqref{eqn:zeta-chi})\\

$d\lambda_i^{-k}(z)$ & 1-form locally near boundary $b_i$  \\

$\<-\>_{g,n}^{\bullet}$ & $\bullet=\Omega$, $\cD$, $\rm TR$, $\Lambda$, the geometric ancestor, descendent, TR ancestor, TR descendent correlators \\

$\cA({\bf s}; \hbar)$ & total ancestor potential \\

$\cD({\bf t}; \hbar)$ & total descendent potential  \\

$Z({\bf p}; \hbar)$ & TR total descendent potential   \\

$\square^{\rm NP}$ & $\square=\cA,\cD, Z$, non-perturbative generating series  (\S \ref{sec:NP-TR})\\

\end{longtable}

\vspace{2em}

\end{document}